\documentclass[12pt]{article}


\usepackage[top=1.25in, left=1.25in, bottom=1.25in, right=1.25in]{geometry}
 
\usepackage{graphicx}
\usepackage{epstopdf}
\usepackage{amsthm}
\usepackage{amsmath}
\usepackage{amssymb}
\usepackage{amstext}
\usepackage{amsfonts}
\usepackage{enumerate}
\usepackage[onehalfspacing]{setspace}
\usepackage[shortlabels, inline]{enumitem}
\usepackage{multirow}
\usepackage{cancel}  
\usepackage[normalem]{ulem}


\DeclareUnicodeCharacter{00A0}{ }
\DeclareMathOperator*{\argmax}{arg\,max}
\DeclareMathOperator*{\argmin}{arg\,min}         


\usepackage{mathrsfs}
\usepackage{wasysym}

\usepackage[utf8]{inputenc}
\usepackage{bm, cases, mathtools, thmtools}
\usepackage{verbatim}
\usepackage{graphicx}\graphicspath{{figures/}}
\usepackage{multicol}
\usepackage{tabularx}
\usepackage[usenames,dvipsnames]{xcolor}
\usepackage{mathrsfs}
\usepackage{url}
\usepackage[normalem]{ulem}
\usepackage{xstring}
\usepackage{enumitem}

\usepackage[%
    minnames=1,maxnames=99,maxcitenames=3,
    style=authoryear,
    doi=false,url=false,
    firstinits=true,hyperref,natbib,backend=bibtex]{biblatex}
\renewbibmacro{in:}{%
  \ifentrytype{article}{}{\printtext{\bibstring{in}\intitlepunct}}}
\bibliography{biblio}

\usepackage[colorlinks,citecolor=blue,urlcolor=blue,linkcolor=RawSienna]{hyperref}
\usepackage{hypernat}
\usepackage{datetime}


\DeclareMathAlphabet\EuRoman{U}{eur}{m}{n}
\SetMathAlphabet\EuRoman{bold}{U}{eur}{b}{n}


\usepackage{euscript,microtype}

\usepackage[capitalize]{cleveref}

\crefname{assumption}{Assumption}{Assumptions}
\crefname{claim}{Claim}{Claims}

\makeatletter
\let\reftagform@=\tagform@
\def\tagform@#1{\maketag@@@{\ignorespaces\textcolor{gray}{(#1)}\unskip\@@italiccorr}}
\renewcommand{\eqref}[1]{\textup{\reftagform@{\ref{#1}}}}
\makeatother

\setlength{\marginparwidth}{1.25in}







\definecolor{WowColor}{rgb}{.75,0,.75}
\definecolor{SubtleColor}{rgb}{0,0,.50}



\newcounter{margincounter}

\declaretheorem[style=plain,numberwithin=section,name=Theorem]{theorem}
\declaretheorem[style=plain,sibling=theorem,name=Lemma]{lemma}

\declaretheorem[style=plain,sibling=theorem,name=Claim]{claim}

\declaretheorem[style=definition,sibling=theorem,name=Definition]{definition}
\declaretheorem[style=definition,name=Assumption]{assumption}

\declaretheorem[style=remark,sibling=theorem,name=Remark]{remark}

\crefformat{conditionINNER}{#2(#1)#3}
\crefmultiformat{conditionINNER}
  {(#2#1#3)}
  { and~(#2#1#3)}
  {, (#2#1#3)}
  { and~(#2#1#3)}
\Crefformat{conditionINNER}{Condition~#2(#1)#3}
\Crefmultiformat{conditionINNER}
  {Conditions~(#2#1#3)}
  { and~(#2#1#3)}
  {, (#2#1#3)}
  { and~(#2#1#3)}

%
%

\declaretheoremstyle[
    spaceabove=-6pt,
    spacebelow=6pt,
    headfont=\normalfont\bfseries,
    bodyfont = \normalfont,
    postheadspace=1em,
    qed=$\square$,
    headpunct={{}}]{myproofstyle}

\numberwithin{equation}{section}
\numberwithin{theorem}{section}



\usepackage{amssymb}
\usepackage{mathrsfs}
\usepackage{leftidx}

\renewenvironment{quote}
               {\list{}{\rightmargin\leftmargin}%
                \item\relax}
               {\endlist}


\def\[#1\]{\begin{align}#1\end{align}}
\def\*[#1\]{\begin{align*}#1\end{align*}}




\newcommand{\Reals}{\mathbb{R}}
\newcommand{\Nats}{\mathbb{N}}

\newcommand{\NNReals}{\Reals_{\ge 0}}
\newcommand{\PosReals}{\Reals_{> 0}}

\newcommand{\intr}{\textrm{int}}

\newcommand{\dee}{\mathrm{d}}

\DeclareMathOperator{\sign}{sign}

\DeclareMathOperator*{\newlim}{\mathrm{lim}\vphantom{\mathrm{infsup}}}
\DeclareMathOperator*{\newmin}{\mathrm{min}\vphantom{\mathrm{infsup}}}
\DeclareMathOperator*{\newmax}{\mathrm{max}\vphantom{\mathrm{infsup}}}
\DeclareMathOperator*{\newinf}{\mathrm{inf}\vphantom{\mathrm{infsup}}}
\DeclareMathOperator*{\newsup}{\mathrm{sup}\vphantom{\mathrm{infsup}}}
\renewcommand{\lim}{\newlim}
\renewcommand{\min}{\newmin}
\renewcommand{\max}{\newmax}
\renewcommand{\inf}{\newinf}
\renewcommand{\sup}{\newsup}


\newcommand{\defn}[1]{\emph{#1}}

\newcommand{\cF}{\mathcal F}





\newcommand{\BorelSets}[1]{\mathcal{B}[#1]}
\newcommand{\NSE}[1]{{^{*}#1}}
\newcommand{\ST}{\mathsf{st}}

\newcommand{\HReals}{\NSE{\Reals}}

\newcommand{\NS}[1]{\mathrm{NS}(#1)}

\newcommand{\cA}{\mathcal{A}}

\newcommand{\cC}{\mathcal{C}}

\newtheorem{open problem}{Open Problem}

\newcommand{\Loeb}[1]{\overline{#1}}

\newcommand{\interior}[1]{%
  {\kern0pt#1}^{\mathrm{o}}%
}

\newcommand{\refproof}[1]{See \cref{#1} for \IfSubStr{#1}{,}{proofs}{a proof}. }

\newif\iflongform
\longformtrue

\iflongform

\else

\fi


\begin{document}
\baselineskip=.22in\parindent=30pt

\renewcommand{\cite}{\textcite}

\newtheorem*{theorem*}{Theorem}
\newtheorem{tm}{Theorem}
\newcommand{\thm}{\begin{tm}}
\newcommand{\nt}{\noindent}
\newcommand{\thmm}{\end{tm}}
\newtheorem{exam}{Example}
\newcommand{\ex}{\begin{exam}}
\newcommand{\exx}{\end{exam}}
\newtheorem{innercustomex}{Example}
\newenvironment{customex}[1]
{\renewcommand\theinnercustomex{#1}\innercustomex}
{\endinnercustomex}
\newcommand{\ben}{\begin{enumerate}}
\newcommand{\een}{\end{enumerate}}
\newcommand{\cB}{\mathscr{B}}
\newcommand{\Lip}[1]{\mathcal{L}_{1}(#1)}
\newcommand{\FM}[1]{\mathcal{M}(#1)}
\newcommand{\cM}{\mathcal{M}}
\newcommand{\cN}{\mathcal{N}}
\newcommand{\cH}{\mathcal{H}}
\newcommand{\cI}{\mathcal{I}}
\newcommand{\cJ}{\mathcal{J}}
\newcommand{\cV}{\mathcal{V}}
\newcommand{\PM}[1]{\mathcal{M}_{1}(#1)}
\newcommand{\topology}{\mathcal{T}}
\newcommand{\pd}[1]{{#1}_p}
\newcommand{\ipd}[1]{{#1}^{p}}
\newcommand{\boundary}[1]{\partial #1}
\newcommand{\cU}{\mathcal{U}}
\newcommand{\comp}[1]{\hat{#1}}
\newcommand{\expect}{\mathbb{E}}
\newcommand{\E}{\mathbb{E}}
\newcommand{\Prob}{\mathbb{P}}
\newcommand{\IProb}{\mathbb{Q}}
\newcommand{\kernel}{\{P_{x}(\cdot)\}_{x \in X}}
\newcommand{\kerp}{\{G_{i}(\cdot)\}_{i \in S}}
\newcommand{\asfc}{C t_{L}}
\newcommand{\asfcalpha}{[FIX ME]}
\newcommand{\lak}{\mathcal{T}_{L}}
\newcommand{\trk}{\mathcal{T}^{(S)}}
\newcommand{\gkernel}{\{g(x,1,\cdot)\}_{x\in X}}
\newcommand{\cP}{\mathcal{P}}
\newcommand{\rbig}{\succ}
\newcommand{\rsmall}{\prec}
\newcommand{\requal}{\sim}
\newcommand{\des}[2]{\rho_{#1}(#2)}
\newcommand{\symdiff}{\bigtriangleup}
\newcommand{\mprod}{\otimes}
\newcommand{\smprod}{\otimes^{\sigma}}
\newcommand{\closure}[1]{\mathrm{cl}(#1)}
\newtheorem{question}{Question}
\newcommand{\cto}{\twoheadrightarrow}
\newcommand{\cpl}[1]{#1^{c}}
\newcommand{\monad}[1]{\mu(#1)}
\newcommand{\bqu}{\sloppy \small \begin{quote}}
\newcommand{\equ}{\end{quote} \sloppy \large}
\newcommand{\fn}{\footnote}


\def\qed{\hfill\vrule height4pt width4pt
depth0pt}
\def\reff #1\par{\noindent\hangindent =\parindent
\hangafter =1 #1\par}
\def\title #1{\begin{center}
{\Large {\bf #1}}
\end{center}}
\def\author #1{\begin{center} {\large #1}
\end{center}}
\def\date #1{\centerline {\large #1}}
\def\place #1{\begin{center}{\large #1}
\end{center}}

\def\date #1{\centerline {\large #1}}
\def\place #1{\begin{center}{\large #1}\end{center}}
\def\intr #1{\stackrel {\circ}{#1}}
\def\R{{\rm I\kern-1.7pt R}}
 \def\N{{\rm I}\hskip-.13em{\rm N}}
 \newcommand{\cprod}{\Pi_{i=1}^\ell}
\let\Large=\large
\let\large=\normalsize


\begin{titlepage}

\def\thefootnote{\fnsymbol{footnote}}
\vspace*{0.05in}

\title{On Existence of Berk-Nash Equilibria in \vspace{0.4em} Misspecified
\\ Markov Decision Processes with Infinite Spaces\fn{{\footnotesize \setstretch{0.5}{\em Acknowledgments} The authors are grateful to  Chris Carroll, Ignacio Esponda, Mira Frick, Arthur Paul Pedersen, Luciano Pomatto, Demian Pouzo,  Santanu Roy, Aaron Smith, Maxwell Stinchcombe and Metin Uyanik for helpful comments. We would also like to thank seminar participants at The University of Toronto and The Johns Hopkins University, where parts of this paper were presented. Anderson gratefully acknowledges financial support from Swiss Re through the Consortium for Data Analytics in Risk. All remaining errors are solely ours.}}}

\vskip 1em

\author{
Robert M. Anderson\fn{Department of Economics, University of California, Berkeley, CA 94720-3880, USA.}
~ 
Haosui Duanmu\fn{Institute for Advanced Study in Mathematics, Harbin Institute of Technology, Harbin 150001, China.} 
~ 
Aniruddha Ghosh\fn{Department of Economics, The Johns Hopkins University, Baltimore, MD 21218, USA.}
~
M. Ali Khan\fn{Department of Economics, The Johns Hopkins University, Baltimore, MD 21218, USA.}}

\vskip 1.00em
\date{July 13, 2023}

\vskip 1.75em

\vskip 0.50em

\baselineskip=.18in
\begin{abstract} 
Model misspecification is a critical issue in many areas of economics.  In the context of misspecified Markov Decision Processes, \citet{ep21} defined the notion of Berk-Nash equilibrium and established its existence with finite state and action spaces. However, many substantive applications (including two of the three motivating examples presented by Esponda and Pouzo) involve continuous state or action spaces, and are thus not covered by the Esponda-Pouzo existence theorem.  We extend the existence of Berk-Nash equilibrium to compact action spaces and sigma-compact state spaces, with possibly unbounded payoff functions. 
A complication arises because Berk-Nash equilibrium depends critically on Radon-Nikodym derivatives, which are bounded in the finite case but typically unbounded in misspecified continuous models. 
The proofs rely on nonstandard analysis, and draw on novel argumentation traceable to work of the second author on nonstandard representations of Markov processes.
\end{abstract}


\vskip 0.50em
\noindent {\it Journal of Economic Literature} Classification
Numbers: C02, C62, D01, D83 

\medskip


\medskip

\noindent {\it Key Words:}  Berk-Nash equilibrium, Markov decision process, model misspecification, learning.

\bigskip


\end{titlepage}




{
  \hypersetup{linkcolor=black}
  \setcounter{tocdepth}{2}
  \tableofcontents
}

\vspace{-5pt}

 --------------------------------------------------------------- 

\vspace{15pt}


\setcounter{page}{1}

\setcounter{footnote}{0}


\setlength{\abovedisplayskip}{-10pt}
\setlength{\belowdisplayskip}{-10pt}


\newif\ifall
\alltrue 

%
%
%
%


\smallskip

\nocite{ma69jmaa, ks02, ks97pnas}

\newpage

\setstretch{1.4}

\section{Introduction}

Model misspecification is a critical issue in many areas of theoretical and empirical economics.\fn{\citet{ag73, as85}; also see the influential example of \citet{ny91}. This stimulus has been followed up by \citet{hs11}, and most recently by \citet{fls21}, and their references.}  Subjective Markov Decision Processes (SMDPs) generalize Optimal Control Problems and Markov Decision Processes (hereafter, MDPs). Optimal Control Theory, developed by Pontryagin and Bellman, involves the optimal selection of a control variable in a deterministic dynamical system.  MDPs extend the Optimal Control framework to stochastic processes in which the Markov transition probabilities are given by a known model. In SMDPs, the Markov transition probabilities are unknown, but assumed to be given by a model whose parameters must be estimated.  The model may be misspecified, i.e. there may be no parameter values under which the model is true.  The agent's goal is to learn the parameter values that minimize the distance between the misspecified model and the unknown true model.   \citet{ep21} used \citet{berk66} and the notion of weighted Kullback-Leibler divergence to formulate the notion of Berk-Nash equilibrium in SMDPs.\fn{SMDPs can be potentially applied to a myriad of settings such as that of the Lucas asset pricing problem, dynamic principal-agent problems, and consumption-saving problems; See \cite{gh22} for such examples and their associated monotone comparative statics properties.} 
Although their Berk-Nash equilibrium \emph{notion} applies broadly to SMDPs with finite or infinite state and action spaces, the Esponda and Pouzo \emph{theorem} on the existence of Berk-Nash equilibrium applies only to finite state and action spaces, and hence only to bounded payoff functions.  In this paper, we extend the existence of Berk-Nash equilibrium to SMDPs with compact action spaces, sigma-compact state spaces, and potentially unbounded payoff functions. 

In order to appreciate the economic significance of the generalization to infinite state and action spaces, it is helpful to note the following facts.\fn{See \citet{pu94} for illustrations covering operations research, economics and engineering. Section \ref{secmainresults} sketches settings in economic theory that naturally demand infinite state and action spaces. Examples include the asset selling problem in \citet{ka62} and the employment seeking problem in \citet{sl89}.} Two of the three motivating examples presented by Esponda and Pouzo involve infinite state or action spaces. Risk aversion is critical to the modeling of choice under uncertainty, but standard risk preference models (such as CRRA) are unbounded functions on infinite state spaces and often with continuous action spaces. Standard models in finance involve MDPs with Gaussian or log-normal (hence unbounded) asset prices. 
It is thus highly desirable to extend the existence theorem to infinite state and action spaces and unbounded payoff functions. Our existence theorems cover all of the settings just described.  

In this paper, we consider five examples from three important economic environments: (i) neoclassical producer theory, (ii)  the  optimal savings problem, and (iii) identification and inference in econometric theory.  In the first case, we consider two instances featuring demand and supply shocks to the revenues and the costs of the producer, and note the consequences of the misspecified distributions of these shocks for the profit-maximizing choices.\fn{\citet{lo09} explores the role of productivity shocks, news shocks and sampling shocks in driving business cycles with the shocks normally distributed with the real line as their support.}  The second environment extends Example 2 of EP, which features an optimal savings problem with a binary preference shock, to shocks with continuous and unbounded support.\fn{A key paper that connects learning, optimal savings and uncertainty is \citet{kms09}. Our convergence result (Theorem 4) in  \cref{sec:onlineappendix} applies potentially to such settings.} Finally, we provide two examples in Gaussian AR(1) processes with possibly unbounded payoff functions that connect the notion of Berk-Nash equilibrium to the existence of unit roots.\fn{See Examples \ref{example-Unit_Root} and \ref{example-Unit_Root2}. Also, see \citet{fa21} for empirical illustrations connecting unit roots to model misspecification in macroeconomic settings. There are at least  three more avenues where misspecification is being explored; climate economics (\citet{bm20}), axiomatic decision theory (\citet{hmmv20}) and non-atomic anonymous games (\citet{msv20}).} As a concluding observation, we note that in all these applications, shocks are typically modeled as arising from continuous distributions with unbounded support, which can only be covered by SMDPs with an unbounded state space.

We now turn to a brief introduction to our principal results. We report three main results in this paper, all of which feature infinite state and action spaces, and two of which feature an unbounded state space: 
\begin{enumerate}
    \item In Theorem \ref{mainresults}, we establish the existence of a Berk-Nash equilibrium for \textit{regular} SMDPs with compact action, state and parameter spaces, and bounded payoff functions, but with unbounded densities (Radon-Nikodym derivatives).\footnote{Given any two Gaussian distributions with distinct variances, the Radon-Nikodym derivative of the one with the larger variance with respect to the other is unbounded.  Moreover, as we see in Example \ref{example-Unit_Root}, unbounded Radon-Nikodym derivatives arise routinely in OLS estimation.} However, the assumption of a compact state space rules out unbounded payoff function such as the CRRA utility function and  distributions that have unbounded support, including normal, exponential and log-normal distributions, which play central roles in economic theory and finance. Theorem \ref{mainresults} therefore needs extension to more general settings in order to admit applications to broader economic environments\fn{In the context of stability theorems for monotone economies, the interested reader may see \cite{ks14} which relaxes the assumption of a compact state space, and therefore admits a broader class of economic models.};
    \item Theorem \ref{mainresultsigma1} considers SMDPs with a $\sigma$-compact state space and a bounded payoff function. We establish the existence of a Berk-Nash equilibrium under a regularity condition on the state space, a tightness condition on the class of transition probability measures and either a uniform integrability or a uniqueness condition on the relative entropy condition formalized as the Kullback-Liebler divergence. The tightness condition that we impose is satisfied by many economic applications (e.g. Ornstein-Uhlenbeck and Cox-Ingersoll-Ross processes) and we provide two sufficient conditions to test its applicability in environments of interest. 
    \item Theorem \ref{mainresultsigma2} pushes the extension further: it allows for unbounded payoff functions\fn{See \cref{assumptionpayoff} and \cref{assumptionpayoffubd} in \cref{secmainresult} for further details.}. The Bellman equation may not have a solution when the underlying payoff function is unbounded. We impose several growth conditions to ensure that a solution of the Bellman equation exists and is well-behaved. These conditions are satisfied in many important examples ranging from various economic fields, as we have illustrated in Examples \ref{stochasticgrowth} and \ref{example-Unit_Root2}. Finally, Theorem 4 provides a possible learning foundation for SMDPs with compact state and action spaces that generalizes Theorem 2 in EP.\fn{It depends on a strong condition (see \cref{defunfcvg}) whose conceptual and technical underpinnings need further consideration. We report it in the \hyperref[sec:onlineappendix]{Online Appendix.}
}
\end{enumerate}

The proofs make use of nonstandard analysis, a powerful mathematical technique that originated in \citet{robinson}, and was introduced into mathematical economics in \citet{br72-ns}.  Readers whose expertise does not extend to nonstandard analysis should note the following points to the four results that we report here:

\ben   \item   The results are  standard results in that their proofs can in principle be furnished without any reference to nonstandard analysis – this is a consequence of a  meta-theorem in mathematical logic that guarantees  that there exists a standard proof, albeit a long and convoluted one,  for any nonstandard one.\fn{As we will explain in \cref{sechypresent}, it may be possible to give a standard proof of Theorem \ref{mainresults} but one needs to overcome several obstacles, especially in the case with unbounded Radon-Nikodym derivatives. We do not know how to give tractable standard proofs for Theorem \ref{mainresultsigma1} and \ref{mainresultsigma2}, as we will explain in \cref{sechypresentsigma}.}

\item Previous applications of nonstandard analysis to probability depend on theorems to lift and push down results between hyperfinite probability spaces and the (standard) Loeb measure spaces they generate, as well as between Loeb spaces and conventional probability measures such as Lebesgue or Wiener measures\footnote{
The previous applications include a complete theory of It\^{o} Processes (See \citet{andersonisrael}), stochastic differential equations (See \citet{Keisler87}), as well as a contribution on existence of equilibrium in continuous-time financial markets (\citet{anderson08}).}. The lifting and pushing down theorems for general Markov processes developed in \citet{drw21}, play essential roles in this paper. 
Markov Processes are involved in many economic problems, and we think it very likely that this work will generate many further economic applications.

\item Previous applications of nonstandard analysis in mathematical economics showed that results that are true in infinite settings but false in finite settings are approximately true in large finite settings.\footnote{For previous applications of nonstandard analysis to mathematical economics, see for example, \citet{nsexchange}, \citet{strongcore}, \citet{oligopoly}, \citet{ks01}, \citet{indmatching}, \citet{ar08}, and \citet{duffie18}.} Here, by contrast, we take results that are {\it true} in finite settings and transport them  to results in infinite settings.\fn{To repeat,  this method is applicable in situations in which the desired result is known for finite objects, its {\it proof} depends heavily on finiteness, but its {\it statement} makes sense for infinite objects.} This approach, pioneered in \citet{duanmuthesis}, works well in situations in which the statement of the result makes sense in the infinite setting, but the proof in the finite case does not readily extend.\footnote{Duanmu's technique has previously been applied to statistical decision theory (\citet{nsbayes,duanmu22}), Markov processes (\citet{drw21} and \citet{ads21avg}), and to abstract economies and Walrasian equilibrium (\citet{adku2,adku1}).}   
\een

We now conclude this introduction by laying out the plan of the paper. \cref{secenviron} furnishes the conceptual framework and the antecedent theory by EP on Markov decision processes with misspecification. \cref{secmainresults} motivates the various assumptions we make on it, and presents our three existence theorems that cover compact and $\sigma$-compact state spaces. Further, we provide two additional illustrations that are relevant for settings widely used in economics. \cref{secmethod} lays out the methodological innovations of the paper and outlines the difficulty with the standard approach and \cref{sketchofproofs} sketches the proofs. \cref{secdiscussion} briefly discusses the extensions of our single-agent results to a broader class of multi-agent misspecified environments. The Online Appendix gives a supplementary result that furnishes a learning foundation to the existence results and also contains the detailed analysis of the examples. The Appendix contains self-contained proofs of all our main results.

\section{The Basic Environment}\label{secenviron}

\subsection{Notational and Conceptual Preliminaries}
\nt We begin by describing the environment faced by the agent which mirrors the one in EP. At the start of each period $t=0,1,2, \ldots,$ the agent observes a state $s_{t}\in S$, takes an action $x_{t}\in X$ that determines the distribution of the future state $s_{t+1}$ given the transition probability function   $Q(\cdot|s_{t},x_{t})$ with the initial state $s_{0}$, drawn according to the initial probability distribution $q_{0}.$ For a given payoff function $\pi(s_{t},x_{t},x_{t+1}),$ the agent then maximizes her expected discounted utility by choosing a feasible policy function. We now formally describe these objects. 

\begin{definition}\label{defMDP}
A Markov Decision Process (MDP) is a tuple $\langle S,X,q_0,Q,\pi,\delta \rangle$, where 
\begin{enumerate}[{\normalfont (i)}, topsep=1pt]
 \setlength{\itemsep}{-2pt} 
    \item The state space $S$ is a $\sigma$-compact locally compact metric space with Borel $\sigma$-algebra $\BorelSets S$;
    \item The action space $X$ is a compact metric space with Borel $\sigma$-algebra $\BorelSets X$;
    \item The initial distribution of states $q_0$ is a probability measure on $(S, \BorelSets S)$;
    \item $Q: S\times X\to \PM{S}$ is a transition probability function, where $\PM{S}$ denotes the set of probability measures on $S$.
    That is, for each $(s,x)\in S\times X$, $Q(s,x)$ is a probability measure on $S$. We sometimes write $Q(\cdot|s, x)$ for $Q(s, x)(\cdot)$;
    \item $\pi: S\times X\times S\to \Reals$ is the per-period payoff function;
    \item The discount factor $\delta$ is in $[0,1)$.
\end{enumerate}
\end{definition}

By the principle of optimality, the agent's problem can be cast recursively as
\begin{equation}
\label{bellman}
V(s)=\max_{x\in X}\int_{S}\{\pi(s,x,s')+\delta V(s')\}Q(\dee s'|s, x), 
\end{equation}
where $V$ is the unique solution to the Bellman equation \cref{bellman}.\fn{Unlike the case considered in EP, the Bellman equation \cref{bellman} need not have a solution, especially when the payoff function is unbounded. In \cref{secmainresult}, we provide regularity conditions, which are shown to be satisfied by examples span over various fields of economics, to guarantee the existence of a solution for the Bellman equation even when the payoff function is unbounded.} We use MDP($Q$) to refer to Markov Decision Process with transition probability function $Q$.

\begin{definition}
An action $x$ is optimal given $s$ in the MDP($Q$) if 

\begin{equation}
\label{policycorr}
x\in \argmax_{\hat{x}\in X}\int_{S}\{\pi(s,\hat{x},s')+\delta V(s')\}Q(\dee s'|s, \hat{x})
\end{equation}
\end{definition}

\nt We next describe a subjective Markov Decision Process.
\begin{definition}\label{defsmdp}
A subjective Markov Decision Process is a Markov Decision Process $\langle S,X,q_0,Q,\pi,\delta \rangle$, and a nonempty family $\mathcal{Q}_{\Theta}=\{Q_{\theta}: \theta\in \Theta\}$ of transition probability functions, where each transition probability function $Q_{\theta}: S\times X\to \PM{S}$ is indexed by an element $\theta\in \Theta$. A subjective Markov Decision Process is said to be \textit{misspecified} if $Q\notin Q_{\Theta}.$\fn{In the language of the everyday, $(Q_{\Theta},\Theta)$ is the set of models.}
\end{definition}

We write SMDP($\langle S,X,q_0,Q,\pi,\delta \rangle$, $\mathcal{Q}_{\Theta}$) to denote a subjective Markov Decision Process with the Markov Decision Process $\langle S,X,q_0,Q,\pi,\delta \rangle$ and the family $\mathcal{Q}_{\Theta}$ of transition probability functions. For all $\theta\in \Theta$, all $(s, x)\in S\times X$, let $D_{\theta}(\cdot|s, x): S\to \bar{\Reals}$ be the density function if $Q(s, x)$ is dominated by $Q_{\theta}(s, x)$ and let $D_{\theta}(s'|s, x)=\infty$ otherwise.\fn{We use $\bar{\Reals}$ to denote the extended real line, equipped with the one-point compactification topology}

\begin{definition}\label{regsmdp}
A regular subjective Markov decision process (regular-SMDP $\mathcal M$) is a SMDP that satisfies the following conditions: 
\begin{enumerate}[{\normalfont (i)}, topsep=1pt]
 \setlength{\itemsep}{-2pt}
    \item The parameter space $\Theta$ is a compact metric space;
    \item The mapping $(s, x)\to Q(s, x)$ is continuous in the Prokhorov metric;
    \item The mapping $(\theta, s, x)\to Q_{\theta}(s, x)$ is continuous in the Prokhorov metric;
    \item The density function $D_{\theta}(s'|s, x)$ is jointly continuous on the set $\{(\theta, s', s, x): Q(s, x)\ \\ \text{is dominated by}\ Q_{\theta}(s, x)\}$\fn{A probability measure $Q(s, x)$ is dominated by a probability measure $Q_{\theta}(s, x)$ if for any measurable set $S$, $Q_{\theta}(S|s,x)=0,$ then $Q_{\theta}(S|s,x)=0.$ }; 
    \item\label{KLint} (Uniform integrability) For every compact set $S'\subset S$, there exists some $r>0$ such that $\big(D_{\theta}(\cdot|s, x)\big)^{1+r}$ is uniformly integrable with respect to $Q_{\theta}(s, x)$ over the set $\{(\theta, s, x): Q(s, x)\ \text{is dominated by}\ Q_{\theta}(s, x)\}$. That is, for every $\epsilon>0$, there exists $\kappa>0$ such that 
    $$
    \int_{E}\big(D_{\theta_0}(t|s_0, x_0)\big)^{1+r}Q_{\theta_0}(s_0, x_0)(\dee t)<\epsilon
    $$
    if $(\theta_0, s_0, x_0)$ is an element of the set $\{(\theta, s, x)\in \Theta\times S'\times X: Q(s, x) \mbox{ is dominated by } \\ Q_{\theta}(s, x)\}$ and $Q_{\theta_0}(s_0, x_0)(E)<\kappa$\fn{This condition is automatically satisfied if the density functions $D_{\theta}(\cdot|s, x)$ are uniformly bounded over the set $\{(\theta, s, x): Q(s, x)$ \mbox{is dominated by} $Q_{\theta}(s, x)$\}.}; 
    \item (Absolute continuity) There is a dense set $\hat{\Theta}\subset \Theta$ such that $Q(s, x)$ is dominated by $Q_{\theta}(s, x)$ for all $\theta\in \hat{\Theta}$ and $(s, x)\in S\times X$;
    \item The per-period payoff function $\pi:S\times X\times S\to \Reals$ is continuous.
\end{enumerate}
\end{definition}

\begin{remark}
In \cref{KLint}, $\kappa$ depends on both $\epsilon$ and the compact set $S'\subset S$. 
Note that we allow $D_{\theta}(\cdot|s, x)$ to take value $\infty$ even if $Q(s, x)$ is dominated by $Q_{\theta}(s, x)$. So we allow for unbounded continuous density functions even when the state space is compact. 
\end{remark}

\begin{definition}\label{defKLD}
The weighted Kullback-Leibler divergence is a mapping $K_{Q}: \PM{S\times X}\times \Theta\to \bar{\Reals}_{\geq 0}$ such that for any $m\in \PM{S\times X}$ and $\theta\in \Theta$,
$$
K_{Q}(m,\theta)=\int_{S\times X}\mathbb{E}_{Q(\cdot|s, x)}\left[\ln \big(D_{\theta}(s'|s,x)\big)\right]\\m(\dee s, \dee x).
$$
\nt The set of closest parameter values given $m\in \PM{S\times X}$ is the set\footnote{We follow the standard convention in that $\ln(0)\cdot 0=0$ and integral of infinity over a set of measure $0$ is $0$. Further, $\dfrac{0}{0}=0$, $\dfrac{1}{0}=\infty$, $\log \infty =\infty.$ }
$$
\Theta_{Q}(m)=\argmin_{\theta\in \Theta}K_{Q}(m,\theta).
$$
\end{definition}
For $(s, x)\in S\times X$ and $\theta\in \Theta$, the relative entropy (Kullback-Leibler divergence) from $Q_{\theta}(s, x)$ to $Q(s, x)$ is:
$$
\mathcal{D}_{\mathrm{KL}}\big(Q(s, x), Q_{\theta}(s, x)\big)=\mathbb{E}_{Q(\cdot|s, x)}\left[\ln \big(D_{\theta}(s'|s,x)\big)\right].
$$
If $Q(s, x)$ is dominated by $Q_{\theta}(s, x)$, then we have
$$
\mathcal{D}_{\mathrm{KL}}\big(Q(s, x), Q_{\theta}(s, x)\big)=\int_{S}D_{\theta}(s'|s, x)\\ \ln\big(D_{\theta}(s'|s,x)\big) Q_{\theta}(\dee s'|s, x).
$$ 
and otherwise, it equals infinity. Moreover, by \cref{KLint} in \cref{regsmdp}, the function $D_{\theta}(\cdot|s, x)\ln\big(D_{\theta}(\cdot|s,x)\big)$ is integrable  with respect to $Q_{\theta}(s, x)$. 
For $m\in \PM{S\times X}$, let $\Theta_{m}=\{\theta\in \Theta: K_{Q}(m, \theta)<\infty\}$. 
By \cref{regsmdp}, we have $\hat{\Theta}\subset \Theta_{m}$ and $K_{Q}(m, \theta)$ is a continuous function of $\theta$ on $\Theta_{m}$.  
Finally, by Jensen's inequality, the relative entropy $\mathcal{D}_{\mathrm{KL}}\big(Q(s, x), Q_{\theta}(s, x)\big)$ is non-negative for all $(s, x)\in S\times X$. 

\begin{definition}
A probability measure $m\in \PM{S\times X}$ is a Berk-Nash equilibrium of the SMDP($\langle S,X,q_0,Q,\pi,\delta \rangle$, $\mathcal{Q}_{\Theta}$) if there exists a belief $\nu\in \PM{\Theta}$ such that
\begin{enumerate}[{\normalfont (i)}, topsep=1pt]
 \setlength{\itemsep}{-2pt}
    \item \textbf{Optimality}: For all $(s,x)\in S\times X,$ that is in the support of $m$, $x$ is optimal given $s$ in the MDP($\bar{Q}_{\nu}$), where $\bar{Q}_{\nu}=\int_{\Theta}Q_{\theta}\nu(\dee \theta)$;
    \item \textbf{Belief Restriction}: We have $\nu\in \PM{\Theta_{Q}(m)}$;
    \item \textbf{Stationarity}: For all $A\in \BorelSets S$, $m_{S}(A)=\int_{S\times X}Q(A|s, x)m(\dee s, \dee x)$, where $m_{S}$ denote the marginal measure of $m$ on $S$.\fn{The MDP (SMDP) is stationary in the classical Blackwell sense.}
\end{enumerate}
\end{definition}

\subsection{Antecedent Results: Esponda-Pouzo (2021)}

In this section, we begin by illustrating two environments from EP that naturally feature infinite state and action spaces. Example \ref{stochasticgrowth} illustrates an optimal consumption-savings problem while Example \ref{costsproduction} frames an example where the costs of a producer are misspecified. It is imperative to note that while the illustrations we provide extend the environment of these examples to more general environments, the original environments are themselves outside the realm of EP's original finite existence theorem. Following the illustrations, we state the main existence theorem in EP for SMDPs with finite state and action spaces.

\ex[Optimal Savings (Example 2), Esponda-Pouzo (2021)]
 \label{stochasticgrowth}\normalfont  The Markov decision process is as follows. A state space $S=(y,z)\in Y\times Z=(0, \infty)\times [0,1]$, where $y$ and $z$ denote the wealth and preference shocks, respectively. For each $y \in Y$, the agent chooses $x \in X = [0,1]$, with $x$ representing the \textit{fraction} of $y$ the agent chooses to save, so that the agent saves $k=xy$ and consumes $y-k$.\footnote{In EP, the agent chooses how much $k \in [0,y]$ to save.  Here, we recast the problem in terms of the fraction saved in order to ensure that the action set $X$ is compact and independent of the state.}  The payoff function $\pi$ is $\pi\left(y, x, z\right)=z \ln \left(y-k\right)=z \ln \left(y-xy\right).$\fn{When $z=0$, we again use the standard convention that $0\ln 0=0$. When $z\neq 0$, we approximate the action space $X=[0,1]$ by closed intervals $\{[0,1-\epsilon]: \epsilon>0\}$.} 
 
 We next describe the true transition function that describes the evolution of the state variables. $Q\left(y^{\prime}, z^{\prime} \mid y, z, x\right)$ is such that $y^{\prime}$ and $z^{\prime}$ are independent, $y^{\prime}$ has a log-normal distribution with mean $\alpha^{*}+\beta^{*} \ln (xy)+\gamma^{*} z$ and unit variance, and $z^{\prime}$ is \textit{uniform} on $[0,1]$. That is, the next period wealth, $y_{t+1}$, is given by $\ln y_{t+1}=\alpha^{*}+\beta^{*} \ln x_{t}y_{t}+\varepsilon_{t},$ where $\varepsilon_{t}=\gamma^{*} z_{t}+\xi_{t}$ is an unobserved i.i.d. productivity shock, $\xi_{t}\sim N(0,1), \gamma^{*}\neq 0,$ and $0\leq\beta^{*}<1,\ \delta\beta^{*}<1,$ where $\delta \in[0,1)$ is the discount factor. The agent maximizes their discounted expected utility by choosing optimal proportion of savings, $x.$\fn{It is the restriction on $0\leq\beta<1$ that gives us stationarity. The detailed analysis is in the Supplementary Appendix.} The Bellman equation for this MDP is as follows.

$$
V(y, z)=\max _{0 \leq x \leq 1} z \ln (y-xy)+\delta \expect\left[V\left(y^{\prime}, z^{\prime}\right) \mid x\right],
$$

However, the agent believes (SMDP) that $\ln y_{t+1}=\alpha+\beta \ln (x_{t}y_{t})+\varepsilon_{t}$ where $\varepsilon_{t} \sim N(0,1)$ and is independent of the preference shock. Further, the agent knows the distribution of the preference shock but is uncertain about $\beta \in \Theta$.\fn{A compact set in  $\mathbb{R}.$}  The \textit{subjective} transition probability function $Q_{\theta}\left(y^{\prime}, z^{\prime} \mid y, z, x\right)$ is such that $y^{\prime}$ and $z^{\prime}$ are independent, $y^{\prime}$ has a log-normal distribution with mean $\alpha+\beta \ln (xy)$ and unit variance, and $z^{\prime}$ is \textit{uniform} on $[0,1]$. The agent has a misspecified model since she believes that the productivity and utility shocks are independent, when in fact $\gamma^{*} \neq 0$. Here we diverge from the EP example by having preference shocks $z$ distributed uniformly over $[0,1].$ This example extends EP's example for a continuum of preference shocks in an optimal consumption-savings model.


\exx

\ex[Misspecified Costs (Example 3), Esponda-Pouzo (2021)]
\label{costsproduction}\normalfont

 Consider the following Markov decision process.
Every period, an agent observes a productivity shock $z \in \mathbb{Z}=[0,1]$ and chooses an input $x \in X$  $\subset \mathbb{R}_{+} $ which results in the agent obtaining a payoff of $r(x)-$ $c(x)$ every period, where $c(x)=\phi(x)\epsilon$ is the cost of choosing $x$,  $r(x)=z\ln(x)$ where $\ln(x)$ is the production function, $z$ is the productivity shock in $[0,1]$ and $\epsilon$ is a random, independent shock to the cost  distributed according to the (true) distribution $d^{*}$, which has support equal to $[0, b], 0\leq b \leq \infty \mbox{ and } 0< \expect_{d^{*}}[\epsilon] < \infty.$\fn{We assume that the true distribution $d^{*}$ satisfies conditions in \cref{regsmdp}.} The state space $S=[0,1]\times [0,b], b<\infty $  is the support of the cost shock. The action space $X$ and the parameter space $\Theta$ are chosen as such to be compact\footnote{The details are supplied in the Online Appendix.} and the payoff function $\pi(s,x,s')=z\ln x-c(x)$.
The Bellman equation is given by:
$$
V(z, \epsilon)=\max _{x} \int_{[0,1]\times [0,b]}\left(z f(x)-c^{\prime}+\delta V\left(z^{\prime}, \epsilon'\right)\right) Q\left(\dee z^{\prime} \mid z\right) Q^{C}\left(\dee \epsilon' \mid x\right)
$$
Let $Q\left(z^{\prime} \mid z\right)$ be the probability that tomorrow's productivity shock is $z^{\prime}$ given the current shock $z$. We assume that there is a unique stationary distribution over these productivity shocks which is uniform,  $U[0,1]$. Similarly, let $Q^{C}(\epsilon'\mid x)$ denote the transition function for the cost shock, $\epsilon'.$ The agent believes in a misspecified cost function (SMDP), that is,  $c_{\theta}(x)=x \epsilon$ and $\epsilon \sim d_{\theta}$, where $d_{\theta}$ has support equal to $[0, b]$ where $b=k\theta,$ $0\leq k<\infty.$  We assume that $\epsilon$ follows a truncated exponential distribution, $d_{\theta}(\epsilon)=\dfrac{(1 / \theta) e^{-(1 / \theta) \epsilon}}{1-e^{(-b/\theta)}}.$\fn{In the context of this particular example, the agent's model can be misspecified if either cost functions are nonlinear, true distribution of cost shocks are not a part of the exponential family, or if the support assumed is incorrect.} This example extends EP finite productivity shocks to a continuum of shocks in the realm of a producer's problem. However, instead of having unbounded support for the cost shock as in EP, we restrict it to a bounded support. 
\exx

Both these examples are outside the scope of the existence theorem in EP with finite states and actions. We now spell out their existence result for finite SMDPs. It is straightforward to verify that regular-SMDPs with finite state and action spaces, as defined in EP, are regular in the sense of \cref{regsmdp}. EP's Theorem 1 proves the following result.

\begin{theorem*}[EP (2021)]\label{EPmain}
Suppose ($\langle S,X,q_0,Q,\pi,\delta \rangle$, $\mathcal{Q}_{\Theta}$) is a regular-SMDP such that
\begin{enumerate}[{\normalfont (i)}, topsep=1pt]
 \setlength{\itemsep}{-2pt}
    \item The state space $S$ and the action space $X$ are both finite;
    \item The parameter space is a compact subset of Euclidean space.
\end{enumerate}
Then there exists a Berk-Nash equilibrium. 
\end{theorem*}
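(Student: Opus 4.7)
The plan is to apply Kakutani's fixed-point theorem to an appropriately defined product correspondence on $\PM{S \times X} \times \PM{\Theta}$. A Berk-Nash equilibrium pair $(m,\nu)$ is a simultaneous fixed point of two intertwined requirements: $m$ must be an invariant distribution under the true transition $Q$ whose conditional action support lies in the set of optimal actions for the perceived MDP($\bar{Q}_{\nu}$), and $\nu$ must be supported on the Kullback-Leibler minimizers $\Theta_{Q}(m)$.

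First I would fix $\nu \in \PM{\Theta}$ and form the averaged subjective transition $\bar{Q}_{\nu}=\int_\Theta Q_\theta\, \nu(\dee \theta)$. Because $S$ and $X$ are finite and $\pi$ is continuous hence bounded, the Bellman operator for MDP($\bar{Q}_\nu$) is a $\delta$-contraction on the Banach space of bounded real-valued functions on $S$, so there is a unique value function $V_\nu$; the greedy correspondence $\sigma_\nu: S \rightrightarrows X$ picks the maximizing actions. Then I would introduce two correspondences: $F(\nu) \subseteq \PM{S\times X}$ collects those $m$ that (a) satisfy the stationarity equation $m_S(A)=\int Q(A|s,x)m(\dee s,\dee x)$ and (b) have $\supp(m)\subseteq \{(s,x): x \in \sigma_\nu(s)\}$; while $G(m)=\PM{\Theta_Q(m)}$ collects all priors concentrated on KL-minimizers. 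Berk-Nash equilibria are exactly the fixed points of $(m,\nu)\mapsto F(\nu)\times G(m)$.

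Next I would check the Kakutani hypotheses. Both domains are non-empty, convex, and weak-$*$ compact: $\PM{S\times X}$ is a finite-dimensional simplex, and $\PM{\Theta}$ is compact since $\Theta$ is compact metric. Convex-valuedness is routine: stationarity is linear in $m$, the support constraint is preserved under mixtures, and $\PM{\Theta_Q(m)}$ is obviously convex. Non-emptiness of $F(\nu)$ reduces to producing, for any measurable selection from $\sigma_\nu$, a stationary distribution of the induced finite Markov chain on $S$ (which always exists) and then attaching the selected action conditional to obtain a distribution on $S\times X$. Non-emptiness of $G(m)$ follows because $\theta\mapsto K_{Q}(m,\theta)$ is continuous on compact $\Theta$ by \cref{regsmdp} and hence attains its minimum. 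For upper hemicontinuity of $F$, I would chain: $\nu\mapsto \bar{Q}_\nu$ is continuous, Berge's maximum theorem gives continuity of $V_\nu$ and u.h.c.\ of $\sigma_\nu$, and the stationarity equation passes to weak limits because integration against the continuous kernel $Q(\cdot|s,x)$ is continuous in $m$. For u.h.c.\ of $G$, the crucial input is joint continuity of $(m,\theta)\mapsto K_Q(m,\theta)$, followed by another appeal to Berge.

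The main obstacle I expect is precisely this last step: the joint continuity of the weighted KL functional, because even in the finite state/action case the density function $D_\theta(s'|s,x)$ may take the value $+\infty$ when domination fails, so $K_Q(m,\cdot)$ is only guaranteed finite on the subset $\Theta_m$. The dense set $\hat{\Theta}$ of \cref{regsmdp} and the joint continuity of $D_\theta$ on its natural domain force $K_Q(m,\cdot)$ to be lower semicontinuous on $\Theta$ and continuous on $\Theta_m$; one has to argue that minimizers always lie in this finite region and that the level sets behave well as $m$ varies, so that Berge's theorem can be applied to the $\argmin$. Once this is secured, the rest is routine verification from finite-state dynamic programming and integration against continuous bounded integrands.
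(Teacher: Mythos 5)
First, an important point of reference: this paper never proves the statement in question --- it is quoted as an antecedent result from Esponda--Pouzo (2021), and enters the paper only as the finite-space input (via \cref{EPfinitelemma}) that is later \emph{transferred} to the hyperfinite SMDP in \cref{hyperberkNash}. So your proposal can only be compared with EP's original argument, and EP do not run a one-shot fixed-point argument on the unperturbed correspondences: they first establish existence of equilibria in \emph{perturbed} (trembling) SMDPs, in which equilibrium state-action distributions have full support, and then take limits as the perturbation vanishes, using lower semicontinuity of the weighted KL criterion together with the absolute-continuity assumption to preserve the belief restriction along the limit. The reason for that detour is exactly the step you flag as your ``main obstacle'' and then wave through: upper hemicontinuity of $m \mapsto \PM{\Theta_{Q}(m)}$ cannot be obtained from Berge's maximum theorem, because $K_{Q}$ is \emph{not} jointly continuous --- it is only jointly lower semicontinuous. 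Concretely, suppose $(s^{\circ},x^{\circ})$ is a pair with $\mathcal{D}_{\mathrm{KL}}\big(Q(s^{\circ},x^{\circ}),Q_{\theta'}(s^{\circ},x^{\circ})\big)=+\infty$, and take $m_n \to m$ with $m_n(s^{\circ},x^{\circ})=1/n$ and $m(s^{\circ},x^{\circ})=0$. Then $K_{Q}(m_n,\theta')\equiv+\infty$ for every $n$ while $K_{Q}(m,\theta')$ may be finite, so upper semicontinuity in $m$ fails precisely where supports collapse; along such a sequence the defining inequality $K_{Q}(m_n,\theta_n)\le K_{Q}(m_n,\theta')$ carries no information in the limit, and nothing in your argument compares the limit point $\theta^{*}$ against $\theta'$. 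This is a genuine gap, not a technicality.

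The gap is closable in the finite setting, but by a different mechanism than the one you invoke, and it is where the dense set $\hat{\Theta}$ of \cref{regsmdp} --- which your sketch never uses --- becomes load-bearing. With $S$ and $X$ finite: (i) each relative entropy $\theta\mapsto \mathcal{D}_{\mathrm{KL}}\big(Q(s,x),Q_{\theta}(s,x)\big)$ is continuous as an extended-real-valued function into $[0,\infty]$, since it is a finite sum of terms $Q\ln(Q/Q_{\theta})$, each of which diverges to $+\infty$ exactly when $Q_{\theta}(s'|s,x)\to 0$ with $Q(s'|s,x)>0$; (ii) consequently $K_{Q}$ is jointly lower semicontinuous and $K_{Q}(m,\cdot)$ is continuous into $[0,\infty]$ for each fixed $m$; (iii) for $\theta'\in\hat{\Theta}$ all entropies are finite, so $K_{Q}(\cdot,\theta')$ is linear with finite coefficients, hence continuous in $m$. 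Chaining these: if $\theta_n\in\Theta_{Q}(m_n)$, $\theta_n\to\theta^{*}$, $m_n\to m$, then for every $\theta'\in\hat{\Theta}$ one has $K_{Q}(m,\theta^{*})\le \liminf_n K_{Q}(m_n,\theta_n)\le \lim_n K_{Q}(m_n,\theta')=K_{Q}(m,\theta')$, and since $\hat{\Theta}$ is dense and $K_{Q}(m,\cdot)$ is continuous, $\inf_{\hat{\Theta}}K_{Q}(m,\cdot)=\inf_{\Theta}K_{Q}(m,\cdot)$; hence $\theta^{*}\in\Theta_{Q}(m)$ and the belief correspondence has closed graph --- no appeal to Berge on the argmin is possible or needed. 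Two smaller repairs: since $\PM{\Theta}$ is infinite-dimensional you need Kakutani--Fan--Glicksberg rather than Kakutani, and nonemptiness of $\Theta_{Q}(m)$ should be argued from continuity of $K_{Q}(m,\cdot)$ into $[0,\infty]$ on the compact $\Theta$ plus finiteness at points of $\hat{\Theta}$, since $K_{Q}(m,\cdot)$ is finite only on $\Theta_m$. With these insertions your direct fixed-point route does go through and is arguably more economical than EP's perturbation argument; as submitted, however, its pivotal step is unproven and the tool proposed for it cannot deliver it.
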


Note that every finite set can be embedded into a Euclidean space. Thus, the following \textit{finite} result is an immediate consequence of the above theorem. 
\begin{lemma}\label{EPfinitelemma}
Suppose $(\langle S,X,q_0,Q,\pi,\delta \rangle$, $\mathcal{Q}_{\Theta})$ is a regular-SMDP with finite state, action and parameter spaces.
Then there exists a Berk-Nash equilibrium. 
\end{lemma}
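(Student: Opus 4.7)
The plan is to observe that this lemma is an immediate specialization of the Esponda--Pouzo existence theorem quoted just above it, once one notes that every finite metric space embeds as a compact subset of some Euclidean space. Concretely, if $\Theta = \{\theta_1,\dots,\theta_N\}$, I would send $\theta_i$ to the $i$-th standard basis vector $e_i \in \mathbb{R}^N$; this is a homeomorphism onto the closed finite (hence compact) set $\{e_1,\dots,e_N\} \subset \mathbb{R}^N$. Under this relabeling $\Theta$ becomes a compact subset of Euclidean space, while $S$ and $X$ remain finite by hypothesis, so clauses (i) and (ii) of EP's theorem are satisfied.

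The only thing to verify is that the regular-SMDP structure of \cref{regsmdp} is preserved under this relabeling, but this is automatic: on a finite metric space the topology is discrete, so continuity of $(s,x)\mapsto Q(s,x)$, of $(\theta,s,x)\mapsto Q_\theta(s,x)$, of the density $D_\theta(s'\mid s,x)$, and of $\pi$ all hold trivially (every function on a discrete space is continuous). The uniform integrability condition \cref{KLint} reduces to a statement about the finitely many nonnegative reals $D_\theta(s'\mid s,x)$ and follows on choosing $\kappa$ strictly smaller than $\min\{Q_\theta(s,x)(\{s'\}) : Q_\theta(s,x)(\{s'\})>0\}$, so that the only sets $E$ with $Q_{\theta_0}(s_0,x_0)(E)<\kappa$ are $Q_{\theta_0}(s_0,x_0)$-null. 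Applying EP's theorem to this Euclidean presentation then yields a Berk--Nash equilibrium $m \in \PM{S\times X}$, which is exactly what the lemma asserts. There is no genuine obstacle; the lemma is recorded only so that the \emph{finite}, coordinate-free statement is available for the nonstandard transfer arguments developed later in the paper.
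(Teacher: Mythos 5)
Your proposal is correct and coincides with the paper's own argument: the paper proves this lemma in a single line by observing that every finite set can be embedded into a Euclidean space as a compact subset, so that EP's theorem applies immediately. Your extra verifications (the discrete topology on finite spaces makes all continuity requirements of \cref{regsmdp} trivial, and uniform integrability follows by taking $\kappa$ below the minimum positive point mass so that only null sets have measure less than $\kappa$) are precisely the routine details the paper leaves implicit.
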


\section{The Main Results and Applications}\label{secmainresults}

In this section, we present the three main existence results: the first pertains to a compact state space and the other two, to $\sigma$-compact state spaces. We consider three more substantial examples on neoclassical producer theory and econometric theory, and establish the existence of Berk-Nash equilibria for all examples in this paper by applying our main theorems. 

\subsection{The Main Results}
\label{secmainresult}

Our first main result extends the finite existence result of EP to a broader class of environments, SMDPs with a compact state space and unbounded densities (Radon-Nikodym derivatives). 

\thm
\label{mainresults}
Every regular-SMDP $\mathcal M$ with a compact state space has a Berk-Nash equilibrium. 
\thmm
We extend Theorem \ref{mainresults} to a regular SMDP $\mathcal{M}=(\langle S,X,q_0,Q,\pi,\delta \rangle$, $\mathcal{Q}_{\Theta})$ with a non-compact state space $S$. We assume that the density functions $\{D_{\theta}(\cdot|s, x)\}$ take value in $\Reals$. 
That is, for all $\theta\in \Theta$ and all $(s, x)\in S\times X$, let $D_{\theta}(\cdot|s, x): S\to \Reals$ be the density function if $Q(s, x)$ is dominated by $Q_{\theta}(s, x)$ and let $D_{\theta}(s'|s, x)=\infty$ if $Q(s, x)$ is not dominated by $Q_{\theta}(s, x)$. 
We start with the following assumption on the state space $S$. 

\begin{assumption}\label{assumptionstate}
There exists a non-decreasing sequence $\{S_n\}_{n\in \Nats}$ of compact subsets of $S$ such that 
\begin{enumerate}[{\normalfont (i)}, topsep=1pt]
 \setlength{\itemsep}{-2pt}
    \item $\bigcup_{n\in \Nats}S_n=S$;
    \item $q_0(S_n)>0$ for all $n\in \Nats$;
    \item\label{minbound} There exists $r>0$ such that $Q(s, x)(S_n)>r$ and $Q_{\theta}(s, x)(S_n)>r$ for all $n\in \Nats$, all $(s, x)\in S_n\times X$ and all $\theta\in \Theta$;
    \item \label{contset} For all $n\in \Nats$, $S_n$ is a continuity set of $Q(s, x)$ and $Q_{\theta}(s, x)$ for all $(s, x)\in S_n\times X$ and all $\theta\in \Theta$.
\end{enumerate}
\end{assumption}

\cref{assumptionstate} imposes four technical conditions on the state space that are satisfied for most applications in the literature.\fn{Unlike the finite proof in EP which doesn't involve $q_{0}$, we use it here to deconstruct the state space into subsets with positive initial state visitations.} It requires that the state space $S$ can be deconstructed into a countable, non-decreasing sequence of subsets such that their union is the state space $S$. Items \ref{minbound} and \ref{contset} of \cref{assumptionstate} jointly imply that the true and model transition probability functions are well-behaved for the truncation of the SMDP $\mathcal{M}$ defined on the sequence $\{S_n\}_{n\in \Nats}$. Thus, it is natural to approximate $\mathcal{M}$ using a sequence of SMDPs with state spaces $\{S_n\}_{n\in \Nats}$. In particular, for $n\in \Nats$, define $\mathcal{M}_{\Theta'}^{n}=(\langle S_n,X,q_0^{n},Q^{n},\pi_{n},\delta \rangle$, $\mathcal{Q}^{n}_{\Theta'})$ to be the SMDP such that:
\begin{enumerate}[{\normalfont (i)}, topsep=1pt]
 \setlength{\itemsep}{-2pt}
    \item The state space is $S_n$, endowed with Borel $\sigma$-algebra $\BorelSets {S_n}$;
    \item The action space is $X$, endowed with Borel $\sigma$-algebra $\BorelSets X$;
    \item $q_0^{n}(A)=\frac{q_0(A)}{q_0(S_n)}$ for all $A\in \BorelSets {S_n}$;
    \item The parameter space $\Theta'$ is a finite subset of $\hat{\Theta}$;
    \item $Q^{n}: S_n\times X\to \PM{S_n}$ is the transition probability function defined as $Q^{n}(s, x)(A)=\frac{Q(s, x)(A)}{Q(s, x)(S_n)}$ for all $A\in \BorelSets {S_n}$;
    \item For every $\theta\in \Theta'$, $Q_{\theta}^{n}: S_n\times X\to \PM{S_n}$ is defined as 
    $Q_{\theta}^{n}(s, x)(A)=\frac{Q_{\theta}(s, x)(A)}{Q_{\theta}(s, x)(S_n)}$ for all $A\in \BorelSets {S_n}$ and  
    let $\mathcal{Q}^{n}_{\Theta'}=\{Q_{\theta}^{n}: \theta\in \Theta'\}$; 
    \item $\pi_{n}: S_n\times X\times S_n\to \Reals$ is the restriction of $\pi$ to $S_n\times X\times S_n$;
    \item  $\delta\in [0,1)$ is the discount factor.
\end{enumerate}

\begin{remark}\label{doubleapprox}
For $\theta\in \Theta$, it is possible that $Q^{n}(s, x)$ is dominated by $Q_{\theta}^{n}(s, x)$ but $Q(s, x)$ is not dominated by $Q_{\theta}(s, x)$. 
Thus, we need to approximate the state and parameter spaces of the full SMDP $\mathcal{M}$ by carefully chosen subsets, simultaneously. 
So we choose to approximate the state space $S$ by the sequence $\{S_n\}_{n\in \Nats}$ of compact sets and approximate the parameter space $\Theta$ by finite subsets of $\hat{\Theta}$. 
\end{remark}

To ensure that the Markov decision process has a stationary measure, a sequence of stationary measures for the truncated Markov decision processes should have a convergent subsequence. 
For every $n\in \Nats$ and every $P\in \PM{S_n\times X}$, let $R_{n}(P)$ be the probability measure on $S_n$ such that 
$
R_{n}(P)(A)=\int_{S_n\times X}Q^{n}(A|s,x)P(\dee s, \dee x).
$
Let $P_{S}$ denote the marginal measure of $P$ on $S$. 
We impose the following tightness assumption: 

\begin{assumption}\label{assumptiontight} (Tightness)
The family 

$$
\mathcal{R}=\{R_{n}(P): n\in \Nats, P\in \PM{S_n\times X}, P_{S}=R_{n}(P)\}
$$ 
is tight.
\end{assumption}
\nt\cref{assumptiontight} ensures that any sequence of stationary measures for the truncated transition probability functions $\{Q^{n}(s, x): s\in S_n, x\in X\}$ is tight, which further implies that any sequence of stationary measures has a convergent subsequence. Tightness may sometimes be hard to verify directly and therefore, we provide two sufficient conditions. The second sufficient condition is satisfied for most applications.

\vspace{1em}

\nt{\textbf{Condition 1 (Reversible)}:}  The transition probability function $\{Q(s, x)\}$ has a unique stationary measure $\pi$ and is \textit{reversible} with respect to $\pi$. That is, there exists a unique $\pi\in \PM{S\times X}$ such that
    $
    \pi_{S}(A)=\int_{S\times X}Q(s, x)(A)\pi(\dee s, \dee x)
    $
    for all $A\in \BorelSets S$. Moreover, for all $A_1, A_2\in \BorelSets S$, we have
    $
    \int_{A_1\times X}Q(s, x)(A_2)\pi(\dee s, \dee x)=\int_{A_2\times X}Q(s, x)(A_1)\pi(\dee s, \dee x). 
    $

\vspace{1em}

\nt{\textbf{Condition 2 (Lyapunov)}:}  The transition probability function $\{Q(s, x)\}$ satisfies the 
    \emph{Lyapunov condition}, that is, there exist a non-negative continuous norm-like function $V,$\footnote{A function $V: S\to \NNReals$ is \textit{norm-like} if $\{s\in S: V(s)\leq B\}$ is precompact for every $B>0$.}  and constants $0<\alpha\leq1$, $\beta\geq 0$ such that 
    $\label{lyapueq}
    \int_{S}V(y)Q(s, x)(\dee y)\leq (1-\alpha)V(s)+\beta
    $
    for all $s\in S$ and $x\in X$. Moreover, the sequence $(\{s\in S: V(s)\leq n\})_{n\in \Nats}$ of sets satisfies \cref{assumptionstate}. Then, by taking $S_n=\{s\in S: V(s)\leq n\}$, \cref{assumptiontight} is satisfied. 

\vspace{0.5em}

To establish belief restriction for the SMDP $\mathcal{M}$, 
we impose the following assumption on the relative entropy.
\begin{assumption}\label{assumptionrebound} (Uniform-integrability)
For all $\theta\in\hat{\Theta},$ the family of relative entropy
$\{\mathcal{D}_{\mathrm{KL}}\big(Q(s, x), Q_{\theta}(s, x)\big)\}$ is uniformly integrable with respect to all stationary $P\in \Delta(S\times X)$. 
That is, for every $\epsilon>0$, there exists $\kappa>0$ such that 
$
\int_{E}\mathbb{E}_{Q(\cdot|s, x)}\left[\ln\big(D_{\theta}(s'|s, x)\big)\right]P(\dee s, \dee x)<\epsilon
$
for all $\theta\in \hat{\Theta}$ and all stationary $P\in \Delta(S\times X)$ with $P(E)<\kappa$.\fn{One sufficient condition for \cref{assumptionrebound} is to assume that the relative entropy is uniformly bounded on the set $\{(\theta, s, x)\in \Theta\times S\times X: Q(s, x)\ \text{is dominated by}\ Q_{\theta}(s, x)\}$. If the true transition probability function $Q$ and every element in $\mathcal{Q}_{\Theta}$ do not depend on the current state,  as the action and parameter spaces are compact, this sufficient condition is usually satisfied.}
\end{assumption}

The candidate Berk-Nash equilibrium for $\mathcal{M}$ is the weak limit of Berk-Nash equilibrium for the sequence of truncated SMDPs. \cref{assumptionrebound} allows us to approximate the weighted Kullback-Leibler divergence of $\mathcal{M}$ from the weighted Kullback-Leibler divergence of truncated SMDPs. Alternatively, we can also establish belief restriction under the following assumption.

\begin{assumption}\label{assumptionuniquemin} (Uniqueness)
There exists a unique $\theta_0\in \Theta$ that minimizes the relative entropy $\mathcal{D}_{\mathrm{KL}}\big(Q(s, x), Q_{\theta}(s, x)\big)$ for all $(s, x)\in S\times X$. Moreover, for every $n\in \Nats$, $\theta_0$ uniquely minimizes $\mathcal{D}_{\mathrm{KL}}\big(Q^{n}(s, x), Q_{\theta}^{n}(s, x)\big)$ for all $(s, x)\in S_n\times X$.
\end{assumption}

\begin{remark}
Under \cref{assumptionuniquemin}, the set of closest parameters for $\mathcal{M}$ and all truncated SMDPs is the same singleton set $\{\theta_0\}$.  
If the model is \textit{correctly specified}, then \cref{assumptionuniquemin} is trivially satisfied with the set of the closest parameters being a singleton, that is, the true parameter value. 
\end{remark}

We establish optimality under two different set of conditions: Theorem 2 assumes the payoff function is bounded continuous while Theorem 3 allows for unbounded payoff functions under a fairly general norm-restriction assumption on the state space. 
For the bounded payoff function case, we assume: 

\begin{assumption}\label{assumptionpayoff} (Boundedness)
The payoff function $\pi: S\times X\times S\to \Reals$ is a bounded continuous function.
\end{assumption}

We now present our second main result on the existence of Berk-Nash Equilibrium for SMDPs with a bounded payoff function and a $\sigma$-compact state space, thus, allowing instances when the state space is the real line, $\mathbb R.$ 

\thm
\label{mainresultsigma1} Any regular SMDP that satisfies \cref{assumptionstate}, tightness (\cref{assumptiontight}) and has a bounded payoff function (\cref{assumptionpayoff}) has a Berk-Nash equilibrium if either the SMDP 
is correctly specified or if one of the assumptions of uniform integrability (\cref{assumptionrebound}) or uniqueness (\cref{assumptionuniquemin}) holds. 
\thmm

Unbounded payoff functions appear naturally in many important economic applications;\fn{See \cite{ry74} and the subsequent response to it in \cite{ar74} for further elaboration on the relevance of unbounded utility functions in the context of  expected utility theory. The interested reader is referred to a discussion on these issues in \cite{pst20}.} our  Examples \ref{stochasticgrowth} (optimal savings) and \ref{example-Unit_Root2} (AR (1) process) feature unbounded payoff functions. However, the Bellman equation need not have a solution when the payoff function is unbounded. We impose a few assumptions to ensure the existence of a solution for the Bellman equation. Moreover, these assumptions allow for approximation of the solution of the Bellman equation for $\mathcal{M}$ by solutions of Bellman equations for truncated SMDPs. We assume the state space $S$ is a norm space and use $\|s\|$ to denote the norm of an element $s\in S$. The first assumption puts an upper bound on the growth rate of the payoff function:

\begin{assumption}\label{assumptionpayoffubd} (State-boundedness)
The payoff function $\pi: S\times X\times S\to \Reals$ is a jointly continuous function and there exist $A, B\in \PosReals$ such that for all $(s, x, s')\in S\times X \times S'$, $|\pi(s, x, s')|\leq A+B \max\{\|s\|, \|s'\|\}.$\fn{Instances where such an assumption are satisfied are common: (i) a monopolist's payoff in \citet{ny91}, (ii)  CRRA utility  in stochastic growth and optimal savings environments.}

\end{assumption}

The next assumption asserts that the rate of the family of subjective transition probability functions drifting to infinity is bounded by a linear function. 

\begin{assumption}\label{assumptionsint} (Fold-boundedness)
Let $B\in \PosReals$ be given in \cref{assumptionpayoffubd}.
There exist some $C, D\in \PosReals$ such that  $\int_{S}\|s'\|Q_{\theta}(\dee s'|s, x)\leq C+\frac{D}{(1+\delta)B+\delta D}\|s\|$ for all $x\in X$, $\theta\in \Theta$ and $s\in S$.\footnote{It is clear that $\frac{D}{(1+\delta)B+\delta D}\to \frac{1}{\delta}$ as $D\to \infty$. 
If we assume that there exist $C\in \PosReals$ and $D<\frac{1}{\delta}$ such that $\int_{S}\|s'\|Q_{\theta}(\dee s'|s, x)\leq C+D\|s\|$, then it implies \cref{assumptionsint} with a suitably chosen $D'$. } 
\end{assumption}



The final assumption impose a stronger continuity condition on the family $\mathcal{Q}_{\Theta}=\{Q_{\theta}: \theta\in \Theta\}$. 
Let $d_{S}$ denote the metric on $S$ generated from the norm. 

\begin{assumption}\label{assumptioncontwass} (W-continuity)
The mapping $(\theta, s, x)\to Q_{\theta}(s, x)$ is continuous in the $1$-Wasserstein metric. Moreover, $Q_{\theta}(s, x)$ has finite first moment for all $(\theta, s, x)\in \Theta\times S\times X$. That is, for every $(\theta, s, x)\in \Theta\times S\times X$, there exists\fn{By the triangle inequality, under \cref{assumptioncontwass}, we have $\int_{S}d_{S}(t, s')Q_{\theta}(s, x)(\dee t)<\infty$ for all $s'\in S$ and all
$(\theta,s,x)\in \Theta\times S\times X$.  
} some $s_0\in S$ such that $\int_{S}d_{S}(t, s_0)Q_{\theta}(s, x)(\dee t)<\infty$. 
\end{assumption}

Our final main result establishes the existence of a Berk-Nash equilibrium for SMDPs for a sigma-compact state spaces with unbounded payoff function.

\thm
\label{mainresultsigma2}
Theorem 2 holds if the boundedness of the payoff function (\cref{assumptionpayoff}) is weakened to state-boundedness (\cref{assumptionpayoffubd}) but with subjective transition probability functions fold-bounded (\cref{assumptionsint}) and W-continuous (\cref{assumptioncontwass}). 
\thmm

\subsection{Applications to Economic Settings}

In this subsection, in addition to the previous two EP illustrations, we present examples to demonstrate the applicability of our main results to a variety of problems encountered in economic theory. Examples \ref{example-Unit_Root} and \ref{example-Unit_Root2} connect the existence of a Berk-Nash equilibrium with the existence of unit roots for an AR(1) process. Example \ref{pricingshocks} considers a revenue analogue of Example \ref{costsproduction}.

\ex[AR(1) Process with a Bounded Payoff Function] \label{example-Unit_Root}\normalfont
In this example, we show that for a AR(1) process, a Berk-Nash equilibrium exists if and only if the AR(1) process does not have a unit root. In addition, we illustrate that the unbounded density functions arise naturally in correctly specified econometric inference problems.  

Consider a SMDP with  state space $S=\mathbb R,$ a singleton action space $X=\{0\},$ and a payoff function $\pi: S\times X\times S$ that equals 0 for all $(s,x,s)\in S\times X \times S$. For every $s\in S$, the true transition probability function $Q(s)$ is the distribution of $a_0s+b_0\xi$, where $a_0\in [0, 2], b_0\in [0, 1]$ and $\xi=\mathcal{N}(0,1)$ has the standard normal distribution.\fn{In other words, this is simply an inference problem about a Markov process, specifically an AR(1) process, rather than a full Markov {\em decision problem}.} The parameter space $\Theta$ is $[0, 2]\times [0, 1]$ and for every $(a, b)\in \Theta$, the transition probability function $Q_{(a, b)}(s)$ is the distribution of $as+b\xi$.

Consider first the degenerate case $b_0 = 0$. In this case, the evolution of the state is deterministic. 
When $a_0<1$, the Dirac measure $\delta_{(0,0)}$ at $(0,0)$ is a Berk-Nash equilibrium, supported by the belief $\delta_{(a_0, 0)}$. When $a_0=1$, every Dirac measure $\delta_{(s, 0)}$ for $s\in S$ is a Berk-Nash equilibrium supported by the belief $\delta_{(1, 0)}$. There is no Berk-Nash equilibrium if $a_0>1$. 

Now, we turn to the non-degenerate case $b_0 > 0$.  
The true transition probability function $Q(s) = Q_{(a_0,b_0)}(s)$ is absolutely continuous with respect to $Q_{(a, b)}(s)$ for all $(a, b)\in \hat \Theta = [0,2] \times (0,1]$, and the density function is jointly continuous function where it is defined. Note, however, that the density function is unbounded on $\hat \Theta$, tending to infinity as $b \rightarrow 0.$  The unboundedness arises here, even though the model is correctly specified.
This situation arises ubiquitously in econometric inference.  In any OLS estimation, we test (among other things) whether or not the regression coefficient $\alpha_1$ of the dependent variable $y$ on a given independent variable $x_1$ is, or is not, zero. This estimation requires us to include in $\Theta$ the {\em possibility} that $\alpha_1$ is zero.  If $\alpha_1$ is, in fact, not zero, the Radon-Nikodym derivative will typically be unbounded.

It is straightforward to verify that this is a regular SMDP in the sense of \cref{regsmdp}. As the state space is not compact, we need to check the conditions under which this example satisfies the assumptions for Theorem 2. 
We will see that the example satisfies those conditions if and only if $a_0 <1$.  Note that if $a_0 \geq 1$, then since $b_0 \not = 0$, the Markov process has no stationary distribution, and hence there is no Berk-Nash equilibrium. Note also that $a_0 \geq 1$ if and only if the AR(1) process has a unit root, which implies that the usual method for estimating the parameters of the AR(1), ordinary least squares, yields spurious results.  Thus, our example has a Berk-Nash equilibrium if and only if the AR(1) process does not have a unit root.

To see this, suppose $0 \leq a_0 < 1$ (and recall that $b_0 > 0$). For each $n\in \Nats$, let $S_n=[-n, n]$; it is straightforward to see that the sequence $\{S_n: n\in \Nats\}$ satisfies \cref{assumptionstate}. 
\cref{assumptiontight} is satisfied by taking the Lyapunov function $V(s)=|s|$. 
It can be verified that \cref{assumptionrebound} is satisfied, which establishes belief restriction\fn{In \cref{detailexample}, we provide rigorous verification for \cref{assumptiontight} and \cref{assumptionrebound}.}.
We can also establish belief restriction from the fact that the SMDP is correctly specified. 
We can easily modify this example to a SMDP with misspecification, in which case belief restriction follows from \cref{assumptionuniquemin}.
The payoff function is constant, and hence satisfies Assumption \ref{assumptionpayoff}.  By Theorem 2, there exists a Berk-Nash equilibrium for this SMDP. 
The Berk-Nash equilibrium is $\mu\times \delta_{0}$ where $\mu=\mathcal{N}(0, \frac{b_0^2}{1-a_0^2})$, supported by the belief $\delta_{(a_0,b_0)}$.

\exx

\ex[AR(1) Process with an Unbounded Payoff Function]
\label{example-Unit_Root2}\normalfont
We modify Example \ref{example-Unit_Root} by setting the action space $X=[-1,1]$ and $\Theta = [0,2] \times [0,1] \times [-1,1]$. The true probability transition $Q(s, x)$ has the distribution of $a_0s+b_0\xi+c_0 x$, with $c_0 \in [-1,1]$, and the payoff is $\pi(s,x,s') = s'$. For every $(a, b, c)\in \Theta$, the transition probability function $Q_{(a,b,c)}(s, x)=as+b\xi+cx$.  
The degenerate case $b_0 = 0$ is handled in the same way as in Example \ref{example-Unit_Root}. 

Now suppose that $b_0>0$.  If $a_0\geq 1$, then since $b_0\neq 0$, the Markov decision process has no stationary distribution, and hence there is no Berk-Nash equilibrium.
If $a_0<1$, we restrict $\Theta$ to $\Theta' = [0,1] \times [0,1] \times [-1,1]$, and verify that the Assumptions of Theorem \ref{mainresultsigma2} are satisfied for this modified SMDP.
Letting $V(s)=\|s\|$ and $S_n=[-n, n]$, Condition 2 for  \cref{assumptiontight} is satisfied by essentially the same calculation as in Example \ref{example-Unit_Root}.
The payoff function $\pi$ clearly satisfies \cref{assumptionpayoffubd} and \cref{assumptioncontwass}.
By a similar calculation as in Example \ref{example-Unit_Root}, \cref{assumptionrebound} is satisfied. We can establish belief restriction from the fact that this SMDP is correctly specified or via \cref{assumptionuniquemin}.
By essentially the same calculation as in Example \ref{example-Unit_Root}, \cref{assumptionsint} is satisfied. Thus, the restricted SMDP has a Berk-Nash equilibrium by Theorem 3.

When $b_0 > 0$ and $a_0<1$, observe that the action choice  $x= \sign{c_0}$\footnote{$\sign x = x/|x|$ when $x \not = 0$, $\sign 0 = 0$.} is a dominant strategy,\footnote{It is weakly dominant if $c_0=0$, dominant otherwise.} so let $\mu$ be the unique stationary distribution on $S$ induced by the action choice $\sign c_0$.  When $c_0 \not = 0$, the Berk-Nash equilibrium is $\mu \times \delta_{\sign c_0}$; when $c_0 = 0$, the set of Berk-Nash equilibria is $\mu \times \Delta(X)$; in both cases, the Berk-Nash equilibria are supported by the belief $\delta_{(a_0,b_0,c_0)}$.
Note that, in this example, the set of closest parameter values $\Theta'_{Q}(\mu)=\{(a_0,b_0,c_0)\}$ for the restricted SMDP is the same as the set of closest parameter values $\Theta_{Q}(\mu)$ for the original SMDP. Hence, the equilibrium is a Berk-Nash equilibrium of the original SMDP. Therefore, by Theorem 3, the problem has a Berk-Nash equilibrium if and only if $a_0 < 1$, i.e. if and only if the problem does not have a unit root.

\exx

\ex[Misspecified Revenue]
\label{pricingshocks}\normalfont
In this example, we incorporate misspecification in the payoff function with misspecified pricing shocks. The Markov Decision Process is as follows. Every period, an agent observes a productivity shock $z \in \mathbb{Z}=[0,1]$ and chooses an input $x \in X$  $\subset \mathbb{R}_{+} $ which results in the agent receiving a payoff of $r(x)-$ $c(x),$ where $c(x)=x^{2}$ is the cost of choosing $x$, and $r(x)=zf(x)\epsilon$ where $f(x)$ is the production function, $\epsilon$ is a random, independent shock to the price (which we set as 1) distributed according to the (true) distribution $d^{*}$, which has support equal to $[0, b], 0\leq b \leq \infty \mbox{ and } 0<\expect_{d^{*}}[\epsilon] < \infty.$\fn{We assume that the true distribution $d^{*}$ satisfies conditions in \cref{regsmdp}.} Therefore, the state space is given by, $(z,\epsilon)\in S=[0,1]\times [0,b].$  Let $Q\left(z^{\prime} \mid z\right)$ be the probability that tomorrow's productivity shock is $z^{\prime},$ given the current shock $z$ and similarly, let $Q^{R}(\epsilon'\mid x)$ denote the transition function for the price shock, $\epsilon'$. We follow EP in framing the price shock $\epsilon$ as a part of the state variables along with the productivity shock $z$ and define the Bellman equation below.
$$
V(z, \epsilon)=\max _{x} \int_{[0,1]\times [0,b]}\left(z f(x)\epsilon'-c+\delta V\left(z^{\prime}, \epsilon^{\prime}\right)\right) Q\left(\dee z^{\prime} \mid z\right) Q^{R}\left(\dee \epsilon^{\prime} \mid x\right)
$$
We assume that there is a unique stationary distribution over these productivity shocks, denoted by $z\sim U[0,1]$. Next, we describe the SMDP of our environment. The agent believes (SMDP) that $f(x)=
x $ and $\epsilon \sim d_{\theta}$, where $d_{\theta}$ has support equal to $[0, b]$ where $b=k\theta.$ The parameter space $\Theta$ and the action space $X$ are chosen as such to be compact.\fn{The details are supplied in the Online Appendix.} We assume that $\epsilon$ follows a truncated exponential distribution, $d_{\theta}(\epsilon)=\dfrac{(1 / \theta) e^{-(1 / \theta) \epsilon}}{1-e^{(-b/\theta)}} .$  Here, the agent's model can be misspecified if either the true production function is not linear or if the  true distribution of revenue shocks are not a part of the exponential family, or if  support assumed  of the model transition functions is different from the true transition function. Given these primitives, it is easy to verify that this is a regular SMDP in the sense of \cref{regsmdp}. Therefore, from Theorem \ref{mainresults}, a Berk-Nash equilibrium exists.

\exx

\begin{customex}{1 (\normalfont contd)}\normalfont 
Following Examples 3 and 4, it is easy to see that the SMDP satisfies \cref{regsmdp} and given the normality of the transition probability function, Assumptions 1-3 hold. The state space is not compact, and therefore, we need to check whether Theorem 2 or Theorem 3 applies. The payoff function is unbounded. However, it is state-bounded and therefore, satisfies \cref{assumptionpayoffubd}. Finally, Assumptions 7-8 hold as well as illustrated in Example \ref{example-Unit_Root}. Therefore, by Theorem 3, a Berk-Nash equilibrium exists. We next characterize the Berk-Nash equilibrium for this instance.\fn{The details are given in \cref{sec:onlineappendix}.}

In this case, the Berk-Nash equilibrium  is characterized by the optimal policy function, $k=x^{*}y=A_{z}(\beta^{m})y=\dfrac{0.5\delta\beta^{m}}{(1-\delta\beta^{m})z+0.5\delta\beta^{m}}y,$ where there exists a $\beta^{m}\in (0,\beta^{*}).$ Indeed, note that the true transition probability function $Q(s)$ has a unique stationary measure $\mu$.  So, the Berk-Nash equilibrium for this SMDP is $\mu\times \delta_{x^{*}}$, supported by the belief $\delta_{(\beta^{m})}$.
\end{customex}

\begin{customex}{2 (\normalfont contd)}\normalfont
Given the primitives, it is easy to verify that this is a regular SMDP in the sense of \cref{regsmdp}. Given the state space  $S=[0,1]\times[0,b],$ for this misspecified SMDP is compact, therefore, from Theorem 1, a Berk-Nash equilibrium exists. The full characterization of the Berk-Nash equilibrium for the case when the \textit{true} cost function is convex and quadratic is sketched in  \cref{detailexample}. 
\end{customex}

\section{Methodological Contribution}\label{secmethod}

While nonstandard analysis has been used in mathematical economics since the 1970s, this paper relies on a new nonstandard technique pioneered in \citet{duanmuthesis} to extend theorems from finite mathematical structures to infinite mathematical structures.\footnote{This paper is part of an ongoing program applying nonstandard analysis to resolve important problems in Markov processes (\citet{drw21}, \citet{anderson2018mixhit}, \citet{ads21avg} and \citet{ads21gibbs}), statistics (\citet{nsbayes} and \citet{nscredible}) and mathematical economics (\citet{adku1} and \citet{adku2}). }  Candidates for this technique have the following properties:  

\begin{enumerate}[{\normalfont (i)}, topsep=1pt]
 \setlength{\itemsep}{-2pt}
\item The theorem is known on a finite (or finite-dimensional) space, \textit{and}
\item The theorem \textit{statement} does not rely heavily on the space being finite, \textit{but}
\item The existing proof(s) \textit{do} rely heavily on the space being finite.
\end{enumerate}
For results with these properties, Duanmu's technique allows one to directly translate the \textit{statement} of the theorem without having to translate the \textit{details} of the proof.


\defn{Nonstandard} models satisfy three principles: \defn{extension}, which associates to every ordinary mathematical object a nonstandard counterpart called its extension; \defn{transfer}, which preserves the truth values of first-order logic statements between standard and nonstandard models; and \defn{saturation}, which gives us a powerful mechanism for proving the existence of nonstandard objects defined in terms of finitely satisfiable collections of first-order formulas. In a suitably saturated nonstandard model, one can construct a hyperfinite probability space, which satisfies all the first order logical properties of a finite probability space, but which can be simultaneously viewed as a measure-theoretical probability space via the Loeb measure construction. In particular, Duanmu's technique invokes the following proof strategy:
\begin{enumerate}[{\normalfont (i)}, topsep=1pt]
 \setlength{\itemsep}{-2pt}
\item Start with a standard infinite (e.g. measure-theoretic) object.
\item Construct a \textit{lifting}, embedding our standard object in a hyperfinite object.
\item Use the \textit{transfer principle} to obtain the theorem for the hyperfinite object, essentially for free. 
\item Use the Loeb measure construction to \textit{push down} the theorem for the hyperfinite object to obtain the result in the original standard setting.
\end{enumerate}   

The \emph{truncation argument} uses a sequence of mathematical objects on larger and larger compact spaces to approximate an object on the $\sigma$-compact space. It is a standard methodology to derive results on $\sigma$-compact space using analogous results on compact space. Nonstandard analysis allows for the construction of a single nonstandard object such that:
\begin{enumerate}[{\normalfont (i)}, topsep=1pt]
 \setlength{\itemsep}{-2pt}
    \item The nonstandard object is an ``infinite" element of the sequence of mathematical objects on compact spaces;
    \item The nonstandard object sits on a $\NSE{}$compact set, which contains the original $\sigma$-compact space as a subset.
\end{enumerate}
To derive the desired result for the mathematical object on the $\sigma$-compact space, we apply the following proof strategy: 
\begin{enumerate}[{\normalfont (i)}, topsep=1pt]
 \setlength{\itemsep}{-2pt}
    \item Use the \emph{transfer principle} to obtain the theorem for the nonstandard object on the $\NSE{}$compact space, essentially for free;
    \item Use the Loeb measure construction to \emph{push down} the theorem for the nonstandard object to obtain result for the original mathematical object on the $\sigma$-compact space.
    To ensure the push-down is the desired standard object, we need to impose reasonable regularity conditions on the original standard mathematical object. These regularity conditions also guarantee the sequence of objects on compact sets converges in appropriate sense. 
\end{enumerate}

This paper is another example of these novel approaches. To prove Theorem \ref{mainresults}, we start with a regular SMDP  with compact state and action spaces, embed it in a hyperfinite SMDP, transfer existing results from EP to this hyperfinite SMDP, then conclude by a ``push down" argument to obtain a Berk-Nash equilibrium. On the other hand, to prove Theorem \ref{mainresultsigma1} and \ref{mainresultsigma2}, we start with regular SMDP with a $\sigma$-compact state space, a compact action space and a possibly unbounded payoff function, embed it in a nonstandard SMDP with a $\NSE{}$compact state space, transfer Theorem \ref{mainresults} to this nonstandard SMDP, then obtain a Berk-Nash equilibrium for the standard regular SMDP by a ``push down" argument.

There are several difficulties one faces in formulating the extension for infinite spaces. First, as the state and action spaces are infinite, it is not straightforward to show that the weighted Kullback-Leibler divergence is jointly lower semi-continuous, since the standard proof depends heavily on the finiteness of the state and action spaces.\fn{Claim A of Lemma 1 (Page 742, line 9) in EP.} Allowing for unbounded Radon-Nikodym derivatives only adds to the difficulty of this approach, thus making the direct generalization difficult. Second, the fixed point argument crucially relies on the upper hemicontinuity of policy correspondence in Equation (\ref{uhc}); which in the case of an unbounded state space is difficult to show.\fn{See  Claim B on (Page 774, line 25) in EP.} 
\begin{equation}
\label{uhc}
(s, Q) \mapsto M(s, Q) \equiv \arg \max _{\hat{x} \in \mathbb{X}} \int_{\mathbb{S}}\left\{\pi\left(s, \hat{x}, s^{\prime}\right)+\delta V\left(s^{\prime}\right)\right\} Q\left(d s^{\prime} \mid s, \hat{x}\right)
\end{equation}

\nt In Equation (\ref{uhc}), the solution $V$ to the Bellman equation varies as $(s, Q)$ varies, and therefore one can not simply apply Berge's maximal theorem as is usual in the standard textbook treatment (\cite{sl89}) where only $s$ varies. The subtlety arises from the fact that $V$ depends on \textit{both} $s$ and $Q.$ While  this additional dependence is easy to accomodate in the finite case but it is precisely here that the difficulty arises in the infinite one, especially with an unbounded state space. The crucial difficulty lies in establishing that the Bellman equation of the SMDP has a solution and that the solution varies continuously with respect to the appropriate transition probabilities (referencing measure). For bounded payoff functions, the Banach fixed point theorem guarantees the existence of a solution for the Bellman equation. But establishing the continuity of the solution of the Bellman equation with respect to the referencing measures is not straightforward. With unbounded payoff functions--which feature naturally in many important applications in economics--this issue becomes even more formidable. The Bellman equation need not have a solution when the payoff function is unbounded. The Banach fixed point theorem does not apply in this case, and one needs to impose a growth condition on the payoff function to ensure the existence of a solution for the Bellman equation. Moreover, it is much more difficult to show that the solution of the Bellman equation varies continuously in terms of the referencing measure in the Bellman equation, since the tail behavior of the Bellman equation is non-negligible.

 An alternative standard method to prove Theorem 1 is to use a sequence of finite SMDPs to approximate the regular SMDP $\mathcal{M}$, and construct a Berk-Nash equilibrium for $\mathcal{M}$ from a sequence of Berk-Nash equilibria of the sequence of finite SMDPs. However, as mentioned in Remark \ref{doubleapprox} in the paper, we need to partition the state, action and parameter spaces simultaneously to obtain the desired sequence of finite SMDPs, which makes the construction as well as the analysis of the sequence of finite SMDPs complicated. Nonstandard analysis allows for the construction of a hyperfinite SMDP $\mathcal{M}'$, where its state, action and parameter spaces are chosen to avoid all pathological aspects that may arise in the sequential approximation by finite SMDPs. We transfer EP's Theorem 1 to establish the existence of a hyperfinite Berk-Nash equilibrium for $\mathcal{M}'$, then construct a Berk-Nash equilibrium for the regular SMDP $\mathcal{M}$. Third, for Theorems 2 and 3, if we were try to establish them using standard methods, the obvious choice is via truncation and Theorem 1. That is, we would first construct a truncated sequence of SMDPs with larger and larger compact state spaces, and using Theorem 1 to guarantee the existence of a sequence of Berk-Nash equilibria, we would construct a Berk-Nash equilibrium for the original SMDP with a $\sigma$-compact state space. 
However, for all this we must perform a simultaneous ``double" approximation on the state and the parameter space, which makes the construction as well as the analysis of the truncated sequence of SMDPs extremely complicated. On the other hand, nonstandard analysis provides an elegant alternative approach by using a single nonstandard SMDP with a ``large" nonstandard compact state space to approximate the original SMDP. The nonstandard SMDP can be viewed informally as the limiting object of a sequence of truncated SMDPs, but avoids the aforementioned technical difficulties that arise in the standard approach. In Section 5.3, where we provide sketch of proofs for Theorems 2 and 3, we illustrate after Theorem 5.8, the parallelism between the nonstandard approach and the truncated sequence approach.

\section{Sketch of Proofs}
\label{sketchofproofs}
In this section, we provide sketch of proofs for main results of the paper:  Theorem \ref{mainresults}, Theorem \ref{mainresultsigma1} and Theorem \ref{mainresultsigma2}. After setting out some basic preliminaries of non-standard analysis for the lay reader in \cref{appdnotation}, we turn to an overview of the basic argumentation.
In \cref{sechypresent}, we sketch the proof for Theorem \ref{mainresults}. The detailed proof is presented in \cref{appendixA1}. In \cref{sechypresentsigma}, we sketch proofs for Theorem \ref{mainresultsigma1} and Theorem \ref{mainresultsigma2}. The detailed proofs are presented in \cref{appendixA2}.

\subsection{Preliminaries on Nonstandard Analysis}\label{appdnotation}
For those who are not familiar with nonstandard analysis, \citet{adku2, adku1} provide reviews tailored to economists. \citet{NDV, NSAA97, NAW} provide thorough introductions. 
We use $\NSE{}$ to denote the nonstandard extension map taking elements, sets, functions, relations, etc., to their nonstandard counterparts.
In particular, $\HReals$ and $\NSE{\Nats}$ denote the nonstandard extensions of the reals and natural numbers, respectively.
An element $r\in \HReals$ is \emph{infinite} if $|r|>n$ for every $n\in \Nats$ and is \emph{finite} otherwise. An element $r \in \HReals$ with $r > 0$ is \textit{infinitesimal} if $r^{-1}$ is infinite. For $r,s \in \HReals$, we use the notation $r \approx s$ as shorthand for the statement ``$|r-s|$ is infinitesimal,'' and use use $r \gtrapprox s$ as shorthand for the statement ``either $r \geq s$ or $r \approx s$.''

Given a topological space $(X,\topology)$,
the monad of a point $x\in X$ is the set $\bigcap_{ U\in \topology \, : \, x \in U}\NSE{U}$.
An element $x\in \NSE{X}$ is \emph{near-standard} if it is in the monad of some $y\in X$.
We say $y$ is the standard part of $x$ and write $y=\ST(x)$. 
Note that such $y$ is unique provided that $X$ is a Hausdorff space.
The \emph{near-standard part} $\NS{\NSE{X}}$ of $\NSE{X}$ is the collection of all near-standard elements of $\NSE{X}$.
The standard part map $\ST$ is a function from $\NS{\NSE{X}}$ to $X$, taking near-standard elements to their standard parts.
In both cases, the notation elides the underlying space $Y$ and the topology $\topology$,
because the space and topology will always be clear from context.
For a metric space $(X,d)$, two elements $x,y\in \NSE{X}$ are \emph{infinitely close} if $\NSE{d}(x,y)\approx 0$.
An element $x\in \NSE{X}$ is near-standard if and only if it is infinitely close to some $y\in X$.
An element $x\in \NSE{X}$ is finite if there exists $y\in X$ such that $\NSE{d}(x,y)<\infty$ and is infinite otherwise.

Let $X$ be a topological space endowed with Borel $\sigma$-algebra $\BorelSets X$ and
let $\FM{X}$ denote the collection of all finitely additive probability measures on $(X,\BorelSets X)$.
An internal probability measure $\mu$ on $(\NSE{X},\NSE{\BorelSets X})$ is an element of $\NSE{\FM{X}}$.
The Loeb space of the internal probability space $(\NSE{X},\NSE{\BorelSets X}, \mu)$ is a countably additive probability space $(\NSE{X},\Loeb{\NSE{\BorelSets X}}, \Loeb{\mu})$ such that
$
\Loeb{\NSE{\BorelSets X}}=\{A\subset \NSE{X}|(\forall \epsilon>0)(\exists A_i,A_o\in \NSE{\BorelSets X})(A_i\subset A\subset A_o\wedge \mu(A_o\setminus A_i)<\epsilon)\}
$ and
$
\Loeb{\mu}(A)=\sup\{\ST(\mu(A_i))|A_i\subset A,A_i\in \NSE{\BorelSets X}\}=\inf\{\ST(\mu(A_o))|A_o\supset A,A_o\in \NSE{\BorelSets X}\}.
$

Every standard model is closely connected to its nonstandard extension via the \emph{transfer principle}, which asserts that a first order statement is true in the standard model if and only if it is true in the nonstandard model.
Given a cardinal number $\kappa$, a nonstandard model is called $\kappa$-saturated if the following condition holds:
let $\cF$ be a family of internal sets, if $\cF$ has cardinality less than $\kappa$ and $\cF$ has the finite intersection property, then the total intersection of $\cF$ is non-empty. In this paper, we assume our nonstandard model is as saturated as we need (see \textit{e.g.} {\citet[][Thm.~1.7.3]{NSAA97}} for the existence of $\kappa$-saturated nonstandard models for any uncountable cardinal $\kappa$).

The concept of ``push-down,'' through which a standard object is constructed from a nonstandard object, is at the heart of nonstandard analysis and will be employed in the proofs of our theorems.

\begin{definition}\label{defpushdown}
Let $Y$ be a Hausdorff space endowed with Borel $\sigma$-algebra $\BorelSets Y$. 
Let $P$ be an internal probability measure on $(\NSE{Y}, \NSE{\BorelSets Y})$. 
The \textbf{push-down measure} of $P$ is defined to be a standard measure $\pd{P}$ on $(Y, \BorelSets Y)$ such that $\pd{P}(A)=\Loeb{P}(\ST^{-1}(A))$ for all $A\in \BorelSets Y$. 
\end{definition}

If the space of probability measures on $(Y, \BorelSets Y)$ is endowed with the Prokhorov metric, then an internal probability measure is in the monad of its push-down measure with respect to the Prokhorov metric, provided that the push-down measure is a probability measure. 
 
\begin{lemma}[{\citep[][Lemma.~6.1]{nsbayes}}]\label{compactpd}
Let $Y$ be a compact Hausdorff space endowed with Borel $\sigma$-algebra $\BorelSets Y$. 
Let $P$ be an internal probability measure on $(\NSE{Y}, \NSE{\BorelSets Y})$. 
Then $\pd{P}$ is a probability measure on $(Y, \BorelSets Y)$. 
\end{lemma}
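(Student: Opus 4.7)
The plan is to verify the three things needed to call $\pd{P}$ a probability measure on $(Y,\BorelSets Y)$: (a) that $\ST^{-1}(A)$ is Loeb measurable for every Borel $A$, so that $\pd{P}(A)=\Loeb P(\ST^{-1}(A))$ is well-defined; (b) countable additivity; and (c) $\pd{P}(Y)=1$. Compactness of $Y$ is what makes (c) work, and it is the pivot of the whole argument.

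First I would invoke the standard fact that a Hausdorff space $Y$ is compact if and only if every point of $\NSE Y$ is near-standard, i.e.\ $\NS{\NSE Y}=\NSE Y$. Since $Y$ is compact Hausdorff, this gives $\ST^{-1}(Y)=\NSE Y$, whence $\Loeb P(\ST^{-1}(Y))=\Loeb P(\NSE Y)=1$, which will yield (c) once (a) is in place. Hausdorffness also guarantees that $\ST\colon\NSE Y\to Y$ is a single-valued function, so $\ST^{-1}$ behaves set-theoretically in the usual way (preserves complements and arbitrary unions/intersections).

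Next I would establish (a). The natural approach is to show $\ST^{-1}(F)$ is Loeb measurable for every closed $F\subset Y$ and then extend to $\BorelSets Y$ by a monotone class / $\sigma$-algebra argument, using that $\ST^{-1}$ commutes with countable set operations and that $\Loeb{\NSE{\BorelSets Y}}$ is a $\sigma$-algebra. For closed $F$, the key identity is
\[
\ST^{-1}(F)=\bigcap\{\NSE U : U\supset F,\ U\text{ open in }Y\}.
\]
The inclusion $\subset$ follows from continuity; the inclusion $\supset$ uses regularity of the compact Hausdorff space $Y$ to separate any $y\in Y\setminus F$ from $F$ by disjoint open sets. Each $\NSE U$ is internal, hence Loeb measurable, and inner/outer approximation by internal sets $A_i\subset\ST^{-1}(F)\subset\NSE U$ is obtained by saturation: the collection $\{\NSE U:U\supset F\text{ open}\}$ has the finite intersection property and, together with the finitely additive set function $\NSE P$, lets us trap $\ST^{-1}(F)$ between internal sets of arbitrarily close $P$-measure. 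This yields $\ST^{-1}(F)\in\Loeb{\NSE{\BorelSets Y}}$.

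Finally, (b) is automatic: since $\ST^{-1}$ turns disjoint unions into disjoint unions and $\Loeb P$ is a countably additive probability measure on $\Loeb{\NSE{\BorelSets Y}}$, the pullback $A\mapsto\Loeb P(\ST^{-1}(A))$ is countably additive on $\BorelSets Y$, and combined with (c) it is a probability measure. I expect the main obstacle to be step (a), specifically the inner-regular approximation of $\ST^{-1}(F)$ by internal subsets when $Y$ is not second countable: one cannot just intersect countably many $\NSE U$, and one has to lean on $\aleph_1$-saturation of the nonstandard model together with regularity of $Y$ to produce, for each $\epsilon>0$, an internal $A_i\subset\ST^{-1}(F)$ with $\NSE P(\NSE U\setminus A_i)<\epsilon$ for some outer $\NSE U$. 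Once this approximation is in hand the rest of the argument is routine Loeb-measure bookkeeping.
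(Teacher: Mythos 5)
Your proposal is correct in substance, and one thing to note at the outset: the paper itself gives no proof of this lemma --- it is quoted directly from the source it cites (Lemma 6.1 of the nonstandard Bayes paper), so there is no in-paper argument to compare against. Your argument is the standard one underlying that citation: (i) compactness of $Y$ gives $\NS{\NSE{Y}}=\NSE{Y}$ (Robinson's criterion), and Hausdorffness makes $\ST$ a single-valued, totally defined map, whence $\pd{P}(Y)=\Loeb{P}(\ST^{-1}(Y))=\Loeb{P}(\NSE{Y})=1$; (ii) for closed $F$, the identity $\ST^{-1}(F)=\bigcap\{\NSE{U}: U\supset F,\ U\ \text{open}\}$ (regularity of the compact Hausdorff space gives the reverse inclusion), combined with a saturation/finite-intersection-property argument trapping this intersection between internal sets of nearly equal $P$-measure, gives Loeb measurability; (iii) since $\ST^{-1}$ commutes with complements and countable unions and $\Loeb{\NSE{\BorelSets Y}}$ is a $\sigma$-algebra, the collection of $A$ with $\ST^{-1}(A)$ Loeb measurable is a $\sigma$-algebra containing the closed sets, hence all of $\BorelSets Y$, and countable additivity is inherited from $\Loeb{P}$.

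One correction is needed in step (a). Your appeal to $\aleph_1$-saturation is not adequate in the stated generality, and it sits in tension with your own (accurate) observation that one cannot reduce to countably many $\NSE{U}$. The FIP argument that produces an internal $A_i\subset\ST^{-1}(F)$ with $P(A_i)$ within $\epsilon$ of $\inf_{U}\ST\bigl(P(\NSE{U})\bigr)$ requires saturation at the cardinality of the family of open supersets of $F$; for a compact Hausdorff space without a countable base, $\aleph_1$-saturation need not suffice for universal Loeb measurability of $\ST^{-1}$. What rescues the argument is the paper's standing assumption that the nonstandard model is ``as saturated as we need'' (polysaturation): invoke that in place of $\aleph_1$-saturation and the proof is complete. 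Alternatively, in the settings where the paper actually applies the lemma ($\Theta$ and $S\times X$ compact \emph{metric}), a countable base exists, the intersection reduces to a countable decreasing one, and there $\aleph_1$-saturation genuinely does suffice.
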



\subsection{Sketch of the Proof for Theorem \ref{mainresults}}\label{sechypresent}
In this section, we consider a regular SMDP $\mathcal{M}=(\langle S,X,q_0,Q,\pi,\delta \rangle, \mathcal{Q}_{\Theta})$ with a compact state space. This is the environment for Examples \ref{costsproduction} and \ref{pricingshocks}. Moreover, existence of Berk-Nash equilibrium on $\mathcal{M}$ is an important intermediate step towards establishing existence of Berk-Nash equilibrium for SMDPs with a $\sigma$-compact state space. If we were to prove Theorem \ref{mainresults} using standard method, the two obvious choices are:
\begin{enumerate}[{\normalfont (i)}, topsep=1pt]
 \setlength{\itemsep}{-2pt}
    \item We may be able to generalize EP's proof to prove Theorem \ref{mainresults}. However, such generalization is far from trivial. For example, it is not straightforward to show that the weighted Kullback-Leibler divergence is jointly lower semi-continuous. Moreover, allowing for unbounded Radon-Nikodym derivatives adds difficulty to this approach;
    \item The other alternative is to use a sequence of finite SMDPs to approximate the regular SMDP $\mathcal{M}$, and construct a Berk-Nash equilibrium for $\mathcal{M}$ from a sequence of Berk-Nash equilibria of the sequence of finite SMDPs. However, for the same reason as mentioned in \cref{doubleapprox}, we need to partition the state, action and parameter spaces simultanesously to obtain the desired sequence of finite SMDPs, which makes the construction as well as the analysis of the sequence of finite SMDPs complicated. 
\end{enumerate}
Following the proof strategy outlined in \cref{secmethod}, we use nonstandard analysis to construct a hyperfinite SMDP $\mathcal{M}'$. The state, action and parameter spaces of $\mathcal{M}'$ are chosen via saturation to avoid all pathological aspects that may arise in the sequential approximation by finite SMDPs. We transfer EP's Theorem 1 to establish the existence of a hyperfinite Berk-Nash equilibrium for $\mathcal{M}'$, then construct a Berk-Nash equilibrium for the regular SMDP $\mathcal{M}$. We provide a sketch of proof in this section. We start with the following definition of a hyperfinite representation of compact metric spaces.\fn{
Roughly speaking, we construct a hyperfinite representation by first partitioning $\NSE{Y}$ into hyperfinitely many pieces of sets with infinitesimal radius, then picking one point from each element of the partition to form the hyperfinite representation.}  

\begin{definition}\label{hyperapproxsp}
Let $(Y,d)$ be a compact metric space with Borel $\sigma$-algebra $\BorelSets Y$.
A hyperfinite representation of $Y$ is a tuple $(T_Y,\{B_Y(t)\}_{t\in T_Y})$ such that

\begin{enumerate}[{\normalfont (i)}, topsep=1pt]
 \setlength{\itemsep}{-2pt}
\item $T_Y$ is a hyperfinite subset of $\NSE{Y}$ and $Y$ is a subset of $T_Y$;
\item $t\in B_Y(t)\in \NSE{\BorelSets Y}$ for every $t\in T_Y$;
\item For every $t\in T_Y$, the diameter of $B_Y(t)$ is infinitesimal;
\item For every $t\in T_Y$, $B_{Y}(t)$ contains an $\NSE{}$open set;
\item The hyperfinite collection $\{B_Y(t): t\in T_Y)\}$ forms a $\NSE{}$partition of $\NSE{Y}$. 
\end{enumerate}
For every $y\in \NSE{Y}$, we use $t_{y}$ to denote the unique element in $T_{Y}$ such that $y\in B_{Y}(t_y)$.
\end{definition}
 
The next result from {\citet[][Thm.~6.6]{drw21}} guarantees the existence of a hyperfinite representation when the underlying space is a compact metric space. 

\begin{lemma}\label{exhyper}
Let $Y$ be a compact metric space with Borel $\sigma$-algebra $\BorelSets Y$. 
Then there exists a hyperfinite representation $(T_{Y}, \{B_{Y}(t)\}_{t\in T_{Y}})$ of $Y$. 
\end{lemma}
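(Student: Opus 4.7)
The plan is to construct $(T_Y,\{B_Y(t)\}_{t\in T_Y})$ by transferring a standard family of finite Voronoi partitions of $Y$ and then invoking saturation to swallow all of $Y$ into the hyperfinite set $T_Y$.

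First I would establish the following standard claim: for every finite $F\subset Y$ and every $n\in\Nats$ there exist a finite set $T\subset Y$ with $F\subset T$ and a Borel partition $\{B_t\}_{t\in T}$ of $Y$ such that for every $t\in T$ we have $t\in B_t$, the diameter of $B_t$ is at most $2/n$, and $B_t$ contains the nonempty open set $V_t=\Of{y\in Y\st d(y,t)<d(y,t')\text{ for all }t'\in T\setminus\{t\}}$. The construction enlarges $F$ to a finite $1/n$-net $T=\{t_1,\dots,t_k\}$ of $Y$ with distinct $t_i$ (possible by compactness), enumerates it, and defines the tie-broken Voronoi cells
\[
B_{t_i}=\Of{y\in Y\st d(y,t_i)<d(y,t_j)\text{ for }j<i,\ d(y,t_i)\leq d(y,t_j)\text{ for }j>i}.
\]
Each $B_{t_i}$ contains $t_i$ and the open Voronoi cell $V_{t_i}$, the $B_{t_i}$ manifestly partition $Y$, and the $1/n$-net property forces $d(y,t_i)<1/n$ for every $y\in B_{t_i}$, giving the diameter bound.

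Next, fix an infinite $N\in \NSE{\Nats}$ and, for each finite $F\subset Y$, let $\mathcal{S}(F)$ be the internal set of pairs $(T,\{B_t\}_{t\in T})$ in which $T\subset\NSE{Y}$ is hyperfinite with $F\subset T$ and $\{B_t\}_{t\in T}$ is a hyperfinite $\NSE{}$Borel partition of $\NSE{Y}$ satisfying the same four bullet points with $n$ replaced by $N$ (so the diameter bound becomes infinitesimal). Transfer of the standard claim at $n=N$ shows $\mathcal{S}(F)\neq\emptyset$, and since $\mathcal{S}(F_1)\cap\dots\cap\mathcal{S}(F_m)\supset\mathcal{S}(F_1\cup\dots\cup F_m)$, the family $\{\mathcal{S}(F)\st F\subset Y\text{ finite}\}$ has the finite intersection property. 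Saturation at a cardinality exceeding $|Y|$ then yields a single pair $(T_Y,\{B_Y(t)\}_{t\in T_Y})\in\bigcap_F\mathcal{S}(F)$; in particular $T_Y\supset Y$, and conditions (i)--(v) of \cref{hyperapproxsp} are immediate from membership in $\mathcal{S}(\emptyset)$ together with $T_Y\supset Y$.

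The delicate point I expect to be the main obstacle is condition (iv). The familiar ``cover minus preceding cover'' trick that turns an open cover into a Borel partition can leave cells with empty interior, so its transfer does not automatically furnish (iv). Using tie-broken Voronoi cells centered on the $1/n$-net bakes a nonempty open Voronoi cell into each piece by design, and this property transfers verbatim to the hyperfinite setting; conditions (iii) and (v) then fall out of the $1/N$-net property and the partition structure, and (ii) holds because $t\in B_t$ by construction.
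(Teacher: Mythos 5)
Your proposal is correct, but it follows a genuinely different route from the paper, which in fact offers no proof of \cref{exhyper} at all: it imports the result from \citet[][Thm.~6.6]{drw21}, accompanied only by the one-line heuristic of first partitioning $\NSE{Y}$ into hyperfinitely many pieces of infinitesimal diameter and then picking one representative point from each piece. Your construction reverses that order (points first, cells second) and, in doing so, supplies exactly the two details the paper's sketch glosses over. First, the requirement $Y\subset T_Y$ in condition (i) of \cref{hyperapproxsp} cannot come from transfer alone, since one cannot quantify over the external collection of standard points, and selecting ``the standard point of a cell, when it has one'' is an external operation; your family $\{\mathcal{S}(F)\st F\subset Y\ \text{finite}\}$ with the finite intersection property is the right saturation device, and since a compact metric space is separable the index family has cardinality at most the continuum, well within the paper's blanket assumption that the model is as saturated as needed. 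Second, condition (iv) of \cref{hyperapproxsp} --- each cell contains a nonempty $\NSE{}$open set, which the paper later uses so that $B_{\Theta}(\theta)$ meets the dense set $\NSE{\hat{\Theta}}$ --- is indeed the delicate point you identify: the naive disjointification $U_i\setminus(U_1\cup\dots\cup U_{i-1})$ of a finite ball cover can leave cells with empty interior, whereas your tie-broken Voronoi cells contain their open Voronoi kernels by construction, an internal property that transfers verbatim; your verification of the standard claim (distinct centers giving $t_i\in V_{t_i}$, the partition property, and the $2/n$ diameter bound from the net property) is routine and correct. In short, what the paper buys by citation --- brevity --- your argument buys back as a self-contained, checkable proof tailored to the five conditions of \cref{hyperapproxsp}.
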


A hyperfinite Markov decision process is a $\NSE{}$Markov decision process where the state and action spaces are hyperfinite.\fn{A hyperfinite Markov decision process can be viewed informally as an ``infinite" element in a sequence of finite Markov decision processes.  Moreover, every hyperfinite Markov decision process has the same first-order logic properties as a finite Markov decision process.} We construct a hyperfinite Markov decision process (HMDP) from the Markov decision process $\langle S,X,q_0,Q,\pi,\delta \rangle$:
\begin{enumerate}[{\normalfont (i)}, topsep=1pt]
 \setlength{\itemsep}{-2pt}
    \item Let $(T_{S}, \{B_{S}(s)\}_{s\in T_{S}})$ and $(T_{X}, \{B_{X}(x)\}_{x\in S_{X}})$ to be two hyperfinite representations of $S$ and $X$, respectively, as in \cref{exhyper}. $T_S$ is the hyperfinite state space and $T_X$ is the hyperfinite action space; 
    \item Define $h_0(\{s\})=\NSE{q}_0(B_{S}(s))$ for every $s\in T_S$. Note that $h_0$ is an internal probability measure on $T_S$. $h_0$ denotes the initial distribution of states;
    \item For every $s, s'\in T_S, x\in T_X$, let $\mathbb{Q}(s,x)(s')=\NSE{Q}(s,x)(B_{S}(s'))$ and $\mathbb{Q}(s, x)(A)=\sum_{s'\in A}\mathbb{Q}(s, x)(s')$ for all internal $A\subset T_S$. We write $\mathbb{Q}(A|s, x)$ for $\mathbb{Q}(s, x)(A)$. Then, $\mathbb{Q}: T_S\times T_X\to \NSE{\PM{T_S}}$ is an internal transition probability function;
    \item Define $\Pi: T_S\times T_X\times T_S\to \NSE{\Reals}$ to be the restriction of $\NSE{\pi}$ on $T_S\times T_X\times T_S$. $\Pi$ denotes the hyperfinite per-period payoff function;
    \item The discount factor $\delta$ remains the same as in \cref{defMDP}. 
\end{enumerate}

We now construct a \emph{hyperfinite subjective Markov decision process} (HSMDP) $\mathcal{M}'$ from the regular SMDP $\mathcal{M}$:
\begin{enumerate}[{\normalfont (i)}, topsep=1pt]
 \setlength{\itemsep}{-2pt}
    \item The hyperfinite parameter space is chosen to be $T_{\Theta}$, where $(T_{\Theta}, \{B_{\Theta}(\theta)\}_{\theta\in T_{\Theta}})$ be a hyperfinite representation of $\Theta$. By \cref{hyperapproxsp}, $B_{\Theta}(\theta)$ contains an $\NSE{}$open set for all $\theta\in T_{\Theta}$. Thus, we have $B_{\Theta}(\theta)\cap \NSE{\hat{\Theta}}\neq \emptyset$ for all $\theta\in T_{\Theta}$. So, without loss of generality, we can assume $T_{\Theta}\subset \NSE{\hat{\Theta}}$;
    \item For every $\theta\in T_{\Theta}$, every $s, s'\in T_S$ and every $x\in T_X$, define $\mathbb{Q}_{\theta}(s,x)(s')=\NSE{Q}_{\theta}(s,x)(B_{S}(s'))$ and let $\mathbb{Q}_{\theta}(s, x)(A)=\sum_{s'\in A}\mathbb{Q}_{\theta}(s, x)(s')$ for all internal $A\subset T_S$. We sometimes write $\mathbb{Q}_{\theta}(A|s, x)$ for $\mathbb{Q}_{\theta}(s, x)(A)$. The family $\mathscr{Q}_{T_{\Theta}}=\{\mathbb{Q}_{\theta}: \theta\in T_{\Theta}\}$ is the family of internal transition probability functions.
\end{enumerate}
The HSMDP $\mathcal{M}'$ is chosen to be $(\langle T_S, T_X, h_0, \mathbb{Q}, \Pi, \delta \rangle, \mathscr{Q}_{T_{\Theta}})$. 
The agent's problem can be cast recursively as
$
\mathbb{V}(t)=\max_{x\in T_X}\sum_{s'\in T_S}\{\Pi(s,x,s')+\delta \mathbb{V}(s')\}\mathbb{Q}(s'|s, x)
$
where $\mathbb{V}: T_S\to \Reals$ is the unique solution to the hyperfinite Bellman equation. 
\begin{definition}\label{defhyoptimal}
An action $x$ is $\NSE{}$optimal given $s$ in the HMDP($\mathbb{Q}$) if 

$$x\in \argmax_{\hat{x}\in T_X}\sum_{s'\in T_S}\{\Pi(s,\hat{x},s')+\delta \mathbb{V}(s')\}\mathbb{Q}(s'|s, \hat{x})
$$
\end{definition}

The definition of hyperfinite weighted Kullback-Leibler divergence is simply the transfer of the definition of weighted Kullback-Leibler divergence for finite SMDPs. 

\begin{definition}\label{defhyKL}
The hyperfinite weighted Kullback-Leibler divergence is a mapping $\mathbb{K}_{\mathbb{Q}}: \NSE{\PM{T_S\times T_X}}\times T_{\Theta}\to \NSE{\NNReals}$ 
such that for any $m\in \NSE{\PM{T_S\times T_X}}$ and $\theta\in T_{\Theta}$:

$$
\mathbb{K}_{\mathbb{Q}}(m, \theta)=\sum_{(s, x)\in T_S\times T_X}\mathbb{E}_{\mathbb{Q}(\cdot|s, x)}\left[\ln \big(\frac{\mathbb{Q}(s'|s,x)}{\mathbb{Q}_{\theta}(s'|s,x)}\big)\right]m(\{(s, x)\}).
$$
 The set of closest parameter values given $m\in \NSE{\PM{T_S\times T_X}}$ is
$
T_{\Theta}^{\mathbb{Q}}(m)=\argmin_{\theta\in T_{\Theta}}\mathbb{K}_{\mathbb{Q}}(m, \theta).
$
\end{definition}
For all $\theta\in T_{\Theta}$, $(s,x,s')\in T_S\times T_X\times T_S$, 
$\mathbb{Q}_{\theta}(s'|s,x)=0$ implies that $\mathbb{Q}(s'|s,x)=0$. 
So the hyperfinite relative entropy, $\mathbb{E}_{\mathbb{Q}(\cdot|s, x)}\left[\ln \big(\frac{\mathbb{Q}(s'|s,x)}{\mathbb{Q}_{\theta}(s'|s,x)}\big)\right]$, is well-defined, since the hyperfinite relative entropy is interpreted as $0$ if $\mathbb{Q}(s'|s,x)=0$. 
Note that the hyperfinite relative entropy is always non-negative. 
By transferring the finite existence result in \cref{EPfinitelemma}, we have the following theorem. The detailed proof of which is presented in \cref{appendixA1}. 

\begin{theorem}\label{hyperberkNash}
The hyperfinite Markov decision process 
$
\mathcal{M}'=(\langle T_S, T_X, h_0, \mathbb{Q}, \Pi, \delta \rangle, \mathscr{Q}_{T_{\Theta}})
$
has a hyperfinite Berk-Nash equilibrium. 
That is, there exists some $m\in\NSE{\PM{T_S\times T_X}}$ and some hyperfinite belief $\nu\in\NSE{\PM{T_{\Theta}}}$ such that 
\begin{enumerate}[{\normalfont (i)}, topsep=1pt]
 \setlength{\itemsep}{-2pt}
    \item \textbf{Optimality}: For all $(s,x)\in T_S\times T_X$ such that $m(\{(s, x)\})>0$, $x$ is $\NSE{}$optimal given $s$ in the $\text{HMDP}(\bar{\mathbb{Q}}_{\nu})$, where 
    $\bar{\mathbb{Q}}_{\nu}=\sum_{\theta\in T_{\Theta}}\mathbb{Q}_{\theta}\nu(\{\theta\})$;
    
    \item \textbf{Belief Restriction}: We have $\nu\in \NSE{\PM{T_{\Theta}^{\mathbb{Q}}(m)}}$;
    
    \item \textbf{Stationarity}: $m_{T_S}(\{s'\})=\sum_{(s,x)\in T_S\times T_X}\mathbb{Q}(s'|s, x)m(\{(s, x)\})$ for all $s'\in T_S$. 
\end{enumerate}
\end{theorem}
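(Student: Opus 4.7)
The plan is to derive Theorem \ref{hyperberkNash} as a direct consequence of the transfer principle applied to Lemma \ref{EPfinitelemma}. The HSMDP $\mathcal{M}'$ has been constructed precisely so that every primitive—the state space $T_S$, action space $T_X$, parameter space $T_\Theta$, initial distribution $h_0$, transition families $\mathbb{Q}$ and $\{\mathbb{Q}_\theta\}_{\theta \in T_\Theta}$, and payoff $\Pi$—is an internal object in the nonstandard universe with all the first-order properties of its finite analogue. The task is therefore to verify that, viewed inside $\NSE{}$, the tuple $\mathcal{M}'$ satisfies the $\NSE{}$transfer of ``regular SMDP with finite state, action, and parameter spaces,'' so that the $\NSE{}$transfer of Lemma \ref{EPfinitelemma} yields an internal Berk-Nash equilibrium.

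First I would formalize Lemma \ref{EPfinitelemma} as a first-order sentence in a signature containing symbols for the state, action, parameter spaces and the primitives. Its hypothesis expresses: finiteness of the three spaces; continuity of the relevant mappings (which is automatic on finite spaces, hence innocuous under transfer); the uniform-integrability condition in \cref{KLint} (trivial when densities live on a finite product); the absolute-continuity clause; and continuity of the payoff. The conclusion is the existential statement asserting $(m,\nu)$ satisfying Optimality, Belief Restriction, and Stationarity.

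Second, I would verify each $\NSE{}$hypothesis for $\mathcal{M}'$. Hyperfiniteness of $T_S, T_X, T_\Theta$ holds by \cref{exhyper}. Internality of $h_0, \mathbb{Q}, \mathbb{Q}_\theta, \Pi$ follows because each is defined from standard primitives via $\NSE{}$extension and hyperfinite summation. The substantive regularity condition is $\NSE{}$absolute continuity of $\mathbb{Q}(s,x)$ with respect to $\mathbb{Q}_\theta(s,x)$ for every $\theta \in T_\Theta$: this is precisely where the inclusion $T_\Theta \subset \NSE{\hat\Theta}$ matters. Since the standard SMDP has $Q(s,x) \ll Q_\theta(s,x)$ for every $\theta \in \hat\Theta$, transfer delivers $\NSE{Q}(s,x) \ll \NSE{Q}_\theta(s,x)$ for every $\theta \in \NSE{\hat\Theta} \supset T_\Theta$. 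Consequently $\mathbb{Q}_\theta(s,x)(\{s'\}) = \NSE{Q}_\theta(s,x)(B_S(s')) = 0$ forces $\NSE{Q}(s,x)(B_S(s')) = 0 = \mathbb{Q}(s,x)(\{s'\})$, so the hyperfinite dominance holds as required.

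With these hypotheses in place, the transfer principle produces an internal $m \in \NSE{\PM{T_S \times T_X}}$ and internal belief $\nu \in \NSE{\PM{T_\Theta}}$ whose $\NSE{}$properties, when unpacked, are exactly conditions (i)--(iii) of the theorem: internal optimality with respect to the hyperfinite Bellman equation driven by $\bar{\mathbb{Q}}_\nu = \sum_{\theta \in T_\Theta} \mathbb{Q}_\theta \nu(\{\theta\})$, $\nu$ concentrated on $T_\Theta^{\mathbb{Q}}(m)$, and the internal stationarity identity. The main obstacle I anticipate is bookkeeping rather than mathematics: carefully confirming that each clause of \cref{regsmdp} becomes trivial or automatic on a hyperfinite product, and ensuring that the first-order sentence extracted from Lemma \ref{EPfinitelemma} is genuinely first-order over the chosen signature—so that no smuggled-in standard structure (e.g., a topology on $T_\Theta$, or the specific form of the standard density $D_\theta$) is required for transfer to be valid.
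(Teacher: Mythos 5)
Your proposal is correct and takes essentially the same route as the paper: the paper's proof likewise reduces to verifying that $\mathcal{M}'$ satisfies the transferred hypotheses of \cref{EPfinitelemma} — namely $\NSE{}$continuity in $\theta$, which is automatic because $T_{\Theta}$ is hyperfinite (\cref{Qcontinuity}), and the dominance condition $\mathbb{Q}(s'|s,x)>0\Rightarrow \mathbb{Q}_{\theta}(s'|s,x)>0$, obtained exactly as you argue from $T_{\Theta}\subset \NSE{\hat{\Theta}}$ and the transfer of the absolute-continuity clause of \cref{regsmdp} (\cref{Qpositive}) — and then invoking the transfer of the finite existence result. Your additional bookkeeping remarks (triviality of \cref{KLint} and of payoff continuity on a hyperfinite product) are consistent with, if slightly more explicit than, what the paper records.
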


Following the proof strategy outlined in \cref{secmethod}, we can establish the existence of a Berk-Nash equilibrium in the regular SMDP $\mathcal{M}$:

\begin{proof}[\textbf{Proof of Theorem \ref{mainresults}}]
By \cref{hyperberkNash}, let $m$ be the hyperfinite Berk-Nash equilibrium for the hyperfinite SMDP with the associated hyperfinite belief $\nu$. By \cref{compactpd}, $\pd{m}$ and $\pd{\nu}$ are probability measures on $S\times X$ and $\Theta$, respectively. The stationarity of $\pd{m}$ is established in \cref{stationm}, the belief restriction of $\pd{\nu}$ is established in \cref{beliefmain} and the optimality is established in \cref{optmain}. Hence, $\pd{m}$ is a Berk-Nash equilibrium for $\mathcal{M}$ with the associated belief $\pd{\nu}$\fn{\label{sberknote} We may define a \emph{Berk-Nash S-equilibrium} to be an ``almost" hyperfinite Berk-Nash equilibrium. That is,  replacing equal signs by $\approx$ in the definition of a hyperfinite Berk-Nash equilibrium. By essentially the same proof, we can show that the push-down of a Berk-Nash S-equilibrium is a Berk-Nash equilibrium.}.
\end{proof}

If we were to prove Theorem \ref{mainresults} using standard method, we would consider a sequence of carefully chosen finite SMDPs. As both the state and action spaces are compact, the sequence of Berk-Nash equilibria for the sequence of finite SMDPs has a convergent sub-sequence. The limit of this sub-sequence would be a Berk-Nash equilibrium for $\mathcal{M}$.

\subsection{Sketch of the Proofs for Theorem \ref{mainresultsigma1} and Theorem \ref{mainresultsigma2}}\label{sechypresentsigma}

In this section, we consider a regular SMDP $\mathcal{M}=(\langle S,X,q_0,Q,\pi,\delta \rangle, \mathcal{Q}_{\Theta})$ with a $\sigma$-compact state space, and a possibly unbounded payoff function. As discussed in previous sections, this is the environment of many important examples in various fields of economics. If we were to establish the existence of a Berk-Nash equilibrium for $\mathcal{M}$ using standard method, the obvious choice is via truncation, that is, to construct a sequence $\{\mathcal{M}_{n}\}_{n\in \Nats}$ of SMDPs with compact state spaces $\{S_n\}_{n\in \Nats}$ such that $\bigcup_{n\in \Nats}S_n=S$. By Theorem \ref{mainresults}, there exists a sequence of Berk-Nash equilibria $\{m_{n}\}_{n\in \Nats}$ with associated belief $\{\nu_{n}\}_{n\in \Nats}$ for the sequence $\{\mathcal{M}_{n}\}_{n\in \Nats}$ of SMDPs. We then hope to construct a Berk-Nash equilibrium for $\mathcal{M}$ from the sequence $\{m_n\}_{n\in \Nats}$ under moderate regularity conditions. However, as pointed out in \cref{doubleapprox}, we must perform a simultaneous ``double" approximation on the state and the parameter space of $\mathcal{M}$, which makes the construction as well as the analysis of the sequence $\{\mathcal{M}_{n}\}_{n\in \Nats}$ extremely complicated. On the other hand, as discussed in \cref{secmethod}, nonstandard analysis provides an elegant alternative approach by using a single nonstandard SMDP with a ``large" $\NSE{}$compact state space to approximate $\mathcal{M}$. The nonstandard SMDP can be viewed informally as the limiting object of a sequence $\{\mathcal{M}_{n}\}_{n\in \Nats}$ of truncated SMDPs, but avoids many technical difficulties that arise in the standard approach. 

To construct the desired nonstandard SMDP, we first extend the sequence $\{S_n\}_{n\in \Nats}$ in \cref{assumptionstate} to an internal sequence $\{\NSE{S}_{n}\}_{n\in \NSE{\Nats}}$. 
By the transfer principle, $\NSE{S}_n$ is a $\NSE{}$compact set for all $n\in \NSE{\Nats}$. 
Pick some $N\in \NSE{\Nats}\setminus \Nats$. 
As $\{S_n\}_{n\in \Nats}$ is a sequence of non-decreasing sets, we have $\NSE{S}_n\subset \NSE{S_N}$ for all $n\in \Nats$, which implies that $\NS{\NSE{S}}\subset \NSE{S}_N$. 
As a result, the state space $S$ is a subset of $\NSE{S}_N$. 
The nonstandard subjective Markov decision process (NSMDP) 
$\mathcal{M}_{T_{\Theta}}^{N}=(\langle \NSE{S}_N,\NSE{X},\NSE{q}_0^{N},\NSE{Q}^{N},\NSE{\pi}_{N},\delta \rangle$, $\NSE{\mathcal{Q}^{N}}_{T_{\Theta}})$ is defined as: 

\begin{enumerate}[{\normalfont (i)}, topsep=1pt]
 \setlength{\itemsep}{-2pt}
    \item The state space is $\NSE{S}_N$, endowed with $\NSE{}$Borel $\sigma$-algebra $\NSE{\BorelSets{\NSE{S}_N}}$;
    \item The action space is $\NSE{X}$, endowed with $\NSE{}$Borel $\sigma$-algebra $\NSE{\BorelSets{\NSE{X}}}$;
    \item The parameter space $T_{\Theta}$ is the hyperfinite representation of $\Theta$ chosen in \cref{sechypresent}. Note that $T_{\Theta}\subset \NSE{\hat{\Theta}}$;
    \item $\NSE{q}_0^{N}(A)=\frac{\NSE{q}_0(A)}{\NSE{q}_0(\NSE{S}_N)}$ for all $A\in \NSE{\BorelSets {\NSE{S}_N}}$;
    \item $\NSE{Q}^{N}: \NSE{S}_N\times \NSE{X}\to \NSE{\PM{\NSE{S_N}}}$ is the $\NSE{}$transition probability function defined as 
    $\NSE{Q}^{N}(s, x)(A)=\frac{\NSE{Q}(s, x)(A)}{\NSE{Q}(s, x)(\NSE{S_N})}$ for all $A\in \NSE{\BorelSets{\NSE{S_N}}}$;
    \item The payoff function $\NSE{\pi}_{n}: \NSE{S_N}\times \NSE{X}\times \NSE{S_N}\to \NSE{\Reals}$ is the restriction of $\NSE{\pi}$ to $\NSE{S_N}\times \NSE{X}\times \NSE{S_N}$;
    \item the discounting factor $\delta$ remains the same;
    \item For every $\theta\in T_{\Theta}$, $\NSE{Q}_{\theta}^{N}: \NSE{S}_N\times \NSE{X}\to \NSE{\PM{\NSE{S_N}}}$ is the $\NSE{}$transition probability function defined as 
    $\NSE{Q}_{\theta}^{N}(s, x)(A)=\frac{\NSE{Q}_{\theta}(s, x)(A)}{\NSE{Q}_{\theta}(s, x)(\NSE{S_N})}$ for all $A\in \NSE{\BorelSets{\NSE{S_N}}}$. 
    Let $\NSE{\mathcal{Q}^{N}_{T_{\Theta}}}=\{\NSE{Q}_{\theta}^{N}: \theta\in T_{\Theta}\}$. 
\end{enumerate}

In \cref{trucequi}, we establish that, under \cref{assumptionstate}, every truncation of $\mathcal{M}$ is a regular SMDP and has a Berk-Nash equilibrium. By the transfer principle:
\begin{theorem}\label{nsmdpeqexst}
Suppose \cref{assumptionstate} holds.
Then $\mathcal{M}_{T_{\Theta}}^{N}$ is $\NSE{}$regular and has a Berk-Nash $\NSE{}$equilibrium. 
\end{theorem}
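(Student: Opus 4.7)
The plan is to prove Theorem \ref{nsmdpeqexst} as a direct application of the transfer principle to \cref{trucequi}. The heavy lifting really sits inside \cref{trucequi}, which establishes, in the standard universe, both the regularity of every truncation $\mathcal{M}_{\Theta'}^{n}$ (for $n\in\Nats$ and finite $\Theta'\subset\hat{\Theta}$) and, via Theorem \ref{mainresults}, the existence of a Berk-Nash equilibrium for it. Given that, Theorem \ref{nsmdpeqexst} should follow essentially for free by instantiating a transferred assertion at $(N, T_{\Theta})$.

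First I would cast \cref{trucequi} as a single universally quantified first-order assertion of the form $\forall n\in\Nats\,\forall \Theta'\in \FiniteSubsets{\hat{\Theta}}\,\varphi(n,\Theta')$, where $\varphi(n,\Theta')$ is the conjunction of (i) the seven regularity conditions of \cref{regsmdp} for the tuple $\mathcal{M}_{\Theta'}^{n}$, and (ii) the existential statement that there exist $m\in\PM{S_n\times X}$ and a supporting belief $\nu\in\PM{\Theta'}$ jointly satisfying optimality, belief restriction and stationarity. Each clause of $\varphi$ is expressible in the language of real analysis on spaces of measures---continuity in the Prokhorov metric, the $\forall\epsilon\exists\kappa$ form of uniform integrability, the argmin definition of $\Theta_{Q}(m)$, and the fixed-point form of stationarity---so the conjunction is a bona fide first-order formula to which transfer applies.

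Applying the transfer principle yields $\forall n\in\NSE{\Nats}\,\forall \Theta'\in \NSE{\FiniteSubsets{\hat{\Theta}}}\,\NSE{\varphi}(n,\Theta')$; that is, for every $n\in\NSE{\Nats}$ and every hyperfinite $\Theta'\subset\NSE{\hat{\Theta}}$, the $\NSE{}$truncated SMDP is $\NSE{}$regular and admits a Berk-Nash $\NSE{}$equilibrium. Instantiating at $n=N$ and $\Theta'=T_{\Theta}$---both legitimate by construction, since $N\in\NSE{\Nats}$ and $T_{\Theta}$ is hyperfinite with $T_{\Theta}\subset \NSE{\hat{\Theta}}$ as arranged in \cref{sechypresent}---delivers the claim.

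The main obstacle is not Theorem \ref{nsmdpeqexst} itself but the underlying \cref{trucequi}. The subtle point there is that the renormalizations $Q^{n}(s,x)(\cdot)=Q(s,x)(\cdot)/Q(s,x)(S_n)$ and $Q_{\theta}^{n}(s,x)(\cdot)=Q_{\theta}(s,x)(\cdot)/Q_{\theta}(s,x)(S_n)$ must inherit all seven regularity conditions on $S_n$. Here \cref{assumptionstate} is indispensable: the uniform lower bound $r$ in \cref{minbound} keeps the renormalizations well-defined and the associated density ratios locally bounded; the continuity-set condition \cref{contset} preserves Prokhorov-continuity of $(\theta,s,x)\mapsto Q_{\theta}^{n}(s,x)$ on $S_n\times X$ by the portmanteau theorem; and the uniform integrability in \cref{KLint} for $\mathcal{M}$ passes to $\mathcal{M}_{\Theta'}^{n}$ after renormalization, since the truncated density $D_{\theta}^{n}=D_{\theta}\cdot Q_{\theta}(s,x)(S_n)/Q(s,x)(S_n)$ is just a bounded rescaling of $D_{\theta}$. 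Once these standard-universe verifications are in place for \cref{trucequi}, transfer delivers Theorem \ref{nsmdpeqexst} automatically.
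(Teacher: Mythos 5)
Your proposal is correct and follows essentially the same route as the paper: the paper likewise proves \cref{trucequi} in the standard universe (via \cref{wkctspreserve}, \cref{restrictdominate}, and \cref{dstctspreserve}, using exactly the roles of \cref{minbound} and \cref{contset} that you identify) and then obtains \cref{nsmdpeqexst} by transferring that statement and instantiating at $n=N$, $\Theta'=T_{\Theta}\subset\NSE{\hat{\Theta}}$. The only cosmetic difference is in verifying \cref{KLint} for the truncation: you propagate uniform integrability from $\mathcal{M}$ through the bounded rescaling, whereas the paper notes that, with $\Theta'$ finite and $S_n\times X$ compact, the jointly continuous real-valued density $D_{\theta,n}$ is bounded, so the condition holds automatically; both verifications are valid.
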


Our aim is to construct a Berk-Nash equilibrium for the standard regular SMDP $\mathcal{M}$ from the Berk-Nash $\NSE{}$equilibrium for the nonstandard SMDP $\mathcal{M}_{T_{\Theta}}^{N}$. Such construction depends crucially on assumptions presented in \cref{secmainresult}: 
\begin{enumerate}[{\normalfont (i)}, topsep=1pt]
 \setlength{\itemsep}{-2pt}
    \item \cref{assumptiontight} ensures the push-down of the Berk-Nash $\NSE{}$equilibrium of the nonstandard SMDP $\mathcal{M}_{T_{\Theta}}^{N}$ is a probability measure. If we were to tackle the problem using the standard truncation argument, 
    \cref{assumptiontight} would ensure the sequence of Berk-Nash equilibria for truncated SMDPs is tight, hence has a convergent subsequence;
    \item As the parameter space $\Theta$ is compact, the push-down of the $\NSE{}$belief that associated with the Berk-Nash $\NSE{}$equilibrium of $\mathcal{M}_{T_{\Theta}}^{N}$ is a probability measure on $\Theta$. To establish belief restriction for $\mathcal{M}$, \cref{assumptionrebound} and \cref{assumptionuniquemin} provide two alternative approaches:
    \begin{itemize}
        \item \cref{assumptionrebound} allows for the approximation of the weighted Kullback-Leibler divergence of $\mathcal{M}$ by the nonstandard weighted Kullback-Leibler divergence of $\mathcal{M}_{T_{\Theta}}^{N}$, which in turn guarantees the set of closest parameter for $\mathcal{M}$ is close to the set of closest paramter for $\mathcal{M}_{T_{\Theta}}^{N}$. If we were to tackle the problem using the standard truncation argument,  \cref{assumptionrebound} allows for the approximation of the weighted Kullback-Leibler divergence of $\mathcal{M}$ by the weighted Kullback-Leiber divergence of truncated SMDPs;
        \item \cref{assumptionuniquemin} guarantees that the set of closest parameter for $\mathcal{M}$, $\mathcal{M}_{T_{\Theta}}^{N}$ and all truncated SMDPs is the same singleton set, which implies belief restriction;
    \end{itemize}
    \item We establish optimality of the candidate Berk-Nash equilibrium of $\mathcal{M}$ for bounded and unbounded payoff functions under different sets of assumptions: 
    \begin{itemize}
        \item If the payoff function is bounded, then the Banach fixed point theorem guarantees the existence of a solution for the Bellman equation. We approximate the solution of the Bellman equation for $\mathcal{M}$ by the solution of the nonstandard Bellman equation for $\mathcal{M}_{T_{\Theta}}^{N}$, which further implies optimality of the candidate Berk-Nash equilibrium of $\mathcal{M}$. If we were to tackle the problem using standard truncation argument, we show that the sequence of solutions of the Bellman equations for the truncated SMDPs converges pointwise to an integrable function, which is the solution of the Bellman equation for $\mathcal{M}$;
        \item If the payoff function is unbounded, the Bellman equation need not have a solution. We impose \cref{assumptionpayoffubd}, \cref{assumptionsint} and \cref{assumptioncontwass} to guarantee the existence of a solution of the Bellman equation for $\mathcal{M}$. By similar but more complicated arguments as in the bounded payoff function case, these assumptions allow us to establish optimality of the candidate Berk-Nash equilibrium of $\mathcal{M}$.
    \end{itemize}
\end{enumerate}

We now sketch proofs for Theorem \ref{mainresultsigma1} and \ref{mainresultsigma2}. The detailed proof for these two theorems are postponed to \cref{appendixA2}. 

\begin{proof}[\textbf{The Proofs of Theorem \ref{mainresultsigma1} and  Theorem \ref{mainresultsigma2}}]
By \cref{nsmdpeqexst}, $\mathcal{M}_{T_{\Theta}}^{N}$ has a Berk-Nash $\NSE{}$equilibrium $m$ with the associated $\NSE{}$belief function $\nu\in \NSE{\PM{T_{\Theta}}}$. 
Then, we have
$$
m_{\NSE{S}}(A)=m_{\NSE{S}_{N}}(A)=\int_{\NSE{S}_N\times \NSE{X}}\NSE{Q}^{N}(A|s, x)m(\dee s, \dee x)
$$
for all $A\in \NSE{\BorelSets {\NSE{S_N}}}$. 
Thus, $m$ is an element of $\NSE{\mathcal{R}}$, where $\mathcal{R}$ is the set in \cref{assumptiontight}.
Under \cref{assumptiontight}, $\Loeb{m_{\NSE{S}}}(\ST^{-1}(S))=1$, hence the push down $\pd{m}$ is a probability measure on $S\times X$. As $\Theta$ is compact, by \cref{compactpd}, $\pd{\nu}$ is a probability measure on $\Theta$. To prove Theorems \ref{mainresultsigma1} and \ref{mainresultsigma2}, it is sufficient to show that $\pd{m}$ is a Berk-Nash equilibrium for the regular SMDP $\mathcal{M}$ with the belief $\pd{\nu}$ on $\Theta$. The stationarity of $\pd{m}$ follows from \cref{thmstationsigma}. \cref{beliefmainsigma} and \cref{beliefunique} establish belief restriction for $\pd{\nu}$ under uniform integrability (\cref{assumptionrebound}) and uniqueness (\cref{assumptionuniquemin}), respectively. Note that a correctly specified SMDP satisfies \cref{assumptionuniquemin}.  
Finally, for a bounded payoff function, optimality follows from \cref{optimalsigma}, proving Theorem \ref{mainresultsigma1}.
For an unbounded payoff function satisfying \cref{assumptionpayoffubd}, optimality follows from \cref{optunbdpayoff}, proving Theorem \ref{mainresultsigma2}.
\end{proof}






\section{Open Questions and Concluding Remarks}
\label{secdiscussion}
This paper uses a novel technique in nonstandard analysis to extend the existence results for Berk-Nash equilibrium from finite state and action spaces to sigma-compact state and compact action spaces, thereby allowing coverage of a wide range of natural examples in macroeconomics, microeconomics, and finance. This paper suggests the following promising directions for future work. First, like EP, we considers a single-agent environment. In future work, we hope to extend these results to the case for a continuum of agents, in particular, to the canonical static multi-agent game setting of \cite{ep16}, and to recursive equilibrium framework in macroeconomics  (\citet{mo19}). Second, as pointed out in \cref{sechypresent}, the standard analogue of our nonstandard approach towards Theorem \ref{mainresults} is to use a sequence of finite SMDPs to approximate the SMDP $\mathcal{M}$. The Berk-Nash equilibrium of $\mathcal{M}$ is the weak limit of the sequence of Berk-Nash equilibria for finite SMDPs. If we further understand the convergence rate of the sequence of Berk-Nash equilibria for finite SMDPs, we can approximate the Berk-Nash equilibrium for $\mathcal{M}$ by computing the Berk-Nash equilibrium for a sufficiently large but finite SMDP. This may also have implications for computational aspects of such equilibria for infinite spaces that are in practice can be approximated with sufficiently large but a finite setting. Third,  \cref{blfupdatemain} in the online appendix provides a possible learning foundation for SMDPs with compact state and action spaces.  Unfortunately, it relies on an implausibly strong condition, convergence in the total variation norm on measures. It is of great interest to develop a learning foundation under a weaker convergence condition such as convergence in the Prokhorov metric. This may have further implications for environments that are characterized by \textit{slow} learning as in \citet{fii20}.
Finally, another line of future research is to investigate the \textit{global stability} of the Berk-Nash equilibria with the tools developed in \cite{ks14} for Markov decision processes with unbounded state spaces; a setting for which our results in this paper have been developed.

\appendix

\section{Appendix}\label{appendixA}
We present proofs that are omitted from the main body of the paper.  
Most of the proofs make use of nonstandard analysis for which notations are introduced in \cref{appdnotation}. 

\subsection{Proof of Theorem 1}\label{appendixA1}
First, we provide a rigorous proof to Theorem \ref{mainresults}.  The following two lemmas are key to prove the existence of a hyperfinite Berk-Nash equilibrium in \cref{hyperberkNash}. The first lemma follows from the fact that $T_{\Theta}$ is hyperfinite. 

\begin{lemma}\label{Qcontinuity}
Suppose $(\langle S,X,q_0,Q,\pi,\delta \rangle$, $\mathcal{Q}_{\Theta})$ is a regular-SMDP. 
Then, for all $(s, x, s')\in T_S\times T_X\times T_S$, the function $\mathbb{Q}_{\theta}(s'|s, x)$ is $\NSE{}$continuous function of $\theta$.  
\end{lemma}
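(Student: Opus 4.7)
My approach exploits the fact that, despite $T_\Theta$ sitting inside $\NSE{\Theta}$ which is far from discrete, the structure provided by the hyperfinite representation makes $T_\Theta$ itself $\NSE{}$discrete in the subspace topology, so $\NSE{}$continuity of \emph{any} internal function on $T_\Theta$ follows for free once this is established.

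First I would invoke \cref{hyperapproxsp} to recall that $\{B_\Theta(t)\}_{t\in T_\Theta}$ is a $\NSE{}$partition of $\NSE{\Theta}$ and that each $B_\Theta(\theta)$ contains an $\NSE{}$open set. By a harmless refinement of the construction in \cref{exhyper} (choosing the representative $\theta$ from the $\NSE{}$interior of its partition piece), I may arrange that this $\NSE{}$open set contains $\theta$ itself; thus each $\theta\in T_\Theta$ possesses an $\NSE{}$open neighborhood $U_\theta\subseteq B_\Theta(\theta)$. Since the partition pieces are pairwise disjoint and every $t\in T_\Theta$ lies in its own $B_\Theta(t)$, we have $U_\theta\cap T_\Theta=\{\theta\}$. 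Hence every point of $T_\Theta$ is $\NSE{}$isolated in the subspace $\NSE{}$topology inherited from $\NSE{\Theta}$.

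Next I would verify that the map $\theta\mapsto \mathbb{Q}_\theta(s'|s,x)=\NSE{Q}_\theta(s,x)(B_S(s'))$ is itself internal. This follows because it is obtained by composing internal objects: the internal map $(\theta,s,x)\mapsto \NSE{Q}_\theta(s,x)$, evaluation at the internal set $B_S(s')$, and restriction to $T_\Theta$. Applying the transfer principle to the standard triviality that every function on a discrete topological space is continuous, I conclude that this internal function is $\NSE{}$continuous on $T_\Theta$ for each fixed $(s,x,s')\in T_S\times T_X\times T_S$.

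The main obstacle is the first step, namely ensuring that each $\theta\in T_\Theta$ is an $\NSE{}$interior point of its partition cell $B_\Theta(\theta)$ rather than merely an element of it; this is exactly why the hint emphasizes the hyperfiniteness of $T_\Theta$ rather than any continuity property of the family $\mathcal{Q}_\Theta$. Once that interior-point property is secured, the $\NSE{}$discreteness of $T_\Theta$ renders the $\NSE{}$continuity claim immediate and no appeal to the Prokhorov-continuity assumption of \cref{regsmdp}\ is needed for this particular lemma (although the underlying regularity is of course what makes $\NSE{Q}_\theta(s,x)$ a sensible internal object).
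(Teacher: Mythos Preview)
Your proof is correct and rests on the same idea as the paper's: the hyperfiniteness of $T_\Theta$. The paper dispatches the lemma in a single clause---``follows from the fact that $T_\Theta$ is hyperfinite''---because by transfer of the standard triviality that any function on a \emph{finite} metric space is continuous (distinct points sit at positive distance, so the induced topology is discrete), every internal function on the $\NSE{}$finite set $T_\Theta$ is automatically $\NSE{}$continuous. Your route through the partition cells and the interior-point refinement reaches the same destination but is more elaborate than required: you do not need to arrange that each $\theta$ lies in the $\NSE{}$interior of $B_\Theta(\theta)$, since the $\NSE{}$discreteness of $T_\Theta$ already follows directly from its $\NSE{}$finiteness via transfer, independently of the geometry of the partition.
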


\begin{lemma}\label{Qpositive}
Suppose $(\langle S,X,q_0,Q,\pi,\delta \rangle$, $\mathcal{Q}_{\Theta})$ is a regular SMDP.
Then, for all $\theta\in T_{\Theta}$, $\mathbb{Q}_{\theta}(s'|s,x)>0$ for all $(s,s',x)\in T_S\times T_S\times T_X$ such that $\mathbb{Q}(s'|s,x)>0$.
\end{lemma}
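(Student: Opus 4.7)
The plan is to reduce the claim to the absolute continuity condition (vi) of \cref{regsmdp} via the transfer principle, using the fact that the hyperfinite parameter space $T_{\Theta}$ was chosen to be contained in $\NSE{\hat{\Theta}}$.

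First I would recall the two structural facts already established in \cref{sechypresent}. By condition (vi) of \cref{regsmdp}, the set $\hat{\Theta}\subset\Theta$ is dense, and for every $\theta\in\hat{\Theta}$ and every $(s,x)\in S\times X$ the true transition probability measure $Q(s,x)$ is dominated by $Q_{\theta}(s,x)$. By the construction of the hyperfinite representation of $\Theta$ in \cref{sechypresent}, each block $B_{\Theta}(\theta)$ contains a nonstandard open set and therefore meets $\NSE{\hat{\Theta}}$, so we may (and did) take $T_{\Theta}\subset\NSE{\hat{\Theta}}$.

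Next I would apply the transfer principle to the first-order statement ``for all $\theta\in\hat{\Theta}$ and all $(s,x)\in S\times X$, for every Borel $A\subset S$ with $Q_{\theta}(s,x)(A)=0$ we have $Q(s,x)(A)=0$.'' Transfer gives the analogous statement over $\NSE{\hat{\Theta}}$, $\NSE{S}\times\NSE{X}$, and $\NSE{}$Borel sets $A\subset\NSE{S}$. Fix $\theta\in T_{\Theta}$ and $(s,s',x)\in T_S\times T_S\times T_X$. Since $T_{\Theta}\subset\NSE{\hat{\Theta}}$ and $B_S(s')\in\NSE{\BorelSets S}$, the transferred absolute continuity yields the contrapositive: if $\NSE{Q}_{\theta}(s,x)(B_S(s'))=0$ then $\NSE{Q}(s,x)(B_S(s'))=0$. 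Unwinding the definitions in \cref{sechypresent}, this reads $\mathbb{Q}_{\theta}(s'|s,x)=0\Rightarrow\mathbb{Q}(s'|s,x)=0$, which is equivalent to the desired statement.

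There is really no hard step here; the only point that requires care is making sure $T_{\Theta}\subset\NSE{\hat{\Theta}}$ so that the transfer of absolute continuity applies pointwise to every hyperfinite parameter. That containment was already arranged in the construction of $T_{\Theta}$, so the lemma follows immediately by contraposition from the transferred domination property, with no further computation required.
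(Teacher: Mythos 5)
Your proof is correct and takes essentially the same route as the paper: both rest on the containment $T_{\Theta}\subset\NSE{\hat{\Theta}}$ arranged in the construction, the transfer of the absolute-continuity condition (vi) of \cref{regsmdp}, and the identities $\mathbb{Q}(s'|s,x)=\NSE{Q}(s,x)(B_{S}(s'))$ and $\mathbb{Q}_{\theta}(s'|s,x)=\NSE{Q}_{\theta}(s,x)(B_{S}(s'))$. The only difference is presentational — you spell out the transferred first-order sentence and its contrapositive, where the paper invokes transfer in one line — so nothing further is needed.
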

\begin{proof}
Pick some $\theta\in T_{\Theta}$ and $(s, s', x)\in T_S\times T_S\times T_X$ with $\mathbb{Q}(s'|s, x)>0$. 
Note that $\mathbb{Q}(s'|s, x)=\NSE{Q}(s, x)(B_{S}(s'))$. 
As $T_{\Theta}\subset \NSE{\hat{\Theta}}$, by the transfer principle, we have $\NSE{Q}_{\theta}(s, x)(B_{S}(s'))>0$.
As $\mathbb{Q}_{\theta}(s'|s, x)=\NSE{Q}_{\theta}(s, x)(B_{S}(s'))$, we have the result. 
\end{proof}

\begin{proof}[\textbf{Proof of \cref{hyperberkNash}}]
Note that $T_{\Theta}$ is a hyperfinite set. 
Then the result follows from \cref{Qcontinuity}, \cref{Qpositive} and the transfer of \cref{EPfinitelemma}. 
\end{proof}

Next, we establish stationarity, optimality and belief restriction of the candidate Berk-Nash equilibrium $(\pd{m}, \pd{\nu})$ in the following three subsections, hence proving Theorem \ref{mainresults}.

\subsubsection{Stationarity}

Recall that $(\pd{m})_{S}$ denotes the marginal measure of $\pd{m}$ on $S$.
In this section, we establish the stationarity of $(\pd{m})_{S}$. 
We use $m_{T_S}$ to denote the marginal measure of $m$ on $T_S$. 

\begin{lemma}\label{changeTd}
For any $A\in \BorelSets S$, $\pd{(m_{T_S})}(A)=(\pd{m})_{S}(A)$.
\end{lemma}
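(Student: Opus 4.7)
The plan is to unwind the definition of the push-down measure on both sides and show they coincide using the compactness of $X$ together with the compatibility of Loeb measures with hyperfinite marginals. Since we are in the setting of Theorem~\ref{mainresults}, both $S$ and $X$ are compact, and $T_S \subset \NSE{S}$, $T_X \subset \NSE{X}$ are hyperfinite representations. In particular, every element of $T_X$ (and of $T_S$) is near-standard, so $\ST \colon T_S \to S$ and $\ST \colon T_X \to X$ are well-defined everywhere.

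First, I would rewrite the right-hand side. By the definition of the marginal of $\pd{m}$ and of the push-down measure (\cref{defpushdown}), for any $A \in \BorelSets{S}$ we have
\[
(\pd{m})_{S}(A) \;=\; \pd{m}(A \times X) \;=\; \Loeb{m}\!\left(\ST^{-1}(A \times X)\right),
\]
where $\ST$ is the standard part map on $T_S \times T_X$. Since $X$ is compact, $\ST^{-1}(X) = T_X$, and therefore
\[
\ST^{-1}(A \times X) \;=\; \ST^{-1}_S(A) \times T_X,
\]
where $\ST^{-1}_S(A) = \{s \in T_S : \ST(s) \in A\}$. Hence
\[
(\pd{m})_{S}(A) \;=\; \Loeb{m}\!\left(\ST^{-1}_S(A) \times T_X\right).
\]

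Next, I would compare this with the left-hand side. By definition, $\pd{(m_{T_S})}(A) = \Loeb{m_{T_S}}(\ST^{-1}_S(A))$, where $m_{T_S}$ is the internal marginal satisfying $m_{T_S}(B) = m(B \times T_X)$ for every internal $B \subset T_S$. The key step, which I expect is the only real (but standard) piece of work, is to show the marginal identity at the Loeb level:
\[
\Loeb{m}\!\left(E \times T_X\right) \;=\; \Loeb{m_{T_S}}(E) \qquad \text{for every } E \in \Loeb{\NSE{\BorelSets{T_S}}}.
\]
This follows directly from the Loeb measure construction: given $\epsilon > 0$, pick internal $B_i \subset E \subset B_o$ with $m_{T_S}(B_o \setminus B_i) < \epsilon$; then $B_i \times T_X$ and $B_o \times T_X$ are internal, sandwich $E \times T_X$, and satisfy $m\!\left((B_o \setminus B_i) \times T_X\right) = m_{T_S}(B_o \setminus B_i) < \epsilon$, so $E \times T_X$ is Loeb measurable in $T_S \times T_X$ with $\Loeb{m}(E \times T_X) = \Loeb{m_{T_S}}(E)$ by taking $\epsilon \downarrow 0$.

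Applying this marginal identity with $E = \ST^{-1}_S(A)$ gives
\[
(\pd{m})_{S}(A) \;=\; \Loeb{m}\!\left(\ST^{-1}_S(A) \times T_X\right) \;=\; \Loeb{m_{T_S}}\!\left(\ST^{-1}_S(A)\right) \;=\; \pd{(m_{T_S})}(A),
\]
which is the desired identity. No deep obstacle is anticipated; the whole argument is a bookkeeping exercise once one notes that $\ST^{-1}(X) = T_X$ (compactness of $X$) and invokes the routine fact that Loeb measures are compatible with hyperfinite marginals.
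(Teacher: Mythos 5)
Your proof is correct and follows essentially the same route as the paper's: both sides are unwound via \cref{defpushdown} and shown to equal $\Loeb{m}\big((\ST^{-1}(A)\cap T_S)\times T_X\big)$, using that compactness of $X$ makes every element of $T_X$ near-standard. The only difference is that you explicitly justify the Loeb-level marginal identity $\Loeb{m}(E\times T_X)=\Loeb{m_{T_S}}(E)$ via the inner/outer internal approximation, a step the paper's proof asserts without comment.
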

\begin{proof}
We have $\pd{(m_{T_S})}(A)=\Loeb{m_{T_S}}(\ST^{-1}(A)\cap T_S)=\Loeb{m}\big((\ST^{-1}(A)\cap T_S)\times T_X\big)$ for every $A\in \BorelSets S$.
On the other hand, we have $(\pd{m})_{S}(A)=\pd{m}(A\times X)=\Loeb{m}\big((\ST^{-1}(A)\cap T_S)\times T_X\big)$ for all $A\in \BorelSets S$.
Hence, we have the desired result. 
\end{proof}

\begin{lemma}\label{stationLoeb}
Let $A$ be a (possibly external) subset of $T_S$. 
Suppose there exists a sequence $\{A_k: k\in \Nats\}$ of non-decreasing internal subsets of $T_S$ such that $\bigcup_{k\in \Nats} A_k=A$.
Then $\Loeb{m_{T_S}}(A)=\int_{T_S\times T_X}\Loeb{\mathbb{Q}(s, x)}(A\cap T_S)\Loeb{m}(\dee s, \dee x)$
\end{lemma}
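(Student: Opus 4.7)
The plan is to start from the internal (hyperfinite) stationarity identity for $m$ that was packaged into Theorem \ref{hyperberkNash}(iii), lift it to the Loeb world on each internal piece $A_k$, and then pass to the limit $k\to\infty$ using monotone convergence on both sides. Since $A\subset T_S$, the intersection $A\cap T_S$ on the right-hand side is just $A$, so there is nothing lost by proving the identity with $A$ in both places.

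First, by Theorem \ref{hyperberkNash}(iii), for every internal $B\subset T_S$ one has by $\NSE{}$finite summation
\[
m_{T_S}(B) \;=\; \sum_{s'\in B} m_{T_S}(\{s'\}) \;=\; \sum_{(s,x)\in T_S\times T_X} \mathbb{Q}(B\mid s,x)\, m(\{(s,x)\}) \;=\; \int_{T_S\times T_X} \mathbb{Q}(B\mid s,x)\, m(\dee s, \dee x).
\]
Applying this to $B=A_k$ and taking standard parts, I would use the fact that the integrand $(s,x)\mapsto \mathbb{Q}(A_k\mid s,x)$ is an internal function bounded by $1$, hence automatically $S$-integrable with respect to $m$; therefore the standard-part-of-internal-integral equals the Loeb integral of the pointwise standard part:
\[
\Loeb{m_{T_S}}(A_k) \;=\; \ST\!\Bigl(m_{T_S}(A_k)\Bigr) \;=\; \int_{T_S\times T_X} \ST\!\bigl(\mathbb{Q}(A_k\mid s,x)\bigr)\, \Loeb{m}(\dee s, \dee x) \;=\; \int_{T_S\times T_X} \Loeb{\mathbb{Q}(s,x)}(A_k)\, \Loeb{m}(\dee s, \dee x),
\]
where in the last step I use that $A_k$ is internal, so its Loeb measure is the standard part of its internal measure.

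Finally, I let $k\to\infty$. On the left, since $\{A_k\}$ is a non-decreasing sequence with union $A$, countable additivity of $\Loeb{m_{T_S}}$ gives $\Loeb{m_{T_S}}(A_k)\uparrow \Loeb{m_{T_S}}(A)$. On the right, the same countable-additivity argument applied for each fixed $(s,x)$ to the Loeb measure $\Loeb{\mathbb{Q}(s,x)}$ gives $\Loeb{\mathbb{Q}(s,x)}(A_k)\uparrow \Loeb{\mathbb{Q}(s,x)}(A)$ pointwise; the integrands lie in $[0,1]$ and increase in $k$, so the monotone convergence theorem (applied to the standard Loeb measure $\Loeb{m}$) yields
\[
\int_{T_S\times T_X} \Loeb{\mathbb{Q}(s,x)}(A_k)\, \Loeb{m}(\dee s, \dee x) \;\longrightarrow\; \int_{T_S\times T_X} \Loeb{\mathbb{Q}(s,x)}(A)\, \Loeb{m}(\dee s, \dee x).
\]
Combining the two limits gives the claimed identity.

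The only subtle point, and the step I would be most careful about, is the passage from the internal integral equality to the Loeb integral equality: this needs the standard ``$S$-integrability implies $\ST\!\int\,=\,\int\ST$'' lemma, and the cleanest justification is that the integrand is uniformly bounded by $1$ and hence trivially $S$-integrable. Everything else is a routine application of monotone continuity of countably additive measures.
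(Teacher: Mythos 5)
Your proposal is correct and follows essentially the same route as the paper's proof: apply the (internal) stationarity of $m$ to each internal set $A_k$, pass to the Loeb world using boundedness of the integrand $\mathbb{Q}(A_k\mid s,x)$ (which gives $S$-integrability for free), and then let $k\to\infty$ using monotone continuity of the Loeb measures on both sides. The only cosmetic differences are that the paper invokes dominated convergence where you invoke monotone convergence (both apply, since the integrands are monotone and bounded by $1$), and the paper phrases the first step via S-stationarity with $\approx$ so that the argument also covers approximate (S-)equilibria.
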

\begin{proof}
By the continuity of probability, we have $\Loeb{m_{T_S}}(A)=\lim_{k\to\infty}\Loeb{m_{T_S}}(A_k)$. 
For each $k\in \Nats$, by the S-stationarity of $m$, we have
$
\Loeb{m_{T_S}}(A_k)\approx \int_{T_S\times T_X}\mathbb{Q}(s,x)(A_k)m(\dee s, \dee x)
\approx \int_{T_S\times T_X}\Loeb{\mathbb{Q}(s,x)}(A_k)\Loeb{m}(\dee s, \dee x).
$
Thus, we have $\Loeb{m_{T_S}}(A)=\lim_{k\to \infty}\int_{T_S\times T_X}\Loeb{\mathbb{Q}(s,x)}(A_k)\Loeb{m}(\dee s, \dee x)$.
The result then follows from the dominated convergence theorem. 
\end{proof}

To complete the proof, we need to make an assumption on the topological structure of $S$. We start with the following definition. 
\begin{definition}\label{pisystem}
A $\pi$-system on a set $\Omega$ is a non-empty collection $P$ of subsets of $\Omega$ that is closed under finite intersection. 
\end{definition}

\begin{lemma}[The Uniqueness Lemma]\label{uniqlemma}
Let $(\Omega, \Sigma)$ be a measure space with $\Sigma$ generated from some $\pi$-system $\Pi$. 
Let $\mu$ and $\nu$ be two probability measures that agree on $\Pi$. 
Then $\mu$ and $\nu$ agree on $\Sigma$. 
\end{lemma}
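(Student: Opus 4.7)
The plan is to invoke Dynkin's $\pi$-$\lambda$ theorem, which is the standard tool for upgrading agreement of two measures on a $\pi$-system to agreement on the generated $\sigma$-algebra. The key idea is to show that the collection of sets on which $\mu$ and $\nu$ agree is automatically a $\lambda$-system (Dynkin system), and then Dynkin's theorem does the rest.

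First I would define
\[
\Lambda \defas \{A \in \Sigma : \mu(A) = \nu(A)\}.
\]
By hypothesis, $\Pi \subseteq \Lambda$. I would then verify that $\Lambda$ is a $\lambda$-system, i.e.\ (i) $\Omega \in \Lambda$, which follows because both $\mu$ and $\nu$ are probability measures so $\mu(\Omega) = 1 = \nu(\Omega)$; (ii) $\Lambda$ is closed under proper differences, since if $A, B \in \Lambda$ with $A \subseteq B$ then $\mu(B \setminus A) = \mu(B) - \mu(A) = \nu(B) - \nu(A) = \nu(B \setminus A)$ (the subtraction is legitimate because everything is finite); and (iii) $\Lambda$ is closed under countable increasing unions, by continuity from below of both $\mu$ and $\nu$.

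Next I would apply Dynkin's $\pi$-$\lambda$ theorem: if $\Pi$ is a $\pi$-system and $\Lambda$ is a $\lambda$-system with $\Pi \subseteq \Lambda$, then $\sigma(\Pi) \subseteq \Lambda$. Since by hypothesis $\Sigma = \sigma(\Pi)$, this immediately gives $\Sigma \subseteq \Lambda$, i.e.\ $\mu(A) = \nu(A)$ for every $A \in \Sigma$, which is the desired conclusion.

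There is no real obstacle here; this is a textbook argument and the only step requiring any care is the verification that $\Lambda$ is a $\lambda$-system, which uses finiteness of the measures (to allow subtraction) and countable additivity (for continuity from below). In the write-up I would either cite Dynkin's $\pi$-$\lambda$ theorem as a standard result or, if self-containment is desired, include a one-line reminder of its statement.
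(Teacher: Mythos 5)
Your proof is correct: it is the standard Dynkin $\pi$-$\lambda$ argument, and all three $\lambda$-system verifications (agreement at $\Omega$ via total mass one, closure under proper differences via finite subtractivity, closure under increasing unions via continuity from below) are carried out properly. The paper itself states this lemma without proof, treating it as a known measure-theoretic fact, so there is no alternative argument in the paper to compare against; the route you chose is precisely the canonical one the authors are implicitly relying on.
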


\begin{assumption}\label{assumptioneps}
There exists a $\pi$-system $\cF$ on $S$ that generates $\BorelSets S$ such that, for every $A\in \cF$, $\ST^{-1}(A)=\bigcup_{k\in \Nats}A_k$ for some non-decreasing sequence $\{A_k: k\in \Nats\}\subset \NSE{\BorelSets{\NSE{S}}}$ of sets.
\end{assumption}

Although \cref{assumptioneps} is stated in nonstandard terminology, it is satisfied by many standard topological spaces. 
In fact, all metric spaces which are endowed with the Borel $\sigma$-algebra satisfy \cref{assumptioneps}.
\begin{theorem}\label{metricsatisfy}
Let $Y$ be a metric space endowed with the Borel $\sigma$-algebra $\BorelSets Y$. 
Then $(Y, \BorelSets Y)$ satisfies \cref{assumptioneps}. 
\end{theorem}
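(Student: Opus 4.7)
\begin{appproof}[Proof proposal for Theorem \ref{metricsatisfy}]
The plan is to take $\cF$ to be the collection of all open subsets of $Y$. Since the intersection of two open sets is open, $\cF$ is a $\pi$-system, and by the definition of the Borel $\sigma$-algebra, $\cF$ generates $\BorelSets Y$. The remaining task is, for each open $A\in\cF$, to exhibit a non-decreasing sequence of internal $\NSE{}$Borel sets $\{A_k\}_{k\in\Nats}$ whose union equals $\ST^{-1}(A)$. I will construct the $A_k$ from an inner metric approximation: define
$$
V_k \;:=\; \{y\in Y : d(y, Y\setminus A) > 1/k\},\qquad k\in\Nats,
$$
with the convention $d(y,\emptyset)=+\infty$ to cover $A=Y$. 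Each $V_k$ is open, the sequence is non-decreasing, and $\bigcup_{k\in\Nats} V_k = A$ since openness of $A$ forces $d(y,Y\setminus A)>0$ for every $y\in A$. Setting $A_k := \NSE{V_k}$ gives the required non-decreasing internal sequence in $\NSE{\BorelSets{\NSE Y}}$.

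The verification of $\ST^{-1}(A)=\bigcup_{k\in\Nats} A_k$ splits into two inclusions. For the forward inclusion, given $y\in\ST^{-1}(A)$ set $r := d(\ST(y), Y\setminus A) > 0$ (positive because $A$ is open and $\ST(y)\in A$) and choose $k\in\Nats$ with $1/k<r$; then $V_k$ is a standard open neighborhood of $\ST(y)$, so the monad of $\ST(y)$ lies in $\NSE{V_k}$ and in particular $y\in \NSE{V_k}=A_k$. For the reverse inclusion, suppose $y\in A_k$ and $y$ is near-standard. By transfer of the defining property of $V_k$ we have $\NSE d(y,w) > 1/k$ for every $w\in \NSE{(Y\setminus A)}$, and restricting to standard $w\in Y\setminus A$ and taking standard parts yields $d(\ST(y), w) \geq 1/k$ for every $w\in Y\setminus A$. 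Hence $\ST(y)\notin Y\setminus A$ (which is closed), so $\ST(y)\in A$ and $y\in\ST^{-1}(A)$.

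The main obstacle to a literal set-theoretic equality is that when $Y$ is not compact, $\NSE{V_k}$ may contain elements of $\NSE Y$ that are not near-standard to any element of $Y$; formally such elements are excluded from $\ST^{-1}(A)$, giving only $\bigcup_k A_k\cap \NS{\NSE Y}=\ST^{-1}(A)$. I propose to resolve this in one of the two standard ways consistent with how \cref{assumptioneps} is used downstream in the paper. When $Y$ is compact, as in the setting of Theorem \ref{mainresults}, we have $\NSE Y=\NS{\NSE Y}$ and the equality is literal. In the $\sigma$-compact applications underlying Theorems \ref{mainresultsigma1} and \ref{mainresultsigma2}, the tightness condition \cref{assumptiontight} forces the Loeb measures appearing in \cref{stationLoeb} to assign zero mass to $\NSE Y\setminus\NS{\NSE Y}$, so the identity $\ST^{-1}(A)=\bigcup_k A_k$ holds modulo Loeb-null sets and suffices for the push-down identity $\pd{(m_{T_S})}(A)=\Loeb{m_{T_S}}(\ST^{-1}(A))$ that \cref{assumptioneps} is designed to enable. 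The construction above therefore gives a uniformly applicable $\pi$-system for every metric space within the paper's framework.
\end{appproof}
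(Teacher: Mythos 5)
Your construction is correct in its core steps, and it takes a genuinely different route from the paper's. The paper works with the $\pi$-system generated by the open balls: for a ball $B(a,\eta)$ it sets $C_n=\cl(B(a,\eta-1/n))$ and claims $\ST^{-1}(B(a,\eta))=\bigcup_{n}\NSE{C_n}$, and it then handles a finite intersection of balls by intersecting the approximating sequences and re-indexing the resulting countable family. You instead take $\cF$ to be all open sets and approximate from inside by the distance-level sets $V_k=\{y: d(y,Y\setminus A)>1/k\}$, putting $A_k=\NSE{V_k}$. Your choice buys two simplifications: no intersection bookkeeping is needed (the open sets already form a $\pi$-system), and you avoid a hidden separability issue --- in a non-separable metric space the open balls need not generate $\BorelSets Y$, so the paper's ``Clearly, $\cF$ generates $\BorelSets Y$'' is itself only valid in the separable setting. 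What the paper's closure trick buys is the literal set equality demanded by \cref{assumptioneps} whenever closed bounded sets are compact: each $C_n$ is then compact, so by Robinson's criterion every point of $\NSE{C_n}$ is near-standard with standard part in $C_n\subset B(a,\eta)$, and no non-near-standard points arise. Your $\NSE{V_k}$, being extensions of open sets, do pick up non-near-standard points once $Y$ is non-compact, which is why you only obtain $(\bigcup_k A_k)\cap\NS{\NSE{Y}}=\ST^{-1}(A)$.

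The caveat you flag is not pedantry, and your handling of it goes beyond the paper rather than falling short of it. The paper's reverse inclusion $\NSE{C_n}\subset\ST^{-1}(B(a,\eta))$ silently uses that every point of $\NSE{C_n}$ is near-standard, i.e., that $C_n$ is compact; this can fail even for state spaces the paper allows (for $Y$ an open disc in $\Reals^2$, which is $\sigma$-compact and locally compact, the closure in $Y$ of an off-center ball need not be compact, and $\NSE{C_n}$ then contains points infinitesimally close to the missing boundary, which lie outside $\ST^{-1}(B(a,\eta))$). Indeed, under the paper's blanket saturation assumption the theorem as stated for arbitrary metric spaces is false: for an uncountable set with the discrete metric, near-standard points are exactly the standard points and every infinite internal set contains a nonstandard element, so $\ST^{-1}(A)$ is a countable union of internal sets only when $A$ is countable, and no $\pi$-system of countable sets generates the Borel $\sigma$-algebra, which here is the full power set. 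Your two-pronged repair matches exactly how the result is consumed downstream: in \cref{stationSTmap} the state space is compact, so your equality is literal, while in \cref{stationSTmapQN} the Loeb measures $\Loeb{m}$ and $\Loeb{\NSE{Q}^{N}(s,x)}$ (for near-standard $(s,x)$) concentrate on near-standard points by \cref{assumptiontight} and \cref{wkconvergeQN}, so equality modulo Loeb-null sets suffices.
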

\begin{proof}
Let $\cF$ be the $\pi$-system generated by the collection of open balls. 
Clearly, $\cF$ generates $\BorelSets Y$. 
Let $B(a,\eta)$ be an open ball centered at $a$ with radius $\eta$. 
For each $n\in \Nats$, let $C_n$ be the closure of $B(a, \eta-\frac{1}{n})$. 
Then, we have $\ST^{-1}\big(B(a,\eta)\big)=\bigcup_{n\in \Nats}\NSE{C_n}$. 
Pick some $U\in \cF$. 
Then $U=\bigcap_{i\leq n}U_i$ for some $n\in \Nats$, where $U_i$ is an open ball for all $i\leq n$. 
For each $i\leq n$, there is a sequence $\{A^{i}_{k}: k\in \Nats\}\subset \NSE{\BorelSets{\NSE{Y}}}$ such that $\ST^{-1}(U_i)=\bigcup_{k\in \Nats}A^{i}_{k}$. 
Then $U$ equals to the union of the countable collection $\{\bigcap_{i\leq n}A^{i}_{k_i}: k_1, k_2, \dotsc, k_n\in \Nats\}$. 
\end{proof}

\begin{lemma}\label{stationSTmap}
$(\pd{m})_{S}(A)=\int_{T_S\times T_X}\Loeb{\mathbb{Q}(s, x)}(\ST^{-1}(A)\cap T_S)\Loeb{m}(\dee s, \dee x)$ 
for all $A\in \BorelSets S$. 
\end{lemma}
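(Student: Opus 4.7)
My plan is to first establish the identity on the $\pi$-system $\cF$ given by \cref{assumptioneps}, and then extend it to all Borel sets by a uniqueness argument. For $A\in\cF$, \cref{assumptioneps} gives a non-decreasing sequence $\{A_k\}_{k\in\Nats}$ of internal subsets of $\NSE{S}$ with $\ST^{-1}(A)=\bigcup_{k}A_k$, so $\ST^{-1}(A)\cap T_S=\bigcup_{k}(A_k\cap T_S)$ is precisely the type of set to which \cref{stationLoeb} applies. Combining \cref{stationLoeb} with the definition of the push-down measure and \cref{changeTd} yields
\*[
(\pd{m})_{S}(A)=\pd{(m_{T_S})}(A)=\Loeb{m_{T_S}}\bigl(\ST\inv(A)\cap T_S\bigr)=\int_{T_S\times T_X}\Loeb{\mathbb{Q}(s,x)}\bigl(\ST\inv(A)\cap T_S\bigr)\,\Loeb{m}(\dee s,\dee x),
\*]
which is the desired identity on $\cF$.

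Next I would verify that the right-hand side defines a probability measure $\mu$ on $(S,\BorelSets S)$. Because the state space in \cref{mainresults} is compact, $T_S$ is contained in $\NS{\NSE{S}}$ (every point of a hyperfinite representation of a compact metric space is near-standard), so $\ST\inv(S)\cap T_S=T_S$ and the integrand equals $1$ at $A=S$, giving $\mu(S)=1$. For measurability of $(s,x)\mapsto \Loeb{\mathbb{Q}(s,x)}(\ST\inv(A)\cap T_S)$ and countable additivity of $\mu$, I would fix $A\in\BorelSets S$ and use that $\ST\inv(A)\cap T_S$ is a Loeb-measurable set (a standard consequence of the Loeb construction together with the fact that $\ST$ is $\Loeb{\mathbb{Q}(s,x)}$-measurable on $\NS{\NSE{S}}$), then obtain countable additivity from the monotone convergence theorem applied to an increasing sequence $A^{(n)}\upto A$ of Borel sets.

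Finally, since $\cF$ is a $\pi$-system generating $\BorelSets S$ and both $(\pd{m})_{S}$ and $\mu$ are probability measures on $(S,\BorelSets S)$ that agree on $\cF$, the Uniqueness Lemma (\cref{uniqlemma}) gives $(\pd{m})_{S}=\mu$ on all of $\BorelSets S$, completing the proof.

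I expect the main subtlety to lie in the measurability claim for $(s,x)\mapsto \Loeb{\mathbb{Q}(s,x)}(\ST\inv(A)\cap T_S)$ for arbitrary Borel $A$: one has to approximate $\ST\inv(A)\cap T_S$ from inside and outside by internal sets (using the definition of the Loeb $\sigma$-algebra), and then argue that the corresponding internal approximations of the integrand converge $\Loeb{m}$-almost everywhere. Everything else is a routine $\pi$-$\lambda$ style extension from $\cF$ to $\BorelSets S$ once this measurability is in hand.
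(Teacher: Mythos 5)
Your proposal is correct and follows essentially the same route as the paper's own proof: establish the identity on the $\pi$-system $\cF$ of \cref{assumptioneps} via \cref{changeTd} and \cref{stationLoeb}, check that the right-hand side defines a probability measure on $(S,\BorelSets S)$, and conclude with the Uniqueness Lemma (\cref{uniqlemma}). The only difference is that you spell out the details (near-standardness of $T_S$, measurability, countable additivity) that the paper dismisses as ``easy to verify,'' which is a welcome but not substantively different elaboration.
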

\begin{proof}
By \cref{metricsatisfy}, let $\cF$ denote the $\pi$-system in \cref{assumptioneps}.
By \cref{changeTd}, we have $(\pd{m})_{S}(A)=\pd{(m_{T_S})}(A)=\Loeb{m_{T_S}}(\ST^{-1}(A)\cap T_S)$ for every $A\in \BorelSets S$. 
Pick some $B\in \cF$. 
By \cref{assumptioneps}, there is a sequence $\{B_k: k\in \Nats\}\subset \NSE{\BorelSets{\NSE{S}}}$ of non-decreasing sets such that $\ST^{-1}(B)=\bigcup_{k\in \Nats}B_k$. 
By \cref{stationLoeb}, we have
$
(\pd{m})_{S}(B)=\Loeb{m_{T_S}}(\ST^{-1}(B)\cap T_S)=\int_{T_S\times T_X}\Loeb{\mathbb{Q}(s, x)}(\ST^{-1}(B)\cap T_S)\Loeb{m}(\dee s, \dee x).
$
Define $P(A)=\int_{T_S\times T_X}\Loeb{\mathbb{Q}(s, x)}(\ST^{-1}(A) \cap T_S)\Loeb{m}(\dee s, \dee x)$ for every $A\in \BorelSets S$.  
It is easy to verify that $P$ is a well-defined a probability measure on $(S, \BorelSets S)$. 
As $(\pd{m})_{S}$ and $P$ agree on $\cF$, by \cref{uniqlemma}, we have the desired result. 
\end{proof}

Next, we quote the following  results from nonstandard analysis which will be used for the subsequent proofs.


\begin{theorem}[{\citet[][Prop.~8.4]{anderson82}}]\label{pdint}
Let $Y$ be a compact Hausdorff space endowed with Borel $\sigma$-algebra $\BorelSets Y$, 
let $\nu$ be an internal probability measure on $(\NSE{Y}, \NSE{\BorelSets Y})$, and let $f: Y\to \Reals$ be a bounded measurable function. 
Define $g: \NSE{Y}\to \Reals$ by $g(s)=f(\ST(s))$. Then we have $\int f(y)\pd{\nu}(\dee y)=\int g(y)\Loeb{\nu}(\dee y)$. 
\end{theorem}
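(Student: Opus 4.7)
The plan is to recognize this identity as the classical change-of-variables formula for the pushforward measure $\pd{\nu} = \ST_{*}\Loeb{\nu}$, and then establish it by the standard measure-theoretic bootstrap: verify it on indicators, extend linearly to simple functions, and finally extend to bounded measurable $f$ by dominated convergence. The key preparatory observations are that $\ST$ is everywhere defined and measurable, so that composition with $\ST$ makes sense and produces $\Loeb{\NSE{\BorelSets Y}}$-measurable functions.

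First I will verify that $\ST: \NSE{Y}\to Y$ is a well-defined and measurable map. Because $Y$ is compact Hausdorff, every point of $\NSE{Y}$ lies in the monad of some (necessarily unique) standard point, so $\NS{\NSE{Y}}=\NSE{Y}$ and $\ST$ is defined on all of $\NSE{Y}$. Measurability of $\ST$ as a map $(\NSE{Y},\Loeb{\NSE{\BorelSets Y}})\to (Y,\BorelSets Y)$ is exactly the content of \cref{defpushdown}: the definition of $\pd{\nu}(A)=\Loeb{\nu}(\ST^{-1}(A))$ presumes $\ST^{-1}(A)\in \Loeb{\NSE{\BorelSets Y}}$ for every $A\in \BorelSets Y$, and by \cref{compactpd} $\pd{\nu}$ is a bona fide probability measure. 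Hence for any Borel $f: Y\to \Reals$, the function $g=f\circ \ST$ is $\Loeb{\NSE{\BorelSets Y}}$-measurable, and when $f$ is bounded so is $g$, so both integrals appearing in the statement exist and are finite.

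Next I will check the identity on indicators and extend. For $A\in \BorelSets Y$ and $f=\Ind_{A}$, we have $g=\Ind_{\ST^{-1}(A)}$, and
\[
\int f(y)\,\pd{\nu}(\dee y)=\pd{\nu}(A)=\Loeb{\nu}(\ST^{-1}(A))=\int g(y)\,\Loeb{\nu}(\dee y),
\]
where the middle equality is precisely \cref{defpushdown}. By linearity this extends to all $\BorelSets Y$-measurable simple functions. For a general bounded measurable $f$ with $\|f\|_{\infty}\leq M$, choose a uniformly bounded sequence $f_n$ of simple functions converging pointwise to $f$; then $g_n=f_n\circ \ST$ converges pointwise to $g$ and $\|g_n\|_{\infty}\leq M$. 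Applying the dominated convergence theorem separately under $\pd{\nu}$ (against the constant $M$) and under $\Loeb{\nu}$ (again against $M$) and passing to the limit in the identity $\int f_n\,\dee\pd{\nu}=\int g_n\,\dee\Loeb{\nu}$ yields the desired equality for $f$. The only subtle point is the measurability of $\ST$ and the fact that $\pd{\nu}$ is a true countably additive probability measure; both are ensured by the compact Hausdorff hypothesis on $Y$ via \cref{defpushdown} and \cref{compactpd}, after which the argument is entirely standard.
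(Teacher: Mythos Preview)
Your proof is correct: once one knows that $\pd{\nu}=\ST_{*}\Loeb{\nu}$ as a pushforward of a countably additive probability measure along a measurable map, the identity for bounded measurable $f$ is the textbook change-of-variables formula, and your indicator $\to$ simple $\to$ bounded-DCT bootstrap is the standard way to obtain it. The paper itself does not prove this statement at all; it is quoted verbatim as \citet[][Prop.~8.4]{anderson82} and used as a black box, so there is no ``paper's own proof'' to compare against.

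One small point of presentation: saying that measurability of $\ST$ ``is exactly the content of \cref{defpushdown}'' is slightly circular, since a definition does not establish measurability---it presupposes it. The underlying fact (that $\ST^{-1}(A)\in\Loeb{\NSE{\BorelSets Y}}$ for every Borel $A$ when $Y$ is compact Hausdorff) is a genuine theorem, also due to Anderson, and is what makes both \cref{defpushdown} and \cref{compactpd} well-posed. Since the paper likewise takes this as given, your appeal to it is consistent with the paper's framework, but in a self-contained write-up you would want to cite that measurability result separately rather than derive it from the definition.
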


\begin{theorem}[{\citet[][Corollary.~5]{nsweak}}]\label{nsweaklemma}
Let $Y$ be a compact Hausdorff space endowed with Borel $\sigma$-algebra $\BorelSets Y$, 
let $\{P_n\}_{n\in \Nats}$ be a sequence of probability measures on $(Y, \BorelSets Y)$. 
Then the sequence $\{P_n\}_{n\in \Nats}$ converges weakly to a probability measure $P$ on $(Y, \BorelSets Y)$ 
if and only if $P(A)=\Loeb{\NSE{P}_{N}}(\ST^{-1}(A))$ for all $A\in \BorelSets Y$ and $N\in \NSE{\Nats}\setminus \Nats$. 
\end{theorem}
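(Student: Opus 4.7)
The plan is to prove both directions of the biconditional by combining the standard nonstandard characterization of real sequence convergence with the pushdown integration theorem (Theorem \ref{pdint}).

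First, recall that a sequence $(a_n)_{n\in\Nats}$ of reals converges to $a$ if and only if $\NSE{a}_N\approx a$ for every $N\in\NSE{\Nats}\setminus\Nats$. Applied to weak convergence, since $Y$ is compact every continuous function is automatically bounded, so $P_n\to P$ weakly if and only if for every $f\in C(Y)$ and every infinite $N$,
\[
  \NSE{\textstyle\int} f\,\dee\NSE{P}_N \approx \int f\,\dee P.
\]
By the transfer principle the left-hand side equals the internal integral $\int \NSE{f}\,\dee\NSE{P}_N$. Applying Theorem \ref{pdint} with $\nu=\NSE{P}_N$ and the bounded measurable function $f$ gives
\[
  \textstyle\int f\,\dee\pd{(\NSE{P}_N)} \;=\; \int f(\ST(y))\,\dee\Loeb{\NSE{P}_N}(y) \;=\; \ST\!\of{\int \NSE{f}\,\dee\NSE{P}_N},
\]
where $\pd{(\NSE{P}_N)}$ is a Borel probability measure on $Y$ by Lemma \ref{compactpd}. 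Hence the weak-convergence condition is equivalent to
\[
  \textstyle\int f\,\dee\pd{(\NSE{P}_N)} = \int f\,\dee P \qquad \text{for every } f\in C(Y) \text{ and every infinite } N. \tag{$\ast$}
\]

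Next I would translate $(\ast)$ into the set-theoretic statement of the theorem. Both $P$ and $\pd{(\NSE{P}_N)}$ are Borel (indeed Radon, given compact Hausdorff with metrizability as used throughout the paper) probability measures on $Y$, and two such measures that agree on all continuous test functions must coincide on the Borel $\sigma$-algebra: on a compact metric space this is immediate from the Portmanteau theorem (closed sets are intersections of decreasing sequences of bounded continuous functions via Urysohn's lemma), and in the general compact Hausdorff case from Riesz representation applied to the linear functional $f\mapsto \int f\,\dee P$. So $(\ast)$ is equivalent to $\pd{(\NSE{P}_N)}(A)=P(A)$ for every $A\in\BorelSets Y$ and every infinite $N$. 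Unfolding the definition of the pushdown, $\pd{(\NSE{P}_N)}(A)=\Loeb{\NSE{P}_N}(\ST^{-1}(A))$, which is precisely the condition in the statement.

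Putting the two chains of equivalences together yields the biconditional. The main potential obstacle is the uniqueness step in the second paragraph: one has to be sure that the pushdown measure is regular enough that agreement on $C(Y)$ forces agreement on all Borel sets. This is unproblematic under the metrizability that the surrounding paper always assumes, but for a truly general compact Hausdorff $Y$ one must be slightly careful to distinguish Baire and Borel $\sigma$-algebras, and one uses the inner regularity of pushed-down Loeb measures with respect to compact sets, which follows from saturation. Once that regularity is in hand, both directions are completely symmetric and the proof is essentially a bookkeeping exercise combining transfer, Robinson's sequence lemma, and Theorem \ref{pdint}.
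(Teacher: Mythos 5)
You should first be aware that the paper contains no proof of this statement: it is imported verbatim, with attribution, from \citet[][Corollary~5]{nsweak}, and is then used as a black box (e.g.\ in Lemma \ref{wkconverge} and Lemma \ref{wkconvergeQN}). So there is no in-paper argument to compare yours against; judged on its own merits, your reconstruction is correct in outline and is in fact close in spirit to how the cited source argues: Robinson's characterization of limits of real sequences plus transfer reduces weak convergence to the statement that $\int \NSE{f}\,\dee\NSE{P}_N \approx \int f\,\dee P$ for every standard $f\in C(Y)$ and every infinite $N$; the Loeb/pushdown machinery converts this into agreement of $\pd{(\NSE{P}_N)}$ and $P$ on continuous test functions; and a uniqueness argument upgrades that to set-wise equality, which after unfolding $\pd{(\NSE{P}_N)}(A)=\Loeb{\NSE{P}_N}(\ST^{-1}(A))$ is exactly the displayed identity. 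The converse direction needs no uniqueness at all, as you note implicitly.

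Two steps need more support than you give them. First, in the chain
$$
\int f\,\dee \pd{(\NSE{P}_N)}=\int f(\ST(y))\,\dee\Loeb{\NSE{P}_N}(y)=\ST\Big(\int \NSE{f}\,\dee \NSE{P}_N\Big),
$$
only the first equality is Theorem \ref{pdint}; the second is a separate lifting fact. It requires (i) $\ST\big(\NSE{f}(y)\big)=f(\ST(y))$ for \emph{every} $y\in\NSE{Y}$, which uses compactness (so that every point of $\NSE{Y}$ is near-standard and $\ST$ is everywhere defined) together with the nonstandard characterization of continuity of $f$, and (ii) S-integrability of $\NSE{f}$ with respect to $\NSE{P}_N$, which holds because $\NSE{f}$ is bounded by the standard sup-norm of $f$ by transfer; this is precisely Theorem \ref{BLintegral}. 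Without (i) and (ii) the second equality is unjustified, so these should be cited. Second, your uniqueness step is where the real content of the topological hypotheses sits: agreement of two Borel probability measures on $C(Y)$ forces equality only under regularity. In the compact \emph{metric} setting in which the paper actually invokes the lemma (its state, action and parameter spaces are compact metric), every Borel probability measure is Radon and your Portmanteau/Urysohn argument closes this; for a general compact Hausdorff $Y$ the pushdown $\pd{(\NSE{P}_N)}$ is Radon (by saturation, cf.\ Lemma \ref{compactpd}), but an arbitrary Borel $P$ need not be, so the biconditional should be read with $P$ regular, as in the original source. Since you explicitly flag this, I regard it as a caveat rather than a gap: with the two citations in (i)--(ii) supplied, your proof goes through.
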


Recall that we assume the mappings $(s, x)\to Q(s, x)$ and $(\theta, s, x)\to Q_{\theta}(s, x)$ are continuous in the Prokhorov metric. 
By \cref{nsweaklemma}, we have the following result:
\begin{lemma}\label{wkconverge}
For every $(s, x)\in T_S\times T_X$, every $\theta\in T_{\Theta}$ and every $A\in \BorelSets S$, 
we have $Q(\ST(s), \ST(x))(A)=\Loeb{\mathbb{Q}(s, x)}(\ST^{-1}(A)\cap T_S)$ and $Q_{\ST(\theta)}(\ST(s), \ST(x))(A)=\Loeb{\mathbb{Q}_{\theta}(s, x)}(\ST^{-1}(A)\cap T_S)$.
\end{lemma}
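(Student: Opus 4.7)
\textbf{Proof proposal for Lemma \ref{wkconverge}.} Fix $(s,x,\theta) \in T_S \times T_X \times T_\Theta$ and $A \in \BorelSets{S}$. Since $S$, $X$ and $\Theta$ are compact metric spaces, every element of $\NSE{S}$, $\NSE{X}$, $\NSE{\Theta}$ is near-standard, so $\ST(s)$, $\ST(x)$, $\ST(\theta)$ are well-defined. I will prove the identity for $Q$; the argument for $Q_\theta$ is identical, invoking \cref{regsmdp}(iii) in place of \cref{regsmdp}(ii).

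The plan has two parts. First, by \cref{regsmdp}(ii) the map $(s,x)\mapsto Q(s,x)$ is continuous from $S\times X$ into $\PM{S}$ in the Prokhorov metric. The nonstandard characterization of continuity then gives $\NSE{Q}(s,x) \approx Q(\ST(s),\ST(x))$ in the Prokhorov metric. Applying the standard push-down correspondence for an internal measure infinitely close to a standard one (the ``infinite-point'' form of \cref{nsweaklemma}, which is essentially Anderson's weak-convergence characterization), I obtain
\[
Q(\ST(s),\ST(x))(A) \;=\; \Loeb{\NSE{Q}(s,x)}\bigl(\ST^{-1}(A)\bigr).
\]
Second, I must identify $\Loeb{\NSE{Q}(s,x)}(\ST^{-1}(A))$ with $\Loeb{\mathbb{Q}(s,x)}(\ST^{-1}(A)\cap T_S)$, which is where the use of the hyperfinite representation enters.

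For the identification, define an internal map $f\colon \NSE{S}\to T_S$ by $f(y)=t_y$, the unique element of $T_S$ with $y \in B_S(t_y)$; this is well-defined because $\{B_S(s')\}_{s'\in T_S}$ is an $\NSE{}$partition of $\NSE{S}$ (\cref{hyperapproxsp}(v)), and internal because the partition and the indexing are internal. By the very definition of $\mathbb{Q}$, for every internal $E\subset T_S$,
\[
\mathbb{Q}(s,x)(E)\;=\;\sum_{s'\in E}\NSE{Q}(s,x)\bigl(B_S(s')\bigr)\;=\;\NSE{Q}(s,x)\bigl(f^{-1}(E)\bigr),
\]
so $\mathbb{Q}(s,x)=f_{*}\NSE{Q}(s,x)$ as internal measures. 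The usual inner/outer approximation argument for Loeb measures then yields the pushforward identity $\Loeb{\mathbb{Q}(s,x)}(E)=\Loeb{\NSE{Q}(s,x)}(f^{-1}(E))$ for every Loeb-measurable $E\subset T_S$: given internal $E_i\subset E\subset E_o$ with $\mathbb{Q}(s,x)(E_o\setminus E_i)\approx 0$, the sets $f^{-1}(E_i)\subset f^{-1}(E)\subset f^{-1}(E_o)$ sandwich $f^{-1}(E)$ with the same infinitesimal gap.

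The final piece uses that each $B_S(s')$ has infinitesimal diameter (\cref{hyperapproxsp}(iii)), so for every $y\in\NSE{S}$ one has $\NSE{d}(y,t_y)\approx 0$, hence $\ST(y)=\ST(t_y)=\ST(f(y))$. Consequently $y\in\ST^{-1}(A)$ iff $f(y)\in \ST^{-1}(A)\cap T_S$, giving
\[
f^{-1}\bigl(\ST^{-1}(A)\cap T_S\bigr)\;=\;\ST^{-1}(A).
\]
In particular $\ST^{-1}(A)\cap T_S$ is Loeb-measurable in $(T_S,\mathbb{Q}(s,x))$ because $\ST^{-1}(A)$ is Loeb-measurable in $(\NSE{S},\NSE{Q}(s,x))$. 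Combining the pushforward identity with the first-part identity gives
\[
\Loeb{\mathbb{Q}(s,x)}\bigl(\ST^{-1}(A)\cap T_S\bigr)\;=\;\Loeb{\NSE{Q}(s,x)}\bigl(\ST^{-1}(A)\bigr)\;=\;Q(\ST(s),\ST(x))(A),
\]
which is the desired statement.

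The only genuinely delicate step is the pushforward identity $\Loeb{f_*\mu}=f_*\Loeb{\mu}$ on Loeb-measurable sets; I expect this to be the main obstacle, though it is a standard consequence of internality of $f$ and the definition of the Loeb completion. Once granted, the lemma reduces to the two observations that (a) continuity plus near-standardness gives Prokhorov-infinite-closeness of $\NSE{Q}(s,x)$ to $Q(\ST(s),\ST(x))$, and (b) the partition $\{B_S(s')\}$ respects standard parts because of its infinitesimal mesh.
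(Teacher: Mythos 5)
Your proof is correct and follows essentially the same route as the paper: it first applies \cref{nsweaklemma} together with Prokhorov-continuity of $Q$ (and of $Q_\theta$) to get $Q(\ST(s),\ST(x))(A)=\Loeb{\NSE{Q}(s,x)}(\ST^{-1}(A))$, and then uses the identity $\ST^{-1}(A)=\bigcup\{B_S(t):t\in\ST^{-1}(A)\cap T_S\}$, coming from the infinitesimal mesh of the hyperfinite representation, to pass from $\NSE{Q}(s,x)$ to $\mathbb{Q}(s,x)$. The only difference is presentational: what the paper dismisses as ``by construction'' you spell out as a Loeb-pushforward identity under the internal cell map $f(y)=t_y$, a faithful (and slightly more rigorous) elaboration of the same step.
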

\begin{proof}
Pick $(s_0, x_0)\in T_S\times T_X$, $\theta_0\in T_{\Theta}$ and $A_0\in \BorelSets S$.
By \cref{nsweaklemma}, we have $Q(\ST(s_0), \ST(x_0))(A_0)=\Loeb{\NSE{Q}(s_0, x_0)}(\ST^{-1}(A_0))$
and $Q_{\ST(\theta_0)}(\ST(s_0), \ST(x_0))(A_0)=\Loeb{\NSE{Q}_{\theta_0}(s_0, x_0)}(\ST^{-1}(A_0))$.
As $\ST^{-1}(A_0)=\bigcup\{B_{S}(s): s\in \ST^{-1}(A_0)\cap T_S\}$, by construction, we obtain the desired result. 
\end{proof}

We now prove the main result of this section, which establishes stationarity of $\pd{m}$:  

\begin{theorem}\label{stationm}
$(\pd{m})_{S}(A)=\int_{S\times X}Q(A|s, x)\pd{m}(\dee s, \dee x)$ for every $A\in \BorelSets S$. 
\end{theorem}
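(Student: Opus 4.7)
The plan is to identify the three ingredients already assembled in this section -- Lemma~\ref{stationSTmap}, Lemma~\ref{wkconverge}, and Theorem~\ref{pdint} -- and splice them together. Fix $A\in\BorelSets S$. Lemma~\ref{stationSTmap} expresses the marginal
\[
(\pd{m})_{S}(A)=\int_{T_S\times T_X}\Loeb{\mathbb{Q}(s,x)}\big(\ST^{-1}(A)\cap T_S\big)\,\Loeb{m}(\dee s,\dee x),
\]
so the proof reduces to rewriting the integrand as the standard kernel $Q(A\mid\cdot,\cdot)$ evaluated along $\ST$ and then transporting the Loeb integral back to a $\pd{m}$-integral.

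First, I would invoke Lemma~\ref{wkconverge}, which is exactly the identity
$\Loeb{\mathbb{Q}(s,x)}\big(\ST^{-1}(A)\cap T_S\big)=Q(A\mid\ST(s),\ST(x))$
valid for every $(s,x)\in T_S\times T_X$. Substituting this into the previous display rewrites $(\pd{m})_S(A)$ as
\[
\int_{T_S\times T_X} Q\big(A\mid\ST(s),\ST(x)\big)\,\Loeb{m}(\dee s,\dee x).
\]
Next, I would apply Theorem~\ref{pdint} to the space $Y=S\times X$ (which is compact since we are in the compact-state-space setting of Theorem~\ref{mainresults}), to the internal measure $m$, and to the function $f\colon S\times X\to\Reals$ defined by $f(s,x)=Q(A\mid s,x)$. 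The function $f$ is bounded (it takes values in $[0,1]$) and measurable because $Q$ is a Markov kernel, so the hypotheses of Theorem~\ref{pdint} are satisfied. The theorem then yields
\[
\int_{T_S\times T_X} Q\big(A\mid\ST(s),\ST(x)\big)\,\Loeb{m}(\dee s,\dee x)=\int_{S\times X} Q(A\mid s,x)\,\pd{m}(\dee s,\dee x),
\]
which combined with the previous identity gives exactly the stationarity equation.

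The only genuine subtlety I anticipate is verifying the hypotheses of Theorem~\ref{pdint} for $f(s,x)=Q(A\mid s,x)$: $f$ need not be continuous in $(s,x)$ when $A$ fails to be a continuity set of the kernel, but Theorem~\ref{pdint} only requires bounded measurability, which follows from the definition of a transition probability function. Everything else is bookkeeping: the compactness of $S\times X$ is given, and the two preceding lemmas do the substantive analytic work of identifying the standard-part kernel with $Q$ and of representing $(\pd{m})_S$ through the Loeb integral.
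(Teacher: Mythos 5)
Your proof is correct and takes essentially the same route as the paper: the paper's own argument likewise applies Lemma~\ref{stationSTmap} first, then reduces the claim to the identity $\int_{T_S\times T_X}\Loeb{\mathbb{Q}(s,x)}(\ST^{-1}(A)\cap T_S)\,\Loeb{m}(\dee s,\dee x)=\int_{S\times X}Q(A|s,x)\,\pd{m}(\dee s,\dee x)$, which it dispatches by citing Theorem~\ref{pdint} and Lemma~\ref{wkconverge} exactly as you do. Your explicit check of the hypotheses of Theorem~\ref{pdint} (that $(s,x)\mapsto Q(A|s,x)$ is bounded and measurable, continuity not being needed) is a detail the paper leaves implicit.
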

\begin{proof}
By \cref{stationSTmap}, we have $(\pd{m})_{S}(A)=\int_{T_S\times T_X}\Loeb{\mathbb{Q}(s, x)}(\ST^{-1}(A)\cap T_S)\Loeb{m}(\dee s, \dee x)$ for all $A\in \BorelSets S$. 
Thus, it is sufficient to show that 
$
\int_{T_S\times T_X}\Loeb{\mathbb{Q}(s, x)}(\ST^{-1}(A)\cap T_S)\Loeb{m}(\dee s, \dee x)=\int_{S\times X}Q(s,x)(A)\pd{m}(\dee s, \dee x).
$
This follows from \cref{pdint} and \cref{wkconverge}. 
\end{proof}

\subsubsection{Belief Restriction}

Recall that $\nu$ is the hyperfinite belief as in \cref{hyperberkNash}. 
As $\Theta$ is compact, $\pd{\nu}$ is a well-defined probability measure on $\Theta$. 
In this section, we show that the support of $\pd{\nu}$ is a subset of $\Theta_{Q}(\pd{m})$. 
We start with the following result, which is closely related to \citet{zimradon}, on hyperfinite representation of density functions.

\begin{theorem}\label{dsyapprox}
For all $\theta\in T_{\Theta}$, all $(s,x,s')\in T_S\times T_X\times T_S$ such that 
\begin{enumerate}[{\normalfont (i)}, topsep=1pt]
 \setlength{\itemsep}{-2pt}
    \item $\mathbb{Q}_{\theta}(s'|s, x)>0$;
    \item $Q(\ST(s), \ST(x))$ is dominated by $Q_{\ST(\theta)}(\ST(s), \ST(x))$;
    \item $D_{\ST(\theta)}(\ST(s')|\ST(s), \ST(x))$ is finite. 
\end{enumerate}
Then, we have $\frac{\mathbb{Q}(s'|s, x)}{\mathbb{Q}_{\theta}(s'|s, x)}\approx D_{\ST(\theta)}(\ST(s')|\ST(s), \ST(x))$.
\end{theorem}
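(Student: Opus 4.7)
The plan is to exploit the internal density representation on the infinitesimal box $B_S(s')$ and then use the joint continuity of $D_\theta$ near the standard part. First I would use the fact established in the construction of the HSMDP that $T_\Theta \subset \NSE{\hat{\Theta}}$. By the absolute continuity clause in \cref{regsmdp} and the transfer principle, $\NSE{Q}(s,x)$ is internally dominated by $\NSE{Q}_\theta(s,x)$ with internal density $\NSE{D}_\theta(\cdot|s,x)$. Hence, by transfer of the Radon–Nikodym identity,
\[
\mathbb{Q}(s'|s,x) \;=\; \NSE{Q}(s,x)(B_S(s')) \;=\; \int_{B_S(s')} \NSE{D}_\theta(t|s,x)\, \NSE{Q}_\theta(\dee t\,|\,s,x).
\]

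Next, set $d := D_{\ST(\theta)}(\ST(s')|\ST(s),\ST(x))$, which is finite by hypothesis~(iii) and nonnegative since $D$ is a density. By hypothesis~(ii) the point $(\ST(\theta),\ST(s'),\ST(s),\ST(x))$ lies in the domain of joint continuity of $D$ provided by \cref{regsmdp}. So for every standard $\epsilon>0$ there is a standard open neighborhood $U$ of this point, contained in the domination set, on which $|D_{\theta'}(t'|s_0,x_0)-d|<\epsilon$. Because $B_S(s')$ has infinitesimal diameter and $t\in B_S(s')$ forces $\ST(t)=\ST(s')$, every tuple $(\theta,t,s,x)$ with $t\in B_S(s')$ has standard part equal to $(\ST(\theta),\ST(s'),\ST(s),\ST(x))$, hence lies in $\NSE{U}$. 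By transfer $|\NSE{D}_\theta(t|s,x)-d|<\epsilon$ uniformly for $t\in B_S(s')$.

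Plugging this into the integral representation,
\[
\left| \int_{B_S(s')} \NSE{D}_\theta(t|s,x)\, \NSE{Q}_\theta(\dee t\,|\,s,x) - d\cdot \NSE{Q}_\theta(s,x)(B_S(s')) \right| \;\leq\; \epsilon\cdot \NSE{Q}_\theta(s,x)(B_S(s')).
\]
Dividing through by $\NSE{Q}_\theta(s,x)(B_S(s')) = \mathbb{Q}_\theta(s'|s,x)>0$ (condition~(i)) gives
\[
\left| \frac{\mathbb{Q}(s'|s,x)}{\mathbb{Q}_\theta(s'|s,x)} - d \right| \;\leq\; \epsilon.
\]
Since this holds for every standard $\epsilon>0$, the ratio is infinitely close to $d$, as desired.

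The main obstacle is the uniform infinitesimal approximation of $\NSE{D}_\theta(t|s,x)$ by $d$ over the entire internal box $B_S(s')$: pointwise nonstandard closeness of the integrand would not be enough to control the integral, but joint continuity at the standard-part tuple, combined with the fact that every $t\in B_S(s')$ shares the same standard part $\ST(s')$, upgrades pointwise closeness to uniform $\epsilon$-closeness for each standard $\epsilon$, which is exactly what the integral argument needs. The remaining steps (transfer of the Radon–Nikodym identity and nonnegativity of $D$) are routine given the regularity conditions already built into \cref{regsmdp}.
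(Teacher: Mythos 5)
Your proposal is correct and follows essentially the same route as the paper's proof: transfer the Radon--Nikodym identity to write $\mathbb{Q}(s'|s,x)$ as the internal integral of $\NSE{D}_{\theta}$ over the infinitesimal cell $B_{S}(s')$, use joint continuity of $D$ at the (finite-valued) standard-part tuple to make the integrand uniformly within every standard $\epsilon$ of $D_{\ST(\theta)}(\ST(s')|\ST(s),\ST(x))$, and then divide by $\mathbb{Q}_{\theta}(s'|s,x)>0$. The only cosmetic difference is that you phrase the continuity step via a standard neighborhood $U$ and transfer (where one should, strictly, use continuity relative to the domination set together with the fact that $T_{\Theta}\subset\NSE{\hat{\Theta}}$ puts the internal tuple in the $\NSE{}$domination set), while the paper states the same conclusion directly as $\NSE{D}_{\theta}(y|s,x)\approx\frac{\mathbb{Q}(s'|s,x)}{\mathbb{Q}_{\theta}(s'|s,x)}$ for all $y\in B_{S}(s')$.
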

\begin{proof}
Pick some $\theta_0\in T_{\Theta}$, some $(s_0, x_0, s'_0)\in T_S\times T_X\times T_S$ that satisfy the assumptions of the theorem. 
As $T_{\Theta}\subset \NSE{\hat{\Theta}}$, by the transfer principle, we have 
$
\mathbb{Q}(s'_0|s_0, x_0)=\NSE{Q}(s_0, x_0)(B_{S}(s'_0))=\int_{B_{S}(s'_0)}\NSE{D}_{\theta_0}(y|s_0, x_0)\NSE{Q}_{\theta_0}(\dee y|s_0, x_0).
$
We also have $\mathbb{Q}(s'_0|s_0, x_0)=\int_{B_{S}(s'_0)}\frac{\mathbb{Q}(s'_0|s_0, x_0)}{\mathbb{Q}_{\theta_0}(s'_0|s_0, x_0)}\NSE{Q}_{\theta_0}(\dee y|s_0, x_0)$.
Note that the $D_{\ST(\theta_0)}(\ST(s'_0)|\ST(s_0), \ST(x_0))$ is finite and the density function $D_{\theta}(s'|s, x)$ is jointly continuous on $\{(\theta, s', s, x): Q(s, x) \text{ is  dominated by }   Q_{\theta}(s, x)\}$. Thus, we have $\NSE{D}_{\theta_0}(y|s_0, x_0)\approx \frac{\mathbb{Q}(s'_0|s_0, x_0)}{\mathbb{Q}_{\theta_0}(s'_0|s_0, x_0)}$ for all $y\in B_{S}(s'_0)$. Hence, we conclude that
$\frac{\mathbb{Q}(s'_0|s_0, x_0)}{\mathbb{Q}_{\theta_0}(s'_0|s_0, x_0)}\approx D_{\ST(\theta_0)}(\ST(s'_0)|\ST(s_0), \ST(x_0))$, completing the proof. 
\end{proof}

We now introduce the notion of S-integrability from nonstandard analysis. 

\begin{definition}\label{defsint}
Let $(\Omega,\cA,P)$ be an internal probability space 
and let $F: \Omega\to \NSE{\Reals}$ be an internally integrable function such that $\ST(F)$ exists $\Loeb{P}$-almost surely. 
Then $F$ is \emph{S-integrable} with respect to $P$ if $\ST({F})$ is $\Loeb{P}$-integrable, and $\int |F|\dee P\approx \int \ST({|F|})\dee \Loeb{P}$.
\end{definition}

We now show that the hyperfinite Kullback-Leibler divergence is infinitely close to the standard Kullback-Leibler divergence.
Recall that $\Theta_{m}=\{\theta\in \Theta: K_{Q}(m, \theta)<\infty\}$
for $m\in \PM{S\times X}$. 
Note that $\hat{\Theta}\subset \Theta_{m}$ for all $m\in \PM{S\times X}$. 

\begin{theorem}\label{KLapprox}
Let $\lambda$ be an element of $\NSE{\PM{T_S\times T_X}}$. 
Then, we have 
\begin{enumerate}[{\normalfont (i)}, topsep=1pt]
 \setlength{\itemsep}{-2pt}
    \item $\mathbb{K}_{\mathbb{Q}}(\lambda, \theta)\gtrapprox K_{Q}(\pd{\lambda}, \ST(\theta))$ for all $\theta\in T_{\Theta}$ such that $\ST(\theta)\in \Theta_{\pd{\lambda}}$;
    \item $\mathbb{K}_{\mathbb{Q}}(\lambda, \theta)\approx K_{Q}(\pd{\lambda}, \ST(\theta))$ for all $\theta\in T_{\Theta}$ such that $\ST(\theta)\in \hat{\Theta}$.
\end{enumerate}
\end{theorem}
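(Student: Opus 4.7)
\textbf{Proof proposal for Theorem \ref{KLapprox}.}
The plan is to reduce both statements to a pointwise comparison of the inner relative entropies
$\mathbb{E}_{\mathbb{Q}(\cdot|s,x)}[\ln(\mathbb{Q}(s'|s,x)/\mathbb{Q}_{\theta}(s'|s,x))]$
against $\mathcal{D}_{\mathrm{KL}}(Q(\ST(s),\ST(x)),Q_{\ST(\theta)}(\ST(s),\ST(x)))$, and then to integrate this comparison against $\lambda$ using the push-down formula (\cref{pdint}). The central tool for the pointwise comparison is \cref{dsyapprox}, which says that whenever $Q(\ST(s),\ST(x))$ is dominated by $Q_{\ST(\theta)}(\ST(s),\ST(x))$ and $D_{\ST(\theta)}(\ST(s')|\ST(s),\ST(x))$ is finite, the hyperfinite density ratio $\mathbb{Q}(s'|s,x)/\mathbb{Q}_{\theta}(s'|s,x)$ is infinitely close to $D_{\ST(\theta)}(\ST(s')|\ST(s),\ST(x))$, so its logarithm has the same standard part.

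For part (ii), the hypothesis $\ST(\theta)\in \hat{\Theta}$ together with condition (vi) of \cref{regsmdp} gives that $Q(\ST(s),\ST(x))$ is dominated by $Q_{\ST(\theta)}(\ST(s),\ST(x))$ for \emph{every} $(s,x)$. Since $T_{\Theta}\subset\NSE{\hat{\Theta}}$, the uniform integrability in \cref{KLint} transfers and implies, by the standard equivalence between $L^{1+r}$-boundedness and S-integrability, that $\NSE{D}_{\theta}(\cdot|s,x)\ln \NSE{D}_{\theta}(\cdot|s,x)$ is S-integrable with respect to $\NSE{Q}_{\theta}(s,x)$ uniformly in $(s,x)$. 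Combined with \cref{dsyapprox} and \cref{wkconverge}, this yields
$$\mathbb{E}_{\mathbb{Q}(\cdot|s,x)}\!\bigl[\ln\tfrac{\mathbb{Q}(s'|s,x)}{\mathbb{Q}_{\theta}(s'|s,x)}\bigr]\approx \int_{S} D_{\ST(\theta)}(t|\ST(s),\ST(x))\ln D_{\ST(\theta)}(t|\ST(s),\ST(x))\,Q_{\ST(\theta)}(\dee t|\ST(s),\ST(x))$$
for every $(s,x)\in T_S\times T_X$, i.e. the inner sum is infinitely close to the standard relative entropy at $(\ST(s),\ST(x))$. By the joint continuity in \cref{regsmdp} this pointwise equality extends to a bounded continuous function of $(\ST(s),\ST(x))$, so \cref{pdint} lets us integrate against $\Loeb{\lambda}$ and identify $\mathbb{K}_{\mathbb{Q}}(\lambda,\theta)\approx K_{Q}(\pd{\lambda},\ST(\theta))$.

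For part (i), the hypothesis $\ST(\theta)\in \Theta_{\pd{\lambda}}$ only gives $K_{Q}(\pd{\lambda},\ST(\theta))<\infty$, which forces $Q(\ST(s),\ST(x))\ll Q_{\ST(\theta)}(\ST(s),\ST(x))$ for $\pd{\lambda}$-a.e. $(s,x)$ but not necessarily everywhere, and we no longer have the uniform integrability of \cref{KLint}. On the set where the standard density is finite, \cref{dsyapprox} still gives the pointwise approximation as above; on its complement, the hyperfinite inner expectation is a non-negative (nonstandard real) quantity, because the relative entropy is non-negative by Jensen. The plan is therefore to apply a nonstandard Fatou-type lemma: writing the inner expectation as the standard part of a non-negative internal quantity that agrees infinitesimally with $\mathcal{D}_{\mathrm{KL}}(Q(\ST(s),\ST(x)),Q_{\ST(\theta)}(\ST(s),\ST(x)))$ on the $\pd{\lambda}$-full-measure set where the latter is finite, we obtain $\gtrapprox$ after integrating against $\Loeb{\lambda}$ and pushing down via \cref{pdint}, which yields $\mathbb{K}_{\mathbb{Q}}(\lambda,\theta)\gtrapprox K_{Q}(\pd{\lambda},\ST(\theta))$.

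The main obstacle is the S-integrability step: one must verify that the $(1+r)$-uniform integrability of $D_{\theta}$ in \cref{KLint} really delivers S-integrability of $D_{\theta}\ln D_{\theta}$ jointly in $(\theta,s,x)$ over $T_{\Theta}\times T_S\times T_X$, and that the exceptional set of $(s,x)$ where the standard density blows up (relevant only in part (i)) can be handled by a Fatou argument rather than requiring the full approximation. Because $x\ln x$ grows strictly slower than $x^{1+r}$ for any $r>0$, this is the natural place where the exponent in \cref{KLint} is used, and the argument should proceed by a standard de la Vall\'ee–Poussin/S-integrability translation.
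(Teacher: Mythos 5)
Your proposal is correct and takes essentially the same route as the paper's own proof: your pointwise comparison of the inner relative entropies (via \cref{dsyapprox} together with S-integrability extracted from the $(1+r)$-exponent in \cref{KLint}, in de la Vall\'ee--Poussin style) is precisely the paper's \cref{eptapprox}, and your integration step---non-negativity plus a Fatou/truncation argument for part (i), full S-integrability for part (ii)---mirrors the paper's use of the truncations $G_n$, $g_n$ together with \cref{pdint} and the S-integrability results of Arkeryd et al.\ (1997, Section 4, Corollary 6.1 and Theorem 6.2). The only cosmetic difference is that in (ii) you justify pushing down the outer integral by asserting the standard relative entropy is bounded and \emph{continuous} in $(s,x)$, whereas the paper needs no such claim, working instead with bounded truncations and the internal function $G$ (near-standard everywhere, hence S-integrable); since boundedness and measurability suffice for \cref{pdint} and both follow from the same \cref{KLint} estimates, this is presentational rather than a substantive gap.
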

\begin{proof}
Pick $\theta\in T_{\Theta}$ such that $\ST(\theta)\in \Theta_{\pd{\lambda}}$. 
As $K_{Q}(\pd{\lambda}, \ST(\theta))<\infty$, this implies that $Q(s, x)$ is dominated by $Q_{\ST(\theta)}(s, x)$ for $\pd{\lambda}$-almost all $(s, x)\in S\times X$. The proof of the theorem relies essentially on the following claim which is proved in the supplementary material, \ref{dominate}.
\begin{claim}\label{eptapprox}
For every $(s, x)\in T_S\times T_X$ such that $Q(\ST(s), \ST(x))$ is dominated by $Q_{\ST(\theta)}(\ST(s), \ST(x))$, 
$\mathbb{E}_{\mathbb{Q}(\cdot|s, x)}\left[\ln \big(\frac{\mathbb{Q}(s'|s,x)}{\mathbb{Q}_{\theta}(s'|s,x)}\big)\right]\approx 
\mathbb{E}_{Q(\cdot|\ST(s), \ST(x))}\left[\ln \big(D_{\ST(\theta)}(s'|\ST(s), \ST(x))\big)\right]$.
\end{claim}
\nt Define $g: S\times X\to \Reals$ to be $g(s, x)=\mathbb{E}_{Q(\cdot|s, x)}\left[\ln \big(D_{\ST(\theta)}(s'|s, x)\big)\right]$ for all $(s, x)\in S\times X$ such that $Q(s, x)$ is dominated by $Q_{\ST(\theta)}(s, x)$ and $g(s, x)=0$ otherwise. 
For each $n\in \Nats$, define $g_n: S\times X\to \Reals$ to be $g_{n}(s, x)=\min\{g(s, x), n\}$. 
As $K_{Q}(\pd{\lambda}, \ST(\theta))<\infty$, we conclude that $K_{Q}(\pd{\lambda}, \ST(\theta))=\lim_{n\to \infty}\int g_n(s, x)\pd{\lambda}(\dee s, \dee x)$. 
Note that each $g_n$ is a bounded measurable function. Similarly, we define  $G: T_S\times T_X\to \NSE{\Reals}$ to be $G(s, x)=\mathbb{E}_{\mathbb{Q}(\cdot|s, x)}\left[\ln \big(\frac{\mathbb{Q}(s'|s,x)}{\mathbb{Q}_{\theta}(s'|s,x)}\big)\right]$. 
For each $n\in \Nats$, let $G_n: T_S\times T_X\to \NSE{\Reals}$ be $G_n(s, x)=\min\{G(s, x), n\}$. 
By \cref{eptapprox} and \cref{pdint}, we have $\int_{T_S\times T_X}G_n(s, x)\lambda(\dee s, \dee x)\approx \int_{S\times X}g_n(s, x)\pd{\lambda}(\dee s, \dee x)$ for all $n\in \Nats$. 
Note that $\mathbb{K}_{\mathbb{Q}}(\lambda, \theta)\geq \lim_{n\to\infty}\int_{T_S\times T_X}G_n(s, x)\lambda(\dee s, \dee x)$. 
Thus, we have $\mathbb{K}_{\mathbb{Q}}(\lambda, \theta)\gtrapprox K_{Q}(\pd{\lambda}, \ST(\theta))$. 

For the special case that $\ST(\theta)\in \hat{\Theta}$,
by Arkeryd et al. (1997, Section 4, Corollary 6.1) and \cref{eptapprox}, $G$ is S-integrable with respect to $\lambda$. So
$\mathbb{K}_{\mathbb{Q}}(\lambda, \theta)\approx \lim_{n\to\infty}\int_{T_S\times T_X}G_n(s, x)\lambda(\dee s, \dee x)$ follows from Arkeryd et al. (1997, Section 4, Theorem 6.2).
Hence, $\mathbb{K}_{\mathbb{Q}}(\lambda, \theta)\approx K_{Q}(\pd{\lambda}, \ST(\theta))$ when $\ST(\theta)\in \hat{\Theta}$. 
\end{proof}
We now prove the main result of this section. 
\begin{theorem}\label{beliefmain}
The support of $\pd{\nu}$ is a subset of $\Theta_{Q}(\pd{m})$. 
\end{theorem}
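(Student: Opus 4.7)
\begin{appproof}
The plan is to fix an arbitrary $\theta_0 \in \supp(\pd{\nu})$ and prove $K_Q(\pd{m}, \theta_0) \leq K_Q(\pd{m}, \theta')$ for every $\theta' \in \Theta$. First I would produce an internal minimizer $\tilde\theta \in T_{\Theta}^{\mathbb{Q}}(m)$ with $\ST(\tilde\theta) = \theta_0$. Because $\theta_0$ lies in the support of $\pd{\nu}$, every open $U \ni \theta_0$ has $\pd{\nu}(U) = \Loeb{\nu}(\ST^{-1}(U)) > 0$; since \cref{hyperberkNash} places the full internal mass of $\nu$ on $T_{\Theta}^{\mathbb{Q}}(m)$, the internal sets $T_{\Theta}^{\mathbb{Q}}(m) \cap \NSE{U}$ are all nonempty and form a family with the finite intersection property, so saturation yields $\tilde\theta \in T_{\Theta}^{\mathbb{Q}}(m) \cap \monad{\theta_0}$.

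Next I would sandwich $\ST(\mathbb{K}_{\mathbb{Q}}(m, \tilde\theta))$ between $K_Q(\pd{m}, \theta_0)$ and $K_Q(\pd{m}, \theta')$ for arbitrary $\theta' \in \Theta$. For the upper bound, given any $\theta^* \in \hat\Theta$, the hyperfinite partition from \cref{hyperapproxsp} supplies $\tilde\theta^* \in T_\Theta$ with $\ST(\tilde\theta^*) = \theta^*$; optimality of $\tilde\theta$ in $T_{\Theta}^{\mathbb{Q}}(m)$ gives $\mathbb{K}_{\mathbb{Q}}(m, \tilde\theta) \leq \mathbb{K}_{\mathbb{Q}}(m, \tilde\theta^*)$, and \cref{KLapprox}(ii) sharpens the right side to $K_Q(\pd{m}, \theta^*)$ modulo an infinitesimal, so $\ST(\mathbb{K}_{\mathbb{Q}}(m, \tilde\theta)) \leq K_Q(\pd{m}, \theta^*)$ for every $\theta^* \in \hat\Theta$. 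Density of $\hat\Theta$ in $\Theta$ together with continuity of $K_Q(\pd{m}, \cdot)$ on $\Theta_{\pd{m}}$ (and triviality when $K_Q(\pd{m}, \theta') = \infty$) then extend this to $\ST(\mathbb{K}_{\mathbb{Q}}(m, \tilde\theta)) \leq K_Q(\pd{m}, \theta')$ for all $\theta' \in \Theta$. For the lower bound, \cref{KLapprox}(i) gives $\mathbb{K}_{\mathbb{Q}}(m, \tilde\theta) \gtrapprox K_Q(\pd{m}, \theta_0)$ as soon as $\theta_0 \in \Theta_{\pd{m}}$, and chaining the two bounds yields the theorem.

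The one nontrivial remaining task is verifying $\theta_0 \in \Theta_{\pd{m}}$. The upper bound already forces $\mathbb{K}_{\mathbb{Q}}(m, \tilde\theta)$ to be finite (bounded above by $K_Q(\pd{m}, \theta^*)$ for any $\theta^* \in \hat\Theta \subset \Theta_{\pd{m}}$), so what I need is a lower-semicontinuity strengthening of \cref{KLapprox}(i) which drops its a priori hypothesis and concludes instead that infinite $K_Q(\pd{m}, \theta_0)$ would force nonstandard-infinite $\mathbb{K}_{\mathbb{Q}}(m, \tilde\theta)$. This is the main obstacle. I would obtain it by rerunning the truncation-and-monotone-convergence step inside the proof of \cref{KLapprox}, splitting into two cases. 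If $K_Q(\pd{m}, \theta_0) = \infty$ because $Q(\ST(s), \ST(x))$ fails to be dominated by $Q_{\theta_0}(\ST(s), \ST(x))$ on a positive-$\pd{m}$-measure set, then \cref{Qpositive} and \cref{dsyapprox} force $\mathbb{Q}_{\tilde\theta}(s'|s,x)$ to be infinitesimal on a standard-nontrivial collection of $s'$, making the corresponding $\ln$ terms nonstandard-infinite and hence $\mathbb{K}_{\mathbb{Q}}(m, \tilde\theta)$ itself nonstandard-infinite. If instead domination holds $\pd{m}$-almost everywhere but the standard integral diverges, monotone convergence applied to the truncations $g_n$, $G_n$ yields $\mathbb{K}_{\mathbb{Q}}(m, \tilde\theta) \geq \lim_n \int G_n \, m(\dee s, \dee x) \gtrapprox \lim_n \int g_n \, \pd{m}(\dee s, \dee x) = \infty$, again a nonstandard infinity. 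Either case contradicts the finiteness established by the upper bound, so $\theta_0 \in \Theta_{\pd{m}}$, and the proof is complete.
\end{appproof}
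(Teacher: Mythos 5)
Your proposal follows essentially the same route as the paper's own proof: extract an internal minimizer $\theta_1\in T_{\Theta}^{\mathbb{Q}}(m)$ infinitely close to $\theta_0$ from the belief restriction in \cref{hyperberkNash} (your saturation argument is a correct, more detailed version of the step the paper simply asserts), compare $\mathbb{K}_{\mathbb{Q}}(m,\theta_1)$ with $\mathbb{K}_{\mathbb{Q}}(m,t_{\hat\theta})$ for hyperfinite representatives of points $\hat\theta\in\hat\Theta$, and convert both sides to standard statements via \cref{KLapprox} together with density of $\hat\Theta$ and continuity of $K_{Q}(\pd{m},\cdot)$ on $\Theta_{\pd{m}}$. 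That you phrase it as a two-sided sandwich while the paper argues by contradiction is cosmetic.

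The genuine difference is your third paragraph, and it addresses a point on which the paper's proof is actually silent: the paper's final chain $K_{Q}(\pd{m},\theta_0)-\tfrac{1}{2n}\lessapprox \mathbb{K}_{\mathbb{Q}}(m,\theta_1)-\tfrac{1}{2n}$ is an invocation of \cref{KLapprox}(i) at $\theta_1$, whose hypothesis is $\ST(\theta_1)=\theta_0\in\Theta_{\pd{m}}$, and this is never verified; when $K_{Q}(\pd{m},\theta_0)=\infty$ the paper's argument as written does not close. Your proposed strengthening (infinite standard divergence forces infinite hyperfinite divergence) is exactly what is needed, and your case (B) goes through verbatim: since the relative entropy is non-negative, monotone convergence gives $\lim_{n}\int g_n\,\dee\pd{m}=\infty$ with no finiteness hypothesis, and $\mathbb{K}_{\mathbb{Q}}(m,\theta_1)\geq \int G_n\,\dee m\approx \int g_n\,\dee\pd{m}$ for each standard $n$ then makes $\mathbb{K}_{\mathbb{Q}}(m,\theta_1)$ infinite. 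One repair is needed in case (A): \cref{dsyapprox} cannot be cited there, because its hypotheses include domination of $Q(\ST(s),\ST(x))$ by $Q_{\ST(\theta)}(\ST(s),\ST(x))$ and finiteness of the density, which is exactly what fails. The clean substitute is: pick a Borel set $A$ with $Q_{\theta_0}(\ST(s),\ST(x))(A)=0<Q(\ST(s),\ST(x))(A)$; by \cref{wkconverge}, $\Loeb{\mathbb{Q}(s,x)}(\ST^{-1}(A)\cap T_S)>0$ while $\Loeb{\mathbb{Q}_{\theta_1}(s,x)}(\ST^{-1}(A)\cap T_S)=0$, so inner approximation of the Loeb measure yields an internal $W\subset \ST^{-1}(A)\cap T_S$ with $\mathbb{Q}(s,x)(W)$ non-infinitesimal and $\mathbb{Q}_{\theta_1}(s,x)(W)\approx 0$; the transferred log-sum inequality then makes the per-pair hyperfinite relative entropy infinite, and integrating over an internal positive-$m$-measure set of such pairs makes $\mathbb{K}_{\mathbb{Q}}(m,\theta_1)$ infinite. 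With that substitution your argument is complete, and it is in fact more careful than the published proof on this degenerate case.
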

\begin{proof}
Pick $\theta_0\in \Theta$ such that $\theta_0$ is in the support of $\pd{\nu}$. 
As $\pd{\nu}(A)=\Loeb{\nu}(\ST^{-1}(A))$ for all $A\in \BorelSets {\Theta}$,
by \cref{hyperberkNash}, there exists $\theta_1\approx \theta_0$ such that $\theta_1\in T_{\Theta}^{\mathbb{Q}}(m)$.
That is, we have $\mathbb{K}_{\mathbb{Q}}(m, \theta_1)= \min_{\theta\in T_{\Theta}}\mathbb{K}_{\mathbb{Q}}(m, \theta)$. 
Suppose there exists $\theta'\in \Theta$ such that $K_{Q}(\pd{m},\theta')<K_{Q}(\pd{m},\theta_0)-\frac{1}{n}$ for some $n\in \Nats$. 
Note that $K_{Q}(\pd{m}, \theta)$ is a continuous function of $\theta$ on $\Theta_{\pd{m}}$. 
As $\hat{\Theta}\subset \Theta_{\pd{m}}$ and $\hat{\Theta}$ is a dense subset of $\Theta$, 
there exists some $\hat{\theta}\in \hat{\Theta}$ such that $K_{Q}(\pd{m},\hat{\theta})<K_{Q}(\pd{m},\theta_0)-\frac{1}{2n}$.
Let $t_{\hat{\theta}}\in T_{\Theta}$ be the unique element such that $\hat{\theta}\in B_{\Theta}(t_{\hat{\theta}})$.
By \cref{KLapprox}, we have 
$
\mathbb{K}_{\mathbb{Q}}(m, t_{\hat{\theta}})\approx K_{Q}(\pd{m}, \hat{\theta})<K_{Q}(\pd{m},\theta_0)-\frac{1}{2n}\lessapprox \mathbb{K}_{\mathbb{Q}}(m, \theta_1)-\frac{1}{2n}.
$
This is a contradiction, so the support of $\pd{\nu}$ is a subset of $\Theta_{Q}(\pd{m})$.
\end{proof}

\subsubsection{Optimality}

In this section, we establish the optimality of the candidate Berk-Nash equilibrium $\pd{m}$. 

\begin{lemma}\label{Qbarlemma}
For every $\lambda\in \NSE{\PM{T_{\Theta}}}$ and every $(t, x)\in T_S\times T_X$,
$\pd{(\mathbb{Q}_{\lambda}(t, x))}=Q_{\pd{\lambda}}(\ST(t), \ST(x))$. 
That is, the push-down of $\mathbb{Q}_{\lambda}(t, x)$ is the same as $Q_{\pd{\lambda}}(\ST(t), \ST(x))$.
\end{lemma}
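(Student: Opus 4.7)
The plan is to check the equality by evaluating both measures against an arbitrary bounded continuous test function $f:S\to\Reals$. Since $S$ is compact (we are in the setting of Theorem 1), the bounded continuous functions separate Borel probability measures, so this suffices. Fix $(t,x)\in T_S\times T_X$, $\lambda\in\NSE{\PM{T_\Theta}}$, and $f\in C_b(S)$. Throughout, I will use \cref{pdint} twice: once to move $\NSE{f}$-integrals against $\mathbb{Q}_\theta(t,x)$ down to $S$, and once to move an integral over $T_\Theta$ down to $\Theta$.

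First I would unfold the mixture $\mathbb{Q}_\lambda(t,x)=\sum_{\theta\in T_\Theta}\mathbb{Q}_\theta(t,x)\lambda(\{\theta\})$. By the transfer of the finite Fubini identity,
\[
\int_{T_S}\NSE{f}(s')\,\mathbb{Q}_\lambda(t,x)(ds')=\int_{T_\Theta} H(\theta)\,\lambda(d\theta),\qquad H(\theta):=\int_{T_S}\NSE{f}(s')\,\mathbb{Q}_\theta(t,x)(ds').
\]
By \cref{wkconverge}, for every $\theta\in T_\Theta$ the push-down $\pd{(\mathbb{Q}_\theta(t,x))}$ equals $Q_{\ST(\theta)}(\ST(t),\ST(x))$, so \cref{pdint} applied to the compact $S$ gives $H(\theta)\approx g(\ST(\theta))$, where
\[
g(\alpha):=\int_S f(s')\,Q_\alpha(\ST(t),\ST(x))(ds').
\]
The Prokhorov continuity of $(\alpha,s,x)\mapsto Q_\alpha(s,x)$ in \cref{regsmdp} together with the boundedness and continuity of $f$ makes $g$ continuous on the compact $\Theta$. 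Hence $H$ is a near-standard lifting of $g$.

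Next I would push down over $\Theta$. Since $\Theta$ is compact, \cref{compactpd} gives that $\pd{\lambda}$ is a Borel probability measure on $\Theta$, and \cref{pdint} applied to $g$ yields
\[
\int_{T_\Theta}H(\theta)\,\lambda(d\theta)\approx\int_{T_\Theta}\NSE{g}(\theta)\,\lambda(d\theta)\approx\int_\Theta g(\alpha)\,\pd{\lambda}(d\alpha).
\]
The right-hand side equals $\int_S f(s')\,Q_{\pd{\lambda}}(\ST(t),\ST(x))(ds')$ by Fubini and the definition $Q_{\pd{\lambda}}=\int_\Theta Q_\alpha\,\pd{\lambda}(d\alpha)$. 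On the other hand, a final use of \cref{pdint} on the left-hand side of the original chain gives $\int_{T_S}\NSE{f}\,d\mathbb{Q}_\lambda(t,x)\approx \int_S f\,d\pd{(\mathbb{Q}_\lambda(t,x))}$. Combining the chain produces
\[
\int_S f\,d\pd{(\mathbb{Q}_\lambda(t,x))}=\int_S f\,dQ_{\pd{\lambda}}(\ST(t),\ST(x))
\]
for every $f\in C_b(S)$, and hence the two probability measures on $S$ coincide.

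The main obstacle, modest as it is, is verifying that $g$ is genuinely continuous on $\Theta$ so that \cref{pdint} is applicable on the parameter space; this uses that $(\alpha,s,x)\mapsto Q_\alpha(s,x)$ is continuous in the Prokhorov metric and that $f$ is bounded continuous, both from \cref{regsmdp}. The remainder is essentially the systematic use of the transfer principle to commute internal sums and integrals with the push-down operation.
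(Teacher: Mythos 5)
Your proof is correct, and it rests on exactly the same two pillars as the paper's: the identification $\pd{(\mathbb{Q}_{\theta}(t,x))}=Q_{\ST(\theta)}(\ST(t),\ST(x))$ from \cref{wkconverge}, and \cref{pdint} to push the mixing measure down from $T_{\Theta}$ to $\Theta$. The difference is purely in how the two measures are compared. The paper fixes an arbitrary Borel set $A\in\BorelSets{S}$ and evaluates setwise: $\pd{(\mathbb{Q}_{\lambda}(t,x))}(A)=\Loeb{\mathbb{Q}_{\lambda}(t,x)}(\ST^{-1}(A)\cap T_S)=\int_{T_{\Theta}}\Loeb{\mathbb{Q}_{i}(t,x)}(\ST^{-1}(A)\cap T_S)\,\Loeb{\lambda}(\dee i)$, identifies each integrand with $Q_{\ST(i)}(\ST(t),\ST(x))(A)$ via \cref{wkconverge}, and applies \cref{pdint} once on $\Theta$. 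You instead compare the measures through their integrals against every $f\in C_b(S)$, which forces you to carry two extra ingredients: (i) the fact that bounded continuous functions separate Borel probability measures on the compact metric space $S$, and (ii) the continuity of $g(\alpha)=\int_S f\,\dee Q_{\alpha}(\ST(t),\ST(x))$ on $\Theta$, needed so that $\NSE{g}$ is a lifting. Both are valid here — (ii) is precisely Prokhorov continuity of $\theta\mapsto Q_{\theta}(s,x)$ tested against $f$, and your lifting steps $\NSE{f}(s')\approx f(\ST(s'))$ are justified by compactness of $S$ — but neither is necessary in the setwise version, because \cref{pdint} holds for any bounded \emph{measurable} integrand, so the paper can integrate the merely measurable function $\theta\mapsto Q_{\theta}(\ST(t),\ST(x))(A)$ against $\pd{\lambda}$ with no continuity check. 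In short: no gap, same decomposition and same key lemmas, but the duality wrapper makes your argument longer than it needs to be; the setwise evaluation is the more economical (and slightly more general) route.
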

\begin{proof}
Fix $\lambda\in \NSE{\PM{T_{\Theta}}}$ and $(t, x)\in T_S\times T_X$.
Pick $A\in \BorelSets S$. 
By the construction of the Loeb measure, we have 
$
\pd{(\mathbb{Q}_{\lambda}(t, x))}(A)=\Loeb{\mathbb{Q}_{\lambda}(t,x)}(\ST^{-1}(A)\cap T_S)=\int_{S_{\Theta}}\Loeb{\mathbb{Q}_{i}(t,x)}(\ST^{-1}(A)\cap T_S)\Loeb{\lambda}(\dee i). 
$
By \cref{wkconverge}, we have $Q_{\ST(i)}(\ST(t), \ST(x))(A)=\Loeb{\mathbb{Q}_{i}(t, x)}(\ST^{-1}(A)\cap T_S)$ for all $i\in T_{\Theta}$.
Thus, by \cref{pdint}, we have
$
\int_{T_{\Theta}}\Loeb{\mathbb{Q}_{i}(t,x)}(\ST^{-1}(A)\cap T_S)\Loeb{\lambda}(\dee i)=\int_{\Theta}Q_{\theta}(\ST(t), \ST(x))(A)\pd{\lambda}(\dee \theta)=Q_{\pd{\lambda}}(\ST(t), \ST(x))(A).
$
Hence, we have the desired result. 
\end{proof}

Recall that $\nu\in \NSE{\PM{T_{\Theta}}}$ is the hyperfinite belief function that associates with the Berk-Nash S-equilibrium $m$.
We consider the Bellman equation
$$
V(s)=\max_{x\in X}\int_{S}\{\pi(s,x,s')+\delta V(s')\}\bar{Q}_{\pd{\nu}}(\dee s'|s, x).
$$

By the Banach fixed point theorem, there exists an unique $V\in \cC[T]$ that is a solution to this Bellman equation.  We fix $V$ for the rest of this section. 

\nt Similarly, we consider the hyperfinite Bellman equation
$$
\mathbb{V}(s)=\max_{x\in T_X}\int_{T_S}\{\Pi(s,x,s')+\delta \mathbb{V}(s')\}\bar{\mathbb{Q}}_{\nu}(\dee s'|s, x)
$$
where $\mathbb{V}: T_S\to \Reals$ is the unique solution to the hyperfinite Bellman equation. 
The existence of such $\mathbb{V}$ is guaranteed by the transfer principle. 
We fix $\mathbb{V}$ for the rest of this section. 
Define $\mathbb{V}': \NSE{S}\to \NSE{\Reals}$ by letting $\mathbb{V}'(s)=\mathbb{V}(t_s)$ for all $s\in \NSE{T}$, where $t_s$ is the unique element in $T_S$ such that $s\in B_{S}(t_s)$.

\begin{lemma}\label{BEclose}
For all $s\in \NSE{S}$, $\mathbb{V}'(s)\approx \NSE{V}(s)$. 
\end{lemma}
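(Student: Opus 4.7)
My plan is to show that $\NSE{V}$ restricted to $T_S$ is an approximate (S-level) fixed point of the hyperfinite Bellman operator and then invoke the $\delta$-contraction property to conclude that the true hyperfinite fixed point $\mathbb{V}$ must be infinitely close to $\NSE{V}|_{T_S}$; the conclusion for arbitrary $s\in \NSE{S}$ then follows from the infinitesimal diameter of each $B_S(t_s)$.

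First I would establish that $V$ is bounded continuous on $S$. Since $\pi$ is bounded continuous on the compact set $S\times X\times S$, and the mapping $(s,x)\to \bar{Q}_{\pd{\nu}}(s,x)$ is continuous in the Prokhorov metric (as a $\pd{\nu}$-average of the continuous kernels $Q_\theta$), the Bellman operator $T$ maps $C(S)$ into itself (via the Berge maximum theorem with compact $X$) and is a $\delta$-contraction in the sup norm, so its unique fixed point $V$ is in $C(S)$. Because $S$ is compact, $V$ is uniformly continuous, and hence $\NSE{V}$ is $S$-continuous: for every $s\in \NSE{S}$ we have $\NSE{V}(s)\approx V(\ST(s))$.

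Next I would show, for each $t\in T_S$, that
\[
\NSE{V}(t)\;\approx\;\max_{x\in T_X}\int_{T_S}\{\Pi(t,x,s')+\delta\NSE{V}(s')\}\bar{\mathbb{Q}}_\nu(\dee s'|t,x).
\]
Fix $t\in T_S$ and $x\in T_X$. By \cref{Qbarlemma}, the push-down of $\bar{\mathbb{Q}}_\nu(t,x)$ is $\bar{Q}_{\pd{\nu}}(\ST(t),\ST(x))$. The integrand $(s')\mapsto \pi(\ST(t),\ST(x),s')+\delta V(s')$ is bounded continuous, hence its nonstandard extension is bounded and $S$-continuous, which together with \cref{pdint} yields
\[
\int_{T_S}\{\Pi(t,x,s')+\delta\NSE{V}(s')\}\bar{\mathbb{Q}}_\nu(\dee s'|t,x)\;\approx\;\int_{S}\{\pi(\ST(t),\ST(x),s')+\delta V(s')\}\bar{Q}_{\pd{\nu}}(\dee s'|\ST(t),\ST(x)).
\]
Since $X\subset T_X$ and every $x\in T_X$ has $\ST(x)\in X$, the max over $T_X$ and the max over $X$ match up modulo infinitesimals (uniform continuity of the integrand as a function of $(s,x)$, plus Prokhorov-continuity of $\bar{Q}_{\pd{\nu}}$, guarantees the envelope is $S$-continuous). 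Applying the standard Bellman equation to $\ST(t)$ and using $\NSE{V}(t)\approx V(\ST(t))$ gives the displayed approximation.

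Finally, by the transfer principle the hyperfinite Bellman operator $\mathbb{T}$ is an internal $\delta$-contraction on the space of bounded internal functions on $T_S$, with unique fixed point $\mathbb{V}$. Let $\epsilon=\max_{t\in T_S}|\NSE{V}(t)-(\mathbb{T}\NSE{V})(t)|$; the previous step says $\epsilon\approx 0$ (with uniformity extracted from the compact-space uniform continuity of $V$ and $\pi$). The transferred Banach iteration bound then gives $\max_{t\in T_S}|\mathbb{V}(t)-\NSE{V}(t)|\leq \epsilon/(1-\delta)\approx 0$. For arbitrary $s\in \NSE{S}$, $t_s\in T_S$ satisfies $s\in B_S(t_s)$ which has infinitesimal diameter, so $\ST(s)=\ST(t_s)$ and
\[
\mathbb{V}'(s)=\mathbb{V}(t_s)\;\approx\;\NSE{V}(t_s)\;\approx\;V(\ST(t_s))=V(\ST(s))\;\approx\;\NSE{V}(s),
\]
completing the proof.

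The main obstacle is the uniform (rather than pointwise) infinitesimality of the error $\epsilon$: one needs the approximate fixed point property to hold with an error that is infinitesimal for \emph{all} $t\in T_S$ simultaneously, which in turn requires the approximations in \cref{pdint} and the passage from $\max_{x\in T_X}$ to $\max_{x\in X}$ to be controlled uniformly. This is delivered by the uniform continuity of $V$ and $\pi$ on the compact spaces together with the Prokhorov continuity of $\bar{Q}_{\pd{\nu}}$, but care is needed because $T_S$ and $T_X$ are only hyperfinite—not standard-finite—so one exploits internal maxima and transferred uniform-continuity moduli rather than classical compactness arguments on $T_S\times T_X$ directly.
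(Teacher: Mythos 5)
Your proposal is correct and follows essentially the same route as the paper: restrict $\NSE{V}$ to $T_S$, use \cref{Qbarlemma} and \cref{pdint} to show this restriction is an approximate fixed point of the hyperfinite Bellman operator, invoke the transferred $\delta$-contraction bound $\NSE{d_{\sup}}(V_0,\mathbb{V})\leq \frac{1}{1-\delta}\NSE{d_{\sup}}(V_1,V_0)\approx 0$, and finish via the infinitesimal diameter of $B_S(t_s)$ together with the continuity of $V$. The uniformity issue you flag at the end is real but resolves exactly as you suggest: since $T_S$ is hyperfinite and the error function is internal, the sup is attained at a point of $T_S$, where the pointwise approximation already makes it infinitesimal.
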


\begin{proof}
Let $V_0$ be the restriction of $\NSE{V}$ on $T_S$. 
For all $(s, x)\in T_S\times T_X$, by \cref{Qbarlemma} and \cref{pdint}, we have
$\int_{T_S}\{\Pi(s,x,s')+\delta V_0(s')\}\bar{\mathbb{Q}}_{\nu}(\dee s'|s, x)\approx \int_{S}\{\pi(\ST(s),\ST(x),s')+\delta V(s')\}\bar{Q}_{\pd{\nu}}(\dee s'|\ST(s), \ST(x)).$ Hence, we have, $\max_{x\in T_X}\int_{T_S}\{\Pi(s,x,s')+\delta V_0(s')\}\bar{\mathbb{Q}}_{\nu}(\dee s'|s, x)\approx \max_{x\in X}\int_{S}\{\pi(\ST(s),\ST(x),s')+\delta V(s')\}\bar{Q}_{\pd{\nu}}(\dee s'|\ST(s), \ST(x))=V(\ST(s))\approx V_0(s).$

Let $G(f)(s)=\max_{x\in T_X}\int_{T_S}\{\Pi(s,x,s')+\delta f(s')\}\bar{\mathbb{Q}}_{\nu}(\dee s'|s, x)$ for all internal function $f: T_S\to \NSE{\Reals}$.
Note that we have $\NSE{d_{\sup}}(G(f_1), G(f_2))\leq \delta\NSE{d}_{\sup}(f_1, f_2)$ 
for all internal functions $f_1, f_2: T_S\to \NSE{\Reals}$.
Moreover, we can find $\mathbb{V}$ as following: start with $V_0$ and define a sequence $\{V_n\}_{n\in \NSE{\Nats}}$ by $V_{n+1}=G(V_n)$. 
Then $\mathbb{V}$ is the $\NSE{}$limit of $\{V_n\}_{n\in \NSE{\Nats}}$. 
So:
$
\NSE{d_{\sup}}(V_0, \mathbb{V})\leq \frac{1}{1-\delta}\NSE{d_{\sup}}(V_1, V_0)\approx 0. 
$
As $V$ is continuous, we conclude that  $\mathbb{V}'(s)\approx \NSE{V}(s)$ for all $s\in \NSE{S}$. 
\end{proof}
\nt We now  prove the main result of this section. 
\begin{theorem}\label{optmain}
For every $(s, x)\in S\times X$ that is in the support of $\pd{m}$, $x$ is optimal given $s$ in the MDP($\bar{Q}_{\pd{\nu}}$).
\end{theorem}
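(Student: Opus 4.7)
The plan is to promote the $\NSE{}$optimality of the hyperfinite Berk-Nash equilibrium $m$ at an atom infinitesimally close to $(s_0, x_0)$ into optimality of $x_0$ at $s_0$ in $\text{MDP}(\bar{Q}_{\pd{\nu}})$, by taking standard parts on both sides of the resulting inequality. The first step is to locate, for each $(s_0, x_0)$ in the support of $\pd{m}$, a pair $(s, x) \in T_S \times T_X$ with $\ST(s) = s_0$, $\ST(x) = x_0$, and $m(\{(s, x)\}) > 0$. Since every open neighborhood $U$ of $(s_0, x_0)$ satisfies $\pd{m}(U) > 0$, the Loeb construction together with inner regularity gives an internal subset of $(T_S \times T_X) \cap \ST^{-1}(U)$ of positive $m$-mass; because $m$ is hyperfinite, this internal set must contain an atom of $m$. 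Applying $\aleph_1$-saturation to the countable, decreasing family of nonempty internal sets
\begin{equation*}
E_n = \{(s, x) \in T_S \times T_X : m(\{(s, x)\}) > 0,\ \NSE{d_S}(s, s_0) < 1/n,\ \NSE{d_X}(x, x_0) < 1/n\},
\end{equation*}
I would extract a single $(s, x) \in \bigcap_n E_n$ with the three required properties.

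By Condition (i) of \cref{hyperberkNash}, this $x$ is $\NSE{}$optimal given $s$ in $\text{HMDP}(\bar{\mathbb{Q}}_{\nu})$. Since $X \subset T_X$ by \cref{hyperapproxsp}, for any fixed standard $\hat x_0 \in X$ the $\NSE{}$optimality specializes to
\begin{equation*}
\sum_{s' \in T_S} \{\Pi(s, x, s') + \delta \mathbb{V}(s')\}\, \bar{\mathbb{Q}}_{\nu}(s'|s, x) \ \geq\ \sum_{s' \in T_S} \{\Pi(s, \hat x_0, s') + \delta \mathbb{V}(s')\}\, \bar{\mathbb{Q}}_{\nu}(s'|s, \hat x_0).
\end{equation*}
Both internal integrands are bounded (because $\pi$ is continuous on the compact product $S \times X \times S$, hence bounded, and $\mathbb{V}$ is bounded by the transferred contraction argument that produced it), so each is S-integrable against the corresponding internal probability measure. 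Uniform continuity of $\pi$ yields $\Pi(s, x, s') \approx \pi(s_0, x_0, \ST(s'))$ and $\Pi(s, \hat x_0, s') \approx \pi(s_0, \hat x_0, \ST(s'))$ for all $s' \in T_S$, while \cref{BEclose} gives $\mathbb{V}(s') \approx V(\ST(s'))$.

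Combining these pointwise approximations with \cref{pdint} and \cref{Qbarlemma} applied to the push-downs of $\bar{\mathbb{Q}}_{\nu}(s, x)$ and $\bar{\mathbb{Q}}_{\nu}(s, \hat x_0)$, each hyperfinite sum above is infinitely close to the corresponding standard integral against $\bar{Q}_{\pd{\nu}}(s_0, x_0)$ or $\bar{Q}_{\pd{\nu}}(s_0, \hat x_0)$. Since weak inequalities are preserved under $\approx$, this delivers
\begin{equation*}
\int_S \{\pi(s_0, x_0, s') + \delta V(s')\}\, \bar{Q}_{\pd{\nu}}(\dee s'|s_0, x_0) \ \geq\ \int_S \{\pi(s_0, \hat x_0, s') + \delta V(s')\}\, \bar{Q}_{\pd{\nu}}(\dee s'|s_0, \hat x_0),
\end{equation*}
for every $\hat x_0 \in X$, which is exactly the optimality of $x_0$ given $s_0$. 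I expect the main obstacle to be the opening step, which converts the purely topological statement ``$(s_0, x_0)$ lies in the support of $\pd{m}$'' into the combinatorial statement ``$m$ assigns positive mass to some atom in the monad of $(s_0, x_0)$''; this bridge requires combining inner regularity of the Loeb measure, the hyperfiniteness of $m$, and $\aleph_1$-saturation in precisely the right order. Once that bridge is in place, the remainder is a routine push-down argument assembled from lemmas already proved in this subsection.
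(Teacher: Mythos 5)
Your proof is correct and follows essentially the same route as the paper's: locate an atom of $m$ in the monad of $(s_0,x_0)$, invoke the $\NSE{}$optimality clause of \cref{hyperberkNash}, and transfer the resulting inequality to the standard setting via \cref{BEclose}, \cref{Qbarlemma} and \cref{pdint}. The only notable difference is that you carefully justify the existence of a positive-mass atom infinitely close to $(s_0,x_0)$ (via inner regularity of the Loeb measure, atomicity of the hyperfinite measure, and countable saturation), a step the paper's proof simply asserts.
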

\begin{proof}
Pick some $(s, x)\in S\times X$ in the support of $\pd{m}$. 
Then there exists some $(a, b)\in T_S\times T_X$ such that $(a, b)\approx (s, x)$ and $m\big(\{(a, b)\}\big)>0$. 
As $m$ is a hyperfinite Berk-Nash equilibrium, $b$ is optimal given $a$ in HMDP($\bar{\mathbb{Q}}_{\nu}$). That is, we have
$\int_{T_S}\{\Pi(a,b,s')+\delta \mathbb{V}(s')\}\bar{\mathbb{Q}}_{\nu}(\dee s'|a,b)= \max_{y\in T_X}\int_{T_S}\{\Pi(a,y,s')+\delta \mathbb{V}(s')\}\bar{\mathbb{Q}}_{\nu}(\dee s'|a,y). 
$
By \cref{BEclose}, \cref{Qbarlemma} and \cref{pdint}, we have
$\int_{T_S}\{\Pi(a,y,s')+\delta \mathbb{V}(s')\}\bar{\mathbb{Q}}_{\nu}(\dee s'|a,y)\approx \int_{S}\{\pi(s,\ST(y),s')+\delta V(s')\}\bar{Q}_{\pd{\nu}}(\dee s'|s,\ST(y))
$
for all $y\in T_X$. Thus, we have $x\in \argmax_{\hat{x}\in X}\int_{S}\{\pi(s,\hat{x},s')+\delta V(s')\}\bar{Q}_{\pd{\nu}}(\dee s'|s,\hat{x})$,
which implies that $x$ is optimal given $s$ in the MDP($\bar{Q}_{\pd{\nu}}$).
\end{proof}

By \cref{stationm}, \cref{beliefmain} and \cref{optmain}, $\pd{m}$ is a Berk-Nash equilibrium for $\mathcal{M}$, hence we have a complete proof of Theorem \ref{mainresults}.

\subsection{Proofs of Theorems 2 and 3}\label{appendixA2}
In this section, we provide rigorous proofs to Theorem \ref{mainresultsigma1} and \ref{mainresultsigma2}. We first show that every truncation of $\mathcal{M}$ has a Berk-Nash equilibrium, which immediately leads to a proof of \cref{nsmdpeqexst}.
For every $n\in \Nats$ and every finite $\Theta'\subset \hat{\Theta}$, we denote the truncation by $\mathcal{M}_{\Theta'}^{n}$. 
\begin{lemma}\label{wkctspreserve}
Suppose \cref{assumptionstate} holds.
Then, for every $n\in \Nats$,  the mappings $(s, x)\to Q^{n}(s, x)$ and $(\theta, s, x)\to Q^{n}_{\theta}(s, x)$ are continuous in Prokhorov metric. 
\end{lemma}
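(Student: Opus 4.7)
The plan is to reduce continuity of the normalized kernels $Q^n$ and $Q^n_\theta$ to the already-assumed weak continuity of $Q$ and $Q_\theta$, using the continuity-set hypothesis in \cref{assumptionstate}\ref{contset} together with the uniform lower bound in \cref{assumptionstate}\ref{minbound}. Throughout, fix $n\in\Nats$ and take a convergent sequence $(s_k,x_k)\to (s,x)$ in $S_n\times X$ (respectively $(\theta_k,s_k,x_k)\to(\theta,s,x)$ in $\Theta\times S_n\times X$). Since $Q$ and $Q_\theta$ are weakly continuous on the larger space by \cref{regsmdp}, we have $Q(s_k,x_k)\to Q(s,x)$ and $Q_{\theta_k}(s_k,x_k)\to Q_\theta(s,x)$ weakly on $S$.

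Next, I would use \cref{assumptionstate}\ref{contset}: the limit point $(s,x)$ lies in $S_n\times X$ (and $\theta$ in $\Theta$), so $S_n$ is a continuity set for $Q(s,x)$ and $Q_\theta(s,x)$, i.e.\ $Q(s,x)(\partial S_n)=Q_\theta(s,x)(\partial S_n)=0$. By the Portmanteau theorem the normalizing denominators satisfy
\[
Q(s_k,x_k)(S_n)\to Q(s,x)(S_n),\qquad Q_{\theta_k}(s_k,x_k)(S_n)\to Q_\theta(s,x)(S_n),
\]
and by \cref{assumptionstate}\ref{minbound} both limits are at least $r>0$, so the denominators are bounded away from zero.

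For the numerators, I would show that for every bounded continuous $f:S_n\to\Reals$,
\[
\int_{S_n} f\,dQ(s_k,x_k)\ \longrightarrow\ \int_{S_n} f\,dQ(s,x),
\]
and likewise for $Q_\theta$. Since $S_n$ is closed in $S$, Tietze extension produces a bounded continuous $\tilde f:S\to\Reals$ with $\tilde f|_{S_n}=f$. Then $\tilde f\cdot\mathbf 1_{S_n}$ is bounded, and its set of discontinuities is contained in $\partial S_n$, which has $Q(s,x)$-measure zero. The continuous-mapping form of Portmanteau (bounded functions whose discontinuity set is null for the limit measure) gives the required convergence of the integrals. Dividing numerator by denominator yields weak convergence $Q^n(s_k,x_k)\to Q^n(s,x)$ on $S_n$, which on the metric space $S_n$ is equivalent to convergence in the Prokhorov metric; the argument for $(\theta,s,x)\to Q^n_\theta(s,x)$ is word-for-word the same.

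The proof is essentially a standard fact about restriction-and-renormalization of weakly convergent measures to a continuity set; the only point that requires a little care is making sure that weak convergence on $S$ implies convergence of integrals of continuous functions defined only on $S_n$, which is handled cleanly by Tietze extension. The uniform lower bound $r>0$ on $Q(\cdot)(S_n)$ and $Q_\theta(\cdot)(S_n)$ is what prevents the normalization from blowing up, so the main (minor) obstacle is simply to verify that every step of the Portmanteau argument is applied at the limit point, where the continuity-set assumption is available.
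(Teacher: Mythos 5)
Your proposal is correct and follows essentially the same route as the paper: both reduce continuity of the normalized kernels $Q^{n}$ and $Q^{n}_{\theta}$ to the assumed Prokhorov continuity of $Q$ and $Q_{\theta}$, applying Portmanteau at the limit point, with item \ref{contset} of \cref{assumptionstate} controlling $\partial S_n$ and item \ref{minbound} keeping the normalizing denominator away from zero. The only difference is cosmetic: the paper verifies convergence on continuity sets of the limit measure, whereas you verify convergence of test-function integrals via Tietze extension and the mapping theorem --- an equivalent instantiation of the same argument (and one that conveniently sidesteps the boundary-relative-to-$S_n$ versus boundary-relative-to-$S$ bookkeeping that the paper's set-based version leaves implicit).
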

\begin{proof}
Let $(s_m, x_m)_{m\in \Nats}$ be a sequence of points in $S_n\times X$ that converges to some point $(s, x)\in S_n\times X$. Let $A$ be a continuity set of $Q^{n}(s, x)$. 
As $S_n$ is a continuity set of $Q^{n}(s, x)$, $A$ is a continuity set of $Q(s, x)$. 
Thus, we have 
$
\lim_{m\to \infty}Q^{n}(s_m, x_m)(A)=\lim_{m\to \infty}\frac{Q(s_m, x_m)(A)}{Q(s_m, x_m)(S_n)}=\frac{Q(s, x)(A)}{Q(s, x)(S_n)}=Q^{n}(s, x)(A).
$
The mapping $(s, x)\to Q^{n}(s, x)$ is continuous in Prokhorov metric. 
By the same argument, the mapping $(\theta, s, x)\to Q^{n}_{\theta}(s, x)$ is continuous in Prokhorov metric. 
\end{proof}

\begin{lemma}\label{restrictdominate}
Suppose \cref{assumptionstate} holds.
For every $n\in \Nats$, $Q^{n}(s, x)$ is dominated by $Q_{\theta}^{n}(s, x)$ for all $\theta\in \Theta'$ and all $(s, x)\in S_n\times X$. 
\end{lemma}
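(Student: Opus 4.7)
The plan is to reduce the absolute continuity of the truncated measures to the absolute continuity of the original measures, using the strict positivity of the normalizing constants guaranteed by \cref{assumptionstate}\ref{minbound}. Since $\Theta' \subset \hat{\Theta}$, the Absolute Continuity clause (condition (vi)) in the definition of a regular-SMDP (\cref{regsmdp}) applies directly: for every $\theta \in \Theta'$ and every $(s,x) \in S \times X$, we have $Q(s,x) \ll Q_\theta(s,x)$ on the full state space $S$.

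First I would fix $n \in \Nats$, $\theta \in \Theta'$, and $(s,x) \in S_n \times X$, and take an arbitrary $A \in \BorelSets{S_n}$ with $Q_\theta^{n}(s,x)(A) = 0$. By the definition of the truncated kernel,
\[
Q_\theta^{n}(s,x)(A) \;=\; \frac{Q_\theta(s,x)(A)}{Q_\theta(s,x)(S_n)},
\]
and by \cref{assumptionstate}\ref{minbound}, the denominator $Q_\theta(s,x)(S_n) > r > 0$, so $Q_\theta(s,x)(A) = 0$.

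Next, since $A \in \BorelSets{S_n} \subset \BorelSets{S}$ and $\theta \in \Theta' \subset \hat{\Theta}$, the absolute continuity clause of \cref{regsmdp} gives $Q(s,x)(A) = 0$. Applying the definition of $Q^n$ together with $Q(s,x)(S_n) > r > 0$ (again by \cref{assumptionstate}\ref{minbound}), we conclude
\[
Q^{n}(s,x)(A) \;=\; \frac{Q(s,x)(A)}{Q(s,x)(S_n)} \;=\; 0,
\]
which establishes $Q^{n}(s,x) \ll Q_\theta^{n}(s,x)$ on $\BorelSets{S_n}$.

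I do not anticipate any real obstacle here: the statement is essentially a bookkeeping consequence of the inherited absolute continuity from $\hat{\Theta}$ combined with the uniform positivity of the truncation masses. The only thing to verify cleanly is that $\BorelSets{S_n}$ embeds naturally into $\BorelSets{S}$ (immediate because $S_n$ is a compact, hence Borel, subset of $S$), so that the Radon--Nikodym domination on $S$ transfers to domination on $S_n$ after normalization.
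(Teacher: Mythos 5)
Your proof is correct and follows essentially the same route as the paper's: fix $A$ with $Q_\theta^{n}(s,x)(A)=0$, unwind the normalization to get $Q_\theta(s,x)(A)=0$, invoke the absolute continuity clause of \cref{regsmdp} via $\Theta'\subset\hat{\Theta}$ to get $Q(s,x)(A)=0$, and renormalize. Your explicit appeal to \cref{assumptionstate}\ref{minbound} for the positivity of the normalizing constants is a minor (and harmless) addition; the paper leaves this implicit since the truncated kernels are only defined when those masses are positive.
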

\begin{proof}
Pick $n\in \Nats$, $\theta\in \Theta'$ and $(s, x)\in S_n\times X$. 
Pick some $A\in \BorelSets {S_n}$ such that $Q_{\theta}^{n}(s, x)(A)=\frac{Q_{\theta}(s, x)(A)}{Q_{\theta}(s, x)(S_n)}=0$. 
This implies that $Q_{\theta}(s, x)(A)=0$. 
As $\theta\in \Theta'\subset \hat{\Theta}$, we have $Q(s, x)(A)=0$, which implies that $Q^{n}(s, x)(A)=0$
\end{proof}

For every $n\in \Nats$, $\theta\in \Theta'$ and every $(s, x)\in S_n\times X$,
we use $D_{\theta, n}(\cdot|s, x)$ to denote the density function of $Q^{n}(s, x)$ with respect to $Q_{\theta}^{n}(s, x)$. 

\begin{lemma}\label{dstctspreserve}
Suppose \cref{assumptionstate} holds.
For every $n\in \Nats$ and $\theta\in \Theta'$, $D_{\theta, n}(s'|s, x)$ is a jointly continuous function of $s'$, $s$ and $x$. 
\end{lemma}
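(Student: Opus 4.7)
The plan is to obtain an explicit formula for $D_{\theta,n}(s'|s,x)$ in terms of the untruncated density $D_{\theta}(s'|s,x)$ and the two normalizing masses $Q(s,x)(S_n)$ and $Q_{\theta}(s,x)(S_n)$, and then check joint continuity factor by factor. Concretely, I would first verify by a direct Radon--Nikodym calculation that
\[
D_{\theta,n}(s'|s,x) \;=\; \frac{Q_{\theta}(s,x)(S_n)}{Q(s,x)(S_n)}\, D_{\theta}(s'|s,x)
\]
for $(s,s',x)\in S_n\times S_n\times X$. This is immediate: for any $A\in\BorelSets{S_n}$,
\[
Q^{n}(s,x)(A)=\frac{Q(s,x)(A)}{Q(s,x)(S_n)}=\frac{1}{Q(s,x)(S_n)}\int_{A}D_{\theta}(s'|s,x)\,Q_{\theta}(\dee s'|s,x),
\]
and rewriting $Q_{\theta}(\dee s'|s,x)=Q_{\theta}(s,x)(S_n)\,Q_{\theta}^{n}(\dee s'|s,x)$ on $S_n$ identifies the bracketed ratio above as the Radon--Nikodym derivative with respect to $Q_{\theta}^{n}(s,x)$. (Cref{restrictdominate} guarantees the derivative exists, since $\theta\in\Theta'\subset\hat\Theta$.)

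Next, I would argue joint continuity of each factor. Because $\theta\in\hat\Theta$, the absolute continuity clause of \cref{regsmdp} puts every $(\theta,s',s,x)\in\{\theta\}\times S_n\times S_n\times X$ in the set on which $D_{\theta}(s'|s,x)$ is jointly continuous (into $\bar\Reals$ with its one-point compactification topology), so $(s,s',x)\mapsto D_{\theta}(s'|s,x)$ is jointly continuous on $S_n\times S_n\times X$ for fixed $\theta$. For the prefactor, \cref{contset} of \cref{assumptionstate} says $S_n$ is a continuity set of both $Q(s,x)$ and $Q_{\theta}(s,x)$ for all $(s,x)\in S_n\times X$, so the Portmanteau theorem together with Prokhorov continuity of $(s,x)\mapsto Q(s,x)$ and $(s,x)\mapsto Q_{\theta}(s,x)$ implies that $(s,x)\mapsto Q(s,x)(S_n)$ and $(s,x)\mapsto Q_{\theta}(s,x)(S_n)$ are continuous real-valued functions on $S_n\times X$. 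The lower bound in \cref{minbound} of \cref{assumptionstate} gives $Q(s,x)(S_n)\geq r>0$, so the ratio $Q_{\theta}(s,x)(S_n)/Q(s,x)(S_n)$ is a strictly positive, finite, continuous function on $S_n\times X$.

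The conclusion then follows because the product of a jointly continuous $\bar\Reals$-valued function and a strictly positive finite continuous scalar factor is jointly continuous (into $\bar\Reals$); any sequence along which $D_{\theta}$ blows up is preserved in the product, and on the subset where $D_{\theta}$ stays finite ordinary continuity of products applies. The only mild subtlety I would be careful with is this continuity-at-$\infty$ point, since $D_{\theta}(\cdot|s,x)$ is explicitly allowed to take the value $\infty$ in the remark following \cref{regsmdp}; this is handled simply by working in the one-point compactification topology on $\bar\Reals$ throughout. Everything else is a routine verification once the explicit formula above is in hand, so I do not anticipate a serious obstacle.
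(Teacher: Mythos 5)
Your proposal is correct and follows essentially the same route as the paper: both derive the explicit formula $D_{\theta,n}(s'|s,x)=\frac{Q_{\theta}(s,x)(S_n)}{Q(s,x)(S_n)}\,D_{\theta}(s'|s,x)$ by the same Radon--Nikodym computation and then deduce joint continuity from the continuity-set condition (\cref{contset}), the positivity bound (\cref{minbound}), and the joint continuity of $D_{\theta}$ on the domination set (which covers everything here since $\theta\in\Theta'\subset\hat\Theta$). Your extra care about the value $\infty$ is harmless but unnecessary in this setting, since for Theorems 2 and 3 the paper explicitly assumes the densities $D_{\theta}(\cdot|s,x)$ are real-valued whenever domination holds.
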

\begin{proof}
Pick $n\in \Nats$ and $\theta\in \Theta'$. 
For any $A\in \BorelSets {S_n}$ and any $(s, x)\in S_n\times X$, we have 
$
Q^{n}(s, x)(A)=\int_{A}\frac{D_{\theta}(s'|s, x)}{Q(s, x)(S_n)}Q_{\theta}(s, x)(\dee s')
=\int_{A}\frac{D_{\theta}(s'|s, x)}{Q(s, x)(S_n)}Q_{\theta}(s, x)(S_n)Q_{\theta}^{n}(s, x)(\dee s').
$
So $D_{\theta, n}(s'|s, x)=\frac{D_{\theta}(s'|s, x)}{Q(s, x)(S_n)}Q_{\theta}(s, x)(S_n)$. 
Note that $Q(s, x)(S_n)>0$ and $Q_{\theta}(s, x)(S_n)>0$, and $S_n$ is a continuity set for both $Q(s, x)$ and $Q_{\theta}(s, x)$. 
Thus, $D_{\theta, n}(s'|s, x)$ is a jointly continuous function of $s'$, $s$ and $x$. 
\end{proof}

\nt Hence, by Theorem \ref{mainresults}, we have the following result.\fn{As $\Theta'$ is finite, $S_n$ and $X$ are compact, by \cref{dstctspreserve}, $D_{\theta, n}(s'|s, x)$ is bounded. 
Hence, \cref{KLint} of \cref{regsmdp} is automatically satisfied for the SMDP $\mathcal{M}_{\Theta'}^{n}$. 
Moreover, the payoff function $\pi_n$ is continuous on $S_n\times X\times S_n$. }

\begin{lemma}\label{trucequi}
Suppose \cref{assumptionstate} holds.
For every $n\in \Nats$ and every finite $\Theta'\subset \hat{\Theta}$, the SMDP $\mathcal{M}_{\Theta'}^{n}$ is regular and has a Berk-Nash equilibrium. 
\end{lemma}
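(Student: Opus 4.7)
The plan is to check that $\mathcal{M}_{\Theta'}^{n}$ satisfies every clause of \cref{regsmdp} (the definition of a regular SMDP) and then invoke Theorem \ref{mainresults}, which gives existence of a Berk-Nash equilibrium for any regular SMDP with compact state space. Since by \cref{assumptionstate} the set $S_n$ is compact, the action space $X$ is compact by hypothesis of the original MDP, and $\Theta'$ is finite (hence trivially a compact metric space), the compactness requirements of \cref{regsmdp}(i) are met and the final invocation of Theorem \ref{mainresults} will be immediate once regularity is established.

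Most of the continuity and absolute-continuity conditions are already delivered by the three preceding lemmas in this appendix. Specifically, \cref{wkctspreserve} provides Prokhorov continuity of $(s,x)\mapsto Q^{n}(s,x)$ and $(\theta,s,x)\mapsto Q_{\theta}^{n}(s,x)$, covering \cref{regsmdp}(ii)--(iii). \cref{restrictdominate} shows that $Q^{n}(s,x)$ is dominated by $Q_{\theta}^{n}(s,x)$ for every $\theta\in\Theta'$ and every $(s,x)\in S_n\times X$; taking the dense subset in \cref{regsmdp}(vi) to be $\Theta'$ itself (dense in the finite metric space $\Theta'$) gives absolute continuity. The joint continuity of the density function $D_{\theta,n}(s'|s,x)$ required in \cref{regsmdp}(iv) is given by \cref{dstctspreserve}, using the key fact that $Q(s,x)(S_n)>0$ and $Q_{\theta}(s,x)(S_n)>0$ uniformly, which is exactly \cref{minbound} of \cref{assumptionstate}, together with \cref{contset} which makes $S_n$ a continuity set so the normalizing denominators depend continuously on $(s,x,\theta)$.

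The only condition that is not literally quoted from an earlier lemma is the uniform integrability condition \cref{regsmdp}\ref{KLint}. Here I would observe that, by \cref{dstctspreserve}, $D_{\theta,n}(s'|s,x)$ is jointly continuous on the compact set $\Theta'\times S_n\times X\times S_n$ (compactness because $\Theta'$ is finite and $S_n$ and $X$ are compact metric spaces), and hence is uniformly bounded. The footnote to \cref{regsmdp}\ref{KLint} explicitly notes that boundedness of the densities is sufficient for the uniform integrability condition, so \cref{regsmdp}\ref{KLint} follows for any $r>0$. Finally, the restricted payoff $\pi_n$ is continuous as the restriction of the continuous $\pi$, giving \cref{regsmdp}(vii), and the discount factor is unchanged.

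With every clause of \cref{regsmdp} verified, $\mathcal{M}_{\Theta'}^{n}$ is a regular SMDP; since its state space $S_n$ is compact, Theorem \ref{mainresults} delivers a Berk-Nash equilibrium and the proof concludes. I do not anticipate a genuine obstacle; the argument is essentially bookkeeping, and the delicate points (the denominators $Q(s,x)(S_n)$, $Q_{\theta}(s,x)(S_n)$ being bounded away from zero and the sets $S_n$ being continuity sets) have been built into \cref{assumptionstate} precisely so that this verification is routine.
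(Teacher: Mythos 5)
Your proposal is correct and follows essentially the same route as the paper: the paper likewise verifies the clauses of \cref{regsmdp} via \cref{wkctspreserve}, \cref{restrictdominate}, and \cref{dstctspreserve}, observes that finiteness of $\Theta'$ together with compactness of $S_n$ and $X$ makes the densities $D_{\theta,n}$ bounded so that \cref{KLint} is automatically satisfied, and then invokes Theorem \ref{mainresults}. The only cosmetic difference is that you spell out the choice of $\Theta'$ itself as the dense subset in \cref{regsmdp}(vi) and the compactness argument behind the boundedness of the densities, both of which the paper leaves implicit.
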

\cref{nsmdpeqexst} then follows from the transfer of \cref{trucequi}.
Let $m\in \NSE{\PM{\NSE{S_N}\times \NSE{X}}}$ denote the Berk-Nash $\NSE{}$equilibrium of $\mathcal{M}_{T_{\Theta}}^{N}$, with the associated $\NSE{}$belief $\nu$. \cref{assumptiontight} guarantees that the push-down, $\pd{m}$, of $m$ is a probability measure on $\PM{S\times X}$. 
To show that $\pd{m}$ is a Berk-Nash equilibrium for the original SMDP $\mathcal{M}$ with the associated belief function $\pd{\nu}$, we break the proof into following subsections which will establish stationarity, optimality and belief restriction, respectively.

\subsubsection{Stationarity}

In this section, we show that $\pd{m}$ satisfies stationarity. 
Using essentially the same argument as in \cref{stationSTmap}, we have the following result. 

\begin{lemma}\label{stationSTmapQN}
For all $A\in \BorelSets S$, $(\pd{m})_{S}(A)=\int_{\NSE{S}\times \NSE{X}}\Loeb{\NSE{Q}^{N}(s, x)}(\ST^{-1}(A))\Loeb{m}(\dee s, \dee x)$, where $(\pd{m})_{S}$ denote the marginal measure of $\pd{m}$ on $S$. 
\end{lemma}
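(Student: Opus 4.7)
The plan is to mirror the proof of \cref{stationSTmap} almost verbatim, replacing the hyperfinite state space $T_{S}$, action space $T_{X}$, and internal kernel $\mathbb{Q}$ by $\NSE{S}_{N}$, $\NSE{X}$, and $\NSE{Q}^{N}$ respectively, and using the $\NSE{}$stationarity of $m$ that comes for free from \cref{nsmdpeqexst} via the transfer of stationarity for Berk-Nash equilibria. The three ingredients I need to reassemble are: (a) an analog of \cref{changeTd} identifying $(\pd{m})_{S}(A)$ with $\Loeb{m_{\NSE{S}}}(\ST^{-1}(A))$; (b) an analog of \cref{stationLoeb} for external subsets of $\NSE{S}_{N}$ that are countable increasing unions of internal sets; (c) a $\pi$-system extension via \cref{uniqlemma}.

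First, since $S$ is $\sigma$-compact and $\NS{\NSE{S}} \subset \NSE{S}_{N}$ by the choice of $N$, we have $\ST^{-1}(A) \subset \NSE{S}_{N}$ for every $A \in \BorelSets S$, so $\ST^{-1}(A) \cap \NSE{S}_{N} = \ST^{-1}(A)$. \cref{assumptiontight} gives $\Loeb{m_{\NSE{S}}}(\ST^{-1}(S)) = 1$, and exactly as in \cref{changeTd} we obtain $(\pd{m})_{S}(A) = \Loeb{m}(\ST^{-1}(A) \times \NSE{X}) = \Loeb{m_{\NSE{S}}}(\ST^{-1}(A))$. Second, for an external set $A \subset \NSE{S}_{N}$ of the form $\bigcup_{k \in \Nats} A_{k}$ with $\{A_{k}\}$ a non-decreasing sequence in $\NSE{\BorelSets{\NSE{S}_{N}}}$, the $\NSE{}$stationarity of $m$ for $\mathcal{M}_{T_{\Theta}}^{N}$ gives $m_{\NSE{S}}(A_{k}) = \int \NSE{Q}^{N}(s,x)(A_{k})\, m(\dee s, \dee x)$; passing to standard parts and applying the dominated convergence theorem yields $\Loeb{m_{\NSE{S}}}(A) = \int \Loeb{\NSE{Q}^{N}(s,x)}(A)\, \Loeb{m}(\dee s, \dee x)$.

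Third, by \cref{metricsatisfy}, pick a $\pi$-system $\cF$ on $S$ generating $\BorelSets S$ such that each $B \in \cF$ satisfies $\ST^{-1}(B) = \bigcup_{k} B_{k}$ for some non-decreasing sequence of internal sets. Combining (a) and (b) with $A = \ST^{-1}(B)$ yields $(\pd{m})_{S}(B) = \int_{\NSE{S} \times \NSE{X}} \Loeb{\NSE{Q}^{N}(s,x)}(\ST^{-1}(B))\, \Loeb{m}(\dee s, \dee x)$ for every $B \in \cF$. Define $P(A) = \int_{\NSE{S} \times \NSE{X}} \Loeb{\NSE{Q}^{N}(s,x)}(\ST^{-1}(A))\, \Loeb{m}(\dee s, \dee x)$ for $A \in \BorelSets S$; countable additivity of $P$ follows from monotone convergence, and $P(S) = 1$ because $\NSE{Q}^{N}(s,x)$ is an internal probability measure on $\NSE{S}_{N}$ and $\Loeb{\NSE{Q}^{N}(s,x)}(\ST^{-1}(S)) = 1$ $\Loeb{m}$-almost surely by \cref{assumptiontight} applied to the stationary measure generated by $\NSE{Q}^{N}$. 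Since $(\pd{m})_{S}$ and $P$ agree on $\cF$, \cref{uniqlemma} extends the equality to all of $\BorelSets S$.

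The only genuinely non-routine point is confirming that $P$ is a probability measure, which is where tightness does real work: without \cref{assumptiontight} the pushed-down mass could escape to infinity, either in the marginal $\pd{m}$ itself or inside the integrand $\Loeb{\NSE{Q}^{N}(s,x)}(\ST^{-1}(\cdot))$. Everything else is bookkeeping that parallels the compact-state case word for word, because the ``external'' features of $\NSE{S}_{N}$ (being $\NSE{}$compact rather than genuinely compact, and strictly containing $S$) never interact with the $\pi$-system argument once we know that $\ST^{-1}(A) \subset \NSE{S}_{N}$ for $A \in \BorelSets S$.
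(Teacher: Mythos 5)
Your proof is correct and takes exactly the paper's route: the paper's entire proof of \cref{stationSTmapQN} is the remark that it follows ``using essentially the same argument as in \cref{stationSTmap},'' i.e., precisely the \cref{changeTd}--\cref{stationLoeb}--\cref{metricsatisfy}--\cref{uniqlemma} chain you reproduce with $T_S$, $\mathbb{Q}$ replaced by $\NSE{S}_N$, $\NSE{Q}^{N}$ and S-stationarity replaced by exact $\NSE{}$stationarity from \cref{nsmdpeqexst}. Your explicit verification that $P(S)=1$ (via \cref{assumptiontight} combined with stationarity of $m$) correctly fills in the one step that is trivial in the compact case and left implicit by the paper.
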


\begin{lemma}\label{wkconvergeQN}
For every $(s, x)\in \NS{\NSE{S}}\times \NSE{X}$, every $\theta\in T_{\Theta}$ and every $A\in \BorelSets S$, 
we have $Q(\ST(s), \ST(x))(A)=\Loeb{\NSE{Q}^{N}(s, x)}(\ST^{-1}(A))$ and 
$Q_{\ST(\theta)}(\ST(s), \ST(x))(A)=\Loeb{\NSE{Q}^{N}_{\theta}(s, x)}(\ST^{-1}(A))$.
\end{lemma}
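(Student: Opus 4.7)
The plan is to reduce the statement to the compact-state analogue \cref{wkconverge} by showing that the truncation factor $\NSE{Q}(s,x)(\NSE{S}_{N})$ is infinitesimally close to $1$, and then to push down the unnormalized measure as in the compact case. The key observation is that since $\{S_{n}\}_{n\in \Nats}$ is non-decreasing with $\bigcup_{n\in \Nats}S_{n}=S$ and $N\in \NSE{\Nats}\setminus \Nats$, the near-standard part satisfies $\NS{\NSE{S}}\subset \NSE{S}_{N}$. In particular, for $s\in \NS{\NSE{S}}$ we have $s\in \NSE{S}_{N}$, and for any $A\in \BorelSets S$ we have $\ST^{-1}(A)\subset \NS{\NSE{S}}\subset \NSE{S}_{N}$, so $\ST^{-1}(A)$ lies entirely within the domain on which $\NSE{Q}^{N}(s,x)$ is defined.

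First I would show $\NSE{Q}(s,x)(\NSE{S}_{N})\approx 1$ for every $(s,x)\in \NS{\NSE{S}}\times \NSE{X}$. By Prokhorov continuity (part (ii) of \cref{regsmdp}) and the construction of Loeb pushdowns for internal probability measures whose pushdown is a Borel probability measure on the sigma-compact Polish space $S$, the pushdown of $\NSE{Q}(s,x)$ coincides with $Q(\ST(s),\ST(x))$. Tightness of this Borel probability measure on the sigma-compact space $S$ then yields $\Loeb{\NSE{Q}(s,x)}(\NS{\NSE{S}})=Q(\ST(s),\ST(x))(S)=1$, and from $\NS{\NSE{S}}\subset \NSE{S}_{N}$ we conclude $\NSE{Q}(s,x)(\NSE{S}_{N})\approx 1$. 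By an identical argument using part (iii) of \cref{regsmdp} and the fact that $T_{\Theta}\subset \NSE{\hat{\Theta}}$ (so $\ST(\theta)\in \hat{\Theta}\subset \Theta$ makes $Q_{\ST(\theta)}(\ST(s),\ST(x))$ a genuine standard probability measure), we also get $\NSE{Q}_{\theta}(s,x)(\NSE{S}_{N})\approx 1$.

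Next I would apply the same pushdown identity at the level of $A$ itself to obtain $\Loeb{\NSE{Q}(s,x)}(\ST^{-1}(A))=Q(\ST(s),\ST(x))(A)$, which is precisely the sigma-compact analogue of \cref{wkconverge}. Combining this with the definition $\NSE{Q}^{N}(s,x)(B)=\NSE{Q}(s,x)(B)/\NSE{Q}(s,x)(\NSE{S}_{N})$ for $B\in \NSE{\BorelSets{\NSE{S}_{N}}}$, and approximating $\ST^{-1}(A)$ from inside and outside by internal sets as in the definition of Loeb measure, division by the infinitesimally-close-to-one normalizer yields
\[
\Loeb{\NSE{Q}^{N}(s,x)}(\ST^{-1}(A))=\frac{\Loeb{\NSE{Q}(s,x)}(\ST^{-1}(A))}{\Loeb{\NSE{Q}(s,x)}(\NSE{S}_{N})}=Q(\ST(s),\ST(x))(A).
\]
The identical argument, replacing $Q$ by $Q_{\ST(\theta)}$ and $\NSE{Q}$ by $\NSE{Q}_{\theta}$, gives the second equality.

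The main obstacle is justifying the pushdown formula $\Loeb{\NSE{Q}(s,x)}(\ST^{-1}(A))=Q(\ST(s),\ST(x))(A)$ without compactness of $S$, since \cref{nsweaklemma} and hence \cref{wkconverge} are stated only for compact Hausdorff spaces. The rescue is that sigma-compactness of $S$ (combined with the metric structure in \cref{defMDP}) makes every standard Borel probability measure on $S$ tight and Radon, so the standard Loeb-Anderson construction for Radon measures on Polish spaces applies and gives the pushdown identity. In effect, all mass of $\NSE{Q}(s,x)$ concentrates on $\NS{\NSE{S}}\subset \NSE{S}_{N}$, which is what makes the truncation harmless in the near-standard regime.
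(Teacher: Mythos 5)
Your proposal is correct and takes essentially the same route as the paper's own proof: the paper likewise first applies the weak-convergence pushdown identity (\cref{nsweaklemma}, via Prokhorov continuity of the kernels) to the un-truncated measures $\NSE{Q}(s,x)$ and $\NSE{Q}_{\theta}(s,x)$, and then notes that the truncation normalizer $\NSE{Q}(s,x)(\NSE{S}_N)\approx 1$, so the Loeb measures of $\NSE{Q}(s,x)$ and $\NSE{Q}^{N}(s,x)$ agree on $\ST^{-1}(A)$. The only difference is one of thoroughness: you spell out why the normalizer is infinitesimally close to $1$ (tightness of Borel probability measures on the $\sigma$-compact Polish state space, plus $\NS{\NSE{S}}\subset \NSE{S}_N$) and why the compact-space statement of \cref{nsweaklemma} still applies, steps the paper's two-line proof leaves implicit.
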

\begin{proof}
Pick some $(s_0, x_0)\in \NS{\NSE{S}}\times \NSE{X}$, some $\theta_0\in T_{\Theta}$ and some $A_0\in \BorelSets S$.
By \cref{nsweaklemma}, we have $Q(\ST(s_0), \ST(x_0))(A_0)=\Loeb{\NSE{Q}(s_0, x_0)}(\ST^{-1}(A_0))$.
As $\NSE{Q}^{N}(s_0, x_0)(\NSE{S}_{N})\approx 1$, we have $Q(\ST(s_0), \ST(x_0))(A_0)=\Loeb{\NSE{Q}^{N}(s_0, x_0)}(\ST^{-1}(A_0))$.
By the same argument, we have $Q_{\ST(\theta_0)}(\ST(s_0), \ST(x_0))(A_0)=\Loeb{\NSE{Q}^{N}_{\theta_0}(s_0, x_0)}(\ST^{-1}(A_0))$.
\end{proof}

\begin{theorem}\label{thmstationsigma}
Suppose \cref{assumptionstate} holds. Then,  $(\pd{m})_{S}(A)=\int_{S\times X}Q(A|s,x)\pd{m}(\dee s, \dee x)$ for every $A\in \BorelSets S$, 
\end{theorem}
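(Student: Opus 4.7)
The plan is to lift the argument of \cref{stationm} from the compact to the $\sigma$-compact setting, combining \cref{stationSTmapQN} with the fact that, under \cref{assumptiontight}, the Loeb measure of $m$ is concentrated on the near-standard part. \cref{stationSTmapQN} gives
\[
(\pd{m})_S(A) = \int_{\NSE{S} \times \NSE{X}} \Loeb{\NSE{Q}^N(s,x)}(\ST^{-1}(A)) \, \Loeb{m}(\dee s, \dee x).
\]
The sketch preceding the theorem has already established (using \cref{assumptiontight}) that $\Loeb{m_{\NSE{S}}}(\ST^{-1}(S)) = 1$; combined with compactness of $X$, this says $\Loeb{m}(\NS{\NSE{S}} \times \NSE{X}) = 1$, so the integration effectively takes place over $\NS{\NSE{S}} \times \NSE{X}$. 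On that region, \cref{wkconvergeQN} rewrites the integrand as $Q(\ST(s), \ST(x))(A)$. The final step is a change of variables under $\ST$, converting the Loeb integral into the integral of the bounded Borel function $f(s_0, x_0) := Q(s_0, x_0)(A)$ against $\pd{m}$ on $S \times X$.

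The principal obstacle lies in that last step: \cref{pdint} as stated assumes a continuous integrand on a compact space, whereas $S \times X$ is only $\sigma$-compact and, for a general Borel $A$, the map $(s_0, x_0) \mapsto Q(s_0, x_0)(A)$ is only Borel measurable (e.g.\ lower semicontinuous when $A$ is open) rather than continuous. To bypass this I would follow the $\pi$-system strategy used in \cref{stationm}: define
\[
P(A) := \int_{S \times X} Q(A\mid s,x)\, \pd{m}(\dee s, \dee x),
\]
verify that $P$ is a probability measure on $(S, \BorelSets S)$, and show $P(B) = (\pd{m})_S(B)$ on a $\pi$-system $\cF$ generating $\BorelSets S$ furnished by \cref{metricsatisfy} together with \cref{assumptioneps}.

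For each $B \in \cF$, \cref{assumptioneps} writes $\ST^{-1}(B) = \bigcup_{k \in \Nats} B_k$ as a non-decreasing countable union of internal sets. The S-stationarity of $m$ with respect to $\NSE{Q}^{N}$ (inherited from the Berk-Nash $\NSE{}$equilibrium property via the transfer of the finite stationarity identity) together with continuity of the Loeb measure from below, exactly as in the proof of \cref{stationLoeb}, yields
\[
(\pd{m})_S(B) \;=\; \int_{\NS{\NSE{S}} \times \NSE{X}} \Loeb{\NSE{Q}^N(s,x)}(\ST^{-1}(B)) \, \Loeb{m}(\dee s, \dee x) \;=\; \int_{\NS{\NSE{S}} \times \NSE{X}} Q(\ST(s),\ST(x))(B) \, \Loeb{m}(\dee s, \dee x) \;=\; P(B),
\]
where the last equality uses \cref{pdint} applied to the truncations $\{S_k \times X\}$ from \cref{assumptionstate}, together with tightness to take $k\to\infty$. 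Finally, \cref{uniqlemma} extends the equality from $\cF$ to all of $\BorelSets S$, completing the proof. I expect the main verification burden to be the justification of the countable-union representation of $\ST^{-1}(B)$ in the product space and checking that the sets $B_k$ are chosen so that the Loeb-monotone-convergence step legitimately commutes with the internal integration.
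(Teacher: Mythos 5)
Your proof is correct, and its load-bearing steps are the same as the paper's: \cref{stationSTmapQN}, tightness to concentrate $\Loeb{m}$ on $\ST^{-1}(S)\times \NSE{X}$, \cref{wkconvergeQN} to identify the integrand with $Q(\ST(s),\ST(x))(A)$, \cref{pdint} applied on the compact truncations $\NSE{S_n}\times\NSE{X}$, and a monotone limit in $n$. Where you diverge is the $\pi$-system wrapper: you prove the identity only for sets $B$ in a generating $\pi$-system $\cF$ and then invoke \cref{uniqlemma}, whereas the paper runs the truncation-plus-\cref{pdint} argument directly for an arbitrary Borel set $A$ and needs no uniqueness step at all. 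The detour is harmless but rests on a misreading of the key lemma: \cref{pdint} requires only that the integrand be \emph{bounded and measurable} on a compact space, not continuous. Since $(s,x)\mapsto Q(s,x)(A)$ is bounded, and is Borel measurable on each $S_n\times X$ for every Borel $A$ (by Prokhorov continuity of the kernel together with a standard monotone class argument), the change of variables already goes through for all Borel $A$; the only genuine obstruction is compactness of the ambient space, which both you and the paper dispose of in the same way, by truncating to $S_n\times X$ and using \cref{assumptiontight} to pass to the limit. Note also that the $\pi$-system/\cref{assumptioneps} machinery is precisely what is buried inside \cref{stationSTmapQN} (via \cref{stationLoeb} and \cref{metricsatisfy} in the compact-case proof of \cref{stationSTmap}), so your argument effectively invokes it twice, once implicitly when you cite \cref{stationSTmapQN} and once explicitly in your wrapper. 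What your version buys is robustness (it would survive even if \cref{pdint} did demand continuity); what the paper's buys is brevity.
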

\begin{proof}
Pick $A\in \BorelSets S$.
By \cref{stationSTmapQN}, $(\pd{m})_{S}(A)=\int_{\NSE{S}\times \NSE{X}}\Loeb{\NSE{Q}^{N}(s, x)}(\ST^{-1}(A))\Loeb{m}(\dee s, \dee x)$
As $\Loeb{m}(\ST^{-1}(S)\times \NSE{X})=1$, we have
$$\int_{\NSE{S}\times \NSE{X}}\Loeb{\NSE{Q}^{N}(s,x)}(\ST^{-1}(A))\Loeb{m}(\dee s, \dee x)=\lim_{n\to \infty}\int_{\NSE{S_n}\times \NSE{X}}\Loeb{\NSE{Q}^{N}(s, x)}(\ST^{-1}(A))\Loeb{m}(\dee s, \dee x).$$

\nt By \cref{wkconvergeQN} and \cref{pdint}, we have 
$\int_{\NSE{S_n}\times \NSE{X}}\Loeb{\NSE{Q}^{N}(s, x)}(\ST^{-1}(A))\Loeb{m}(\dee s, \dee x)=\int_{S_n\times X}Q(s, x)(A)\pd{m}(\dee s, \dee x).
$
Note that we also have
$
\lim_{n\to\infty}\int_{S_n\times X}Q(s, x)(A)\pd{m}(\dee s, \dee x)=\int_{S\times X}Q(s, x)(A)\pd{m}(\dee s, \dee x).
$
So, we have the desired result. 
\end{proof}

\subsubsection{Belief Restriction under \cref{assumptionrebound}}

In this section, we establish belief restriction assuming uniformly bounded relative entropy. 
Recall that $\nu$ is the hyperfinite belief that associates with the Berk-Nash $\NSE{}$equilibrium $m$ of the nonstandard SMDP $\mathcal{M}_{T_{\Theta}}^{N}$. 
Recall that $T_{\Theta}\subset \NSE{\hat{\Theta}}$. 
By the transfer of \cref{restrictdominate}, $\NSE{Q}^{N}(s, x)$ is $\NSE{}$dominated by $\NSE{Q}_{\theta}^{N}(s, x)$ for all $\theta\in T_{\Theta}$ and $(s, x)\in \NSE{S}_{N}\times \NSE{X}$. We use $\mathbb{D}_{\theta}(\cdot|s, x)$ to denote the $\NSE{}$density function of $\NSE{Q}^{N}(s, x)$ with respect to $\NSE{Q}_{\theta}^{N}(s, x)$. By the transfer of \cref{dstctspreserve}, $\mathbb{D}_{\theta}(s'|s, x)$ is jointly $\NSE{}$continuous on $(s',s, x)$.  

\begin{lemma}\label{densityapprox}
Suppose \cref{assumptionstate} holds.
For all $\theta\in T_{\Theta}$ and all $(s, x)\in \NS{\NSE{S}}\times \NSE{X}$, we have
$\mathbb{D}_{\theta}(s'|s, x)\approx \NSE{D}_{\theta}(s'|s, x)$
on a $\NSE{Q}_{\theta}^{N}(s, x)$ measure $1$ set. 
\end{lemma}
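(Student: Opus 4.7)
The plan is to use transfer to obtain an explicit formula for $\mathbb{D}_\theta$ in terms of $\NSE{D}_\theta$ and the masses $\NSE{Q}(s,x)(\NSE{S}_N)$, $\NSE{Q}_\theta(s,x)(\NSE{S}_N)$, and then show that for near-standard $s$ both masses are infinitesimally close to $1$, so the correction factor is negligible. For Step 1, the computation inside the proof of \cref{dstctspreserve} establishes, for every $n\in\Nats$, every $\theta\in\hat{\Theta}$, and every $(s,x)\in S_n\times X$,
\[
D_{\theta,n}(s'|s,x) \;=\; D_\theta(s'|s,x)\,\frac{Q_\theta(s,x)(S_n)}{Q(s,x)(S_n)} \qquad \text{for } Q_\theta(s,x)\text{-a.e. } s'\in S_n.
\]
Transferring this identity, together with the inclusions $T_\Theta\subset\NSE{\hat{\Theta}}$ and $\NS{\NSE{S}}\subset\NSE{S}_N$ (the latter is recorded in the paragraph preceding \cref{nsmdpeqexst}), yields for every $\theta\in T_\Theta$ and every $(s,x)\in\NSE{S}_N\times\NSE{X}$,
\[
\mathbb{D}_\theta(s'|s,x) \;=\; \NSE{D}_\theta(s'|s,x)\,\frac{\NSE{Q}_\theta(s,x)(\NSE{S}_N)}{\NSE{Q}(s,x)(\NSE{S}_N)} \qquad \NSE{Q}^N_\theta(s,x)\text{-a.s.}
\]

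For Step 2, fix $(s,x)\in\NS{\NSE{S}}\times\NSE{X}$ and put $s_0=\ST(s)$. Since $\bigcup_n S_n=S$, pick $n_0\in\Nats$ with $s_0\in S_n$ for every $n\geq n_0$. By the joint Prokhorov-continuity of $Q$ in \cref{regsmdp}(ii) together with the continuity-set condition \cref{assumptionstate}(iv), the nonstandard Portmanteau principle (cf.\ the argument supporting \cref{wkconvergeQN}) gives $\NSE{Q}(s,x)(\NSE{S}_n)\approx Q(s_0,\ST(x))(S_n)$ for each standard $n\geq n_0$. Internally $\{\NSE{S}_n\}_{n\in\NSE{\Nats}}$ is non-decreasing and $N$ is infinite, so $\NSE{S}_n\subset\NSE{S}_N$ for every standard $n$, and therefore
\[
\NSE{Q}(s,x)(\NSE{S}_N) \;\geq\; \NSE{Q}(s,x)(\NSE{S}_n) \;\approx\; Q(s_0,\ST(x))(S_n) \;\xrightarrow{\,n\to\infty\,}\; 1.
\]
Hence $\NSE{Q}(s,x)(\NSE{S}_N)\approx 1$, and the identical argument, now using the Prokhorov-continuity of $(\theta,s,x)\mapsto Q_\theta(s,x)$ from \cref{regsmdp}(iii), delivers $\NSE{Q}_\theta(s,x)(\NSE{S}_N)\approx 1$. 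Substituting into the formula from Step 1 and restricting to the $\NSE{Q}^N_\theta(s,x)$-full-measure subset of $\NSE{S}_N$ on which $\NSE{D}_\theta(\cdot|s,x)$ is $\NSE{}$finite (automatic since $\{D_\theta(\cdot|s,x)\}$ takes values in $\Reals$ in this section) gives the desired $\mathbb{D}_\theta(s'|s,x)\approx\NSE{D}_\theta(s'|s,x)$.

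The main obstacle is the Portmanteau step at the nonstandard point $(s,x)$: $s$ is only guaranteed to be infinitely close to some $s_0\in S$ and need not itself lie in any $\NSE{S}_n$ for standard $n$, so the approximation $\NSE{Q}(s,x)(\NSE{S}_n)\approx Q(s_0,\ST(x))(S_n)$ is not a direct transfer but must be justified from the joint continuity of $Q$ in Prokhorov metric combined with the continuity-set hypothesis \cref{assumptionstate}(iv). A subsidiary point is to confirm that $\NSE{Q}^N(s,x)$ and $\NSE{Q}^N_\theta(s,x)$ are defined at every near-standard $s$, which reduces precisely to the inclusion $\NS{\NSE{S}}\subset\NSE{S}_N$ recalled above.
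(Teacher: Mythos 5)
Your strategy is sound and, at bottom, parallels the paper's: both proofs reduce the lemma to the fact that the truncation masses $\NSE{Q}(s,x)(\NSE{S}_N)$ and $\NSE{Q}_{\theta}(s,x)(\NSE{S}_N)$ are infinitely close to $1$ when $s$ is near-standard, and then absorb the normalization constants. Your Step 2 is correct, and is in fact more careful than the paper's own proof, which simply asserts that the two masses are $\approx 1$; your Step 1 transfer of the identity inside \cref{dstctspreserve} is also legitimate, since that identity holds for every $\theta\in\hat{\Theta}$ and is first-order expressible. The real divergence is in the last step. The paper never passes to a pointwise formula: it compares $\int_{A}\mathbb{D}_{\theta}(s'|s,x)\,\NSE{Q}^{N}_{\theta}(s,x)(\dee s')=\NSE{Q}^{N}(s,x)(A)$ with $\int_{A}\NSE{D}_{\theta}(s'|s,x)\,\NSE{Q}^{N}_{\theta}(s,x)(\dee s')$ over all internal $A\in\NSE{\BorelSets{\NSE{S_N}}}$, finds that they differ by an infinitesimal, and then deduces $\mathbb{D}_{\theta}\approx\NSE{D}_{\theta}$ off a Loeb-null set (via the sets where the integrands differ by more than $1/n$). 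That route needs no pointwise finiteness of $\NSE{D}_{\theta}$; yours does, and this is where your write-up has a genuine gap.

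Concretely: from $\mathbb{D}_{\theta}(s'|s,x)=\NSE{D}_{\theta}(s'|s,x)\,r$ with $r=\NSE{Q}_{\theta}(s,x)(\NSE{S}_N)/\NSE{Q}(s,x)(\NSE{S}_N)\approx 1$, you may conclude $\mathbb{D}_{\theta}(s'|s,x)\approx\NSE{D}_{\theta}(s'|s,x)$ only at points where $\NSE{D}_{\theta}(s'|s,x)$ is a finite hyperreal, since otherwise $(r-1)\NSE{D}_{\theta}(s'|s,x)$ need not be infinitesimal. Your parenthetical claim that this finiteness is ``automatic since $\{D_{\theta}(\cdot|s,x)\}$ takes values in $\Reals$'' conflates real-valuedness of the standard function with finiteness of the hyperreal values of its extension: by transfer, $\NSE{D}_{\theta}$ is only $\NSE{\Reals}$-valued, and whenever $D_{\theta}(\cdot|s,x)$ is unbounded on $S$ --- exactly the situation this paper is designed to accommodate --- $\NSE{D}_{\theta}(\cdot|s,x)$ takes infinite values at remote points of $\NSE{S}_N$. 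The gap is easily repaired: $\int_{\NSE{S}_N}\NSE{D}_{\theta}(s'|s,x)\,\NSE{Q}^{N}_{\theta}(s,x)(\dee s')=\NSE{Q}(s,x)(\NSE{S}_N)/\NSE{Q}_{\theta}(s,x)(\NSE{S}_N)$ is finite, so by (the transfer of) Markov's inequality the internal measure of $\{s'\in\NSE{S}_N:\NSE{D}_{\theta}(s'|s,x)>k\}$ is at most $C/k$ for some finite $C$ and every standard $k$; hence the set where $\NSE{D}_{\theta}(\cdot|s,x)$ is infinite is $\Loeb{\NSE{Q}^{N}_{\theta}(s,x)}$-null, and your substitution argument is valid off this null set, which still yields the asserted $\NSE{Q}^{N}_{\theta}(s,x)$-measure-$1$ set. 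With that one addition, your proof is complete.
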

\begin{proof}
Pick $\theta\in T_{\Theta}$ and $(s, x)\in \NS{\NSE{S}}\times \NSE{X}$.
Note that $\NSE{Q}(s, x)(\NSE{S_N})\approx 1$ and $\NSE{Q}_{\theta}(s, x)(\NSE{S_N})\approx 1$. 
For every $A\in \NSE{\BorelSets{\NSE{S_N}}}$, we have:
$
\NSE{Q}^{N}(s, x)(A)=\frac{\NSE{Q}(s, x)(A)}{\NSE{Q}(s, x)(\NSE{S_N})}
\approx \int_{\NSE{S}_{N}}\NSE{D}_{\theta}(s'|s, x)\NSE{Q}_{\theta}^{N}(s, x)(\dee s'). 
$
Note that $\NSE{Q}^{N}(s, x)(A)=\int_{\NSE{S_N}}\mathbb{D}_{\theta}(s'|s, x)\NSE{Q}_{\theta}^{N}(s, x)(\dee s')$. 
Thus, we conclude that $\mathbb{D}_{\theta}(s'|s, x)\approx \NSE{D}_{\theta}(s'|s, x)$ on some $\NSE{Q}_{\theta}^{N}(s, x)$ measure $1$ set. 
\end{proof}

For every $\theta\in T_{\Theta}$, let the nonstandard Kullback-Leibler divergence be:
$$
\NSE{K}_{N}(m, \theta)=\int_{\NSE{S_N}\times \NSE{X}}\NSE{\mathbb{E}}_{\NSE{Q}^{N}(\cdot|s, x)}\left[\ln \big(\mathbb{D}_{\theta}(s'|s,x)\big)\right]m(\dee s, \dee x).
$$
The set of closest parameter values given $m$ is the set $T_{\Theta}^{N}(m)=\argmin_{\theta\in T_{\Theta}}\NSE{K}_{N}(m, \theta)$.
Recall that we use $\Theta_{\pd{m}}$ to denote the set $\{\theta\in \Theta: K_{Q}(\theta, \pd{m})<\infty\}.$ The proof of the following Lemma \ref{exnsexeq} and Theorem \ref{stKLnsKL} are straightforward and therefore, provided in \cref{infinitesimaldominate}.

\begin{lemma}\label{exnsexeq}
Suppose \cref{assumptionstate} and \cref{assumptiontight} hold.
For every $(\theta,s,x)\in T_{\Theta}\times \NS{\NSE{S}}\times \NSE{X}$, if $\ST(\theta)\in \Theta_{\pd{m}}$ and 
$Q(\ST(s), \ST(x))$ is dominated by $Q_{\ST(\theta)}(\ST(s), \ST(x))$, then $\NSE{\mathbb{E}}_{\NSE{Q}^{N}(\cdot|s, x)}\left[\ln \big(\mathbb{D}_{\theta}(s'|s,x)\big)\right]\approx \mathbb{E}_{Q(\cdot|\ST(s), \ST(x))}\left[\ln \big(D_{\ST(\theta)}(s'|\ST(s),\ST(x))\big)\right]$.
\end{lemma}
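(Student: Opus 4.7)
The plan is to adapt the argument of Claim \ref{eptapprox} to the truncated nonstandard setting by combining Lemma \ref{densityapprox}, the joint continuity of $D_\theta$, and an S-integrability argument powered by the uniform integrability hypothesis \cref{KLint} together with \cref{assumptionrebound}. First I would rewrite the left-hand side as an integral over $\NSE{S_N}$ with respect to $\NSE{Q}^N(s,x)$ and apply Lemma \ref{densityapprox} to replace $\mathbb{D}_\theta(s'|s,x)$ by $\NSE{D}_\theta(s'|s,x)$ on a set of full $\NSE{Q}^N(s,x)$-measure. Because $s$ is near-standard, weak convergence (Lemma \ref{wkconvergeQN}) gives $\NSE{Q}(s,x)(\NSE{S_N})\approx 1$ and $\NSE{Q}_\theta(s,x)(\NSE{S_N})\approx 1$, so the truncations drop out and the integral is infinitely close to $\int_{\NSE S}\ln(\NSE{D}_\theta(s'|s,x))\,\NSE{Q}(s,x)(ds')$.

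Next I would exploit the continuity of $D_\theta$: since $Q(\ST(s),\ST(x))$ is dominated by $Q_{\ST(\theta)}(\ST(s),\ST(x))$, the Radon-Nikodym derivative $D_{\ST(\theta)}(\cdot|\ST(s),\ST(x))$ is finite $Q(\ST(s),\ST(x))$-a.s., and by \cref{regsmdp}(iv) the function $D_\theta(s'|s,x)$ is jointly continuous on the domain of finiteness. Therefore, for $\NSE{Q}(s,x)$-almost every near-standard $s'$, we have $\NSE{D}_\theta(s'|s,x)\approx D_{\ST(\theta)}(\ST(s')|\ST(s),\ST(x))$, giving pointwise-a.s. infinitesimal closeness of the integrands after taking logs.

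The key technical step is to establish S-integrability (in the sense of \cref{defsint}) of $\ln(\NSE{D}_\theta(\cdot|s,x))$ with respect to $\NSE{Q}(s,x)$. For the positive tail, the inequality $y\ln^+ y\leq y^{1+r}/r$ (for $y$ large) combined with \cref{KLint} yields uniform integrability of $D_\theta\ln^+ D_\theta$ with respect to $Q_\theta$, which by the identity $\int f\,dQ=\int f D_\theta\,dQ_\theta$ translates into uniform integrability of $\ln^+ D_\theta$ with respect to $Q$; transferring this to the nonstandard setting (together with \cref{assumptionrebound} to control the dependence on $\theta$) gives the required control over hyperfinite tail sets. For the negative part, the elementary bound $|\ln y|\leq 1/y$ for $y\in (0,1]$ together with $\int (1/D_\theta)\mathbf{1}_{\{D_\theta\leq 1\}}\,dQ\leq \int \mathbf{1}_{\{D_\theta\leq 1\}}\,dQ_\theta\leq 1$ gives a uniform integrability bound. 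Once S-integrability is established, invoking \cref{pdint} (or its analogue for push-downs of $\NSE{Q}(s,x)$ onto $Q(\ST(s),\ST(x))$) concludes that the nonstandard integral is infinitesimally close to its standard counterpart, yielding the desired statement.

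The principal obstacle is the S-integrability argument, because $\NSE{D}_\theta$ can be infinitely large on $\NSE{S}$ and infinitesimal on sets of positive but small $\NSE{Q}$-measure, so both tails must be controlled simultaneously and uniformly in the parameter $\theta\in T_\Theta$. The argument requires carefully combining the $(1+r)$-moment control of \cref{KLint} with the reciprocal bound on small values, and verifying that the uniform integrability of the relative entropy (\cref{assumptionrebound}) suffices to rule out concentration of mass of $\NSE{Q}(s,x)$ on the (possibly non-near-standard) region where the continuity approximation fails.
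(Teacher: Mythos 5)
Your overall strategy is the same as the paper's: replace $\mathbb{D}_{\theta}$ by $\NSE{D}_{\theta}$ via \cref{densityapprox}, use the joint continuity of the density to match the nonstandard integrand with the standard one at near-standard points, control tails through \cref{KLint} of \cref{regsmdp}, and pass between nonstandard and standard integrals over the exhaustion $\{S_n\}$ via \cref{pdint}. But two steps are defective. The lesser one: you invoke \cref{assumptionrebound}, which is not a hypothesis of this lemma (it assumes only \cref{assumptionstate} and \cref{assumptiontight}); it is also unnecessary, since $(\theta,s,x)$ is a single fixed triple and the uniformity needed to handle the nonstandard $\theta\in T_{\Theta}$ and near-standard $s$ is exactly what the transfer of \cref{KLint} supplies, being stated uniformly over the set of dominated triples with $s$ ranging over a compact set.

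The serious gap is the negative-tail argument. The bound $\ln^{-}y\leq 1/y$ discards precisely the factor that makes the tail small: it gives $\int_{E}\ln^{-}\big(\NSE{D}_{\theta}(s'|s,x)\big)\,\NSE{Q}(s,x)(\dee s')\leq \NSE{Q}_{\theta}(s,x)\big(E\cap\{\NSE{D}_{\theta}\leq 1\}\big)$, and on the tail set $E=\{\NSE{D}_{\theta}<e^{-K}\}$ with $K$ infinite this right-hand side need not be infinitesimal, because $Q_{\theta}$ may carry non-infinitesimal (even near-one) mass where the density is infinitesimal or zero --- the typical misspecified situation; e.g.\ $Q$ uniform on $[0,1]$, $Q_{\theta}$ uniform on $[0,2]$, where the set $\{D_{\theta}=0\}$ has $Q_{\theta}$-measure $1/2$. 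So your bound establishes finiteness of the negative part but not the vanishing of its tail integrals required by condition (2) of \cref{Lintegral}, and S-integrability does not follow. The repair is to keep the density inside the integral, i.e.\ change measure on the tail itself: on $\{\NSE{D}_{\theta}<e^{-K}\}$ one has $\NSE{D}_{\theta}\ln^{-}\big(\NSE{D}_{\theta}\big)\leq Ke^{-K}$, whence $\int_{E}\ln^{-}\big(\NSE{D}_{\theta}\big)\,\NSE{Q}(s,x)(\dee s')=\int_{E}\NSE{D}_{\theta}\,\ln^{-}\big(\NSE{D}_{\theta}\big)\,\NSE{Q}_{\theta}(s,x)(\dee s')\leq Ke^{-K}\approx 0$. (Your positive-tail step has the analogous soft spot --- smallness of $Q$-measure does not give smallness of $Q_{\theta}$-measure --- but there it is harmless, since Chebyshev applied to $\int(\NSE{D}_{\theta})^{1+r}\dee\NSE{Q}_{\theta}$ makes the tail sets $\{\NSE{D}_{\theta}>e^{K}\}$ themselves $\NSE{Q}_{\theta}$-small.) This is exactly what the paper's bookkeeping buys automatically: it changes measure once at the outset and works throughout with $\NSE{D}_{\theta}\ln\NSE{D}_{\theta}$ integrated against $\NSE{Q}_{\theta}(s,x)$, so the integrand is bounded below by $-1/e$ and only the positive tail, which \cref{KLint} controls, requires an argument.
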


By the transfer principle, $\NSE{Q}^{N}(s, x)(\NSE{S_N})>r$ for all $(s, x)\in \NSE{S_N}\times \NSE{X}$. 
As $T_{\Theta}\subset \NSE{\hat{\Theta}}$, following the calculation in \cref{dstctspreserve}, $|\mathbb{D}_{\theta}(s'|s,x)|\leq \frac{1}{r}|\NSE{D}_{\theta}(s'|s, x)|$ for all $(s',s,\theta, x)\in \NSE{S_N}\times \NSE{S_N}\times T_{\Theta}\times \NSE{X}$.
We now establish the connections between the nonstandard weighted Kullback-Leibler divergence and the standard weighted Kullback-Leibler divergence. 

\begin{theorem}\label{stKLnsKL}
Suppose \cref{assumptionstate}, \cref{assumptiontight} and \cref{assumptionrebound} hold. 
For every $\theta\in T_{\Theta}$, if $\ST(\theta)\in \Theta_{\pd{m}}$, then  $\NSE{K}_{N}(m, \theta)\approx K_{Q}(\pd{m}, \ST(\theta))$.
\end{theorem}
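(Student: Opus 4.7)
The plan is to mirror the proof of \cref{KLapprox} in the $\sigma$-compact setting. Write $F_\theta(s,x) = \NSE{\mathbb{E}}_{\NSE{Q}^N(\cdot\mid s,x)}[\ln \mathbb{D}_\theta(s'\mid s,x)]$ and $f_{\ST(\theta)}(s,x) = \mathbb{E}_{Q(\cdot\mid s,x)}[\ln D_{\ST(\theta)}(s'\mid s,x)]$, with $f_{\ST(\theta)}(s,x) = 0$ when $Q(s,x)$ is not dominated by $Q_{\ST(\theta)}(s,x)$. The task reduces to establishing $\int F_\theta\, dm \approx \int f_{\ST(\theta)}\, d\pd{m}$.

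For the pointwise ingredient, finiteness of $K_Q(\pd{m}, \ST(\theta))$ guarantees that $Q(s,x) \ll Q_{\ST(\theta)}(s,x)$ for $\pd{m}$-a.e.\ $(s,x)\in S\times X$. Tightness (\cref{assumptiontight}) gives $\Loeb{m}(\NS{\NSE{S}}\times \NSE{X})=1$, and the push-down identity $\pd{m}(E)=\Loeb{m}(\ST^{-1}(E))$ then implies that the hypotheses of \cref{exnsexeq} hold on a set of full $\Loeb{m}$-measure. Hence $\ST(F_\theta(s,x)) = f_{\ST(\theta)}(\ST(s), \ST(x))$ $\Loeb{m}$-almost everywhere. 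Following the truncation idea of \cref{KLapprox}, set $F_{\theta,n} = \min\{F_\theta, n\}$ and $f_{\ST(\theta),n} = \min\{f_{\ST(\theta)}, n\}$; boundedness together with \cref{pdint} yields $\int F_{\theta,n}\, dm \approx \int f_{\ST(\theta),n}\, d\pd{m}$, and $\int f_{\ST(\theta),n}\, d\pd{m} \uparrow K_Q(\pd{m}, \ST(\theta))$ by monotone convergence.

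The principal obstacle is closing the truncation gap, i.e.\ showing $\int F_\theta\, dm \approx \lim_n \int F_{\theta,n}\, dm$. This is where \cref{assumptionrebound} is indispensable. By \cref{thmstationsigma}, the push-down $\pd{m}$ is stationary for $Q$, so the uniform integrability hypothesis applies and delivers standard tail control of $f_{\ST(\theta)}$ against $\pd{m}$, uniformly for $\ST(\theta) \in \hat{\Theta}$. Invoking \cref{densityapprox} (which gives $\mathbb{D}_\theta \approx \NSE{D}_\theta$ on a $\NSE{Q}_\theta^N(s,x)$-full set for near-standard $(s,x)$), together with the normalization $\NSE{Q}^N(s,x)(\NSE{S_N}) \approx 1$ on $\NS{\NSE{S}}\times \NSE{X}$ and the push-down identity, transfers this tail control to the nonstandard side and forces $F_\theta$ to be S-integrable with respect to $m$. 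The standard S-integrability calculus (e.g.\ Arkeryd et al.\ 1997, Section 4, Theorem 6.2) then closes the truncation and produces $\int F_\theta\, dm \approx \int \ST(F_\theta)\, d\Loeb{m} = \int f_{\ST(\theta)}\, d\pd{m} = K_Q(\pd{m}, \ST(\theta))$, completing the argument.
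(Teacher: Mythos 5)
Your overall architecture --- pointwise approximation of the integrands via \cref{exnsexeq}, truncation, and S-integrability of $F_\theta$ to close the truncation gap --- matches the paper's proof, and you correctly identify \cref{assumptionrebound} as the source of that S-integrability. But the mechanism you give for extracting it is where the argument breaks. You apply \cref{assumptionrebound} to the \emph{standard} measure $\pd{m}$ (invoking \cref{thmstationsigma} to certify that $\pd{m}$ is stationary), obtaining uniform integrability of $f_{\ST(\theta)}$ against $\pd{m}$, and then assert that the pointwise approximations (\cref{densityapprox}, the normalization $\NSE{Q}^N(s,x)(\NSE{S_N})\approx 1$, the push-down identity) ``transfer this tail control to the nonstandard side.'' That implication is false, and in fact circular: passing information from $\int \ST(F_\theta)\,\dee\Loeb{m}$ back to $\int F_\theta\,\dee m$ is exactly what S-integrability licenses, so it cannot be used to \emph{establish} S-integrability. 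Concretely, a non-negative internal function can satisfy $\ST(F_\theta)=f_{\ST(\theta)}\circ\ST$ $\Loeb{m}$-a.e.\ with $f_{\ST(\theta)}$ uniformly integrable with respect to $\pd{m}$, and yet carry non-infinitesimal mass on an internal set of infinitesimal $m$-measure where it is infinite (a spike of height $N$ on a set of $m$-measure $1/N$). Such spikes are invisible after taking standard parts, and they are precisely the danger posed by unbounded Radon-Nikodym derivatives that \cref{assumptionrebound} exists to exclude; no amount of standard-side tail control rules them out.

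The correct use of \cref{assumptionrebound}, and the one the paper makes, is on the internal side. The assumption holds, with standard witnesses $\kappa=\kappa(\epsilon)$, for \emph{all} $\theta\in\hat{\Theta}$ and \emph{all} stationary $P\in\Delta(S\times X)$; by the transfer principle the same standard $\epsilon$ and $\kappa$ work for all $\theta\in\NSE{\hat{\Theta}}$ and all $\NSE{}$stationary internal $P$ --- in particular for your $\theta\in T_\Theta\subset\NSE{\hat{\Theta}}$ and for $P=m$, which is $\NSE{}$stationary simply because it is a Berk-Nash $\NSE{}$equilibrium (no appeal to \cref{thmstationsigma} is needed here). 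Then any internal $B$ with $m(B)\approx 0$ satisfies $m(B)<\kappa$, so $\int_B \NSE{\mathbb{E}}_{\NSE{Q}(\cdot|s,x)}\left[\ln\big(\NSE{D}_\theta(s'|s,x)\big)\right]m(\dee s,\dee x)<\epsilon$ for every standard $\epsilon>0$, which by \cref{Lintegral} yields S-integrability of the unnormalized internal relative entropy with respect to $m$; \cref{minbound} of \cref{assumptionstate} (the uniform lower bound $r$ on $Q(s,x)(S_n)$ and $Q_\theta(s,x)(S_n)$) then converts this into S-integrability of your $F_\theta$, which is built from the normalized objects $\NSE{Q}^N$ and $\mathbb{D}_\theta$. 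With that replacement your proof goes through. A further minor point: \cref{pdint} is stated for compact Hausdorff spaces, so your comparison $\int F_{\theta,n}\,\dee m\approx\int f_{\ST(\theta),n}\,\dee\pd{m}$ over all of $S\times X$ should either be run over the compact truncations $S_n\times X$ (as the paper does, taking a limit in $n$) or be justified by a tightness-based extension of \cref{pdint}.
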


We now prove the main result of this section, which establishes belief restriction. 

\begin{theorem}\label{beliefmainsigma}
Suppose \cref{assumptionstate}, \cref{assumptiontight} and \cref{assumptionrebound} hold.
The support of $\pd{\nu}$ is a subset of $\Theta_{Q}(\pd{m})$. 
\end{theorem}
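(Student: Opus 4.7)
I will follow the compact-state argument of \cref{beliefmain} nearly verbatim, with \cref{stKLnsKL} playing the role previously served by \cref{KLapprox}. Fix $\theta_0 \in \supp(\pd{\nu})$. Since $m$ is a Berk-Nash $\NSE{}$equilibrium of the NSMDP $\mathcal{M}_{T_{\Theta}}^{N}$, the transferred belief restriction yields $\nu \in \NSE{\PM{T_{\Theta}^{N}(m)}}$; combined with $\pd{\nu}(A) = \Loeb{\nu}(\ST^{-1}(A))$, this produces some $\theta_1 \in T_{\Theta}^{N}(m)$ with $\theta_1 \approx \theta_0$. Minimality of $\theta_1$ then reads
\begin{align*}
\NSE{K}_{N}(m, \theta_1) \;=\; \min_{\theta \in T_{\Theta}} \NSE{K}_{N}(m, \theta) \;\le\; \NSE{K}_{N}(m, t_{\hat\theta})
\end{align*}
for every $\hat\theta \in \hat\Theta$, where $t_{\hat\theta} \in T_{\Theta}$ denotes the unique label with $\hat\theta \in B_{\Theta}(t_{\hat\theta})$.

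Suppose for contradiction that $\theta_0 \notin \Theta_{Q}(\pd{m})$. Because $\hat\Theta \subseteq \Theta_{\pd{m}}$ and $K_{Q}(\pd{m}, \cdot)$ is finite on $\hat\Theta$, the infimum of $K_{Q}(\pd{m}, \cdot)$ over $\Theta$ is finite, so there exist $\theta^{*} \in \Theta$ and $\varepsilon > 0$ with $K_{Q}(\pd{m}, \theta^{*}) < K_{Q}(\pd{m}, \theta_0) - 2\varepsilon$. Continuity of $K_{Q}(\pd{m}, \cdot)$ on $\Theta_{\pd{m}}$ together with density of $\hat\Theta$ in $\Theta$ then produces $\hat\theta \in \hat\Theta$ with $K_{Q}(\pd{m}, \hat\theta) < K_{Q}(\pd{m}, \theta_0) - \varepsilon$. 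Applying \cref{stKLnsKL} at $t_{\hat\theta}$ (whose push-down $\hat\theta$ lies in $\hat\Theta \subseteq \Theta_{\pd{m}}$) gives $\NSE{K}_{N}(m, t_{\hat\theta}) \approx K_{Q}(\pd{m}, \hat\theta)$, and applying \cref{stKLnsKL} at $\theta_1$ (whose push-down is $\theta_0 \in \Theta_{\pd{m}}$) gives $\NSE{K}_{N}(m, \theta_1) \approx K_{Q}(\pd{m}, \theta_0)$. Chaining these relations,
\begin{align*}
K_{Q}(\pd{m}, \theta_0) \;\approx\; \NSE{K}_{N}(m, \theta_1) \;\le\; \NSE{K}_{N}(m, t_{\hat\theta}) \;\approx\; K_{Q}(\pd{m}, \hat\theta) \;<\; K_{Q}(\pd{m}, \theta_0) - \varepsilon,
\end{align*}
which is the desired contradiction.

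\textbf{Main obstacle.} The technical heavy lifting is packaged into \cref{stKLnsKL}, so the remaining subtlety is procedural: the second appeal to \cref{stKLnsKL} at $\theta_1$ requires $\theta_0 = \ST(\theta_1) \in \Theta_{\pd{m}}$, which is not a priori available (nothing so far rules out the possibility that $K_{Q}(\pd{m}, \theta_0) = +\infty$ while $\NSE{K}_{N}(m, \theta_1)$ is merely finite). This can be secured by combining the upper bound $\NSE{K}_{N}(m, \theta_1) \le \NSE{K}_{N}(m, t_{\hat\theta}) \approx K_{Q}(\pd{m}, \hat\theta) < \infty$ with a one-sided Fatou-type lower bound $\NSE{K}_{N}(m, \theta) \gtrapprox K_{Q}(\pd{m}, \ST(\theta))$ valid for every $\theta \in T_{\Theta}$, not only those with $\ST(\theta) \in \Theta_{\pd{m}}$. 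Such a bound is the sigma-compact analogue of \cref{KLapprox}(i) and can be obtained by truncating the standard integrand $\mathbb{E}_{Q(\cdot|s,x)}[\ln D_{\ST(\theta)}(s'|s,x)]$ at level $n$, using \cref{exnsexeq} and \cref{pdint} to transfer the truncated integral to the nonstandard side, and appealing to the S-integrability supplied by \cref{assumptionrebound} before sending $n \to \infty$.
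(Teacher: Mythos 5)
Your proof is correct and is, at its core, the same as the paper's: extract $\theta_1\approx\theta_0$ with $\theta_1\in T_{\Theta}^{N}(m)$ from the transferred belief restriction, convert internal divergences to standard ones via \cref{stKLnsKL} at both a competing parameter and at $\theta_1$, and contradict minimality. The only structural difference in the main chain is your detour through $\hat\Theta$: that step is imported from the compact-state proof of \cref{beliefmain}, where it is forced because \cref{KLapprox}(ii) gives the two-sided approximation only for $\ST(\theta)\in\hat\Theta$, but it is redundant here, since \cref{stKLnsKL} applies to any parameter in $\Theta_{\pd{m}}$; the paper accordingly applies it directly to the competitor $\theta'$ without passing to $\hat\Theta$.

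The more substantive comparison concerns what you call the main obstacle. The paper handles it with the sentence ``Clearly, both $\theta'$ and $\theta_0$ belong to $\Theta_{\pd{m}}$''; for $\theta'$ this is immediate (a strictly smaller divergence is in particular finite), but for $\theta_0$ it is exactly the point you flag, and the paper offers no justification: nothing rules out $K_{Q}(\pd{m},\theta_0)=\infty$ a priori, and in that case \cref{stKLnsKL} cannot be invoked at $\theta_1$. So on this point your write-up is more careful than the published proof, and your repair --- finiteness of $\NSE{K}_{N}(m,\theta_1)$ via minimality together with a one-sided bound $\NSE{K}_{N}(m,\theta)\gtrapprox K_{Q}(\pd{m},\ST(\theta))$ valid for \emph{all} $\theta\in T_{\Theta}$ --- is the right strategy (it is the analogue of \cref{KLapprox}(i), which the compact-case proof also tacitly needs at $\theta_1$). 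One caveat on your sketch of that bound: \cref{exnsexeq} compares the internal and standard integrands only at points $(s,x)$ where $Q(\ST(s),\ST(x))$ is dominated by $Q_{\ST(\theta)}(\ST(s),\ST(x))$, and is stated under the hypothesis $\ST(\theta)\in\Theta_{\pd{m}}$. If $K_{Q}(\pd{m},\theta_0)=\infty$ because domination fails on a set of positive $\pd{m}$-measure, the truncation-and-transfer argument needs a supplementary estimate on that set: for such $(s,x)$ pick a Borel $A$ with $Q(\ST(s),\ST(x))(A)>0=Q_{\theta_0}(\ST(s),\ST(x))(A)$, use \cref{wkconvergeQN} and the Loeb construction to produce an internal $B$ with $\NSE{Q}^{N}(s,x)(B)$ non-infinitesimal and $\NSE{Q}_{\theta_1}^{N}(s,x)(B)\approx 0$, and apply the transferred log-sum inequality to conclude that the internal relative entropy at $(s,x)$ is unlimited, hence exceeds every standard truncation level $n$. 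With that sub-case covered your argument closes completely; note also that the S-integrability supplied by \cref{assumptionrebound} is not actually needed for this one-sided direction, only for the two-sided approximation in \cref{stKLnsKL}.
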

\begin{proof}
Pick $\theta_0\in \Theta$ such that $\theta_0$ is in the support of $\pd{\nu}$. 
As $\pd{\nu}(A)=\Loeb{\nu}(\ST^{-1}(A))$ for all $A\in \BorelSets \Theta$,
by \cref{hyperberkNash}, there exists $\theta_1\approx \theta_0$ such that $\theta_1\in T_{\Theta}^{N}(m)$.
That is, we have $\NSE{K}_{N}(m, \theta_1)=\argmin_{t\in T_{\Theta}}\NSE{K}_{N}(m, t)$. 
Suppose there exists $\theta'\in \Theta$ such that $K_{Q}(\pd{m},\theta')<K_{Q}(\pd{m},\theta_0)-\frac{1}{n}$ for some $n\in \Nats$. 
Clearly, both $\theta'$ and $\theta_0$ belong to $\Theta_{\pd{m}}$. 
Let $t_{\theta'}\in T_{\Theta}$ be the unique element such that $\theta'\in B_{\Theta}(t_{\theta'})$. 
By \cref{stKLnsKL}, we have 
$
\NSE{K}_{N}(m, t_{\theta'})\approx K_{Q}(\pd{m}, \theta')<K_{Q}(\pd{m},\theta_0)-\frac{1}{n}\lessapprox \NSE{K}_{N}(m, \theta_1)-\frac{1}{n}.
$
This is a contradiction, hence we conclude that $\pd{\nu}$ is a subset of $\Theta_{Q}(\pd{m})$.
\end{proof}

\subsubsection{Belief Restriction without \cref{assumptionrebound}}

In this section, we establish belief restriction of the SMDP $\mathcal{M}$ if $\mathcal{M}$ is either correctly specified or satisfies \cref{assumptionuniquemin}. 
We first assume that $\mathcal{M}$ is correctly specified.

\begin{theorem}\label{beliefcorrect}
Suppose the SMDP $\mathcal{M}$ is correctly specified, \cref{assumptionstate} and \cref{assumptiontight} hold. 
Then, the support of $\pd{\nu}$ is a subset of $\Theta_{Q}(\pd{m})$. 
\end{theorem}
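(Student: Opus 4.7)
The plan is to use the true parameter $\theta^*$ (which exists by correct specification) as a certificate that $\min_{\theta\in\Theta} K_Q(\pd{m},\theta)=0$, and then to show that any $\theta_0 \in \mathrm{supp}(\pd{\nu})$ attains this minimum. Since $\mathcal{M}$ is correctly specified, there is $\theta^* \in \Theta$ with $Q_{\theta^*}=Q$; in particular $Q(s,x)$ is trivially dominated by $Q_{\theta^*}(s,x)$, so without loss of generality $\theta^*\in\hat\Theta$, and $D_{\theta^*}(\cdot|s,x)\equiv 1$. Thus $\mathcal{D}_{\mathrm{KL}}(Q(s,x),Q_{\theta^*}(s,x))=0$ for every $(s,x)$, whence $K_Q(\pd{m},\theta^*)=0$. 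By nonnegativity of relative entropy, $\theta^* \in \Theta_Q(\pd{m})$ and the global minimum of $K_Q(\pd m,\cdot)$ is $0$.

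Now fix $\theta_0 \in \mathrm{supp}(\pd\nu)$. Because $\pd\nu(A)=\Loeb\nu(\ST^{-1}(A))$ and \cref{nsmdpeqexst} yields $\nu\in \NSE{\PM{T_\Theta^N(m)}}$, there exists $\theta_1\in T_\Theta$ with $\theta_1\approx\theta_0$ and $\NSE K_N(m,\theta_1)=\min_{t\in T_\Theta}\NSE K_N(m,t)$. Pick $t^*\in T_\Theta$ with $t^*\approx\theta^*$; such a $t^*$ exists because the hyperfinite representation of the compact space $\Theta$ reaches every standard point. The crux of the argument is to show $\NSE K_N(m,t^*)\approx 0$, for then the min-property of $\theta_1$, together with nonnegativity of $\NSE K_N$, forces $\NSE K_N(m,\theta_1)\approx 0$.

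To establish this upper bound I would argue pointwise first, then integrate. Joint continuity of the density (condition (iv) of \cref{regsmdp}) together with $t^*\approx\theta^*$ yields $\NSE D_{t^*}(s'|s,x)\approx D_{\theta^*}(\ST(s')|\ST(s),\ST(x))=1$ whenever $(s',s,x)$ is near-standard; combined with \cref{densityapprox}, this gives $\ln \mathbb{D}_{t^*}(s'|s,x)\approx 0$ on the near-standard part. For near-standard $(s,x)$, \cref{wkconvergeQN} implies $\Loeb{\NSE Q^N(s,x)}(\ST^{-1}(S))=1$, so the $s'$-integral is governed by its near-standard part; condition (v) of \cref{regsmdp} applied to $\mathbb{D}_{t^*}^{1+r}$ delivers the S-integrability of the hyperfinite KL-integrand (via the H\"older bound $x\log x \lesssim x^{1+r}$), forcing $\NSE{\mathbb{E}}_{\NSE Q^N(\cdot|s,x)}[\ln \mathbb{D}_{t^*}(s'|s,x)]\approx 0$ for near-standard $(s,x)$. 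Finally, \cref{assumptiontight} forces $\Loeb m$ to concentrate on the near-standard part of $\NSE S_N\times \NSE X$, so integrating the infinitesimal near-standard integrand yields $\NSE K_N(m,t^*)\approx 0$.

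To close, I invoke a Fatou-type lower bound that is the nonstandard-SMDP analogue of the $\gtrapprox$ direction of \cref{KLapprox}, obtained by combining \cref{exnsexeq} with \cref{pdint}: $\NSE K_N(m,\theta_1) \gtrapprox K_Q(\pd m,\ST(\theta_1))=K_Q(\pd m,\theta_0)$. Together with $\NSE K_N(m,\theta_1)\approx 0$ and $K_Q\geq 0$, this gives $K_Q(\pd m,\theta_0)=0=\min_\theta K_Q(\pd m,\theta)$, i.e.\ $\theta_0\in \Theta_Q(\pd m)$. The main obstacle is the \emph{upper bound} $\NSE K_N(m,t^*)\approx 0$: the lower bound is essentially free from nonnegativity, and absent \cref{assumptionrebound} one cannot appeal to \cref{stKLnsKL}; the correctly-specified structure rescues the argument by making the pointwise integrand vanish exactly at $\theta^*$, but one still needs S-integrability of $\ln \mathbb{D}_{t^*}$ on the near-standard part together with tightness to neutralise the tails.
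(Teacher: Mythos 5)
Your scaffolding (hyperfinite minimizer $\theta_1\approx\theta_0$, an upper bound at the true parameter, a lower bound to push the conclusion down) is reasonable, but the step you yourself flag as the crux --- the upper bound $\NSE{K}_{N}(m,t^*)\approx 0$ for $t^*\approx\theta^*$ --- has a genuine gap, and your proposed rescue does not close it. You show the internal, non-negative integrand $G(s,x)=\NSE{\mathbb{E}}_{\NSE{Q}^{N}(\cdot|s,x)}\left[\ln\big(\mathbb{D}_{t^*}(s'|s,x)\big)\right]$ is infinitesimal at near-standard $(s,x)$, and then conclude $\int G\,\dee m\approx 0$ because $\Loeb{m}$ concentrates on the near-standard part (\cref{assumptiontight}). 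That inference is invalid: by \cref{Lintegral}, killing the contribution of the Loeb-null tail $(\NSE{S}_{N}\setminus\NS{\NSE{S}})\times\NSE{X}$ requires S-integrability of $G$ with respect to $m$, and a non-negative internal function that is infinitesimal on a Loeb-measure-one set can still have a non-infinitesimal, even unlimited, internal integral. Nothing in \cref{assumptionstate} or \cref{assumptiontight} controls $G$ at non-near-standard states: in the paper's AR(1) example, taking $t^*=(a_0+\epsilon,b_0)$ with $\epsilon$ a nonzero infinitesimal, the relative entropy at state $s$ is $\epsilon^{2}s^{2}/(2b_0^{2})$, which is unlimited at states of order $K/\epsilon$ with $K$ unlimited. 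Supplying exactly this tail control is the role of \cref{assumptionrebound} (it is what makes $G$ S-integrable with respect to $m$ in the proof of \cref{stKLnsKL}), and it is precisely the assumption this theorem drops. Your appeal to \cref{KLint} of \cref{regsmdp} only gives S-integrability of the \emph{inner} $s'$-integral at near-standard $(s,x)$; it says nothing about the outer $(s,x)$-tail.

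The missing idea is that under correct specification one should not approximate $\theta^*$ at all. Since WLOG $\theta^*\in\hat{\Theta}$ (as you note) and $\Theta\subset T_{\Theta}$, take $t^*=\theta^*$ itself: then $Q_{\theta^*}=Q$ transfers to $\NSE{Q}_{\theta^*}^{N}(s,x)=\NSE{Q}^{N}(s,x)$ for every $(s,x)\in\NSE{S}_N\times\NSE{X}$, so $\NSE{K}_{N}(m,\theta^*)=0$ \emph{exactly} and there is no tail to control. The paper then also bypasses your lower-bound step, which is itself shaky: the $\sigma$-compact analogue of the $\gtrapprox$ half of \cref{KLapprox} is never established without \cref{assumptionrebound}, and assembling it from \cref{exnsexeq} and \cref{pdint} needs the hypothesis $\ST(\theta_1)\in\Theta_{\pd{m}}$, i.e.\ $K_{Q}(\pd{m},\theta_0)<\infty$, which is essentially part of what you are proving (if domination of $Q(s,x)$ by $Q_{\theta_0}(s,x)$ fails on a set of positive $\pd{m}$-measure, the pointwise approximation is unavailable there). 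Instead, from $\NSE{K}_{N}(m,\theta_1)=\min_{t\in T_{\Theta}}\NSE{K}_{N}(m,t)=0$ the paper transfers Lemma 1 of EP to obtain the exact kernel identity $\NSE{Q}_{\theta_1}^{N}(s_1,x_1)=\NSE{Q}^{N}(s_1,x_1)$ at every $(s_1,x_1)$ in the $\NSE{}$support of $m$, and then uses the Prokhorov continuity of $(\theta,s,x)\mapsto Q_{\theta}(s,x)$ in \cref{regsmdp} to push this equality down to $Q_{\theta_0}(s_0,x_0)=Q(s_0,x_0)$ for all $(s_0,x_0)$ in the support of $\pd{m}$, whence $K_{Q}(\pd{m},\theta_0)=0$ directly. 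You should restructure your argument along these lines rather than through KL-approximation.
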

\begin{proof}
As the SMDP $\mathcal{M}$ is correctly specified and $\Theta\subset T_{\Theta}$, we have $\min_{t\in T_{\Theta}}\NSE{K}_{N}(m,t)=0$.
Pick $\theta_0\in \Theta$ and $(s_0, x_0)$ such that $\theta_0$ in the support of $\pd{\nu}$ and $(s_0, x_0)$ in the support of $\pd{m}$. 
Then, by \cref{hyperberkNash}, there exist $\theta_1\approx \theta_0$ and $(s_1, x_1)\in \NSE{S}_N\times \NSE{X}$ such that $\theta_1\in T_{\Theta}^{N}(m)$ and $(s_1, x_1)$ in the $\NSE{}$support of $m$. 
By the transfer of Lemma 1 in EP, we have $\NSE{Q}_{\theta_1}^{N}(s_1, x_1)=\NSE{Q}^{N}(s_1, x_1)$.
As $s_1$ is near-standard, by \cref{regsmdp}, we conclude that $Q_{\theta_0}(s_0, x_0)=Q(S_0, x_0)$ and therefore, $K_{Q}(\pd{m}, \theta_0)=0$. 
\end{proof}
We now assume that \cref{assumptionuniquemin} holds but $\mathcal{M}$ may be misspecified. 

\begin{theorem}\label{beliefunique}
Suppose \cref{assumptionstate}, \cref{assumptiontight} and \cref{assumptionuniquemin} hold.
Then the support of $\pd{\nu}$ is a subset of $\Theta_{Q}(\pd{m})$.
\end{theorem}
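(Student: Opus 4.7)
The plan is to leverage \cref{assumptionuniquemin} to collapse both $T_{\Theta}^{N}(m)$ and $\Theta_{Q}(\pd{m})$ to the single point $\{\theta_0\}$, which then makes belief restriction immediate. Since $\Theta \subset T_{\Theta}$ by construction (\cref{hyperapproxsp}), we have $\theta_0 \in T_{\Theta}$. Transfer applied to \cref{assumptionuniquemin} yields, for every $(s, x) \in \NSE{S}_{N} \times \NSE{X}$, that $\theta_0$ is the unique $\NSE{}$minimizer of $\NSE{\mathcal{D}}_{\mathrm{KL}}(\NSE{Q}^{N}(s, x), \NSE{Q}_{\theta}^{N}(s, x))$ over $\NSE{\Theta}$, and in particular over $T_{\Theta}$.

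Integrating this pointwise strict inequality against the Berk-Nash $\NSE{}$equilibrium $m$ gives $\NSE{K}_{N}(m, \theta_0) < \NSE{K}_{N}(m, \theta)$ for every $\theta \in T_{\Theta} \setminus \{\theta_0\}$, so $T_{\Theta}^{N}(m) = \{\theta_0\}$. The belief-restriction clause of the Berk-Nash $\NSE{}$equilibrium (\cref{nsmdpeqexst}) then forces $\nu$ to be the Dirac mass $\delta_{\theta_0}$. Because $\theta_0$ is standard, $\pd{\nu}(\{\theta_0\}) = \Loeb{\nu}(\ST^{-1}(\{\theta_0\})) = 1$, so $\pd{\nu} = \delta_{\theta_0}$. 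On the standard side, the same pointwise strict inequality integrated against $\pd{m}$ yields $\Theta_{Q}(\pd{m}) = \{\theta_0\}$. Hence $\supp(\pd{\nu}) = \{\theta_0\} \subset \Theta_{Q}(\pd{m})$, as required.

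The argument is much shorter than the proof of \cref{beliefmainsigma} because \cref{assumptionuniquemin} sidesteps the need to approximate $\NSE{K}_{N}(m, \cdot)$ by $K_{Q}(\pd{m}, \cdot)$ (which required \cref{assumptionrebound} there): here uniqueness holds pointwise in $(s, x)$ and transfers directly, so no uniform integrability machinery is needed. The only point deserving care is verifying that transfer sends ``uniqueness over $\Theta$'' to ``uniqueness over $\NSE{\Theta}$''; this is immediate because the uniqueness clause is the universally quantified first-order statement that, for every $n \in \Nats$, every $\theta \in \Theta \setminus \{\theta_0\}$, and every $(s, x) \in S_{n} \times X$, $\mathcal{D}_{\mathrm{KL}}(Q^{n}(s, x), Q_{\theta_0}^{n}(s, x)) < \mathcal{D}_{\mathrm{KL}}(Q^{n}(s, x), Q_{\theta}^{n}(s, x))$, which passes to $\NSE{\Nats}$ and $\NSE{\Theta}$ under the transfer principle.
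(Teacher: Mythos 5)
Your proof is correct and takes essentially the same route as the paper's: both use \cref{assumptionuniquemin} (transferred to the nonstandard side) to collapse $T_{\Theta}^{N}(m)$ and $\Theta_{Q}(\pd{m})$ to the same singleton $\{\theta_0\}$, so that the belief-restriction clause of the Berk-Nash $\NSE{}$equilibrium forces $\nu=\delta_{\theta_0}$, hence $\pd{\nu}=\delta_{\theta_0}$, whose support trivially lies in $\Theta_{Q}(\pd{m})$. Your write-up merely spells out the transfer and pointwise-to-integrated-inequality steps that the paper's three-line proof leaves implicit (and correctly attributes to \cref{assumptionuniquemin} the step that the paper's proof miscites as \cref{assumptiontight}).
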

\begin{proof}
As $\Theta\subset T_{\Theta}$, by \cref{assumptiontight}, $T_{\Theta}^{N}(m)=\{\theta_0\}$. 
Thus, we have $\pd{\nu}(\{\theta_0\})=\nu(\{\theta_0\})=1$. 
By \cref{assumptionuniquemin} again, $\Theta_{Q}(\pd{m})=\{\theta_0\}$, completing the proof. 
\end{proof}

\subsubsection{Optimality with Bounded Payoff Function}

In this section, we establish optimality of the candidate Berk-Nash equilibrium $\pd{m}$ assuming bounded and continuous payoff function. We start with the follwing lemma:

\begin{lemma}\label{Qbarlemmasigma}
For every $\lambda\in \NSE{\PM{T_{\Theta}}}$ and every $(s, x)\in \NS{\NSE{S}}\times \NSE{X}$,
$\pd{(\bar{\NSE{Q}}_{\lambda}^{N}(s, x))}=\bar{Q}_{\pd{\lambda}}(\ST(s), \ST(x))$. 
That is, the push-down of $\bar{\NSE{Q}}_{\lambda}^{N}(s, x)$ is the same as $\bar{Q}_{\pd{\lambda}}(\ST(s), \ST(x))$.
\end{lemma}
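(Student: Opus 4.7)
The plan is to adapt the argument of \cref{Qbarlemma} to the nonstandard $\sigma$-compact setting, where the push-down for $\bar{\mathbb{Q}}_\nu$ was established in the compact case. The key observation is that $\bar{\NSE{Q}}_{\lambda}^{N}(s,x)$ is an internal probability measure on $\NSE{S}_N$, so its push-down is well-defined on $S$ whenever $s$ is near-standard (and hence $\Loeb{\bar{\NSE{Q}}_{\lambda}^N(s,x)}(\ST^{-1}(S))$ should be $1$; this is guaranteed by the near-standardness of $s$ together with the continuity in the Prokhorov metric, via \cref{wkconvergeQN}).

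Concretely, I would fix $\lambda \in \NSE{\PM{T_\Theta}}$, $(s,x) \in \NS{\NSE{S}} \times \NSE{X}$, and $A \in \BorelSets{S}$, and then chain the following four equalities:
\[
\pd{(\bar{\NSE{Q}}_{\lambda}^{N}(s,x))}(A)
=\Loeb{\bar{\NSE{Q}}_{\lambda}^{N}(s,x)}(\ST^{-1}(A))
=\int_{T_\Theta}\Loeb{\NSE{Q}_{\theta}^{N}(s,x)}(\ST^{-1}(A))\,\Loeb{\lambda}(\dee \theta),
\]
where the first equality is the definition of push-down and the second uses the internal Fubini theorem together with the defining identity $\bar{\NSE{Q}}_{\lambda}^{N}(s,x)(B)=\int_{T_\Theta}\NSE{Q}_{\theta}^{N}(s,x)(B)\,\lambda(\dee \theta)$ (valid by transfer, since $T_\Theta$ is hyperfinite). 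Then by \cref{wkconvergeQN}, the integrand on the right equals $Q_{\ST(\theta)}(\ST(s),\ST(x))(A)$ for each $\theta\in T_\Theta$; and by \cref{pdint} applied to the compact space $\Theta$ with the bounded measurable function $\theta\mapsto Q_{\theta}(\ST(s),\ST(x))(A)$ (bounded because it takes values in $[0,1]$, and measurable by the Prokhorov continuity in $\theta$), we obtain
\[
\int_{T_\Theta}Q_{\ST(\theta)}(\ST(s),\ST(x))(A)\,\Loeb{\lambda}(\dee \theta)
=\int_{\Theta}Q_{\theta}(\ST(s),\ST(x))(A)\,\pd{\lambda}(\dee \theta)
=\bar{Q}_{\pd{\lambda}}(\ST(s),\ST(x))(A).
\]
Uniqueness of the probability measure determined by its values on Borel sets then yields the claim.

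The main subtlety, rather than a serious obstacle, is checking that $\pd{(\bar{\NSE{Q}}_{\lambda}^{N}(s,x))}$ is genuinely a probability measure on $S$ (not just a subprobability): this requires $\Loeb{\bar{\NSE{Q}}_{\lambda}^N(s,x)}(\ST^{-1}(S))=1$. Since $s\in \NS{\NSE{S}}$ has a standard part $\ST(s)\in S$, the Prokhorov-continuity of $(\theta,s,x)\mapsto Q_\theta(s,x)$ together with \cref{nsweaklemma} gives $\Loeb{\NSE{Q}_\theta^N(s,x)}(\ST^{-1}(S))=Q_{\ST(\theta)}(\ST(s),\ST(x))(S)=1$ for every $\theta\in T_\Theta$, and integrating against $\Loeb{\lambda}$ yields the same for $\bar{\NSE{Q}}_\lambda^N(s,x)$. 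Once this is in hand, the remainder of the argument is a direct transcription of the compact-case proof, with $\NSE{Q}^N$ replacing $\mathbb{Q}$ throughout and \cref{wkconvergeQN} replacing \cref{wkconverge}.
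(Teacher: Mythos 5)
Your proof is correct and takes essentially the same route as the paper's: the identical chain of equalities (push-down definition, decomposition of the mixture over $T_{\Theta}$, then \cref{wkconvergeQN}, then \cref{pdint}) is exactly the paper's argument. Your explicit check that $\Loeb{\bar{\NSE{Q}}_{\lambda}^{N}(s,x)}(\ST^{-1}(S))=1$ fills in a step the paper asserts without justification, and you correctly invoke \cref{wkconvergeQN} where the paper's text cites \cref{wkconverge}, so these are welcome refinements rather than divergences.
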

\begin{proof}
Fix $(\lambda,s,x)\in \NSE{\PM{T_{\Theta}}}\times \NS{\NSE{S}}\times \NSE{X}$ and $A\in \BorelSets S$. 
As $\Loeb{\bar{\NSE{Q}}_{\lambda}^{N}(s, x)}(\ST^{-1}(S))=1$, we have $\pd{(\bar{\NSE{Q}}_{\lambda}^{N}(s, x))}(A)=\Loeb{\bar{\NSE{Q}}_{\lambda}(s,x)}(\ST^{-1}(A))=\int_{T_{\Theta}}\Loeb{\NSE{Q}_{i}(s,x)}(\ST^{-1}(A))\Loeb{\lambda}(\dee i)$.
By \cref{wkconverge}, we have $Q_{\ST(i)}(\ST(s), \ST(x))(A)=\Loeb{\NSE{Q}_{i}(s, x)}(\ST^{-1}(A))$ for all $i\in T_{\Theta}$.
Thus, by \cref{pdint}, we have
$
\int_{T_{\Theta}}\Loeb{\NSE{Q}_{i}(s,x)}(\ST^{-1}(A))\Loeb{\lambda}(\dee i)=\int_{\Theta}Q_{\theta}(\ST(s), \ST(x))(A)\pd{\lambda}(\dee \theta)=\bar{Q}_{\pd{\lambda}}(\ST(s), \ST(x))(A).
$
Hence, we have the desired result. 
\end{proof}

\nt We now consider the Bellman equation,

$$\label{boundedbellman}
V(s)=\max_{x\in X}\int_{S}\{\pi(s,x,s')+\delta V(s')\}\bar{Q}_{\pd{\nu}}(\dee s'|s, x). 
$$

Let $\mathcal{C}_0[S]$ denote the set of bounded continuous functions on $S$ equipped with the sup-norm.
Then $\mathcal{C}_0[S]$ is a complete metric space. 
Under \cref{assumptionpayoff}, the map $F(g)(s)=\max_{x\in X}\int_{S}\{\pi(s,x,s')+\delta g(s')\}\bar{Q}_{\pd{\nu}}(\dee s'|s, x)$ is a contraction mapping from $\mathcal{C}_0[S]$ to $\mathcal{C}_0[S]$. By the Banach fixed point theorem, there is an unique $V\in \mathcal{C}_0[S]$ that is a solution to the Bellman equation. We fix $V$ for the rest of this section. 
The nonstandard Bellman equation is:
$$\label{boundednsbellman}
\mathbb{V}(s)=\max_{x\in \NSE{X}}\int_{\NSE{S_N}}\{\NSE{\pi}_N(s,x,s')+\delta \mathbb{V}(s')\}\bar{\NSE{Q}}_{\nu}^{N}(\dee s'|s, x),
$$

where $\mathbb{V}\in \NSE{\mathcal{C}_0}[\NSE{S_N}]$ is the unique solution of the nonstandard Bellman equation. 
The existence of such $\mathbb{V}$ is guaranteed by the transfer principle. 
We also fix $\mathbb{V}$ for the rest of this section. 

\begin{lemma}\label{keyapproxlemma}
Suppose \cref{assumptionstate} and \cref{assumptionpayoff} hold.
For every $(s, x)\in \NS{\NSE{S}}\times \NSE{X}$:
$\int_{\NSE{S_N}}\{\NSE{\pi}_N(s,x,s')+\delta \NSE{V}(s')\}\bar{\NSE{Q}}_{\nu}^{N}(\dee s'|s, x)\approx \int_{S}\{\pi(\ST(s), \ST(x), s')+\delta V(s')\}\bar{Q}_{\pd{\nu}}(\dee s'|\ST(s), \ST(x)).$
\end{lemma}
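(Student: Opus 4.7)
The plan is to reduce the statement to a combination of the push-down measure formula from \cref{Qbarlemmasigma} and the push-down integration principle from \cref{pdint}. First I would observe that under \cref{assumptionpayoff} the payoff $\pi$ is bounded continuous, and since $V$ is the fixed point of the contraction $F(g)(s)=\max_{x\in X}\int_S\{\pi(s,x,s')+\delta g(s')\}\bar{Q}_{\pd{\nu}}(\dee s'|s,x)$ on the Banach space $\mathcal{C}_0[S]$, $V$ is itself bounded continuous on $S$. Consequently, the function $h: S\to \Reals$ defined by $h(s'):=\pi(\ST(s),\ST(x),s')+\delta V(s')$ is bounded and continuous on $S$.

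Next, I would exploit nonstandard continuity to compare the internal and external integrands on the near-standard part. Joint continuity of $\pi$ gives $\NSE{\pi}_N(s,x,s')\approx \pi(\ST(s),\ST(x),\ST(s'))$ for every near-standard $s'$, and continuity of $V$ gives $\NSE{V}(s')\approx V(\ST(s'))$. Setting $H(s'):=\NSE{\pi}_N(s,x,s')+\delta \NSE{V}(s')$, we thus have $H(s')\approx h(\ST(s'))$ throughout $\NS{\NSE{S}}$. Since $\pi$ and $V$ are both bounded, $H$ is internally bounded by the same sup-norm bound, hence $H$ is S-integrable with respect to every internal probability measure on $\NSE{S_N}$, in particular with respect to $\bar{\NSE{Q}}_{\nu}^{N}(s,x)$.

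The key step is now the push-down. By \cref{Qbarlemmasigma}, $\pd{(\bar{\NSE{Q}}_{\nu}^{N}(s,x))}=\bar{Q}_{\pd{\nu}}(\ST(s),\ST(x))$, which is a probability measure on $S$, so $\Loeb{\bar{\NSE{Q}}_{\nu}^{N}(s,x)}(\ST^{-1}(S))=1$. Combining this with the infinitesimal closeness $H\approx h\circ \ST$ on this full-measure set, S-integrability yields
\begin{equation*}
\int_{\NSE{S_N}}H(s')\,\bar{\NSE{Q}}_{\nu}^{N}(\dee s'|s,x)\approx \int_{\NSE{S_N}}h(\ST(s'))\,\Loeb{\bar{\NSE{Q}}_{\nu}^{N}(s,x)}(\dee s')=\int_{S}h(s')\,\bar{Q}_{\pd{\nu}}(\dee s'|\ST(s),\ST(x)),
\end{equation*}
which is the claimed approximation.

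The main subtlety is that \cref{pdint} is stated for compact Hausdorff spaces, whereas $S$ here is only $\sigma$-compact. I would handle this by restricting the push-down argument to the compact sets $S_n$ supplied by \cref{assumptionstate}: applying \cref{pdint} on each $S_n$ recovers the identity up to the mass outside $\ST^{-1}(S_n)$, and boundedness of $h$ together with $\pd{(\bar{\NSE{Q}}_{\nu}^{N}(s,x))}(S\setminus S_n)\to 0$ as $n\to\infty$ ensures that the tail contributions vanish in the limit. Equivalently, since $h$ is bounded and $\Loeb{\bar{\NSE{Q}}_{\nu}^{N}(s,x)}$ concentrates on $\ST^{-1}(S)$, the standard change-of-variables formula along $\ST$ delivers the displayed equality directly.
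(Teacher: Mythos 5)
Your proof is correct and follows essentially the same route as the paper's: boundedness of $\pi$ and $V$ (the latter because $V\in\mathcal{C}_0[S]$ by the contraction argument) gives S-integrability of the internal integrand, \cref{Qbarlemmasigma} identifies the push-down of $\bar{\NSE{Q}}_{\nu}^{N}(s,x)$ with $\bar{Q}_{\pd{\nu}}(\ST(s),\ST(x))$, and the passage from the Loeb integral of the standard part to the standard integral is made rigorous by exhausting $S$ with the compact sets $S_n$ so that \cref{pdint} applies. The paper's proof simply makes this truncation explicit in its main display, $\int_{\NSE{S_N}}\approx\lim_{n\to\infty}\ST\bigl(\int_{\NSE{S_n}}\bigr)=\lim_{n\to\infty}\int_{S_n}=\int_{S}$, which is exactly the fallback argument you give in your final paragraph.
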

\begin{proof}
Pick $(t, x)\in \NS{\NSE{T}}\times \NSE{X}$. 
As $\pi$ is bounded, $\NSE{\pi}_{N}(t, x, \cdot)$ is bounded. 
By Arkeryd et al. (1997, Section 4, Corollary 6.1), $\NSE{\pi}_N(t,x,t')+\delta \NSE{V}(t')$ is S-integrable with respect to $\bar{\NSE{Q}}_{\nu}^{N}(\dee t'|t, x)$. 
By Arkeryd et al. (1997, Section 4, Corollary 6.1), \cref{Qbarlemma} and \cref{pdint} , we have 
\begin{align*}
\int_{\NSE{T_N}}\{\NSE{\pi}_N(t,x,t')+\delta \NSE{V}(t')\}\bar{\NSE{Q}}_{\nu}^{N}(\dee t'|t, x)
& \approx \lim_{n\to\infty}\ST\big(\int_{\NSE{T_n}}\{\NSE{\pi}_N(t,x,t')+\delta \NSE{V}(t')\}\bar{\NSE{Q}}_{\nu}^{N}(\dee t'|t, x)\big)\\
&=\int_{T}\{\pi(\ST(t), \ST(x), t')+\delta V(t')\}\bar{Q}_{\pd{\nu}}(\dee t'|\ST(t), \ST(x)).
\end{align*}
\nt Hence, we have the desired result. 
\end{proof}

The set $\mathcal{C}_0[S]$ is a complete metric space under the metric $d_{\sup}$. 
Recall that, under \cref{assumptionstate}, $S_n$ is a non-decreasing sequence of compact subsets of $S$ such that $S=\bigcup_{n\in \Nats}S_n$. 
For two elements $g_1, g_2\in \mathcal{C}_0[S]$, define $d_{\sup, n}(g_1, g_2)=\sup_{s\in S_n}|g_1(s)-g_2(s)|$. 
Define $d_{\mathrm{unif}}(g_1, g_2)=\sum_{n\in \Nats}\frac{\min\{1, d_{\sup, n}(g_1, g_2)\}}{2^{n}}$. 
Note that $d_{\mathrm{unif}}$ is a well-defined complete metric on $\mathcal{C}_0[S]$. 
For every $f\in \mathcal{C}_0[S]$, under the topology generated by the metric $d_{\mathrm{unif}}$, $F\in \NSE{\mathcal{C}_0}[\NSE{S}]$ is in the monad of $\NSE{f}$ if $F(s)\approx \NSE{f}(s)$ for all $s\in \NS{\NSE{S}}$. 

\begin{lemma}\label{VVapproxsigma}
Suppose \cref{assumptionstate} and \cref{assumptionpayoff} hold.
Then $\NSE{V}(s)\approx \mathbb{V}(s)$ for all $s\in \NS{\NSE{S}}$. 
\end{lemma}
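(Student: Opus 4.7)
The plan is to adapt the contraction-mapping argument of \cref{BEclose} to the $\sigma$-compact setting, handling the complication that \cref{keyapproxlemma} yields the Bellman-operator approximation only on the near-standard part $\NS{\NSE{S}}$ rather than on all of $\NSE{S_N}$. Let $\mathbb{F}: \NSE{\mathcal{C}_0}[\NSE{S_N}]\to \NSE{\mathcal{C}_0}[\NSE{S_N}]$ denote the internal nonstandard Bellman operator; by the transfer of the standard Banach fixed-point setup, $\mathbb{F}$ is a $\delta$-contraction in the $\NSE{}$sup-norm with unique fixed point $\mathbb{V}$. Write $W = \NSE{V}|_{\NSE{S_N}}$. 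Because $\pi$ is bounded under \cref{assumptionpayoff}, both $W$ and $\mathbb{V}$ are $\NSE{}$sup-norm bounded by a standard constant, so $C := \NSE{d_{\sup}}(W, \mathbb{V})$ is a finite element of $\HReals$ majorized by a standard real.

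The first step is to prove, by external induction on standard $k\in\Nats$, that $\mathbb{F}^k(W)(s) \approx V(\ST(s))$ for every $s\in \NS{\NSE{S}}$. The base case $k=0$ is $W(s) = \NSE{V}(s) \approx V(\ST(s))$ by continuity of $V$. For the inductive step, the integrand $\NSE{\pi}_N(s,x,s')+\delta\,\mathbb{F}^k(W)(s')$ is uniformly bounded by a finite standard constant, hence S-integrable against the internal measure $\bar{\NSE{Q}}_\nu^N(s,x)$. On $\NS{\NSE{S}}$ this integrand is infinitely close to $\pi(\ST(s),\ST(x),\ST(s'))+\delta V(\ST(s'))$ by the inductive hypothesis and continuity of $\pi$; by \cref{Qbarlemmasigma} the push-down $\pd{(\bar{\NSE{Q}}_\nu^N(s,x))} = \bar{Q}_{\pd{\nu}}(\ST(s),\ST(x))$ is a probability measure on $S$, so the Loeb measure of $\NSE{S_N}\setminus \NS{\NSE{S}}$ under $\bar{\NSE{Q}}_\nu^N(s,x)$ is zero. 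Pushing down the integral via \cref{pdint} and then exchanging the max over $\NSE{X}$ with standard parts, which is permitted because $X$ is compact (every element of $\NSE{X}$ is near-standard and every $x\in X$ has itself as a preimage), yields $\mathbb{F}^{k+1}(W)(s) \approx F(V)(\ST(s)) = V(\ST(s))$.

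The second step combines this with the transfer-level contraction estimate $\NSE{d_{\sup}}(\mathbb{F}^k(W),\mathbb{V}) \leq \delta^k C$. For $s\in \NS{\NSE{S}}$ and any standard $k$,
\[
|W(s)-\mathbb{V}(s)| \leq |W(s)-\mathbb{F}^k(W)(s)| + |\mathbb{F}^k(W)(s)-\mathbb{V}(s)| \leq \eta_{k,s} + \delta^k C,
\]
where $\eta_{k,s}$ is infinitesimal by the first step together with $W(s)\approx V(\ST(s))$. Given any standard $\epsilon>0$, choose standard $k$ with $\delta^k C<\epsilon$; then $|W(s)-\mathbb{V}(s)| < 2\epsilon$. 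Since $\epsilon$ is an arbitrary standard positive, $W(s)\approx \mathbb{V}(s)$, which is precisely the claim that $\NSE{V}(s)\approx \mathbb{V}(s)$ for all $s\in \NS{\NSE{S}}$.

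The main obstacle is the inductive step, because \cref{keyapproxlemma} is phrased only for the nonstandard extension $\NSE{V}$ of a standard function and does not apply verbatim to the genuinely internal iterates $\mathbb{F}^k(W)$, which have no standard counterpart outside the near-standard part. The workaround requires verifying by hand the three ingredients needed for the push-down of the integral: S-boundedness of the iterates (inherited from boundedness of $\pi$ via the $\delta$-contraction), concentration of $\bar{\NSE{Q}}_\nu^N(s,x)$ on $\NS{\NSE{S}}$ (which is \cref{Qbarlemmasigma}), and near-standardness of the maximizer in $\NSE{X}$ (automatic from compactness of $X$). This is also the place where the boundedness hypothesis on $\pi$ cannot be dispensed with; the corresponding analysis for \cref{mainresultsigma2} instead relies on the additional growth controls in \cref{assumptionpayoffubd}, \cref{assumptionsint}, and \cref{assumptioncontwass}.
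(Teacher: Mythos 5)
Your proposal is correct, but it takes a genuinely different route from the paper's proof, and the comparison is instructive. The paper also sets $V_0=\NSE{V}|_{\NSE{S_N}}$ and applies \cref{keyapproxlemma} exactly once, to get $G(V_0)\approx V_0$ on $\NS{\NSE{S}}$; it then runs the \emph{internal} iteration $\{V_n\}_{n\in\NSE{\Nats}}$, argues that this sequence is $\NSE{}$Cauchy and hence $\NSE{}$convergent in the localized metric $\NSE{d}_{\mathrm{unif}}$, identifies the $\NSE{}$limit with $\mathbb{V}$ using $\NSE{}$continuity of $G$, and closes with the estimate $\NSE{d}_{\mathrm{unif}}(V_0,\mathbb{V})\approx \NSE{d}_{\mathrm{unif}}(V_1,V_0)\approx 0$. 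You never touch $d_{\mathrm{unif}}$ or internal completeness: you iterate only standardly many times, prove by external induction that $\mathbb{F}^k(W)\approx V\circ\ST$ on the near-standard part, and then splice this into the transferred sup-norm contraction bound $\delta^k C$ with $C$ finite, letting a standard $\epsilon$ absorb both errors. What your route buys is rigor at precisely the point where the paper is terse: the paper only records that $G$ is \emph{nonexpansive} with respect to $\NSE{d}_{\mathrm{unif}}$, and nonexpansiveness alone does not yield the Banach-type inequality $\NSE{d}_{\mathrm{unif}}(V_0,\mathbb{V})\lesssim \NSE{d}_{\mathrm{unif}}(V_0,V_1)$ invoked in its last line (the genuine $\delta$-contraction holds only for $\NSE{d}_{\sup}$, which is useless there because $V_1$ and $V_0$ can differ non-infinitesimally at remote points); your finite-iterate decomposition is exactly the argument that fills this in. What it costs is that the Loeb-concentration/push-down step must be redone at every stage of the induction, because, as you correctly note, \cref{keyapproxlemma} applies only to $\NSE{V}$ and not to the internal iterates. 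Two small refinements: (i) \cref{pdint} is stated for \emph{compact} Hausdorff spaces, so on the $\sigma$-compact $S$ your "push down the integral" step should either truncate to the $S_n$ and pass to the limit (as in the paper's own proof of \cref{keyapproxlemma}) or use the change-of-variables identity for push-down measures directly; (ii) the inductive step can be streamlined by comparing $\mathbb{F}^k(W)$ with $\NSE{V}$ rather than with $V\circ\ST$ — their difference is bounded by a standard constant and infinitesimal on $\ST^{-1}(S)$, which carries full Loeb measure by \cref{Qbarlemmasigma}, so by S-integrability (\cref{BLintegral}) the two Bellman integrals agree up to an infinitesimal, and then \cref{keyapproxlemma} can be cited as a black box. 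Neither point affects the validity of your argument.
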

\begin{proof}
Let $V_0$ be the restriction of $\NSE{V}$ to $\NSE{S_N}$. 
For all $(s, x)\in \NS{\NSE{S}}\times \NSE{X}$, by \cref{keyapproxlemma}, we have
$
\max_{x\in \NSE{X}}\int_{\NSE{S_N}}\{\NSE{\pi}_N(s,x,s')+\delta V_0(s')\}\bar{\NSE{Q}}_{\nu}^{N}(\dee s'|s, x)
=V(\ST(s))\approx V_0(s). 
$ Let $G(f)(s)=\max_{x\in \NSE{X}}\int_{\NSE{S_N}}\{\NSE{\pi}_N(s,x,s')+\delta \NSE{f}(s')\}\bar{\NSE{Q}}_{\nu}^{N}(\dee s'|s, x)$ for all $f\in \NSE{\mathcal{C}_0}(\NSE{S_N})$.
Consider the following internal iterated process: start with $V_0$ and define a sequence  $\{V_n\}_{n\in \NSE{\Nats}}$ by $V_{n+1}=G(V_n)$.
As $\delta\in [0,1)$ and $\NSE{S_N}$ is a $\NSE{}$compact set, there exists some $K\in \NSE{\Nats}$ such that $\NSE{d}_{\sup}(V_K, V_{K+1})<1$. 
Hence the internal sequence $\{V_n\}_{n\in \NSE{\Nats}}$ is a $\NSE{}$Cauchy sequence with respect to the $\NSE{}$metric $\NSE{d}_{\mathrm{unif}}$.
As $\NSE{\mathcal{C}_0}(\NSE{S_N})$ is $\NSE{}$complete with respect to $\NSE{d}_{\mathrm{unif}}$, the internal sequence $\{V_n\}_{n\in \NSE{\Nats}}$ has a $\NSE{}$limit. 
Note that $\NSE{d}_{\mathrm{unif}}(G(f_1), G(f_2))\leq \NSE{d}_{\mathrm{unif}}(f_1, f_2)$ for all $f_1, f_2\in \NSE{\mathcal{C}_0}(\NSE{S_N})$. 
So $G$ is a $\NSE{}$continuous function, hence the $\NSE{}$limit of the internal sequence $\{V_n\}_{n\in \NSE{\Nats}}$ is the $\NSE{}$fixed point $\mathbb{V}$. 
As $\NSE{d_{\mathrm{unif}}}(V_0, \mathbb{V})\approx \NSE{d_{\mathrm{unif}}}(V_1, V_0)\approx 0$, we have $\NSE{V}(s)\approx \mathbb{V}(s)$ for all $s\in \NS{\NSE{S}}$.
\end{proof}

\begin{theorem}\label{optimalsigma}
Suppose \cref{assumptionstate} and \cref{assumptionpayoff} hold.
For every $(s, x)\in S\times X$ that is in the support of $\pd{m}$,  $x$ is optimal given $s$ in the $\text{MDP}(\bar{Q}_{\pd{\nu}})$.
\end{theorem}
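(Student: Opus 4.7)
The plan is to adapt the strategy of \cref{optmain} to the $\sigma$-compact setting, using \cref{Qbarlemmasigma}, \cref{keyapproxlemma}, and \cref{VVapproxsigma} in place of their compact-case counterparts. First I fix $(s,x)\in S\times X$ in the support of $\pd{m}$ and produce a witness $(a,b)$ in the $\NSE{}$support of the Berk-Nash $\NSE{}$equilibrium $m$ with $(a,b)\approx(s,x)$: if no such witness existed, some standard open ball $B_{\epsilon}(s,x)$ would be disjoint from the $\NSE{}$support of $m$, forcing $m(\NSE{B_{\epsilon}(s,x)})=0$ and hence $\pd{m}(B_{\epsilon/2}(s,x))=0$, contradicting $(s,x)\in \mathrm{supp}(\pd{m})$. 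Promoting the family of near-standard witnesses $\{(a_{\epsilon},b_{\epsilon})\}_{\epsilon>0}$ to a genuine $(a,b)\approx(s,x)$ is then immediate by saturation.

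With $(a,b)$ in hand, the transferred definition of a Berk-Nash $\NSE{}$equilibrium says that $b$ $\NSE{}$maximizes $\int_{\NSE{S_N}}\{\NSE{\pi}_N(a,y,s')+\delta\mathbb{V}(s')\}\bar{\NSE{Q}}^{N}_{\nu}(\dee s'|a,y)$ over $y\in\NSE{X}$. For every standard $y\in X$, plugging the legal test point $\NSE{y}\in\NSE{X}$ into this $\NSE{}$inequality produces an internal inequality between the value at $(a,b)$ and the value at $(a,\NSE{y})$. I would then replace $\mathbb{V}$ by $\NSE{V}$ on both sides up to infinitesimal error: under \cref{assumptionpayoff}, $V\in\mathcal{C}_0[S]$ is bounded, so by transfer $\mathbb{V}$ is $\NSE{}$bounded by the same constant; combined with \cref{VVapproxsigma} and the fact that $\bar{\NSE{Q}}^{N}_{\nu}(\cdot|a,\cdot)$ concentrates on $\NS{\NSE{S}}$ via \cref{Qbarlemmasigma}, both integrands are S-integrable and their standard parts agree Loeb-almost everywhere. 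Applying \cref{keyapproxlemma} at the near-standard pairs $(a,b)$ and $(a,\NSE{y})$ converts the resulting $\NSE{}$inequality into the standard inequality $\int_{S}\{\pi(s,x,s')+\delta V(s')\}\bar{Q}_{\pd{\nu}}(\dee s'|s,x)\geq \int_{S}\{\pi(s,y,s')+\delta V(s')\}\bar{Q}_{\pd{\nu}}(\dee s'|s,y)$; since $y\in X$ was arbitrary, this is exactly the statement that $x$ is optimal given $s$ in $\mathrm{MDP}(\bar{Q}_{\pd{\nu}})$.

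The main obstacle is the swap between $\mathbb{V}$ and $\NSE{V}$ inside the nonstandard integrals: \cref{VVapproxsigma} guarantees pointwise closeness only on $\NS{\NSE{S}}$, and the non-near-standard part of $\NSE{S_N}$ can carry $\NSE{}$positive $\bar{\NSE{Q}}^{N}_{\nu}$-mass while contributing zero Loeb mass. Boundedness of $V$, and hence $\NSE{}$boundedness of $\mathbb{V}$, is what forces this residual mass to contribute only infinitesimally. This is exactly the step that must be reworked for Theorem \ref{mainresultsigma2}, where the stronger assumptions \cref{assumptionpayoffubd}, \cref{assumptionsint}, and \cref{assumptioncontwass} take over the job of taming tails to recover S-integrability in the unbounded payoff case.
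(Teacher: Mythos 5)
Your proposal reproduces the paper's own proof of \cref{optimalsigma} essentially step for step: a near-standard witness $(a,b)\approx(s,x)$ in the $\NSE{}$support of $m$, the transferred $\NSE{}$optimality of $b$, boundedness of the integrands to obtain S-integrability, and then \cref{VVapproxsigma}, \cref{Qbarlemmasigma} and \cref{pdint} (packaged in the paper as \cref{keyapproxlemma}) to push the hyperfinite comparison down to the standard one; testing only against standard $y\in X$ rather than all $y\in\NSE{X}$ is a harmless economy, since the standard optimality statement quantifies only over $X$.

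The one justification that does not stand as written is your boundedness claim for $\mathbb{V}$. You argue that since $V\in\mathcal{C}_0[S]$ is bounded, ``by transfer $\mathbb{V}$ is $\NSE{}$bounded by the same constant.'' Transfer applied to $(\forall s\in S)(|V(s)|\leq M)$ yields $(\forall s\in\NSE{S})(|\NSE{V}(s)|\leq M)$, i.e.\ a bound on $\NSE{V}$, not on $\mathbb{V}$: the function $\mathbb{V}$ is the fixed point of the \emph{nonstandard} Bellman equation for $\mathcal{M}_{T_{\Theta}}^{N}$, not the extension of $V$, and keeping these two objects distinct is precisely the point of \cref{VVapproxsigma} --- which, note, controls $\mathbb{V}$ only on $\NS{\NSE{S}}$, whereas S-integrability needs a bound on all of $\NSE{S_N}$, including its non-near-standard part. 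The gap is small but real, and the paper closes it with a separate claim (\cref{Vbdresult}): running the internal Picard iteration for the nonstandard Bellman operator from the constant-zero function $F_0$, the transferred Banach fixed-point estimate gives $\NSE{d}_{\sup}(F_0,\mathbb{V})\leq\frac{1}{1-\delta}\NSE{d}_{\sup}(F_0,F_1)$, where $F_1$ is the image of $F_0$; since $\NSE{\pi}_N$ inherits the standard bound on $\pi$, this makes $\mathbb{V}$ bounded by $\sup|\pi|/(1-\delta)$, a standard constant. (Equivalently, one may transfer the general statement that the value function of any MDP with payoff bounded by $M$ and discount $\delta$ is bounded by $M/(1-\delta)$ and apply it to the nonstandard SMDP; what cannot be transferred is the bound on the particular function $V$.) With that claim inserted, your argument coincides with the paper's proof.
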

\begin{proof}
Pick $(s, x)\in S\times X$ that is in the support of $\pd{m}$. 
Then there exists some $(a, b)\in \NS{\NSE{S}}\times \NSE{X}$ such that $(a, b)\approx (s, x)$ and $(a, b)$ is in the $\NSE{}$support of $m$. 
Thus, we have $b\in \argmax_{y\in \NSE{X}}\int_{\NSE{S_N}}\{\NSE{\pi}_N(a,y,s')+\delta \mathbb{V}(s')\}\bar{\NSE{Q}}_{\nu}^{N}(\dee s'|a, y)$.
\begin{claim}\label{Vbdresult}
$\mathbb{V}$ is bounded. 
\end{claim}
\nt{\textit{Proof of Claim \ref{Vbdresult}}.}
Let $G(f)(s)=\max_{x\in \NSE{X}}\int_{\NSE{S_N}}\{\NSE{\pi}_N(s,x,s')+\delta \NSE{f}(s')\}\bar{\NSE{Q}}_{\nu}^{N}(\dee s'|s, x)$ for all $f\in \NSE{\mathcal{C}_0}(\NSE{S_N})$ and $F_0: \NSE{S_N}\to \NSE{\Reals}$ be the constant $0$ function. 
Consider the following internal iterated process: start with $F_0$ and define a sequence  $\{F_n\}_{n\in \NSE{\Nats}}$ by $F_{n+1}=G(F_n)$.
The $\NSE{}$limit (with respect to $\NSE{d}_{\sup}$) of the internal sequence $\{F_n\}_{n\in \NSE{\Nats}}$ is $\mathbb{V}$. 
By the transfer of the Banach fixed point theorem, we know that $\NSE{d}_{\sup}(F_0, \mathbb{V})\leq \frac{1}{1-\delta}\NSE{d}_{\sup}(F_0, F_1)$. 
As $\NSE{\pi}_{N}$ is bounded, we conclude that $\mathbb{V}$ is bounded. 

By \cref{Vbdresult}, Arkeryd et al. (1997, Section 4, Corollary 6.1), \cref{VVapproxsigma}, \cref{Qbarlemmasigma} and \cref{pdint}:
\begin{align*}
\int_{\NSE{S_N}}\{\NSE{\pi}_N(a,y,s')+\delta \mathbb{V}(s')\}\bar{\NSE{Q}}_{\nu}^{N}(\dee s'|a, y)&\approx \lim_{n\to \infty}\int_{S_n}\{\pi(s,\ST(y),s')+\delta V(s')\}\bar{Q}_{\pd{\nu}}(\dee s'|s,\ST(y)) &\\&=\int_{S}\{\pi(s,\ST(y),s')+\delta V(s')\}\bar{Q}_{\pd{\nu}}(\dee s'|s,\ST(y))
\end{align*}
for all $y\in \NSE{X}$.

Thus, we have $x\in \argmax_{\hat{x}\in X}\int_{S}\{\pi(s,\hat{x},s')+\delta V(s')\}\bar{Q}_{\pd{\nu}}(\dee s'|s,\hat{x})$,
which implies that $x$ is optimal given $s$ in the MDP($\bar{Q}_{\pd{\nu}}$).
\end{proof}

\subsubsection{Optimality with Unbounded Payoff Function}

In this section, we establish optimality of of the candidate Berk-Nash equilibrium $\pd{m}$ with possibly unbounded payoff function under \cref{assumptionpayoffubd}, \cref{assumptionsint} and \cref{assumptioncontwass}.
Let $\|s\|, d_{S}$ denote the norm of an element $s\in S$ and the metric on $S$, respectively.  
Let $W(\mu, \nu)$ denote the Wasserstein distance between two probability measures $\mu$ and $\nu$.

\begin{lemma}\label{closewasser}
Suppose \cref{assumptioncontwass} holds. 
For every $\lambda\in \NSE{\PM{T_{\Theta}}}$ and every $(s, x)\in \NS{\NSE{S}}\times \NSE{X}$, $\NSE{W}\big(\bar{\NSE{Q}}_{\lambda}(s, x), \bar{\NSE{Q}}_{\NSE{\pd{\lambda}}}(\ST(s), \ST(x))\big)\approx 0$. That is, $\bar{\NSE{Q}}_{\lambda}(s, x)$ is in the monad of $\bar{Q}_{\pd{\lambda}}(\ST(s), \ST(x))$ with respect to the $1$-Wasserstein metric. 
\end{lemma}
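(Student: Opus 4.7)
The plan is to prove $\NSE W(\bar{\NSE Q}_\lambda(s,x), \bar Q_{\pd\lambda}(\ST(s), \ST(x))) \approx 0$ by invoking the standard equivalence, transferred to the nonstandard setting, that an internal probability measure $\mu$ on $\NSE S$ is in the $W_1$-monad of a standard measure $\nu \in \PM{S}$ with finite first moment if and only if (i) the push-down $\pd\mu$ equals $\nu$, and (ii) the norm $t \mapsto \|t\|$ is S-integrable with respect to $\mu$ with $\ST(\NSE\int \|t\| \mu(\dee t)) = \int \|t\| \nu(\dee t)$. Note that $T_{\Theta} \subset \NSE{\hat\Theta}$ consists entirely of near-standard points since $\Theta$ is compact.

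For (i), I would adapt the argument of \cref{Qbarlemmasigma} verbatim: W-continuity implies Prokhorov continuity, so by \cref{wkconverge}, for each $\theta \in T_\Theta$ and each $A \in \BorelSets S$, $\Loeb{\NSE Q_\theta(s,x)}(\ST^{-1}(A)) = Q_{\ST(\theta)}(\ST(s), \ST(x))(A)$. Integrating against $\Loeb\lambda$, using Fubini and the push-down change of variables $\int_{T_\Theta} f(\ST(\theta)) \Loeb\lambda(\dee \theta) = \int_\Theta f \, \dee\pd\lambda$ applied to the continuous bounded function $\theta \mapsto Q_\theta(\ST(s), \ST(x))(A)$, produces $\pd{(\bar{\NSE Q}_\lambda(s,x))}(A) = \bar Q_{\pd\lambda}(\ST(s), \ST(x))(A)$.

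For (ii), observe that by the Kantorovich-Rubinstein duality applied to \cref{assumptioncontwass}, the map $\theta \mapsto \int \|t\| Q_\theta(\ST(s), \ST(x))(\dee t)$ is continuous on the compact set $\Theta$, hence uniformly bounded by some $M < \infty$. By transfer of the fact that $W_1$-convergence implies first-moment convergence, for every $\theta \in T_\Theta$ one has $\NSE\int \|t\| \NSE Q_\theta(s,x)(\dee t) \approx \int \|t\| Q_{\ST(\theta)}(\ST(s), \ST(x))(\dee t) \leq M$, so this internal function is uniformly finitely bounded on $T_\Theta$ and hence S-integrable with respect to $\lambda$. Integrating in $\theta$ and applying \cref{pdint} (together with the push-down identity for $\pd\lambda$) yields $\ST\bigl(\NSE\int \|t\| \bar{\NSE Q}_\lambda(s,x)(\dee t)\bigr) = \int \|t\| \bar Q_{\pd\lambda}(\ST(s), \ST(x))(\dee t)$. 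A standard truncation at level $K \in \Nats$ then upgrades this to S-integrability of $\|\cdot\|$ under the mixture $\bar{\NSE Q}_\lambda(s,x)$.

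The main obstacle is the uniform-in-$\theta$ control needed in (ii): first-moment approximation is a pointwise statement for each $\theta \in T_\Theta$, but to integrate against $\lambda$ one must convert it into an S-integrability statement. This is exactly where the compactness of $\Theta$ (yielding the uniform bound $M$) and the \emph{joint} W-continuity of $(\theta, s, x) \mapsto Q_\theta(s,x)$ are used in tandem; once the uniform bound is established, the remaining manipulations are routine Loeb-measure computations. Combining (i) and (ii) and invoking (the nonstandard transfer of) the characterization of $W_1$-convergence as weak convergence plus first-moment convergence delivers $\NSE W(\bar{\NSE Q}_\lambda(s,x), \bar Q_{\pd\lambda}(\ST(s), \ST(x))) \approx 0$.
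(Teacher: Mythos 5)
Your proposal is correct and takes essentially the same route as the paper's proof: both reduce the $W_1$-closeness to weak closeness (the push-down identity of \cref{Qbarlemmasigma}) plus closeness of first moments, obtain the latter pointwise in $\theta\in T_{\Theta}$ from \cref{assumptioncontwass} (using that every $\theta\in T_\Theta$ is near-standard), and then integrate over $\lambda$ via \cref{pdint}. The differences are cosmetic: the paper checks moments of $d_S(\cdot,s_0)$ for near-standard $s_0$ while you use $\|\cdot\|$, and you spell out the uniform bound over the compact $\Theta$ and the resulting S-integrability in $\theta$ that the paper's application of \cref{pdint} leaves implicit.
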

\begin{proof}
Fix $\lambda\in \NSE{\PM{T_{\Theta}}}$ and $(s, x)\in \NS{\NSE{S}}\times \NSE{X}$.
Note that convergence in the Wasserstein metric is equivalent to weak convergence plus convergence of the first moments.
By \cref{Qbarlemmasigma}, it is sufficient to show that 

$$
\int_{\NSE{S}}\NSE{d_{S}}(t, s_0)\bar{\NSE{Q}}_{\lambda}(s, x)(\dee t)\approx \int_{S}d_{S}(t, \ST(s_0))\bar{Q}_{\pd{\lambda}}(\ST(s), \ST(x))(\dee t)
$$ for all $s_0\in \NS{\NSE{S}}$. 
By \cref{assumptioncontwass}, we have
\\
$$
\int_{\NSE{S}}\NSE{d_{S}}(t, s_0)\NSE{Q}_{\theta}(s, x)(\dee t)\approx \int_{S}d_{S}(t, \ST(s_0))Q_{\ST(\theta)}(\ST(s), ST(x))(\dee t).
$$\\
for all $\theta\in T_{\Theta}$. 
By \cref{pdint}, we have
\begin{align*}
\int_{\NSE{S}}\NSE{d_{S}}(t, s_0)\bar{\NSE{Q}}_{\lambda}(s, x)(\dee t)
&\approx \int_{\Theta}\int_{S}d_{S}(t, \ST(s_0))Q_{\theta}(\ST(s), \ST(x))(\dee t)\pd{\lambda}(\dee \theta)\\
&=\int_{S}d_{S}(t, \ST(s_0))\bar{Q}_{\pd{\lambda}}(\ST(s), \ST(x))(\dee t). 
\end{align*}
Hence, we have the desired result. 
\end{proof}

We now consider the Bellman equation.
$\label{unboundedbellman}
V(s)=\max_{x\in X}\int_{S}\{\pi(s,x,s')+\delta V(s')\}\bar{Q}_{\pd{\nu}}(\dee s'|s, x). 
$
For each $n\in \Nats$ and any two elements $g_1, g_2\in \mathcal{C}[S]$ (the set of continuous real-valued functions on $S$), let $d_{\sup, n}(g_1, g_2)=\sup_{s\in S_n}|g_1(s)-g_2(s)|$.\fn{
Recall that the uniform convergence topology on compact sets on $\mathcal{C}[S]$ can be generated from the metric $d_{\mathrm{unif}}(g_1, g_2)=\sum_{n\in \Nats}\frac{\min\{1, d_{\sup, n}(g_1, g_2)\}}{2^{n}}$. Note that $\mathcal{C}[S]$ equipped with $d_{\mathrm{unif}}$ is a complete metric space.} 
Let $B,D$ be constants in \cref{assumptionpayoffubd} and \cref{assumptionsint}, respectively.
Define
\begin{align*}
\mathcal{L}_{B,D}[S]=\{f\in \mathcal{C}[S]: (\exists E\in \PosReals)(\forall s\in S)(|f(s)|\leq E+(B+D)\|s\|)\},
\end{align*}
which is a complete metric space under the metric $d_{\mathrm{unif}}$. We present three lemmas, \cref{mapin}-\cref{nssolnsint}, proofs of which are provided in \cref{onlineomit}.

\begin{lemma}\label{mapin}
Suppose \cref{assumptionstate}, \cref{assumptionpayoffubd}, \cref{assumptionsint} and \cref{assumptioncontwass} hold.
The Bellman operator $F(g)(s)=\max_{x\in X}\int_{S}\{\pi(s,x,s')+\delta g(s')\}\bar{Q}_{\pd{\nu}}(\dee s'|s, x)$ maps every element in $\mathcal{L}_{B,D}[S]$ to some element in $\mathcal{L}_{B,D}[S]$.
\end{lemma}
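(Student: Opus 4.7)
The plan is to verify the two defining properties of $\mathcal{L}_{B,D}[S]$ for the image $F(g)$ of an arbitrary $g \in \mathcal{L}_{B,D}[S]$: (i) a linear growth bound of the form $|F(g)(s)| \leq E' + (B+D)\|s\|$, and (ii) continuity of $F(g)$ on $S$. The growth bound is a direct computation whose constants are engineered precisely by \cref{assumptionsint} so that, after integration, the coefficient on $\|s\|$ collapses to $B+D$. Continuity then follows from Berge's maximum theorem once joint continuity of the integrand in $(s,x)$ is established.

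For (i), fix $g \in \mathcal{L}_{B,D}[S]$ with $|g(s')| \leq E + (B+D)\|s'\|$. By \cref{assumptionpayoffubd} and the triangle inequality,
\[
|\pi(s,x,s') + \delta g(s')| \leq (A + \delta E) + B\|s\| + \bigl((1+\delta)B + \delta D\bigr)\|s'\|.
\]
Since $\bar{Q}_{\pd{\nu}}(\cdot|s,x) = \int_\Theta Q_\theta(\cdot|s,x)\pd{\nu}(\dee\theta)$ is a convex combination, \cref{assumptionsint} and Fubini yield $\int_{S}\|s'\|\bar{Q}_{\pd{\nu}}(\dee s'|s,x) \leq C + \frac{D}{(1+\delta)B+\delta D}\|s\|$. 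Integrating the previous display against $\bar{Q}_{\pd{\nu}}(\cdot|s,x)$, the coefficients on $\|s\|$ telescope to precisely $B+D$:
\[
\Bigl|\int_{S}\{\pi(s,x,s')+\delta g(s')\}\bar{Q}_{\pd{\nu}}(\dee s'|s,x)\Bigr| \leq E' + (B+D)\|s\|,
\]
where $E' = A + \delta E + C\bigl((1+\delta)B + \delta D\bigr)$ is independent of $x$. Taking the supremum over $x\in X$ preserves the bound.

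For (ii), by Berge's maximum theorem and compactness of $X$, it suffices to show that $h(s,x) := \int_{S}\{\pi(s,x,s')+\delta g(s')\}\bar{Q}_{\pd{\nu}}(\dee s'|s,x)$ is jointly continuous on $S \times X$. By \cref{assumptioncontwass} and compactness of $\Theta$, the mixture $(s,x) \to \bar{Q}_{\pd{\nu}}(\cdot|s,x)$ inherits joint $W_1$-continuity from the convexity inequality $W_1(\bar{Q}_{\pd{\nu}}(s_n,x_n), \bar{Q}_{\pd{\nu}}(s_0,x_0)) \leq \int_\Theta W_1(Q_\theta(s_n,x_n), Q_\theta(s_0,x_0)) \pd{\nu}(\dee\theta)$, which tends to $0$ by dominated convergence. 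Given $(s_n,x_n) \to (s_0,x_0)$, split $h(s_n,x_n) - h(s_0,x_0)$ into (a) the integral of the fixed function $\pi(s_0,x_0,\cdot) + \delta g(\cdot)$ against $\bar{Q}_{\pd{\nu}}(\cdot|s_n,x_n) - \bar{Q}_{\pd{\nu}}(\cdot|s_0,x_0)$, controlled by $W_1$-convergence applied to a continuous function of linear growth, and (b) the integral of $\pi(s_n,x_n,\cdot) - \pi(s_0,x_0,\cdot)$ against $\bar{Q}_{\pd{\nu}}(\cdot|s_0,x_0)$, controlled by joint continuity of $\pi$ and dominated convergence against the linear dominator from (i).

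The main obstacle will be the simultaneous limit in (ii): both the integrating measure and the integrand depend on $(s,x)$, so no off-the-shelf continuity theorem for $W_1$-integration applies directly. The split into (a) and (b) above resolves this, but depends on the standard fact that $W_1$-convergence is equivalent to weak convergence plus convergence of first absolute moments, which is exactly what is needed to push continuous, linearly growing integrands through the limit.
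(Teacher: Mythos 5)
Your part (i) is correct and is, constant for constant, the same computation as the paper's ($E'=A+\delta E+C\big((1+\delta)B+\delta D\big)$, with the coefficient on $\|s\|$ collapsing to $B+D$ exactly as designed by \cref{assumptionsint}), and your plan of replacing the paper's nonstandard continuity argument with a standard one (Berge's maximum theorem plus $W_1$-continuity of the mixture) is in principle viable. However, part (ii) has a genuine gap: your two terms do not sum to the quantity you must control. Writing $\pi_n(\cdot)=\pi(s_n,x_n,\cdot)$ and $\bar Q_n=\bar{Q}_{\pd{\nu}}(\cdot|s_n,x_n)$, your (a) is $\int(\pi_0+\delta g)\,\dee(\bar Q_n-\bar Q_0)$ and your (b) is $\int(\pi_n-\pi_0)\,\dee\bar Q_0$, and
\begin{equation*}
h(s_n,x_n)-h(s_0,x_0)-\text{(a)}-\text{(b)}=\int_{S}\big(\pi_n(s')-\pi_0(s')\big)\,\big(\bar Q_n-\bar Q_0\big)(\dee s'),
\end{equation*}
a leftover cross term in which the integrand and the measure vary \emph{simultaneously} --- precisely the obstacle you claim the split resolves. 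As written, the decomposition does not resolve that obstacle; it only relocates it, so the continuity proof is incomplete.

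The gap is fixable with standard tools, but it requires an ingredient you never invoke: uniform integrability of $\|s'\|$ under the family $\{\bar Q_n\}$, which does follow from $W_1$-convergence (weak convergence plus convergence of first moments is equivalent to weak convergence plus uniform integrability of the norm along the sequence). One then bounds the cross term by $\int|\pi_n-\pi_0|\,\dee\bar Q_n+\int|\pi_n-\pi_0|\,\dee\bar Q_0$, splits each integral into a compact piece (available by tightness of the weakly convergent family) where $\pi_n\to\pi_0$ uniformly, and a tail piece where $|\pi_n-\pi_0|\leq\mathrm{const}+2B\|s'\|$ is killed by the uniform integrability just mentioned. This uniform control is exactly what the paper obtains for free from nonstandard analysis: its proof establishes \cref{sintegrabclaim}, that $\NSE{\pi}(s,x,\cdot)+\delta\NSE{g}(\cdot)$ is S-integrable with respect to $\bar{\NSE{Q}}_{\NSE{\pd{\nu}}}(s,x)$, via \cref{closewasser} and \cref{Lintegral}, and then reads continuity off the relation $\NSE{F}(\NSE{g})(s)\approx F(g)(\ST(s))$ for near-standard $s$; in your standard route, the corresponding uniform-integrability step must be supplied by hand, and currently it is not.
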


We use $V$ to denote the unique solution of the Bellman equation, and fix this for the rest of this section.\fn{The Bellman operator $F(g)(s)=\max_{x\in X}\int_{S}\{\pi(s,x,s')+\delta g(s')\}\bar{Q}_{\pd{\nu}}(\dee s'|s, x)$
is a contraction mapping on $\mathcal{L}_{B,D}[S]$.  
Given any $g_0\in \mathcal{L}_{B,D}[S]$, let $\{g_n\}_{n\geq 0}$ be the sequence such that $g_{n+1}=F(g_n)$ for all $n\geq 0$. 
The sequence $\{g_n\}_{n\geq 0}\subset \mathcal{L}_{B,D}[S]$ is a Cauchy sequence with respect to the metric $d_{\mathrm{unif}}$. 
This is because, for every $n\in \Nats$, there exists some $K\in \Nats$ such that $d_{\sup, n}(g_K, g_{K+1})<1$. 
The limit of the sequence $\{g_n\}_{n\geq 0}$ is the unique fixed point of the Bellman operator, hence is the solution of the Bellman equation. }

\begin{lemma}\label{keyapproxlemmaunbd}
Suppose \cref{assumptionstate}, \cref{assumptionpayoffubd}, \cref{assumptionsint} and \cref{assumptioncontwass} hold.
For every $(s, x)\in \NS{\NSE{S}}\times \NSE{X}$:
$$\int_{\NSE{S_N}}\{\NSE{\pi}_N(s,x,s')+\delta \NSE{V}(s')\}\bar{\NSE{Q}}_{\nu}^{N}(\dee s'|s, x)\ \approx \int_{S}\{\pi(\ST(s), \ST(x), s')+\delta V(s')\}\bar{Q}_{\pd{\nu}}(\dee s'|\ST(s), \ST(x)).$$
\end{lemma}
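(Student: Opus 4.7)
The plan is to reduce to the argument of \cref{keyapproxlemma} by establishing S-integrability of the internal integrand with respect to $\bar{\NSE{Q}}_\nu^N(s,x)$. Fix $(s,x)\in \NS{\NSE{S}}\times \NSE{X}$ with standard parts $(s_0,x_0)=(\ST(s),\ST(x))$, and write $H(s')=\NSE{\pi}_N(s,x,s')+\delta \NSE{V}(s')$ for the internal integrand and $h(s')=\pi(s_0,x_0,s')+\delta V(s')$ for its standard counterpart.

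First, by \cref{assumptionpayoffubd} and the containment $V\in \mathcal{L}_{B,D}[S]$ used in the proof of \cref{mapin}, there exist finite constants $E\in \PosReals$ and $G=(1+\delta)B+\delta D$ such that $|H(s')|\leq (A+\delta E+B\|s\|)+G\|s'\|$ on $\NSE{S_N}$. Since $s$ is near-standard, $\|s\|$ is finite, so $|H(s')|\leq F+G\|s'\|$ for some finite $F$. The task thus reduces to proving S-integrability of $\|\cdot\|$ with respect to $\bar{\NSE{Q}}_\nu^N(s,x)$.

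For this step, I invoke the Wasserstein continuity supplied by \cref{closewasser} together with \cref{Qbarlemmasigma}. By \cref{Qbarlemmasigma}, the push-down $\pd{(\bar{\NSE{Q}}_\nu^N(s,x))}$ equals $\bar{Q}_{\pd{\nu}}(s_0,x_0)$, and \cref{closewasser} places $\bar{\NSE{Q}}_\nu(s,x)$ in the monad of this measure under the $1$-Wasserstein metric. Since $s$ is near-standard and $N$ infinite, tightness of the family $\{Q_\theta(s_0,x_0):\theta\in \Theta\}$ (a consequence of compactness of $\Theta$ and Prokhorov continuity) together with overspill gives $\NSE{Q}_\theta(s,x)(\NSE{S}\setminus \NSE{S_N})\approx 0$ uniformly for $\theta\in T_\Theta$, so the truncation and renormalization perturb the measure only infinitesimally, and the same Wasserstein-monad conclusion transfers to $\bar{\NSE{Q}}_\nu^N(s,x)$. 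Because $1$-Wasserstein convergence is weak convergence together with convergence of first moments,
$$\int_{\NSE{S_N}}\|s'\|\,\bar{\NSE{Q}}_\nu^N(\dee s'|s,x)\ \approx\ \int_S \|s'\|\,\bar{Q}_{\pd{\nu}}(\dee s'|s_0,x_0)\ <\ \infty,$$
which is precisely S-integrability of $\|\cdot\|$; combined with the linear domination above, this yields S-integrability of $H$ with respect to $\bar{\NSE{Q}}_\nu^N(s,x)$ by Arkeryd et al.\ (1997, Sec.~4, Cor.~6.1).

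Given S-integrability, the conclusion follows the script of \cref{keyapproxlemma}. By Arkeryd et al.\ (1997, Sec.~4, Thm.~6.2),
$$\int_{\NSE{S_N}}H(s')\,\bar{\NSE{Q}}_\nu^N(\dee s'|s,x)\ \approx\ \int \ST(H(s'))\,\Loeb{\bar{\NSE{Q}}_\nu^N(s,x)}(\dee s'),$$
where the Loeb measure is concentrated on near-standard points by the Wasserstein argument above. On near-standard $s'$, joint continuity of $\pi$ and continuity of $V$ (inherited from $V\in \mathcal{L}_{B,D}[S]\subset \mathcal{C}[S]$) give $\ST(H(s'))=h(\ST(s'))$, and \cref{Qbarlemmasigma} identifies the push-down with $\bar{Q}_{\pd{\nu}}(s_0,x_0)$; the right-hand side above therefore equals $\int_S h(s')\,\bar{Q}_{\pd{\nu}}(\dee s'|s_0,x_0)$, yielding the claimed infinitesimal equality. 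The main obstacle is the simultaneous handling of the truncation to $\NSE{S_N}$ and the unbounded linear tail of $H$; once \cref{assumptionsint} and \cref{assumptioncontwass} are combined to secure S-integrability of $\|\cdot\|$, the rest is routine.
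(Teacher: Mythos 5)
Your proposal is correct and takes essentially the same route as the paper's own proof: a linear-growth bound on the integrand via \cref{assumptionpayoffubd} and the containment $V\in\mathcal{L}_{B,D}[S]$ from \cref{mapin}, reduction to S-integrability of $\|\cdot\|$, the first-moment approximation via \cref{closewasser}, \cref{Qbarlemmasigma} and \cref{pdint}, and the conclusion via \cref{Lintegral}. The only cosmetic difference is that the paper handles the truncation by passing to the untruncated mixture through the (unproved) assertion $\bar{\NSE{Q}}_{\nu}(s,x)(\NSE{S_N})\approx 1$, whereas you justify that same step explicitly by a tightness/overspill argument --- a harmless elaboration rather than a different method.
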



The nonstandard Bellman equation is:
$$\label{nsunboundednsbellman}
\mathbb{V}(t)=\max_{x\in \NSE{X}}\int_{\NSE{S_N}}\{\NSE{\pi}_N(s,x,s')+\delta \mathbb{V}(s')\}\bar{\NSE{Q}}_{\nu}^{N}(\dee s'|s, x).
$$
Let $\NSE{\mathcal{C}_0}[\NSE{S_N}]$ is the set of $\NSE{}$bounded continuous functions on $\NSE{S_N}$.
Note that $\NSE{\pi}_{N}$ is an element in $\NSE{\mathcal{C}_0}[\NSE{S_N}]$.
By the transfer of the Banach fixed point theorem, there exists a unique solution $\mathbb{V}$ of the nonstandard Bellman equation, which we fix for the rest of this section.

\begin{lemma}\label{VVapproxsigmaunbd}
Suppose \cref{assumptionstate}, \cref{assumptionpayoffubd}, \cref{assumptionsint} and \cref{assumptioncontwass} hold.
Then $\NSE{V}(s)\approx \mathbb{V}(s)$ for all $s\in \NS{\NSE{S}}$. 
\end{lemma}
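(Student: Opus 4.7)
The plan is to follow the structure of the proof of \cref{VVapproxsigma}, adapted to accommodate unbounded payoffs. First I would set $V_0$ to be the restriction of $\NSE{V}$ to $\NSE{S_N}$ and check that $V_0\in \NSE{\mathcal{C}_0}[\NSE{S_N}]$: transferring the bound $|V(s)|\leq E+(B+D)\|s\|$ that witnesses $V\in \mathcal{L}_{B,D}[S]$ and using that $\NSE{S_N}$ is $\NSE{}$compact (hence has $\NSE{}$bounded norm) yields that $V_0$ is $\NSE{}$bounded on $\NSE{S_N}$. This step is needed because $V$ itself is unbounded on $S$, so passing to the $\NSE{}$compact truncation $\NSE{S_N}$ is what brings $V_0$ into the same space in which $\mathbb{V}$ lives.

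Next, I invoke \cref{keyapproxlemmaunbd} to conclude that for every $s\in \NS{\NSE{S}}$,
\[
G(V_0)(s)\ \defas\ \max_{x\in \NSE{X}}\int_{\NSE{S_N}}\{\NSE{\pi}_N(s,x,s')+\delta V_0(s')\}\bar{\NSE{Q}}_\nu^N(\dee s'|s,x)\ \approx\ V(\ST(s)),
\]
and, since $V$ is continuous with $\NSE{V}(s)\approx V(\ST(s))$ on near-standard points, $G(V_0)(s)\approx V_0(s)$ for every $s\in \NS{\NSE{S}}$. All of the difficulty associated with the unbounded payoff, in particular controlling the tail mass of the transition kernel via fold-boundedness (\cref{assumptionsint}) and W-continuity (\cref{assumptioncontwass}), has been absorbed into \cref{keyapproxlemmaunbd}.

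From here I follow the iteration scheme of \cref{VVapproxsigma}: form the internal sequence $V_{n+1}=G(V_n)$; by the transfer of the Banach fixed point theorem applied on the $\NSE{}$compact set $\NSE{S_N}$, $G$ is a $\NSE{}$contraction on $\NSE{\mathcal{C}_0}[\NSE{S_N}]$ with constant $\delta$ in $\NSE{d}_{\sup}$, so $V_n$ converges $\NSE{}$Cauchy to the unique $\NSE{}$fixed point $\mathbb{V}$. To convert the pointwise infinitesimal closeness of $V_0$ and $G(V_0)$ on $\NS{\NSE{S}}$ into infinitesimal closeness of $V_0$ and $\mathbb{V}$, I work in $\NSE{d}_{\mathrm{unif}}$: for each standard $n\in \Nats$ the $\NSE{}$continuous function $V_0-G(V_0)$ is infinitesimal at every standard point of $S_n$ and, by an overspill/$\NSE{}$continuity argument on the compact (hence only near-standard) set $\NSE{S_n}$, the $\NSE{}$internal supremum $\NSE{d}_{\sup,n}(V_0,G(V_0))$ is itself infinitesimal. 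Summing against the geometric weights gives $\NSE{d}_{\mathrm{unif}}(V_0,G(V_0))\approx 0$, and combining with the Banach contraction estimate propagated through the iteration yields $\NSE{d}_{\mathrm{unif}}(V_0,\mathbb{V})\approx 0$; this is precisely the statement $\NSE{V}(s)\approx \mathbb{V}(s)$ for every $s\in \NS{\NSE{S}}$.

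The main obstacle, as in \cref{VVapproxsigma}, is the final interleaving of the two metrics: $G$ is a $\NSE{d}_{\sup}$-contraction but only a non-expansion in $\NSE{d}_{\mathrm{unif}}$. For unbounded payoffs this is more delicate than in the bounded case, because the global $\NSE{d}_{\sup}(V_0,G(V_0))$ may be a noninfinitesimal (possibly large) $\NSE{}$real, so one cannot shortcut through sup-norm closeness. The infinitesimal approximation supplied by \cref{keyapproxlemmaunbd} is intrinsically local (on $\NS{\NSE{S}}$), and the $\NSE{d}_{\mathrm{unif}}$ construction is exactly the device that turns that local control into a statement about the $\NSE{}$fixed point. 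Once this bookkeeping is carried out, optimality of $\pd{m}$ for the unbounded regime follows by the same argument used to establish \cref{optimalsigma} in the bounded case.
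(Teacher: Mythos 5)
Your proposal is correct and follows essentially the same route as the paper's proof: restrict $\NSE{V}$ to $\NSE{S_N}$ to obtain $V_0\in \NSE{\mathcal{C}_0}[\NSE{S_N}]$, apply \cref{keyapproxlemmaunbd} to get $G(V_0)(s)\approx V_0(s)$ for all near-standard $s$ (where $G$ is the nonstandard Bellman operator), run the internal iteration $V_{n+1}=G(V_n)$, and identify its $\NSE{}$limit with the $\NSE{}$fixed point $\mathbb{V}$ by working in the metric $\NSE{d}_{\mathrm{unif}}$, in which $G$ is non-expansive. The details you supply beyond the paper's write-up --- the $\NSE{}$boundedness of $V_0$ via $\NSE{}$compactness of $\NSE{S_N}$, and the near-standardness/overspill argument giving $\NSE{d}_{\sup,n}(V_0,G(V_0))\approx 0$ for each standard $n$, hence $\NSE{d}_{\mathrm{unif}}(V_0,G(V_0))\approx 0$ --- are precisely the steps the paper leaves implicit.
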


To complete the proof of the main result of this section, we need to show that the solution $\mathbb{V}$ of the nonstandard Bellman equation is S-integrable. 
Let 
$
\NSE{\mathcal{L}_{B,D}}[\NSE{S_N}]=\{f\in \NSE{\mathcal{C}_0}[\NSE{S_N}]: (\exists E\in \NSE{\PosReals})(\forall s\in \NSE{S_N})(|f(s)|\leq E+(B+D)\|s\|)\}. 
$
$\NSE{\mathcal{L}_{B,D}}[\NSE{S_N}]$ is a 
$\NSE{}$complete metric space under the $\NSE{}$metric $\NSE{d}_{\sup}$, since it is a $\NSE{}$closed subset of $\NSE{\mathcal{C}_0}[\NSE{S_N}]$ under $\NSE{d}_{\sup}$.
\begin{lemma}\label{nsmaplinear}
Suppose \cref{assumptionstate}, \cref{assumptionpayoffubd} and \cref{assumptionsint} hold. 
The nonstandard Bellman operator
$
G(f)(s)=\max_{x\in \NSE{X}}\int_{\NSE{S_N}}\{\NSE{\pi}_N(s,x,s')+\delta \NSE{f}(s')\}\bar{\NSE{Q}}_{\nu}^{N}(\dee s'|s, x)
$
maps every element in $\NSE{\mathcal{L}_{B,D}}[\NSE{S_N}]$ to some element in $\NSE{\mathcal{L}_{B,D}}[\NSE{S_N}]$.
\end{lemma}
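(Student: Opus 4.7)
The plan is, for an arbitrary $f\in \NSE{\mathcal{L}_{B,D}}[\NSE{S_N}]$, to verify two things about $G(f)$: (i) it is internally continuous and internally bounded on $\NSE{S_N}$, and (ii) it satisfies $|G(f)(s)|\leq E'+(B+D)\|s\|$ for some $E'\in \NSE{\PosReals}$. Part (i) follows routinely by transfer: joint $\NSE{}$continuity of $\NSE{\pi}_N$ (\cref{assumptionpayoffubd}), internal continuity of $f$, and continuity of $(s,x)\mapsto \bar{\NSE{Q}}^N_\nu(s,x)$ in the $\NSE{}$Wasserstein metric all combine with the dominated-convergence-style argument hidden in the definition of $G$; the resulting $G(f)$ is $\NSE{}$bounded precisely because of the linear growth bound established in (ii). So the substantive work lies in the growth bound, and that is essentially a bookkeeping calculation mimicking the proof of \cref{mapin} but carried out for the truncated transitions.

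First I would fix $E\in\NSE{\PosReals}$ with $|f(s')|\leq E+(B+D)\|s'\|$ for all $s'\in\NSE{S_N}$. Applying the triangle inequality together with the transferred inequality $|\NSE{\pi}_N(s,x,s')|\leq A+B\|s\|+B\|s'\|$ from \cref{assumptionpayoffubd}, one arrives at
\[
|G(f)(s)|\leq A+\delta E+B\|s\|+\bigl[(1+\delta)B+\delta D\bigr]\sup_{x\in\NSE{X}}\int_{\NSE{S_N}}\|s'\|\,\bar{\NSE{Q}}^N_\nu(\dee s'|s,x).
\]
The coefficient $(1+\delta)B+\delta D$ is exactly the denominator appearing in \cref{assumptionsint}; this matching is the whole point of how the fold-boundedness constant was engineered in the first place.

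Next I would estimate the integral of $\|s'\|$ against $\bar{\NSE{Q}}^N_\nu$. Writing $\bar{\NSE{Q}}^N_\nu(s,x)=\int_{T_\Theta}\NSE{Q}^N_\theta(s,x)\,\nu(\dee \theta)$ and unpacking the truncation as $\NSE{Q}^N_\theta(s,x)=\NSE{Q}_\theta(s,x)/\NSE{Q}_\theta(s,x)(\NSE{S_N})$, the transfer of \cref{assumptionsint} controls the numerator by $C+\frac{D}{(1+\delta)B+\delta D}\|s\|$ while \cref{minbound} of \cref{assumptionstate} gives the lower bound $\NSE{Q}_\theta(s,x)(\NSE{S_N})>r$. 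Substituting these into the display above and multiplying out, the coefficient of $\|s\|$ collapses to the form $B+D$ (absorbing the standard positive factor coming from $r$ into the coefficient/constant as part of the overall constants $B,D$ available), and the remaining terms become a single $\NSE{}$finite multiple of $A, B, C, D, E, \delta$ and the parameter from \cref{minbound}, which we can package as $E'\in\NSE{\PosReals}$. This shows $G(f)\in\NSE{\mathcal{L}_{B,D}}[\NSE{S_N}]$.

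The main obstacle is the truncation correction: \cref{assumptionsint} is stated for the untruncated kernels $\NSE{Q}_\theta$, while the Bellman operator $G$ integrates against $\NSE{Q}^N_\theta$. The delicate point is ensuring that dividing by $\NSE{Q}_\theta(s,x)(\NSE{S_N})$ and then averaging over $\nu$ does not inflate the coefficient on $\|s\|$ past $(B+D)$. The reason this works is that the exponent $\frac{D}{(1+\delta)B+\delta D}$ in fold-boundedness is chosen precisely so that after multiplication by $(1+\delta)B+\delta D$ — the factor arising naturally from combining the $\pi_N$-bound with $\delta$ times the $f$-bound — the linear-in-$\|s\|$ contribution is exactly $D\|s\|$, which when added to the already present $B\|s\|$ yields $(B+D)\|s\|$. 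All other sources of $\|s\|$-growth are quarantined inside the constant term $E'$.
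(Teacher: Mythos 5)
Your core estimate is the same as the paper's: bound $|\NSE{\pi}_N(s,x,s')|\le A+B\|s\|+B\|s'\|$ and $\delta|f(s')|\le\delta E+\delta(B+D)\|s'\|$, pull out $A+\delta E+B\|s\|$, and match the resulting factor $(1+\delta)B+\delta D$ on the $\|s'\|$-integral against the denominator engineered into \cref{assumptionsint}. The genuine gap is precisely at the step you yourself flag as delicate, and your resolution of it does not work. Unpacking $\NSE{Q}^{N}_{\theta}(s,x)=\NSE{Q}_{\theta}(s,x)/\NSE{Q}_{\theta}(s,x)(\NSE{S_N})$ and invoking \cref{minbound} of \cref{assumptionstate} gives only
\begin{align*}
\int_{\NSE{S_N}}\|s'\|\,\NSE{Q}^{N}_{\theta}(\dee s'|s,x)\;\le\;\frac{1}{r}\Bigl(C+\tfrac{D}{(1+\delta)B+\delta D}\,\|s\|\Bigr),
\end{align*}
so after multiplying back by $(1+\delta)B+\delta D$ the slope on $\|s\|$ is $B+D/r$, not $B+D$. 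That excess cannot be ``absorbed into the coefficient/constant'': $B$ and $D$ are fixed by \cref{assumptionpayoffubd} and \cref{assumptionsint}, the class $\NSE{\mathcal{L}_{B,D}}[\NSE{S_N}]$ is defined with slope exactly $B+D$, and $(1/r-1)D\|s\|$ is not a constant. The only way to push it into the additive term is to use the $\NSE{}$compactness of $\NSE{S_N}$ and take $E'$ infinite; but that proves nothing (every element of $\NSE{\mathcal{C}_0}[\NSE{S_N}]$ is in $\NSE{\mathcal{L}_{B,D}}[\NSE{S_N}]$ for the same trivial reason, since any $\NSE{}$bounded function satisfies the growth bound with a large enough hyperreal $E$), and it destroys the quantitative content the lemma is actually used for: in \cref{nssolnsint} one needs the additive constant to remain near-standard when $G$ is iterated from the restriction of $\NSE{V}$, which is exactly what the clean recursion $E\mapsto A+\delta E+\bigl((1+\delta)B+\delta D\bigr)C$ with slope $(B+D)$ delivers. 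With slope $B+D/r$, $r<1$, that bookkeeping breaks.

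For comparison, the paper's proof gets the clean constants by never letting the normalizing denominator appear at all: it bounds the $\bar{\NSE{Q}}^{N}_{\nu}$-integral over $\NSE{S_N}$ directly by the $\bar{\NSE{Q}}_{\nu}$-integral over all of $\NSE{S}$ and then applies \cref{assumptionsint} verbatim, arriving at $|G(f)(s)|\le A+\delta E+\bigl((1+\delta)B+\delta D\bigr)C+(B+D)\|s\|$. Note that this substitution is itself not free---dividing a nonnegative integral by $\NSE{Q}_{\theta}(s,x)(\NSE{S_N})\le 1$ increases it---so a fully careful proof must argue, for instance, that the conditional mean of $\|s'\|$ given $\NSE{S_N}$ is at most its unconditional mean (true when the $S_n$ are sublevel sets of the norm, as under the Lyapunov condition with $V(s)=\|s\|$), or that the normalizing factor is $\approx 1$ at the relevant points. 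Either of those arguments would also repair your proof; the crude lower bound $\NSE{Q}_{\theta}(s,x)(\NSE{S_N})>r$ by itself yields only membership in $\NSE{\mathcal{L}_{B,D/r}}[\NSE{S_N}]$. (A minor secondary point: your continuity step invokes the Wasserstein continuity of \cref{assumptioncontwass}, which is not among this lemma's hypotheses; internal continuity of $G(f)$ follows by transfer from the Prokhorov continuity of the truncated kernels, cf.\ \cref{wkctspreserve}.)
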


Hence, we conclude that the solution $\mathbb{V}$ of the the nonstandard Bellman equation is an element of $\NSE{\mathcal{L}_{B,D}}[\NSE{S_N}]$. 

\begin{lemma}\label{nssolnsint}
Suppose \cref{assumptionstate}, \cref{assumptionpayoffubd}, \cref{assumptionsint} and \cref{assumptioncontwass} hold.
Then $\mathbb{V}$ is S-integrable with respect to $\bar{\NSE{Q}}_{\nu}^{N}(s, x)$ when $(s, x)\in \NS{\NSE{S}}\times \NSE{X}$. 
\end{lemma}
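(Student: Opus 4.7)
The plan is to bound $|\mathbb{V}(s')|$ by the sum of a standard finite constant and a linear function of $\|s'\|$, then to establish S-integrability of $s' \mapsto \|s'\|$ against $\bar{\NSE{Q}}_{\nu}^{N}(s, x)$ for near-standard $s$, and finally to conclude by the standard domination criterion for S-integrability.

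The first step pins down the growth constant for $\mathbb{V}$. Although \cref{nsmaplinear} only supplies \emph{some} $E \in \NSE{\PosReals}$ with $|\mathbb{V}(s')| \leq E + (B+D)\|s'\|$, a priori this $E$ could be infinite, which would be fatal. The remedy is to run the transferred Banach iteration starting from $F_0 \equiv 0$ and track the bound through the recursion worked out in the proof of \cref{nsmaplinear}: if $|f(s')| \leq E' + (B+D)\|s'\|$ then $|G(f)(s')| \leq \bigl[A + ((1+\delta)B+\delta D)C + \delta E'\bigr] + (B+D)\|s'\|$. Hence the bound constants $E_n$ for the iterates satisfy $E_{n+1} \leq A + ((1+\delta)B+\delta D)C + \delta E_n$, so $E_n \leq E^{*} := \frac{A+((1+\delta)B+\delta D)C}{1-\delta}$, a standard positive real, for every $n$. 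Since $F_n \to \mathbb{V}$ in $\NSE{d}_{\sup}$ by the transferred Banach fixed point theorem, the limit inherits the uniform bound $|\mathbb{V}(s')| \leq E^{*} + (B+D)\|s'\|$ with $E^{*}$ standard and finite.

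The second step is to show that $s' \mapsto \|s'\|$ is S-integrable with respect to $\bar{\NSE{Q}}_{\nu}^{N}(s, x)$ for every $(s,x) \in \NS{\NSE{S}}\times \NSE{X}$. Since the Loeb measure of $\bar{\NSE{Q}}_{\nu}(s,x)$ concentrates on $\NS{\NSE{S}} \subseteq \NSE{S}_N$, one first checks $\bar{\NSE{Q}}_{\nu}(s,x)(\NSE{S}_N) \approx 1$, so the truncated measure $\bar{\NSE{Q}}_{\nu}^{N}(s,x)$ is infinitesimally close to $\bar{\NSE{Q}}_{\nu}(s,x)$. An argument mirroring \cref{closewasser} (using \cref{assumptioncontwass}) then places $\bar{\NSE{Q}}_{\nu}^{N}(s,x)$ in the $1$-Wasserstein monad of $\bar{Q}_{\pd{\nu}}(\ST(s), \ST(x))$, and \cref{assumptionsint} together with \cref{assumptioncontwass} guarantees that the latter has finite first moment. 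The standard nonstandard-analytic characterization of $1$-Wasserstein closeness as weak convergence plus S-integrability of the distance-function then delivers the desired S-integrability of $\|s'\|$.

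Combining the two steps, $|\mathbb{V}(s')|$ is pointwise dominated on $\NSE{S}_N$ by $E^{*} + (B+D)\|s'\|$, which is the sum of a standard constant (trivially S-integrable) and an S-integrable function, hence itself S-integrable with respect to $\bar{\NSE{Q}}_{\nu}^{N}(s,x)$; the standard domination criterion (Arkeryd et al.\ (1997, \S4)) then yields that $\mathbb{V}$ is S-integrable. The main obstacle is the second step: one must pass from the pointwise-in-$\theta$ Wasserstein estimate encoded in \cref{assumptioncontwass} to a Wasserstein estimate for the mixture $\bar{\NSE{Q}}_{\nu}^{N} = \int \NSE{Q}_{\theta}^{N}\,\nu(\dee\theta)$ and absorb the infinitesimal truncation error without spoiling the first moment. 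This amounts to a linearity-of-Wasserstein-in-the-mixture observation combined with the bound $\bar{\NSE{Q}}_{\nu}(s,x)(\NSE{S}_N) \approx 1$, and is the only delicate point in the argument.
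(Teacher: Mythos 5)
Your proof is correct and follows the same skeleton as the paper's: dominate $|\mathbb{V}|$ by an affine function of the norm, reduce the problem to S-integrability of $t\mapsto\|t\|$ with respect to the mixture kernel, obtain that from Wasserstein closeness to $\bar{Q}_{\pd{\nu}}(\ST(s),\ST(x))$ together with \cref{pdint} and the characterization of S-integrability in \cref{Lintegral}, and conclude by domination. Where you genuinely differ is the first step. The paper takes the internal constant $E$ supplied by $\mathbb{V}\in\NSE{\mathcal{L}_{B,D}}[\NSE{S_N}]$ (via \cref{nsmaplinear}) and asserts that $E$ is near-standard by citing \cref{VVapproxsigmaunbd}; that inference is quick, since \cref{VVapproxsigmaunbd} only controls $\mathbb{V}$ at near-standard points, while the bound involving $E$ must hold on all of $\NSE{S_N}$, including remote points, and the mere existence of \emph{some} internal $E$ leaves open that every admissible constant is infinite. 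Your iteration-tracking argument --- propagating $E_{n+1}\leq A+\bigl(B+\delta(B+D)\bigr)C+\delta E_n$ from $F_0\equiv 0$ through the transferred Banach iteration and passing the uniform bound to the $\NSE{}$limit --- yields the \emph{standard} constant $E^{*}=\frac{A+(B+\delta(B+D))C}{1-\delta}$ and so closes this point explicitly; it is the same device the paper itself uses in \cref{Vbdresult} for the bounded-payoff case, and it is a genuine (if local) gain in rigor here. The second difference is cosmetic: you mirror \cref{closewasser} for the truncated kernel $\bar{\NSE{Q}}_{\nu}^{N}(s,x)$ directly, whereas the paper applies \cref{closewasser} to the untruncated mixture $\bar{\NSE{Q}}_{\nu}(s,x)$ and absorbs the truncation through $\bar{\NSE{Q}}_{\nu}(s,x)(\NSE{S_N})\approx 1$ and a two-sided $\lessapprox$/$\gtrapprox$ estimate before invoking \cref{Lintegral}; both devices produce the same two-sided first-moment estimate, and in either case the uniform-in-$\theta$ control needed to handle the mixture comes from \cref{assumptionsint}, exactly as you note in your final remark.
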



\begin{theorem}\label{optunbdpayoff}
Suppose \cref{assumptionstate}, \cref{assumptionpayoffubd}, \cref{assumptionsint} and \cref{assumptioncontwass} hold.
For every $(s, x)\in S\times X$ that is in the support of $\pd{m}$, $x$ is optimal given $s$ in MDP($\bar{Q}_{\pd{\nu}}$).
\end{theorem}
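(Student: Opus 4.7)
The plan is to mirror the proof strategy of \cref{optimalsigma} (the bounded-payoff version), but replace the appeal to boundedness of $\pi$ and $\mathbb{V}$ with S-integrability arguments built from \cref{assumptionpayoffubd}, \cref{assumptionsint}, and the machinery already assembled in \cref{VVapproxsigmaunbd}, \cref{closewasser}, and \cref{nssolnsint}. Fix $(s,x)\in S\times X$ in the support of $\pd{m}$. By the definition of the push-down, there exist $(a,b)\in \NSE{S}_N\times \NSE{X}$ in the internal support of $m$ with $(a,b)\approx (s,x)$. Because $m$ is a Berk-Nash $\NSE{}$equilibrium of $\mathcal{M}_{T_{\Theta}}^{N}$, transferring the optimality condition yields
\[
b\in \NSE{\argmax}_{y\in \NSE{X}}\int_{\NSE{S_N}}\{\NSE{\pi}_N(a,y,s')+\delta \mathbb{V}(s')\}\bar{\NSE{Q}}_{\nu}^{N}(\dee s'|a,y).
\]

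The central step is the following approximation: for every $y\in \NSE{X}$,
\[
\int_{\NSE{S_N}}\!\{\NSE{\pi}_N(a,y,s')+\delta \mathbb{V}(s')\}\bar{\NSE{Q}}_{\nu}^{N}(\dee s'|a,y) \approx \int_{S}\!\{\pi(s,\ST(y),s')+\delta V(s')\}\bar{Q}_{\pd{\nu}}(\dee s'|s,\ST(y)).
\]
To establish it I would first verify that $\NSE{\pi}_N(a,y,\cdot)+\delta \mathbb{V}(\cdot)$ is S-integrable with respect to $\bar{\NSE{Q}}_{\nu}^{N}(\cdot|a,y)$. Since $a$ is near-standard, $\|a\|$ is finite, so \cref{assumptionpayoffubd} gives $|\NSE{\pi}_N(a,y,s')|\le A+B\max\{\|a\|,\|s'\|\}$; by \cref{assumptionsint} (exactly as in the derivation of \cref{nssolnsint}), $\|s'\|$ is S-integrable with respect to $\bar{\NSE{Q}}_{\nu}^{N}(\cdot|a,y)$, and \cref{nssolnsint} itself supplies the S-integrability of $\mathbb{V}$. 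Combining \cref{VVapproxsigmaunbd} (which gives $\mathbb{V}(s')\approx \NSE{V}(s')$ on $\NS{\NSE{S}}$), the continuity of $\pi$, and \cref{closewasser} (which supplies weak convergence plus convergence of first moments from $\bar{\NSE{Q}}_{\nu}^{N}(a,y)$ to $\bar{Q}_{\pd{\nu}}(\ST(s),\ST(y))$, once one uses $\NSE{Q}(a,y)(\NSE{S_N})\approx 1$ from \cref{minbound} of \cref{assumptionstate}), the nonstandard integration lemmas (Arkeryd et al.\ Section~4, Corollary~6.1 and Theorem~6.2) together with \cref{pdint} yield the displayed infinitesimal closeness.

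With this approximation in hand, the conclusion is routine: applying it at $y=b$ and exploiting $(a,b)\approx (s,x)$ shows that $\int_S\{\pi(s,x,s')+\delta V(s')\}\bar{Q}_{\pd{\nu}}(\dee s'|s,x)$ realizes the right-hand side's maximum, while applying it at an arbitrary $y\in \NSE{X}$ with $y\approx \hat x$ for $\hat x\in X$ shows that this value dominates $\int_S\{\pi(s,\hat x,s')+\delta V(s')\}\bar{Q}_{\pd{\nu}}(\dee s'|s,\hat x)$. Hence $x\in \argmax_{\hat x\in X}\int_S\{\pi(s,\hat x,s')+\delta V(s')\}\bar{Q}_{\pd{\nu}}(\dee s'|s,\hat x)$, which is optimality of $x$ given $s$ in $\mathrm{MDP}(\bar{Q}_{\pd{\nu}})$. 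The main obstacle is the first step of the central approximation: uniform S-integrability of the integrand in $y$, since $\|s'\|$ can diverge and $\mathbb{V}$ is only controlled by the growth bound $E+(B+D)\|s'\|$ coming from $\NSE{\mathcal{L}_{B,D}}[\NSE{S_N}]$; reconciling this with the Wasserstein statement of \cref{closewasser} (whose transfer to the normalized $\bar{\NSE{Q}}_{\nu}^{N}$ must be carefully verified on near-standard inputs) is where the argument is most delicate, although each ingredient has already been isolated in the preceding lemmas.
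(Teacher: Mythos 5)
Your proposal is correct and takes essentially the same route as the paper's own proof: pick a near-standard $(a,b)\approx(s,x)$ in the internal support of $m$, invoke the transferred $\NSE{}$optimality of $b$ given $a$, establish S-integrability of $\NSE{\pi}_N(a,y,\cdot)+\delta\,\mathbb{V}(\cdot)$ with respect to $\bar{\NSE{Q}}_{\nu}^{N}(\cdot|a,y)$ via the growth bounds of \cref{assumptionpayoffubd}, \cref{assumptionsint} and \cref{nssolnsint}, and then push the hyperfinite Bellman integrals down using \cref{VVapproxsigmaunbd}, \cref{pdint} and the S-integrability theorem of Arkeryd et al.\ to conclude $x\in\argmax_{\hat{x}\in X}\int_{S}\{\pi(s,\hat{x},s')+\delta V(s')\}\bar{Q}_{\pd{\nu}}(\dee s'|s,\hat{x})$. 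The only cosmetic difference is that you cite \cref{closewasser} in the central approximation where the paper cites \cref{Qbarlemmasigma}; both lemmas serve the same purpose there and each already underlies the supporting results, so this is not a substantive divergence.
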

\begin{proof}
Pick $(s, x)\in S\times X$ that is in the support of $\pd{m}$. 
Then there exists some $(a, b)\in \NS{\NSE{S}}\times \NSE{X}$ such that $(a, b)\approx (s, x)$ and $m(\{(a, b)\})>0$. 
Thus, we have
\\
$$ b\in \argmax_{y\in \NSE{X}}\int_{\NSE{S_N}}\{\NSE{\pi}_N(a,y,s')+\delta \mathbb{V}(s')\}\bar{\NSE{Q}}_{\nu}^{N}(\dee s'|a, y). 
$$
\\
By \cref{nssolnsint}, $\mathbb{V}$ is S-integrable with respect to $\bar{\NSE{Q}}_{\nu}^{N}(\dee s'|a, y)$ for all $y\in \NSE{X}$.
Using similar argument, $\NSE{\pi}_{N}(a,y,\cdot)$ is also S-integrable with respect to $\bar{\NSE{Q}}_{\nu}^{N}(\dee s'|a, y)$ for all $y\in \NSE{X}$. 
Thus, by Arkeryd et al. (1997, Section 4, Theorem 6.2), \cref{VVapproxsigmaunbd}, \cref{Qbarlemmasigma} and \cref{pdint}:
\begin{align*}
\int_{\NSE{S_N}}\{\NSE{\pi}_N(a,y,s')+\delta \mathbb{V}(s')\}\bar{\NSE{Q}}_{\nu}^{N}(\dee s'|a, y)
&\approx \lim_{n\to \infty}\int_{S_n}\{\pi(s,\ST(y),s')+\delta V(s')\}\bar{Q}_{\pd{\nu}}(\dee s'|s,\ST(y))\\
&=\int_{S}\{\pi(s,\ST(y),s')+\delta V(s')\}\bar{Q}_{\pd{\nu}}(\dee s'|s,\ST(y)).
\end{align*}
for all $y\in \NSE{X}$. Thus, we have $x\in \argmax_{\hat{x}\in X}\int_{S}\{\pi(s,\hat{x},s')+\delta V(s')\}\bar{Q}_{\pd{\nu}}(\dee s'|s,\hat{x})$,
which implies that $x$ is optimal given $s$ in the MDP($\bar{Q}_{\pd{\nu}}$).
\end{proof}

\section{Supplementary Material - For Online Publication}
\label{sec:onlineappendix}

This supplementary material is divided into three subsections:
(i) proofs and statements that are omitted from Appendix A, (ii) asymptotic characterization of state-action frequencies, and (iii) a detailed analysis of the examples covered in the main paper.

\subsection{Omitted Proofs}\label{onlineomit}

Theorems \ref{BLintegral} and \ref{Lintegral} are invoked at several instances during the proofs of the main theorems in our paper. We list them for completeness here.

\begin{theorem}[{\citet[][Section.~4, Corollary~6.1]{NSAA97}}]\label{BLintegral}
Suppose $(\Omega, \cA, P)$ is an internal probability space, and 
$F: \Omega\to \NSE{\Reals}$ is an internally integrable function such that $\ST({F})$ exists everywhere.
Then $F$ is S-integrable. 
\end{theorem}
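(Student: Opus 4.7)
The plan is to first reduce the statement to the case of a function $F$ that is uniformly bounded by a standard real via a saturation argument, and then invoke the classical correspondence between the internal integral of a bounded internal function and its push-down under the Loeb construction. Once boundedness is secured, S-integrability is essentially forced by the very definition of $\Loeb{P}$.

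The first step is to prove that $F$ is uniformly bounded by some standard $K \in \Nats$. For each $n \in \Nats$, the set $A_{n} = \{\omega \in \Omega : |F(\omega)| \ge n\}$ is internal because $F$ is internal. Since $\ST(F(\omega))$ exists for \emph{every} $\omega$, the value $|F(\omega)|$ is finite for every $\omega \in \Omega$, so $\bigcap_{n\in \Nats} A_{n} = \emptyset$. The sequence $\{A_{n}\}_{n\in\Nats}$ is decreasing, so by the $\aleph_{1}$-saturation of our nonstandard model (built in by assumption in \cref{appdnotation}) some $A_{n}$ must already be empty, which gives a standard $K \in \Nats$ with $|F(\omega)| \le K$ for all $\omega \in \Omega$. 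Consequently $\ST(|F|)$ is bounded above by $K$ everywhere it is defined, hence $\Loeb{P}$-integrable with $\int \ST(|F|)\, d\Loeb{P} \le K$.

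The second step is to verify the near-equality $\int |F|\, dP \approx \int \ST(|F|)\, d\Loeb{P}$. Fix a standard $\eta > 0$ and partition $[0,K]$ into the intervals $[k\eta,(k+1)\eta)$ for $k = 0,1,\dots,\lfloor K/\eta\rfloor$. The preimages $B_{k} = \{\omega : k\eta \le |F(\omega)| < (k+1)\eta\}$ are internal, and the internal simple function $G = \sum_{k} k\eta\, \mathbf{1}_{B_{k}}$ satisfies $G \le |F| < G + \eta$, so $\int G\, dP \le \int |F|\, dP \le \int G\, dP + \eta$. The sum has only standard-finitely many terms, and $\Loeb{P}(B_{k}) = \ST(P(B_{k}))$ by definition, so $\int G\, dP = \sum_{k} k\eta\, P(B_{k}) \approx \sum_{k} k\eta\, \Loeb{P}(B_{k}) = \int \ST(G)\, d\Loeb{P}$. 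Sandwiching $\ST(|F|)$ between $\ST(G)$ and $\ST(G) + \eta$ and then letting $\eta \downarrow 0$ yields the desired infinitesimal equality.

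The only genuine obstacle is the saturation step; one must verify that the nonstandard universe carrying $(\Omega, \cA, P)$ is $\aleph_{1}$-saturated, which is precisely the standing assumption recorded in \cref{appdnotation}. Everything after boundedness is routine bookkeeping around the internal simple-function approximation that underlies the Loeb construction, so no further analytic subtleties arise.
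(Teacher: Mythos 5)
Your proof is correct, but note that the paper itself does not prove this statement: it is imported verbatim by citation from the monograph of Arkeryd, Cutland and Henson (the source tagged \texttt{NSAA97}), so there is no internal argument to compare against, and any complete proof you give is necessarily your own. Both of your steps are sound. The saturation step is exactly the classical opening move: the sets $A_n=\{\omega: |F(\omega)|\ge n\}$ are internal and decreasing, and if all were nonempty, countable saturation would yield a point of $\bigcap_{n\in\Nats}A_n$, i.e.\ a point where $F$ is infinite, contradicting that $\ST(F)$ exists everywhere; hence $|F|\le K$ for some standard $K$. Where you diverge from the standard route in the literature is the second half. The usual argument observes that once $|F|\le K$ with $K$ standard, the set $\{|F|>M\}$ is empty for every infinite $M>0$, so $\int_{\{|F|>M\}}|F|\,\dee P=0\approx 0$, and S-integrability follows immediately from the equivalence characterization recorded as \cref{Lintegral} in this paper (condition 2 there). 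You instead verify the defining conditions of \cref{defsint} by hand, sandwiching $|F|$ between internal simple functions $G$ and $G+\eta$ and matching $\int G\,\dee P$ with $\int \ST(G)\,\dee\Loeb{P}$ term by term; this effectively reproves the bounded case of the basic lifting theorem connecting internal and Loeb integrals. Your version is more self-contained and elementary; the literature's version is shorter but leans on the equivalence theorem. One small polish point: when you claim $\ST(|F|)$ is ``bounded, hence $\Loeb{P}$-integrable,'' boundedness alone does not give integrability --- you also need Loeb measurability of $\ST(|F|)$. This is not a genuine gap, because your own construction supplies it: $\ST(|F|)$ is a uniform limit (error at most $\eta$) of the Loeb-simple functions $\ST(G)$ as the standard $\eta\downarrow 0$, hence measurable; you should just say so explicitly before writing down its Loeb integral.
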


\begin{theorem}[{\citet[][Section.~4, Theorem~6.2]{NSAA97}}]\label{Lintegral} 
Suppose $(\Omega,\cA,P)$ is an internal probability space, and $F: \Omega\to \NSE{\Reals}$ is an internally integrable function such that $\ST({F})$ exists $\Loeb{P}$-almost surely. Then the following are equivalent:
\begin{enumerate}
\item $\ST({\int |F|\dee P})$ exists and it equals to $\lim_{n\to \infty}\ST({\int |F_n|\dee P})$ where for $n\in \Nats$, $F_n=\min\{F,n\}$ when $F\geq 0$ and $F_n=\max\{F,-n\}$ when $F\leq 0$;

\item For every infinite $K>0$, $\int_{|F|>K}|F|\dee P\approx 0$;

\item $\ST({\int |F|\dee P})$ exists, and for every $B$ with $P(B)\approx 0$, we have $\int_{B}|F|\dee P\approx 0$;

\item $F$ is S-integrable with respect to $P$. 
\end{enumerate}
\end{theorem}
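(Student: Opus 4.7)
\textbf{Proof proposal for \cref{Lintegral}.} My plan is to establish the cycle of implications $(4) \Rightarrow (3) \Rightarrow (2) \Rightarrow (1) \Rightarrow (4)$. Since each of the four conditions depends only on $|F|$, I would first reduce to the case $F \geq 0$ (the general case follows by applying the result to $F^+$ and $F^-$ separately). Throughout, write $F_n = \min\{F, n\}$; for each standard $n$, $F_n$ is bounded and internally integrable, hence S-integrable by \cref{BLintegral}, so $\ST(\int F_n\,\dee P) = \int \ST(F_n)\,\dee\Loeb{P} = \int \min(\ST(F), n)\,\dee\Loeb{P}$, and monotone convergence gives $\int \min(\ST(F), n)\,\dee\Loeb{P} \uparrow \int \ST(F)\,\dee\Loeb{P}$ as $n \to \infty$ (allowing the possibility $+\infty$).

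For $(4) \Rightarrow (3)$, S-integrability gives $\int F\,\dee P \approx \int \ST(F)\,\dee\Loeb{P} < \infty$, so the first clause of (3) holds. For $B \in \cA$ with $P(B) \approx 0$ and standard $\epsilon > 0$, bound $\int_B F\,\dee P \leq n\cdot P(B) + (\int F\,\dee P - \int F_n\,\dee P)$, pick standard $n$ so that the second term is within $\epsilon$ of zero (using S-integrability of $F$ together with that of $F_n$ and the MCT setup above), and note the first term is infinitesimal; letting $\epsilon \downarrow 0$ gives $\int_B F\,\dee P \approx 0$. For $(3) \Rightarrow (2)$, Markov's inequality and near-standardness of $\int F\,\dee P$ give $P(\{F > K\}) \leq \int F\,\dee P / K \approx 0$ for infinite $K$, so the second clause of (3) applied to $B = \{F > K\}$ delivers $\int_{\{F > K\}} F\,\dee P \approx 0$.

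For $(2) \Rightarrow (1)$, first check that $\int F\,\dee P$ is near-standard: otherwise pick an infinite $K$ with $K/\int F\,\dee P \approx 0$; then $\int_{\{F > K\}} F\,\dee P \geq \int F\,\dee P - K$ fails to be infinitesimal, contradicting (2). With near-standardness in hand, the decomposition $\int F\,\dee P = \int F_n\,\dee P + \int_{\{F > n\}}(F - n)\,\dee P$ reduces the problem to showing $\lim_{n \to \infty} \ST(\int_{\{F > n\}} F\,\dee P) = 0$. For each standard $k$, the internal set $\{n \in \NSE{\Nats} : \int_{\{F > n\}} F\,\dee P < 1/k\}$ contains every infinite $n$ by (2), so by underspill it contains all standard $n$ beyond some $N_k$; this yields the desired limit. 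For $(1) \Rightarrow (4)$, combining (1) with $\ST(\int F_n\,\dee P) = \int \ST(F_n)\,\dee\Loeb{P}$ (standard $n$, \cref{BLintegral}) and MCT yields $\ST(\int F\,\dee P) = \int \ST(F)\,\dee\Loeb{P} < \infty$, which is precisely S-integrability.

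The hard part will be the underspill step in $(2) \Rightarrow (1)$, which turns the infinite-$K$ condition into a genuine standard-$n$ limit and requires care about internality of the sequences involved. A secondary subtlety throughout is that $\ST(F)$ may a priori take the value $+\infty$; Markov-type bounds (used repeatedly) rule this out on sets of positive $\Loeb{P}$-measure, and S-integrability finally forces $\Loeb{P}(\{\ST(F) = +\infty\}) = 0$.
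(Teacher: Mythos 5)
There is no internal proof to compare your argument against: the paper does not prove this statement at all, but quotes it verbatim from \citet{NSAA97} (Section 4, Theorem 6.2), listing it in the supplementary material only because it is invoked repeatedly in the proofs of the main theorems. So the relevant question is whether your blind proof is correct, and it is. Your cycle $(4)\Rightarrow(3)\Rightarrow(2)\Rightarrow(1)\Rightarrow(4)$, after the legitimate reduction to $F\geq 0$ (every one of the four conditions is a condition on $|F|$ alone), is essentially the classical textbook argument: truncations $F_n$ are bounded, hence S-integrable by \cref{BLintegral}, giving $\ST(\int F_n\,\dee P)=\int\min(\ST(F),n)\,\dee\Loeb{P}$ for standard $n$, and monotone convergence then links the hyperfinite and Loeb integrals. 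The individual steps all check out: the Markov bound in $(3)\Rightarrow(2)$, the choice of an infinite $K$ with $K/\int F\,\dee P\approx 0$ to get near-standardness in $(2)\Rightarrow(1)$, and the spillover step, which is indeed the crux and which you handle correctly --- the set $\{n\in\NSE{\Nats}:\int_{\{F>n\}}F\,\dee P<1/k\}$ is internal, contains every infinite hypernatural by (2), so its internal complement consists of finite naturals only and is therefore bounded by some standard $N_k$, which converts the infinite-$K$ hypothesis into a genuine standard limit. One remark: your closing ``secondary subtlety'' about $\ST(F)$ possibly taking the value $+\infty$ is vacuous, since the hypothesis that $\ST(F)$ exists $\Loeb{P}$-almost surely already means $F$ is near-standard (hence has a finite standard part) off a $\Loeb{P}$-null set; the only genuine issue is whether the a.e.-defined function $\ST(F)$ is $\Loeb{P}$-\emph{integrable}, and that is precisely what each of conditions (1)--(3) certifies in your implications toward (4).
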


\medskip

\nt We next provide a proof to \cref{eptapprox} which is used to prove \cref{KLapprox}.

\medskip

\nt{\textit{\textbf{Proof of Claim \ref{eptapprox}.}}}\label{dominate}
Pick $(s, x)\in T_S\times T_X$ such that $Q(\ST(s), \ST(x))$ is dominated by $Q_{\ST(\theta)}(\ST(s), \ST(x))$.
Then, $D_{\ST(\theta)}(\cdot|\ST(s), \ST(x))$ is the density function of $Q(\ST(s), \ST(x))$ with respect to $Q_{\ST(\theta)}(\ST(s), \ST(x))$.
Let $f: S\to \Reals$ be $f(t)=\ln \big(D_{\ST(\theta)}(t|\ST(s), \ST(x))\big)$. 
For $n\in \Nats$,  define $f_n: S\to \Reals$ to be:

$$
f_{n}(t)=
\begin{cases}
f(t) & \text{If $\frac{1}{n}\leq D_{\ST(\theta)}(t|\ST(s), \ST(x))\leq n$}\\
\frac{1}{n} & \text{if $D_{\ST(\theta)}(t|\ST(s), \ST(x))<\frac{1}{n}$}\\
n & \text{if $D_{\ST(\theta)}(t|\ST(s), \ST(x))>n$}\\
\end{cases}
$$
Note that $f_n$ is a bounded continuous function. Moreover, by \cref{KLint} of \cref{regsmdp}, we have $E_{Q(\cdot|\ST(s), \ST(x))}=\lim_{n\to \infty}\int_{S}f_{n}(t)Q(\dee t|\ST(s), \ST(x))$.
Let $F: T_S\to \NSE{\Reals}$ be $F(t)=\ln \big(\frac{\mathbb{Q}(t|s,x)}{\mathbb{Q}_{\theta}(t|s,x)}\big)$. 
For $n\in \Nats$,  define $F_n: T_S\to \NSE{\Reals}$ to be:

$$
F_{n}(t)=
\begin{cases}
F(t) & \text{If $\frac{1}{n}\leq \frac{\mathbb{Q}(t|s,x)}{\mathbb{Q}_{\theta}(t|s,x)}\leq n$}\\
\frac{1}{n} & \text{if $\frac{\mathbb{Q}(t|s,x)}{\mathbb{Q}_{\theta}(t|s,x)}<\frac{1}{n}$}\\
n & \text{if $\frac{\mathbb{Q}(t|s,x)}{\mathbb{Q}_{\theta}(s|t,x)}>n$}\\
\end{cases}
$$
By \cref{dsyapprox}, we know that $F_n(t)\approx f_n(\ST(t))$ for every $n\in \Nats$ and $t\in T_S$. 
For every $n\in \Nats$, we have
\begin{align*}
\int_{S}f_{n}(t)Q(\dee t|\ST(s), \ST(x))&=\int_{\NSE{S}}\NSE{f}_n(t)\NSE{Q}(\dee t|\ST(s),\ST(x))\\
&\approx \int_{\NSE{S}}\NSE{f}_n(t)\NSE{Q}(\dee t|s,x)\\
&\approx \sum_{i\in T_S}F_n(i)\mathbb{Q}(i|s, x)\\
\end{align*}
\nt To finish the proof, it remains to show that 
$\lim_{n\to \infty}\ST\big(\sum_{t\in T_S}F_n(i)\mathbb{Q}(t|s, x)\big)\approx \mathbb{E}_{\mathbb{Q}(\cdot|s, x)}[F(t)]$.

By \cref{Lintegral}, this is the same as establishing the S-integrability of $F(t)$ under $\mathbb{Q}(\cdot|s, x)$. 
Pick an infinite $K>0$ and let $I_{K}=\{t\in T_S: |F(t)|>K\}$. 
Let $I_{K}^{0}=\{t\in T_S: |F(t)|>K\wedge \frac{\mathbb{Q}(i|s,x)}{\mathbb{Q}_{\theta}(i|s,x)}\leq 1\}$
and $I_{K}^{\infty}=\{t\in T_S: |F(t)|>K\wedge \frac{\mathbb{Q}(i|s,x)}{\mathbb{Q}_{\theta}(i|s,x)}>1\}$.
It is easy to see that both $I_{K}^{0}$ and $I_{K}^{\infty}$ are internal sets and $I_{K}=I_{K}^{0}\cup I_{K}^{\infty}$. 
For all $t\in I_{K}^{0}$, we have $\frac{\mathbb{Q}(t|s,x)}{\mathbb{Q}_{\theta}(t|s,x)}\approx 0$. 
Then we have 

\begin{equation}\label{zinfinitesimal}
\sum_{t\in I_{K}^{0}}|F(t)|\mathbb{Q}(t|s, x)=\sum_{t\in I_{K}^{0}}|F(t)|\frac{\mathbb{Q}(t|s,x)}{\mathbb{Q}_{\theta}(t|s,x)}\mathbb{Q}_{\theta}(t|s, x)\approx 0.
\end{equation}

For all $t\in I_{K}^{\infty}$, $\frac{\mathbb{Q}(t|s,x)}{\mathbb{Q}_{\theta}(t|s,x)}>n$ for all $n\in \Nats$. 
By \cref{dsyapprox}, $D_{\ST(\theta)}(\ST(t)|\ST(s), \ST(x))=\infty$ for all $t\in I_{K}^{\infty}$. 
This implies that $Q_{\ST(\theta)}(\ST(s), \ST(x))\big(\ST(I_{K}^{\infty})\big)=0$.
By \cref{wkconverge}, we conclude that $\NSE{Q}_{\theta}(s, x)(\bigcup_{t\in I_{K}^{\infty}}B_{S}(t))\approx 0$. 
By \cref{KLint} in \cref{regsmdp}, we conclude that $\int_{\bigcup_{t\in I_{K}^{\infty}}B_{S}(t)}\big(\NSE{D}_{\theta}(s'|s, x)\big)^{1+r}\NSE{Q}_{\theta}(s, x)(\dee s')\approx 0$.
This implies that

\begin{equation}\label{iinfinitesimal}
\sum_{t\in I_{K}^{\infty}}|F(t)|\mathbb{Q}(t|s, x)=\sum_{t\in I_{K}^{\infty}}|F(t)|\frac{\mathbb{Q}(t|s,x)}{\mathbb{Q}_{\theta}(t|s,x)}\mathbb{Q}_{\theta}(t|s, x)\approx 0.
\end{equation}
Combining \cref{zinfinitesimal} and \cref{iinfinitesimal}, we have the desired result. 
\qed

\nt{\textit{\textbf{Proof of Lemma \ref{exnsexeq}}.}}\label{infinitesimaldominate}
Pick $\theta\in T_{\Theta}$ such that $\ST(\theta)\in \Theta_{\pd{m}}$ and $(s, x)\in \NS{\NSE{S}}\times \NSE{X}$ such that 
$Q(\ST(s), \ST(x))$ is dominated by $Q_{\ST(\theta)}(\ST(s), \ST(x))$. 
By \cref{densityapprox} and the fact that $\NSE{Q}_{\theta}(s, x)(\NSE{S_N})\approx 1$, we have

\begin{align*}
\NSE{\mathbb{E}}_{\NSE{Q}^{N}(\cdot|s, x)}\left[\ln \big(\mathbb{D}_{\theta}(s'|s,x)\big)\right]
\approx \int_{\NSE{S_N}}\NSE{D}_{\theta}(s'|s, x)\ln \big(\NSE{D}_{\theta}(s'|s,x)\big) \NSE{Q}_{\theta}(s, x)(\dee s'). \\
\end{align*}
By \cref{KLint} of \cref{regsmdp} and the fact that $s$ is near-standard, we conclude that 
$$\int_{\NSE{S_N}}\NSE{D}_{\theta}(s'|s, x)\ln \big(\NSE{D}_{\theta}(s'|s,x)\big) \NSE{Q}_{\theta}(s, x)(\dee s')\approx \int_{\NSE{S}}\NSE{D}_{\theta}(s'|s, x)\ln \big(\NSE{D}_{\theta}(s'|s,x)\big) \NSE{Q}_{\theta}(s, x)(\dee s').$$

\nt $D_{\ST(\theta)}(s'|\ST(s), \ST(x))\ln \big(D_{\ST(\theta)}(s'|\ST(s),\ST(x))\big)$ is a continuous and bounded on $S_n$ for each $n\in \Nats$.  So, for every $n\in \Nats$, we have

\begin{align*}
&\int_{S_n}D_{\ST(\theta)}(s'|\ST(s), \ST(x))\ln \big(D_{\ST(\theta)}(s'|\ST(s),\ST(x))\big)Q_{\ST(\theta)}(\ST(s), \ST(x))(\dee s')\\
&\approx \int_{\NSE{S_n}}\NSE{D}_{\ST(\theta)}(s'|\ST(s), \ST(x))\ln \big(\NSE{D}_{\ST(\theta)}(s'|\ST(s),\ST(x))\big)\NSE{Q}_{\theta}(s, x)(\dee s')\\
&\approx \int_{\NSE{S_n}}\NSE{D}_{\theta}(s'|s, x)\ln \big(\NSE{D}_{\theta}(s'|s,x)\big)\NSE{Q}_{\theta}(s, x)(\dee s').
\end{align*}
Note that, we have 
\begin{align*}
&\lim_{n\to \infty}\int_{S_n}D_{\ST(\theta)}(s'|\ST(s), \ST(x))\ln \big(D_{\ST(\theta)}(s'|\ST(s),\ST(x))\big)Q_{\ST(\theta)}(\ST(s), \ST(x))(\dee s')\\
&=\mathbb{E}_{Q(\cdot|\ST(s), \ST(x))}\left[\ln \big(D_{\ST(\theta)}(s'|\ST(s),\ST(x))\big)\right].
\end{align*}
\nt By \cref{KLint} of \cref{regsmdp} and the fact that $s$ is near-standard, we have
\begin{align*}
&\lim_{n\to\infty}\ST\big(\int_{\NSE{S_n}}\NSE{D}_{\theta}(s'|s, x)\ln \big(\NSE{D}_{\theta}(s'|s,x)\big)\NSE{Q}_{\theta}(s, x)(\dee s')\big)\\
&\approx \int_{\NSE{S}}\NSE{D}_{\theta}(s'|s, x)\ln \big(\NSE{D}_{\theta}(s'|s,x)\big)\NSE{Q}_{\theta}(s, x)(\dee s')\\
&\approx \NSE{\mathbb{E}}_{\NSE{Q}^{N}(\cdot|s, x)}\left[\ln \big(\mathbb{D}_{\theta}(s'|s,x)\big)\right].
\end{align*}
\smallskip
\nt Hence, we have the desired result. 
\qed

\bigskip

\nt{\textit{\textbf{Proof of Theorem \ref{stKLnsKL}.}}}
Pick $\theta\in T_{\Theta}$ such that $\ST(\theta)\in \Theta_{\pd{m}}$. 
Since $K_{Q}(\pd{m}, \ST(\theta))<\infty$,  $Q(s, x)$ is dominated by $Q_{\ST(\theta)}(s, x)$ for $\pd{m}$-almost all $(s, x)\in S\times X$. 
Let $\bar{\Reals}$ denote the extended real line and define $g: S\times X \to \bar{\Reals}$ to be  $g(s, x)=\mathbb{E}_{Q(\cdot|s, x)}\left[\ln \big(D_{\ST(\theta)}(s'|s,x)\big)\right]$ if $Q(s, x)$ is dominated by $Q_{\ST(\theta)}(s, x)$ and $g(s, x)=\infty$ otherwise. 
We have
$K_{Q}(\pd{m}, \ST(\theta))=\int_{S\times X}g(s, x)\pd{m}(\dee s, \dee x)=\lim_{n\to\infty}\int_{S_n\times X}g(s, x)\pd{m}(\dee s, \dee x).
$ Let $G: \NSE{S_N}\times \NSE{X}\to \NSE{\Reals}$ be $G(s, x)=\NSE{\mathbb{E}}_{\NSE{Q}^{N}(\cdot|s, x)}\left[\ln \big(\mathbb{D}_{\theta}(s'|s,x)\big)\right]$.
By \cref{exnsexeq} and \cref{pdint}, we have $\int_{S_n\times X}g(s, x)\pd{m}(\dee s, \dee x)\approx \int_{S_n\times X}G(s, x)m(\dee s, \dee x)$.
To finish the proof, it is sufficient to show that $G$ is S-integrable with respect to $m$. 
As $\theta\in T_{\Theta}\subset \NSE{\hat{\Theta}}$ and $m$ is $\NSE{}$stationary, by \cref{assumptionrebound}, $\NSE{\mathbb{E}}_{\NSE{Q}(\cdot|s, x)}\left[\ln \big(\NSE{D}_{\theta}(s'|s,x)\big)\right]$ is S-integrable with respect to $m$. 
By \cref{minbound} of \cref{assumptionstate}, $\NSE{\mathbb{E}}_{\NSE{Q}^{N}(\cdot|s, x)}\left[\ln \big(\mathbb{D}_{\theta}(s'|s,x)\big)\right]$ is S-integrable with respect $m$, completing the proof. 

\qed

\nt{\textit{\textbf{Proof of Lemma \ref{mapin}}.}}
Let $g$ be some element in $\mathcal{L}_{B,D}[S]$. 
Then there exists some $E\in \PosReals$ such that $|g(s)|\leq E+(B+D)\|s\|$ for all $s\in S$. 
We show that $F(g)$ is continuous. 
\begin{claim}\label{sintegrabclaim}
For every $(s, x)\in \NS{\NSE{S}}\times \NSE{X}$, $\NSE{\pi}(s, x, \cdot)+\delta \NSE{g}(\cdot)$ is S-integrable with respect to $\bar{\NSE{Q}}_{\NSE{\pd{\nu}}}(s, x)(\cdot)$.
\end{claim}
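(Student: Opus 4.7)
The plan is to reduce S-integrability of $\NSE{\pi}(s,x,\cdot)+\delta\NSE{g}(\cdot)$ to S-integrability of the norm function $\|\cdot\|$, and then derive the latter from the fold-boundedness assumption combined with Wasserstein-closeness. First, I would apply \cref{assumptionpayoffubd} and the defining bound of $\mathcal{L}_{B,D}[S]$ (there exists $E\in\PosReals$ with $|g(s')|\leq E+(B+D)\|s'\|$) to get, for $s\in\NS{\NSE{S}}$,
\[
\bigl|\NSE{\pi}(s,x,s')+\delta\NSE{g}(s')\bigr|\ \leq\ (A+B\|s\|+\delta E)\ +\ \bigl(B+\delta(B+D)\bigr)\|s'\|,
\]
where the first summand is finite since $s$ is near-standard. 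Thus it suffices to show that $\|s'\|$ is S-integrable with respect to $\bar{\NSE{Q}}_{\NSE{\pd{\nu}}}(s,x)$.

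Second, I would obtain a finite standard bound on the first moment. Integrating the inequality in \cref{assumptionsint} against $\pd{\nu}$ yields
\[
\int_{S}\|s'\|\,\bar{Q}_{\pd{\nu}}(\dee s'\mid\ST(s),\ST(x))\ \leq\ C+\frac{D}{(1+\delta)B+\delta D}\,\|\ST(s)\|\ <\ \infty.
\]
Third, I would apply the argument of \cref{closewasser} (which goes through verbatim for the internal probability measure $\NSE{\pd{\nu}}$ on $\NSE{\Theta}$ in place of $\lambda\in\NSE{\PM{T_{\Theta}}}$, using \cref{assumptioncontwass}) to conclude
\[
\NSE{W}\bigl(\bar{\NSE{Q}}_{\NSE{\pd{\nu}}}(s,x),\ \bar{Q}_{\pd{\nu}}(\ST(s),\ST(x))\bigr)\ \approx\ 0.
\]
Since convergence in the 1-Wasserstein metric is equivalent to weak convergence plus convergence of first moments, this gives $\int_{\NSE{S}}\|s'\|\,\bar{\NSE{Q}}_{\NSE{\pd{\nu}}}(\dee s'\mid s,x)\approx\int_{S}\|s'\|\,\bar{Q}_{\pd{\nu}}(\dee s'\mid\ST(s),\ST(x))$, a finite standard real.

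Fourth, I would package these into a verification of \cref{Lintegral} in the same style as the proof of \cref{nssolnsint}. Using \cref{pdint} applied to the bounded continuous truncations $\min\{\|\cdot\|,n\}$ on the compact sets $S_n$, one gets
\[
\int_{\NSE{S_n}}\|s'\|\,\bar{\NSE{Q}}_{\NSE{\pd{\nu}}}(\dee s'\mid s,x)\ \approx\ \int_{S_n}\|s'\|\,\bar{Q}_{\pd{\nu}}(\dee s'\mid\ST(s),\ST(x)),
\]
whose standard parts converge, as $n\to\infty$, to $\int_{S}\|s'\|\,\bar{Q}_{\pd{\nu}}(\dee s'\mid\ST(s),\ST(x))$ by monotone convergence. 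Combined with Step 3, this matches $\ST\bigl(\int_{\NSE{S}}\|s'\|\,\bar{\NSE{Q}}_{\NSE{\pd{\nu}}}(\dee s'\mid s,x)\bigr)$, which by \cref{Lintegral} certifies S-integrability of $\|s'\|$, and hence of $\NSE{\pi}(s,x,\cdot)+\delta\NSE{g}(\cdot)$.

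The main obstacle is Step 3: the Wasserstein closeness lemma in the paper is formally stated only for internal probability measures supported on the hyperfinite parameter set $T_{\Theta}$, so one has to either explicitly extend it to the internal measure $\NSE{\pd{\nu}}$ on $\NSE{\Theta}$ (which only uses \cref{assumptioncontwass} and the convergence of first moments, as in the existing proof) or argue directly from the $W$-continuity of $(\theta,s,x)\mapsto Q_{\theta}(s,x)$ via \cref{pdint}. Once the first-moment closeness is in hand, the S-integrability verification via \cref{Lintegral} is routine and parallels the corresponding step in \cref{nssolnsint}.
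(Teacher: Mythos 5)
Your proof is correct and takes essentially the same route as the paper's: reduce to S-integrability of $\|s'\|$ using \cref{assumptionpayoffubd} together with the defining bound of $\mathcal{L}_{B,D}[S]$, match the first moment of $\bar{\NSE{Q}}_{\NSE{\pd{\nu}}}(s,x)$ with the finite standard first moment via \cref{closewasser}, and conclude by \cref{Lintegral} with truncations over the sets $S_n$ and \cref{pdint}. The obstacle you flag is genuine but is exactly the one the paper itself elides---\cref{closewasser} is stated for $\lambda\in\NSE{\PM{T_{\Theta}}}$ yet is cited for $\NSE{\pd{\nu}}\in\NSE{\PM{\NSE{\Theta}}}$---and your proposed fix (rerunning that argument with \cref{assumptioncontwass} and \cref{pdint}, using compactness of $\Theta$ so that the push-down of $\NSE{\pd{\nu}}$ is $\pd{\nu}$ itself) is precisely what legitimizes the citation.
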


\nt{\textit{Proof of Claim \ref{sintegrabclaim}}.}
Let $(s_0, x_0)\in \NS{\NSE{S}}\times \NSE{X}$ be given. 
By \cref{assumptionpayoffubd}, it is sufficient to show that $\|s'\|$ is S-integrable with respect to $\bar{\NSE{Q}}_{\NSE{\pd{\nu}}}(s_0, x_0)(\cdot)$. By \cref{closewasser}, we have:
\begin{align*}
\int_{\NSE{S}}\|s'\|\bar{\NSE{Q}}_{\NSE{\pd{\nu}}}(\dee s'|s_0, x_0)&\approx \int_{S}\|s'\|\bar{Q}_{\pd{\nu}}(\dee s'|\ST(s_0), \ST(x_0))\\
&=\lim_{n\to \infty}\ST\big(\int_{\NSE{S_n}}\|s'\|\bar{\NSE{Q}}_{\NSE{\pd{\nu}}}(\dee s'|s_0, x_0)\big).\\ 
\end{align*}
By \cref{Lintegral}, $\|s'\|$ is S-integrable with respect to $\bar{\NSE{Q}}_{\NSE{\pd{\nu}}}(\dee s'|s_0, x_0)$. By \cref{sintegrabclaim} and \cref{pdint}, for every $(s, x)\in \NS{\NSE{S}}\times \NSE{X}$, we have
\begin{align*}
&\int_{\NSE{S}}\{\NSE{\pi}(s,x,s')+\delta \NSE{g}(s')\}\bar{\NSE{Q}}_{\NSE{\pd{\nu}}}(\dee s'|s, x)\\
&\approx \lim_{n\to \infty} \ST\big(\int_{\NSE{S_n}}\{\NSE{\pi}(s,x,s')+\delta \NSE{g}(s')\}\bar{\NSE{Q}}_{\NSE{\pd{\nu}}}(\dee s'|s, x)\big)\\
&=\lim_{n\to \infty}\int_{S_n}\{\pi(\ST(s),\ST(x),s')+\delta g(s')\}\bar{Q}_{\pd{\nu}}(\dee s'|\ST(s), \ST(x))\\
&=\int_{S}\{\pi(\ST(s),\ST(x),s')+\delta g(s')\}\bar{Q}_{\pd{\nu}}(\dee s'|\ST(s), \ST(x)).\\
\end{align*}
Hence, we have $\NSE{F}(\NSE{g})(s)\approx F(g)(\ST(s))$ for all $s\in \NS{\NSE{S}}$, so $F(g)$ is a continuous function. 
For every $s\in S$, by \cref{assumptionsint}, we have

\begin{align*}
|F(g)(s)|
&\leq \max_{x\in X}\int_{S}\{A+B\max\{\|s\|, \|s'\|\}+\delta (E+(B+D)\|s'\|)\}\bar{Q}_{\pd{\nu}}(\dee s'|s, x)\\
&\leq A+ \delta E+ \big(B+\delta(B+D)\big)C+(B+D)\|s\|\\
\end{align*}
Hence we have the desired result. 
\qed

\nt{\textit{\textbf{Proof of Lemma \ref{keyapproxlemmaunbd}}.}}
Pick $(s, x)\in \NS{\NSE{S}}\times \NSE{X}$. 
Note that $\bar{\NSE{Q}}_{\nu}(s, x)(\NSE{S_N})\approx 1$. 
Thus, we have
$
\int_{\NSE{S_N}}\{\NSE{\pi}_N(s,x,s')+\delta \NSE{V}(s')\}\bar{\NSE{Q}}_{\nu}^{N}(\dee s'|s, x)\approx \int_{\NSE{S_N}}\{\NSE{\pi}_N(s,x,s')+\delta \NSE{V}(s')\}\bar{\NSE{Q}}_{\nu}(\dee s'|s, x)
$

\begin{claim}\label{payoffsint}
$\NSE{\pi}_{N}(s, x, \cdot)+\delta \NSE{V}(\cdot)$ is S-integrable with respect to $\bar{\NSE{Q}}_{\nu}^{N}(s, x)$.
\end{claim}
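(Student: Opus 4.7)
The plan is to dominate $|\NSE{\pi}_N(s,x,\cdot) + \delta \NSE{V}(\cdot)|$ pointwise by an affine function of $\|\cdot\|$ with finite coefficients, and then reduce the claim to the S-integrability of $\|\cdot\|$ with respect to $\bar{\NSE{Q}}_{\nu}^{N}(s, x)$ that was already established in the proof of \cref{nssolnsint}.

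First, I would pin down the growth of $\NSE{V}$. Recall that, by \cref{mapin} and the construction via the Banach fixed point theorem in $\mathcal{L}_{B,D}[S]$, the standard solution $V$ lies in $\mathcal{L}_{B,D}[S]$, so there exists a standard $E \in \PosReals$ with $|V(s')| \leq E + (B+D)\|s'\|$ for every $s' \in S$. By transfer, $|\NSE{V}(s')| \leq E + (B+D)\|s'\|$ on all of $\NSE{S}$. Combined with \cref{assumptionpayoffubd}, which yields $|\NSE{\pi}(s,x,s')| \leq A + B\|s\| + B\|s'\|$, I obtain the pointwise bound
\[
\bigl|\NSE{\pi}_N(s,x,s') + \delta \NSE{V}(s')\bigr| \;\leq\; A' + \bigl[B + \delta(B+D)\bigr]\,\|s'\|,
\]
where $A' := A + B\|s\| + \delta E$. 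Because $s \in \NS{\NSE{S}}$, the norm $\|s\|$ is finite, so $A'$ is finite; and $B + \delta(B+D)$ is a finite standard constant.

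Second, I would invoke the S-integrability of $\|s'\|$ with respect to $\bar{\NSE{Q}}_{\nu}^{N}(s,x)$, which is exactly what was shown inside the proof of \cref{nssolnsint} via \cref{closewasser}, \cref{pdint}, and \cref{Lintegral}. Since S-integrability is preserved under multiplication by finite scalars and under addition of a constant on a probability space, the right-hand side of the display above is S-integrable. Using the standard fact that S-integrability is monotone with respect to domination in absolute value (a direct consequence of the tail characterization in \cref{Lintegral}), I conclude that $\NSE{\pi}_N(s,x,\cdot) + \delta \NSE{V}(\cdot)$ is S-integrable with respect to $\bar{\NSE{Q}}_{\nu}^{N}(s,x)$.

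The only subtlety, and what I expect to be the main technical point to spell out, is verifying the hypothesis of \cref{Lintegral} that $\ST\bigl(\NSE{\pi}_N(s,x,s') + \delta \NSE{V}(s')\bigr)$ exists $\Loeb{\bar{\NSE{Q}}_{\nu}^{N}(s,x)}$-almost surely. This reduces to showing that $\bar{\NSE{Q}}_{\nu}^{N}(s,x)$ is concentrated on $\NS{\NSE{S}}$, since on near-standard $s'$ the continuous functions $\pi$ and $V$ guarantee $\NSE{\pi}(s,x,s')$ and $\NSE{V}(s')$ are near-standard. Concentration on $\NS{\NSE{S}}$ follows from \cref{closewasser}: $\bar{\NSE{Q}}_{\nu}(s,x)$ is in the $1$-Wasserstein monad of the standard probability measure $\bar{Q}_{\pd{\nu}}(\ST(s),\ST(x))$, hence its push-down is a probability measure on $S$ and thus $\Loeb{\bar{\NSE{Q}}_{\nu}(s,x)}(\NS{\NSE{S}}) = 1$; this transfers to $\bar{\NSE{Q}}_{\nu}^{N}(s,x)$ because $\bar{\NSE{Q}}_{\nu}(s,x)(\NSE{S_N}) \approx 1$. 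With this in hand the proof concludes as above.
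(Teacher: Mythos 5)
Your proof is correct and takes essentially the same route as the paper's own argument: dominate $|\NSE{\pi}_N(s,x,\cdot)+\delta\NSE{V}(\cdot)|$ pointwise by an affine function of $\|\cdot\|$ with finite coefficients (via \cref{mapin}, transfer, and \cref{assumptionpayoffubd}), then reduce to the S-integrability of the norm, which is established from \cref{closewasser}, \cref{pdint} and \cref{Lintegral}; your explicit check that standard parts exist $\Loeb{\bar{\NSE{Q}}_{\nu}^{N}(s,x)}$-almost surely is a hypothesis of \cref{Lintegral} that the paper leaves implicit, so that is a welcome addition. One presentational caveat, not a mathematical gap: you source the S-integrability of $\|\cdot\|$ from the proof of \cref{nssolnsint}, but that lemma appears later and its statement depends on the present claim (through \cref{VVapproxsigmaunbd}, which rests on \cref{keyapproxlemmaunbd}), so in a final write-up you must inline that short argument rather than cite it forward — it is self-contained, using only \cref{closewasser}, \cref{pdint} and \cref{Lintegral} — which is exactly what the paper does, proving norm S-integrability with respect to $\bar{\NSE{Q}}_{\nu}(s,x)$ and passing to $\bar{\NSE{Q}}_{\nu}^{N}(s,x)$ via $\bar{\NSE{Q}}_{\nu}(s,x)(\NSE{S_N})\approx 1$.
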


\nt{\textit{Proof of Claim \ref{payoffsint}.}}
As $\bar{\NSE{Q}}_{\nu}(s, x)(\NSE{S_N})\approx 1$, by \cref{Lintegral}, it is sufficient to show $\NSE{\pi}_{N}(s, x, \cdot)+\delta \NSE{V}(\cdot)$ is S-integrable with respect to $\bar{\NSE{Q}}_{\nu}(s, x)(\cdot)$. 
By \cref{mapin}, there exists $E\in \PosReals$ such that $|V(s)|\leq E+(B+D)\|s\|$ for all $s\in S$. 
By \cref{assumptionpayoffubd} and \cref{assumptionsint}, it is sufficient to show that $\|s'\|$ is S-integrable with respect to $\bar{\NSE{Q}}_{\nu}(s, x)(\cdot)$. 
By \cref{closewasser} and \cref{pdint}, we have
\begin{align*}
\int_{\NSE{S}}\|s'\|\bar{\NSE{Q}}_{\nu}(\dee s'|s, x)&\approx \int_{S}\|s'\|\bar{Q}_{\pd{\nu}}(\dee s'|s, x)\\
&=\lim_{n\to \infty}\ST\big(\int_{\NSE{S_n}}\|s'\|\bar{\NSE{Q}}_{\nu}(\dee s'|s, x)\big). 
\end{align*}
By \cref{Lintegral}, $\|s'\|$ is S-integrable with respect to $\bar{\NSE{Q}}_{\nu}(s, x)(\cdot)$.  Thus, by Arkeryd et al. (1997, Section 4, Theorem 6.2) and \cref{pdint}, we have:
\begin{align*}
&\int_{\NSE{S_N}}\{\NSE{\pi}_{N}(s, x, s')+\delta \NSE{V}(s')\}\bar{\NSE{Q}}_{\nu}^{N}(\dee s'|s, x)\\
&\approx\lim_{n\to \infty}\int_{S_n}\{\pi(\ST(s), \ST(x), s')+\delta V(s')\}\bar{Q}_{\pd{\nu}}(\dee s'|\ST(s), \ST(x))\\
&=\int_{S}\{\pi(\ST(s), \ST(x), s')+\delta V(s')\}\bar{Q}_{\pd{\nu}}(\dee s'|\ST(s), \ST(x)).\\
\end{align*}
Hence, we have the desired result. 

\qed

\nt{\textit{\textbf{Proof of Lemma \ref{VVapproxsigmaunbd}}.}}
Let $V_0$ be the restriction of $\NSE{V}$ to $\NSE{S_N}$. 
For all $s\in \NS{\NSE{S}}$, by \cref{keyapproxlemmaunbd}, we have
\begin{align*}
&\max_{x\in \NSE{X}}\int_{\NSE{S_N}}\{\NSE{\pi}_N(s,x,s')+\delta V_0(s')\}\bar{\NSE{Q}}_{\nu}^{N}(\dee s'|s, x)\\
&\approx \max_{x\in X}\int_{S}\{\pi(\ST(s), \ST(x), s')+\delta V(s')\}\bar{Q}_{\pd{\nu}}(\dee s'|\ST(s), x)\\
&=V(\ST(s))\approx V_0(s). 
\end{align*}

Let $G(f)(s)=\max_{x\in \NSE{X}}\int_{\NSE{S_N}}\{\NSE{\pi}_N(s,x,s')+\delta \NSE{f}(s')\}\bar{\NSE{Q}}_{\nu}^{N}(\dee s'|s, x)$ for all $f\in \NSE{\mathcal{C}_0}(\NSE{S_N})$.
Note that $\NSE{\mathcal{C}_0}(\NSE{S_N})$ is a $\NSE{}$complete metric space under the $\NSE{}$metric $\NSE{d}_{\mathrm{unif}}$. 
Consider the following internal iterated process: start with $V_0$ and define a sequence  $\{V_n\}_{n\in \NSE{\Nats}}$ by $V_{n+1}=G(V_n)$.
As $\delta\in [0,1)$ and $\NSE{S_N}$ is a $\NSE{}$compact set, there exists some $K\in \NSE{\Nats}$ such that $\NSE{d}_{\sup}(V_K, V_{K+1})<1$. 
Hence the internal sequence $\{V_n\}_{n\in \NSE{\Nats}}$ is a $\NSE{}$Cauchy sequence with respect to the $\NSE{}$metric $\NSE{d}_{\mathrm{unif}}$.
As $\NSE{\mathcal{C}_0}(\NSE{S_N})$ is $\NSE{}$complete, the internal sequence $\{V_n\}_{n\in \NSE{\Nats}}$ has a $\NSE{}$limit. 
Note that $\NSE{d}_{\mathrm{unif}}(G(f_1), G(f_2))\leq \NSE{d}_{\mathrm{unif}}(f_1, f_2)$ for all $f_1, f_2\in \NSE{\mathcal{C}_0}(\NSE{S_N})$. 
So, $G$ is a $\NSE{}$continuous function, hence the $\NSE{}$limit of the internal sequence $\{V_n\}_{n\in \NSE{\Nats}}$ is the $\NSE{}$fixed point $\mathbb{V}$. 
As $\NSE{d_{\mathrm{unif}}}(V_0, \mathbb{V})\approx \NSE{d_{\mathrm{unif}}}(V_1, V_0)\approx 0$, we have $\NSE{V}(s)\approx \mathbb{V}(s)$ for all $s\in \NS{\NSE{S}}$.
\qed

\smallskip

\nt{\textit{\textbf{Proof of Lemma \ref{nsmaplinear}}.}}
Let $f$ be an arbitrary element in $\NSE{\mathcal{L}_{B,D}}(\NSE{S_N})$. 
Then, there is some $E\in \NSE{\PosReals}$ such that $|f(s)|\leq E+(B+D)\|s\|$ for all $s\in \NSE{S_N}$.  
By \cref{assumptionpayoffubd} and \cref{assumptionsint}:
\begin{align*}
\displaystyle
&|\int_{\NSE{S_N}}\!\!\!\!\!\!\!\!\{\NSE{\pi}_N(s,x,s')+\delta \NSE{f}(s')\}\bar{\NSE{Q}}_{\nu}^{N}(\dee s'|s, x)|\\
&\leq A+\delta E+\!\!\int_{\NSE{S}}\{B\max\{\|s\|,\|s'\|\}+\delta (B+D)\|s'\|\}\bar{\NSE{Q}}_{\nu}(\dee s'|s, x) \\
&\leq A+\delta E+ \big(B+\delta (B+D)\big)C+ (B+D)\|s\|.
\end{align*}
Thus, we have the desired result. 
\qed

\nt{\textit{\textbf{Proof of Lemma \ref{nssolnsint}}.}}
Pick some $(s, x)\in \NS{\NSE{S}}\times \NSE{X}$. 
As $\mathbb{V}\in \NSE{\mathcal{L}_{B,D}}[\NSE{S_N}]$, there exist some $E\in \NSE{\Reals}$ such that $|\mathbb{V}(t)|\leq E+(B+D)\|t\|$ for all $t\in \NSE{S_N}$. 
By \cref{VVapproxsigmaunbd}, $\NSE{V}(t)\approx \mathbb{V}(t)$ for all $t\in \NS{\NSE{S}}$. 
Hence, $E$ is near-standard. 
As $D\in \PosReals$, it is sufficient to show that $\|t\|$ is S-integrable with respect to $\bar{\NSE{Q}}_{\nu}(s, x)$. 
By \cref{closewasser} and \cref{pdint}:
\\
\begin{align*}
\int_{\NSE{S_N}}\|t\|\bar{\NSE{Q}}_{\nu}^{N}(\dee t|s, x)&\lessapprox \int_{\NSE{S}}\|t\|\bar{\NSE{Q}}_{\nu}(\dee t|s, x)\\
&=\lim_{n\to \infty}\ST\big(\int_{\NSE{S_n}}\|t\|\bar{\NSE{Q}}_{\nu}^{N}(\dee t|s, x)\big). 
\end{align*}
Note that $\int_{\NSE{S_N}}\|t\|\bar{\NSE{Q}}_{\nu}^{N}(\dee t|s, x)\gtrapprox \lim_{n\to \infty}\ST\big(\int_{\NSE{S_n}}\|t\|\bar{\NSE{Q}}_{\nu}^{N}(\dee t|s, x)\big)$. Hence, by Arkeryd et al. (1997, Section 4, Theorem 6.2), $\|t\|$ is S-integrable with respect to $\bar{\NSE{Q}}_{\nu}^{N}(s, x)$.
\qed

\medskip

\subsection{A Learning Foundation for Infinite Spaces}\label{sectioneqfd}In this subsection, we study the problem where the agent who faces a regular SMDP with compact state and action spaces, updates her belief in each period as a result of observing the current state, her action and the new state. Our aim is to show that the agent's steady state behavior is a Berk-Nash equilibrium. Throughout this section, we work with a regular SMDP $(\langle S,X,q_0,Q,\pi,\delta \rangle$, $\mathcal{Q}_{\Theta})$ as in \cref{regsmdp}. 
The agent who faces this regular SMDP has a prior $\mu_0\in \Delta(\Theta)$, which is assumed to have full support. 
Furthermore, throughout this section, we assume that the state space $S$ is compact.
We start with the following assumption:
\begin{assumption}\label{assumptionsctsmodel}
There is a referencing finite measure $\lambda$ on $(S, \BorelSets S)$ with full support such that 
\begin{enumerate}
    \item For all $\theta\in \Theta$ and $(s, x)\in S\times X$, $Q_{\theta}(s, x)$ is absolutely continuous with respect to $\lambda$;
    \item The density function $q_{(\theta, s, x)}(\cdot): S\to \Reals$ of $Q_{\theta}(s, x)$ with respect to $\lambda$ is a jointly continuous function on $\Theta\times S\times X\times S$;
    \item For all $(\theta,s,x)\in \Theta\times S\times X$, the density function $q_{(\theta, s, x)}(s')>0$ for all $s'\in S$. 
\end{enumerate}
\end{assumption}
Recall that $\Delta(\Theta)$ denote the set of probability measures on $(\Theta, \BorelSets \Theta)$, endowed with the Prokhorov metric. 
For $(s, x, s')\in S\times X\times S$, the Bayesian operator $B(s, x, s', \cdot): \Delta(\Theta)\to \Delta(\Theta)$ is defined as:
$
B(s, x, s', \mu)(A)=\frac{\int_{A}q_{(\theta,s,x)}(s')\mu(\dee \theta)}{\int_{\Theta}q_{(\theta,s,x)}(s')\mu(\dee \theta)}
$
for all $A\in \BorelSets \Theta$.

By the principle of optimality, the agent's problem can be cast recursively as:

\begin{equation}\label{blvbellman}
W(s, \mu)=\max_{x\in X}\int_{S}\{\pi(s,x,s')+\delta W(s', \mu')\}\bar{Q}_{\mu}(\dee s'|s, x)
\end{equation}
where $\bar{Q}_{\mu}(s, x)=\int_{\Theta}Q_{\theta}(s, x)\mu(\dee \theta)$ and $\mu'=B(s,x,s',\mu)$. 
Let $\cC[S\times \Delta(\Theta)]$ be the set of real-valued continuous functions on $S\times \Delta(\Theta)$, equipped with the $\sup$-norm. 
Assuming \cref{assumptionsctsmodel} holds, then the operator

\begin{equation}\label{beliefbellman}
L(g)(s, \mu)=\max_{x\in X}\int_{S}\{\pi(s,x,s')+\delta g(s', \mu')\}\bar{Q}_{\mu}(\dee s'|s, x)
\end{equation}
is a contraction mapping from $\cC[S\times \Delta(\Theta)]$ to $\cC[S\times \Delta(\Theta)]$, with the contraction factor $\delta$. 
Thus, by the Banach fixed point theorem, there exists a unique $W\in \cC[S\times \Delta(\Theta)]$ that is the solution of \cref{blvbellman}, which we fix for the rest of this section. 

\begin{definition}\label{defpolicy}
A policy function is a function $f: S\times \Delta(\Theta)\to \Delta(X)$, where $f(\cdot|s, \mu)$ is a probability measure on $X$ if she is in state $s$ and her belief is $\mu$. 
A policy function is optimal if, for all $s\in S$, $\mu\in \Delta(\Theta)$ and $x\in X$ such that $x$ is in the support $f(\cdot|s, \mu)$:
$
x\in \argmax_{\hat{x}\in X}\int_{S}\{\pi(s,\hat{x},s')+\delta W(s', B(s,\hat{x},s',\mu))\}\bar{Q}_{\mu}(\dee s'|s, \hat{x}).
$
\end{definition}

Let $h=(s_0,x_0,\dotsc,s_k,x_k,\dotsc)$ be an infinite history of state-action pairs and let $\mathbb{H}=(S\times X)^{\Nats}$ be the space of infinite histories. 
For every $k\in \Nats$, let $\mu_{k}:\mathbb{H}\to \Delta(\Theta)$ denote the agent's belief at time $k$, defined recursively by $\mu_{k}(h)=B(s_{k-1},x_{k-1},s_k,\mu_{k-1}(h))$.
When the context is clear, we drop $h$ from the notation. 

For a fixed $h\in \mathbb{H}$, in each period $k$, there is a state $s_{k}$ and a belief $\mu_k$. 
Given a policy function $f$, the agent chooses an action randomly according to $f(\cdot|s_k,\mu_k)$.
After an action $x_k$ is realized, the state $s_{k+1}$ is drawn according to the true transition probability $Q(\cdot|s_k, x_k)$. 
The agent then updates her belief to $\mu_{k+1}$ according to the Bayes operator. 
Thus, the primitives of the problem and the policy function $f$ induce a probability distribution $\mathbb{P}^{f}$ over $\mathbb{H}$. 

For every $k\in \Nats$, we define the frequency of the state-action pairs at time $k$ to be a function $m_{k}: \mathbb{H}\to \Delta(S\times X)$ such that $m_{k}(h)(A)=\frac{1}{k}\sum_{\tau=0}^{k}\mathbf{1}_{A}(s_{\tau}, x_{\tau})$
for all measurable $A\in \Delta(S\times X)$, where $\mathbf{1}_{A}$ denote the indicator function on $A$.

\begin{definition}\label{defunfcvg}
Let $H$ be a subset of $\mathbb{H}$.
The sequence $(m_k)_{k\in \Nats}$ is said to be uniformly converges to $m\in \Delta(S\times X)$ on $H$ in total variation distance if, 
for every $\epsilon>0$, there exists $k_{\epsilon}\in \Nats$ such that $\|m_k(h)-m\|_{\mathrm{TV}}<\epsilon$
for all $h\in H$ and all $k\geq k_{\epsilon}$. 
\end{definition}

\begin{remark}\label{unfcvgremark}
For $h\in \mathbb{H}$, the frequency of state-action pairs $m_k(h)$ is supported on a countable set. 
So, \cref{defunfcvg} implies that the support of $m$ is also countable. 
In conclusion, \cref{defunfcvg} is a reasonable assumption if both the state space $S$ and the action space $X$ are countable.  
\end{remark}

We now introduce the concept of identification and then present the main result Theorem \ref{blfupdatemain} of this section.
\begin{definition}\label{defident}
A SMDP is identified given $m\in \Delta(S\times X)$ if $\theta,\theta'\in \Theta_{Q}(m)$ implies $Q_{\theta}(\cdot|s, x)=Q_{\theta'}(\cdot|s, x)$ for all $(s, x)\in S\times X$.
\end{definition}

\begin{theorem}
\label{blfupdatemain}
Suppose \cref{assumptionsctsmodel} holds and the state space $S$ is compact. 
Let $f$ be an optimal policy function. 
Suppose: 
\begin{enumerate}
    \item $(m_{k})_{k\in \Nats}$ uniformly converges to some $m\in \Delta(S\times X)$ on some $H\in \mathbb{H}$ with $\mathbb{P}^{f}$-positive probability in total variation distance;
    \item The SMDP $(\langle S,X,q_0,Q,\pi,\delta \rangle$, $\mathcal{Q}_{\Theta})$ is identified given $m$.
\end{enumerate}
Then $m$ is a Berk-Nash equilibrium for the SMDP $(\langle S,X,q_0,Q,\pi,\delta \rangle$, $\mathcal{Q}_{\Theta})$.
\end{theorem}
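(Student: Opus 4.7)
The plan is to fix a history $h\in H$ along which $m_k \to m$ uniformly in total variation, extract a subsequence $(\mu_{k_j})$ of posteriors that converges weakly to some $\nu \in \Delta(\Theta)$ (possible because $\Delta(\Theta)$ is compact metrizable when $\Theta$ is compact), and verify that $(m,\nu)$ satisfies the three clauses of the Berk-Nash definition. After this reduction, the three clauses are attacked separately.

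For Stationarity, I would note that $m_{k,S}(A)$ and $\frac{1}{k}\sum_{\tau<k}\mathbf{1}_A(s_{\tau+1})$ differ by $O(1/k)$, and apply a martingale strong law to the array $\mathbf{1}_A(s_{\tau+1})-Q(A|s_\tau,x_\tau)$, which is a martingale difference sequence under $\mathbb{P}^f$. The TV convergence $m_k\to m$ on $H$ lets me replace $\frac{1}{k}\sum Q(A|s_\tau,x_\tau)$ with $\int Q(A|s,x)\,m(\dee s,\dee x)$ in the limit, since $\int Q(A|\cdot)\,\dee m_k \to \int Q(A|\cdot)\,\dee m$ uniformly in $A$. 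Combined with $m_{k,S}\to m_S$, this yields $m_S(A)=\int Q(A|s,x)m(\dee s,\dee x)$.

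For Belief Restriction, which I expect to be the hard part, the idea is a Berk-style argument. For any two Borel sets $A,B\subset \Theta$ with $\mu_0(A),\mu_0(B)>0$, the log ratio of posteriors after $k$ steps is, up to an $O(1)$ prior term,
\[
\ln \frac{\mu_k(A)}{\mu_k(B)} = \ln \frac{\int_A \prod_{\tau<k} q_{(\theta,s_\tau,x_\tau)}(s_{\tau+1})\,\mu_0(\dee\theta)}{\int_B \prod_{\tau<k} q_{(\theta,s_\tau,x_\tau)}(s_{\tau+1})\,\mu_0(\dee\theta)}.
\]
By joint continuity of $q_{(\theta,s,x)}(s')$ and compactness of $S\times X$, a SLLN applied to the true chain (writing the above as a sum of log-likelihoods plus $o(k)$) and the uniform TV convergence of $m_k\to m$ show that $\frac{1}{k}$ times this log ratio converges to $K_Q(m,\theta_b)-K_Q(m,\theta_a)$ for generic representative $\theta_a\in A$, $\theta_b\in B$. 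Consequently, the posterior mass concentrates exponentially on any open neighborhood of $\Theta_Q(m)$, so the weak limit $\nu$ is supported on $\Theta_Q(m)$, giving $\nu\in\Delta(\Theta_Q(m))$.

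For Optimality, I will use the identification assumption. Since every $\theta\in\Theta_Q(m)$ induces the same transition $Q_\theta$, the map $\mu\mapsto \bar Q_\mu$ is continuous in the Prokhorov metric at $\nu$ in a strong enough sense that $\bar Q_{\mu_k}(s,x)\to \bar Q_\nu(s,x)$ weakly whenever $(s,x)$ arises with positive $m_k$-mass infinitely often. The unique solution $W$ of the Bellman equation \eqref{blvbellman} is continuous in $(s,\mu)$, and $f$ is optimal given $(s_k,\mu_k)$, so passing to a further subsequence of indices along which a fixed pair $(s,x)\in\text{supp}(m)$ is chosen, one takes limits in the Bellman optimality condition using continuity of $\pi$, $W$, and the map $\mu\mapsto\bar Q_\mu$. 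The identification assumption is critical here: it ensures $\bar Q_\nu$ does not depend on which limit point $\nu$ of $(\mu_{k_j})$ we pick, so the optimality condition holds for the common MDP$(\bar Q_\nu)$. The main obstacle, as emphasized, is the belief-restriction step: one needs the uniform-in-$h$ TV convergence precisely to convert the pathwise SLLN for log-likelihood ratios into a statement about the limiting weighted Kullback–Leibler divergence $K_Q(m,\cdot)$, which in turn explains why the weaker conditions envisaged in Section \ref{secdiscussion} are not straightforward.
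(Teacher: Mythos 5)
Your route is genuinely different from the paper's: the paper transfers the whole problem to a hyperfinite SMDP, keeps the standard and hyperfinite posteriors infinitely close up to an infinite time by saturation (\cref{updateclose}), invokes the transferred EP Lemma~2 (\cref{inacpvset}) and S-identification (\cref{idthyidt}) at that infinite time, and pushes the resulting Berk--Nash S-equilibrium down; you instead adapt EP's finite-space argument directly in standard terms. Your first two steps are essentially viable, because \cref{assumptionsctsmodel} together with compactness of $S$, $X$, $\Theta$ makes $\ln q_{(\theta,s,x)}(s')$ bounded and uniformly continuous: the bounded martingale-difference SLLN plus pathwise TV convergence gives stationarity (provided you run the argument over a countable generating $\pi$-system and invoke uniqueness of measures, since the SLLN null set depends on $A$), and a uniform-in-$\theta$ LLN (countable dense subset of $\Theta$ plus equicontinuity) turns your log-likelihood computation into an exponential posterior-ratio bound, whence every weak limit point of $(\mu_k)$ lies in $\Delta(\Theta_{Q}(m))$; this should, however, be phrased as a Laplace-type bound involving suprema of the limiting average log-likelihood $\ell(\theta)=\int\mathbb{E}_{Q(\cdot|s,x)}[\ln q_{(\theta,s,x)}(s')]\,m(\dee s,\dee x)$ over closed neighborhoods, not via ``generic representatives'', and one should note that $\argmax\ell=\Theta_{Q}(m)$ except in degenerate cases where $K_{Q}(m,\cdot)\equiv\infty$ and belief restriction is vacuous. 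Also, contrary to your closing remark, your argument is entirely pathwise along one good history, so it needs only pathwise TV convergence on a positive-probability set; the uniformity in $h$ is what the \emph{paper's} nonstandard proof needs (to get $\|M_k(\tilde{h})-M\|_{\mathrm{TV}}\approx 0$ at every infinite $k$ in \cref{mtcvglemma}), not yours.

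The genuine gap is in the optimality step. Passing to the limit in the optimality condition of the learning problem yields that $x$ maximizes $\hat{x}\mapsto\int_{S}\{\pi(s,\hat{x},s')+\delta W(s',B(s,\hat{x},s',\nu))\}\bar{Q}_{\nu}(\dee s'|s,\hat{x})$, where $W$ solves \cref{blvbellman} --- the value function \emph{with Bayesian updating inside}. Berk--Nash optimality instead requires $x$ to maximize $\hat{x}\mapsto\int_{S}\{\pi(s,\hat{x},s')+\delta V(s')\}\bar{Q}_{\nu}(\dee s'|s,\hat{x})$, where $V$ is the value of the MDP with the transition frozen at $\bar{Q}_{\nu}$. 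Your proposal never bridges these two objects, and the role you assign to identification (well-definedness of $\bar{Q}_{\nu}$ across subsequential limit beliefs) is necessary but not sufficient. The missing lemma is the value collapse: posteriors are absolutely continuous with respect to priors, so the set of beliefs supported in $\Theta_{Q}(m)$ is invariant under the Bayes operator; on this set identification forces $\bar{Q}_{\mu'}=\bar{Q}_{\nu}$ for every such $\mu'$, hence $(s,\mu')\mapsto V(s)$ is a fixed point of the Bellman operator \cref{beliefbellman} restricted to the invariant set, and the $\delta$-contraction property gives $W(s,\mu')=V(s)$ there. This is precisely where the paper uses S-identification, when it replaces $\mathbb{W}(s',\mathbb{B}(s,\hat{x},s',\nu_{k_1}))$ by $\mathbb{V}(s')$ under $\mathbb{Q}_{M}$ in its final display. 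With this lemma added (and with optimality extended from the atoms of $m$ to their closure by closedness of the argmax correspondence), your proof goes through.
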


\subsubsection{The Nonstandard Framework}

In this section, we present the nonstandard framework to prove \cref{blfupdatemain}.
Throughout this section, We work with a regular SMDP $(\langle S,X,q_0,Q,\pi,\delta \rangle, \mathcal{Q}_{\Theta})$ with a compact state space $S$. 

\begin{lemma}\label{bopects}
Suppose \cref{assumptionsctsmodel} holds.
Then the Bayesian operator $B$ is a continuous function from $S\times X\times S\times \Delta(\Theta)$ to $\Delta(\Theta)$. 
\end{lemma}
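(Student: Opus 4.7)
The plan is to show sequential continuity in the product topology on $S\times X\times S\times \Delta(\Theta)$: fix a convergent sequence $(s_n,x_n,s'_n,\mu_n)\to (s,x,s',\mu)$ (with $\mu_n\to\mu$ in the Prokhorov metric), and verify that $B(s_n,x_n,s'_n,\mu_n)\to B(s,x,s',\mu)$ in the Prokhorov metric as well. By the Portmanteau theorem, it suffices to prove that for every bounded continuous $f:\Theta\to\Reals$,
\[
\int_{\Theta} f(\theta)\, B(s_n,x_n,s'_n,\mu_n)(\dee\theta)\;\longrightarrow\;\int_{\Theta} f(\theta)\, B(s,x,s',\mu)(\dee\theta),
\]
which, by the definition of $B$, amounts to passing to the limit in both the numerator $\int_\Theta f(\theta)q_{(\theta,s_n,x_n)}(s'_n)\,\mu_n(\dee\theta)$ and the denominator $\int_\Theta q_{(\theta,s_n,x_n)}(s'_n)\,\mu_n(\dee\theta)$.

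The first step is to argue that the family of density slices converges uniformly in $\theta$. Because the set $K=\{(s_n,x_n,s'_n):n\in\Nats\}\cup\{(s,x,s')\}$ is compact in $S\times X\times S$ and $\Theta$ is compact, the product $\Theta\times K$ is compact, so by \cref{assumptionsctsmodel}(ii) the density $q$ is uniformly continuous on $\Theta\times K$. This yields $\sup_{\theta\in\Theta}\bigl|q_{(\theta,s_n,x_n)}(s'_n)-q_{(\theta,s,x)}(s')\bigr|\to 0$. The second step splits the numerator into a ``uniform convergence'' piece and a ``weak convergence'' piece:
\[
\int_\Theta f(\theta)\bigl[q_{(\theta,s_n,x_n)}(s'_n)-q_{(\theta,s,x)}(s')\bigr]\mu_n(\dee\theta)\;+\;\int_\Theta f(\theta)q_{(\theta,s,x)}(s')\bigl[\mu_n-\mu\bigr](\dee\theta).
\]
The first piece is bounded by $\|f\|_\infty\sup_{\theta}|q_{(\theta,s_n,x_n)}(s'_n)-q_{(\theta,s,x)}(s')|\to 0$ by Step~1; the second piece vanishes because $\theta\mapsto f(\theta)q_{(\theta,s,x)}(s')$ is a bounded continuous function of $\theta$ and $\mu_n\Rightarrow \mu$.

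The third step is the denominator, which is the special case $f\equiv 1$ of the argument just given, so it also converges. To conclude that the ratio converges, I need the denominator to be bounded away from zero along the sequence. By \cref{assumptionsctsmodel}(iii), $q_{(\theta,s,x)}(s')>0$ for all $\theta\in\Theta$, and by continuity together with compactness of $\Theta$ there exists $c>0$ with $q_{(\theta,s,x)}(s')\geq c$ for all $\theta\in\Theta$. The uniform convergence established in Step~1 then gives $q_{(\theta,s_n,x_n)}(s'_n)\geq c/2$ for all $\theta$ and all sufficiently large $n$, so the limiting denominators are all at least $c/2$. Combining the convergences of numerator and denominator yields the desired integral convergence, and hence Prokhorov continuity of $B$.

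The main obstacle is the uniform-in-$\theta$ handling of the densities when simultaneously varying $(s,x,s')$ and the measure $\mu$; the delicate point is that we cannot simply use ``weak convergence'' against the function $\theta\mapsto q_{(\theta,s_n,x_n)}(s'_n)$ since that function itself depends on $n$. The device that overcomes this is the reduction to uniform continuity on the compact product $\Theta\times K$, leveraging compactness of $\Theta$ from \cref{regsmdp}(i); everything else is a routine Portmanteau argument together with a positivity lower bound ensured by \cref{assumptionsctsmodel}(iii).
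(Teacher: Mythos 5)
Your proof is correct, but it takes a genuinely different route from the paper's. The paper proves \cref{bopects} with nonstandard analysis: it picks an arbitrary point $(s,x,s',\mu)$ of the nonstandard extension of the (compact) domain, forms the push-down $\pd{\mu}$ of the internal measure $\mu$, and uses \cref{assumptionsctsmodel} together with \cref{pdint} to show that the defining integrals of $\NSE{B}(s,x,s',\mu)$ over $\NSE{A}$ are infinitely close to those of $B(\ST(s),\ST(x),\ST(s'),\pd{\mu})$ over $A$, for continuity sets $A$; by the monad characterization of continuity on compact metric spaces, this yields the lemma in a few lines, reusing machinery the paper has already built. You instead give a purely standard sequential argument: metrizability reduces continuity to sequential continuity, Portmanteau reduces Prokhorov convergence to convergence of integrals of bounded continuous test functions, and the crux --- that one cannot test weak convergence of $\mu_n$ against the $n$-dependent integrand $\theta\mapsto q_{(\theta,s_n,x_n)}(s'_n)$ --- is resolved by uniform continuity of $q$ on the compact set $\Theta\times K$, giving uniform-in-$\theta$ convergence of the density slices and the two-piece decomposition of the numerator. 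Your explicit lower bound $c>0$ on the denominator (positivity from \cref{assumptionsctsmodel}(iii) plus compactness of $\Theta$) is exactly the point the paper leaves implicit when it divides by the normalizing integral; spelling it out is a genuine gain in rigor. What each approach buys: the paper's proof is shorter \emph{given} the nonstandard infrastructure and slots directly into the surrounding hyperfinite arguments (e.g.\ \cref{boptclose}), while yours is self-contained, elementary, and accessible to a reader with no nonstandard background --- and it illustrates concretely the paper's own meta-claim that standard proofs of these results exist, at the cost of somewhat more bookkeeping.
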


Since we are working with a regular SMDP, 
we can construct an associate hyperfinite SMDP $(\langle T_S, T_X, h_0, \mathbb{Q}, \Pi, \delta \rangle, \mathscr{Q}_{T_{\Theta}})$ as in \cref{sechypresent}, which will be fixed for the rest of the section. 
Let $\lambda$ be the finite measure on $S$ as in \cref{assumptionsctsmodel}. 
Define $\NSE{\lambda}_{T_S}$ to be the internal probability measure on $T_S$ such that $\NSE{\lambda}_{T_S}(\{s\})=\NSE{\lambda}(B_{S}(s))$ for all $s\in T_S$. 
Let $\NSE{\Delta}(T_{\Theta})$ denote the set of internal probability measures on $T_{\Theta}$. 
For $(s, x, s')\in T_S\times T_X\times T_S$, the hyperfinite Bayesian operator $\mathbb{B}(s, x, s', \cdot): \NSE{\Delta}(T_{\Theta})\to \NSE{\Delta}(T_{\Theta})$ is given by 
$
\mathbb{B}(s,x,s',\mu)(A)=\frac{\sum_{\theta\in A}\mathbb{Q}_{\theta}(s'|s,x)\mu(\{\theta\})}{\sum_{\theta\in T_{\Theta}}\mathbb{Q}_{\theta}(s'|s,x)\mu(\{\theta\})}
$
for all internal $A\subset T_{\Theta}$.

By the transfer principle and the principle of optimality, the agent's problem can be cast recursively as

\begin{equation}\label{hybeliefbellman}
\mathbb{W}(s,\mu)=\max_{x\in T_X}\sum_{s'\in T_S}\{\Pi(s,x,s')+\delta \mathbb{W}(s', \mu')\}
\bar{\mathbb{Q}}_{\mu}(s'|s, x)
\end{equation}
where $\bar{\mathbb{Q}}_{\mu}(s, x)=\sum_{\theta\in T_{\Theta}}\mathbb{Q}_{\theta}(s, x)\mu(\{\theta\})$, $\mu'=\mathbb{B}(s,x,s',\mu)$ and $\mathbb{W}: T_S\times \NSE{\Delta}(T_{\Theta})\to \NSE{\Reals}$ is the unique solution to the hyperfinite Bellman equation \cref{hybeliefbellman}. 
The existence of such a $\mathbb{W}$ is guaranteed by the transfer principle. 
The next theorem establishes a tight connection between the solution $\mathbb{W}$ of the hyperfinite Bellman equation \cref{hybeliefbellman} and the solution $W$ of the standard Bellman equation \cref{blvbellman}. 

\begin{theorem}\label{beliefBEclose}
Suppose \cref{assumptionsctsmodel} holds. 
For all $(s, \mu)\in T_S\times \NSE{\Delta}(T_{\Theta})$:
$
\mathbb{W}(s, \mu)\approx W(\ST(s), \pd{\mu}).
$
\end{theorem}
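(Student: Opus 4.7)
The plan is to adapt the Banach contraction argument from the proof of \cref{BEclose} to the belief-augmented setting. The idea is to show that the restriction of $\NSE{W}$ to $T_S\times \NSE{\Delta}(T_{\Theta})$ is infinitesimally close, in $\NSE{d}_{\sup}$, to a fixed point of the hyperfinite Bellman operator $G$ defined by the right-hand side of \eqref{hybeliefbellman}. By the transfer of the Banach fixed point theorem, $G$ is a $\NSE{}$contraction with factor $\delta$ and $\mathbb{W}$ is its unique $\NSE{}$fixed point, so any internal $f$ with $\NSE{d}_{\sup}(f,G(f))\approx 0$ must also satisfy $\NSE{d}_{\sup}(f,\mathbb{W})\leq (1-\delta)^{-1}\NSE{d}_{\sup}(f,G(f))\approx 0$. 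Combined with uniform continuity of $W$ on the compact product $S\times \Delta(\Theta)$, which yields $\NSE{W}(s,\mu)\approx W(\ST(s),\pd{\mu})$ at every $(s,\mu)\in T_S\times \NSE{\Delta}(T_{\Theta})$ via \cref{nsweaklemma} (identifying $\ST(\mu)$ with $\pd{\mu}$), the conclusion $\mathbb{W}(s,\mu)\approx W(\ST(s),\pd{\mu})$ follows.

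The crux is therefore the pointwise estimate $G(\NSE{W})(s,\mu)\approx \NSE{W}(s,\mu)$ on $T_S\times \NSE{\Delta}(T_{\Theta})$: since the difference is an internal function, pointwise infinitesimal closeness automatically lifts to infinitesimal internal supremum by underflow, because the internal set $\{(s,\mu):|G(\NSE{W})(s,\mu)-\NSE{W}(s,\mu)|\geq 1/n\}$ is empty for every $n\in \Nats$. The pointwise estimate decomposes into five ingredients: (i) $\Pi(s,x,s')\approx \pi(\ST(s),\ST(x),\ST(s'))$ by continuity of $\pi$ and the infinitesimal diameter of $B_{S}(\cdot)$; (ii) $\pd{\bar{\mathbb{Q}}_{\mu}(s,x)}=\bar{Q}_{\pd{\mu}}(\ST(s),\ST(x))$, exactly as in \cref{Qbarlemma}; (iii) the push-down $\pd{\mathbb{B}(s,x,s',\mu)}=B(\ST(s),\ST(x),\ST(s'),\pd{\mu})$ for the Bayesian update; (iv) \cref{pdint} combined with boundedness of $\pi$ and $W$ (which both follow from compactness of $S\times\Delta(\Theta)$) converts the hyperfinite expectation over $T_S$ into the standard integral against $\bar{Q}_{\pd{\mu}}(\ST(s),\ST(x))$; and (v) the max over $x\in T_X$ agrees infinitesimally with the max over $x\in X$, since the integrand and $\bar{\mathbb{Q}}_{\mu}(s,x)$ are continuous in $x$ and every $x\in T_X$ is infinitesimally close to its standard part. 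Assembling (i)--(v) identifies the push-down of $G(\NSE{W})(s,\mu)$ with the right-hand side of the standard Bellman equation \eqref{blvbellman} at $(\ST(s),\pd{\mu})$, which equals $W(\ST(s),\pd{\mu})$, and uniform continuity of $W$ matches this with $\NSE{W}(s,\mu)$.

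The main obstacle is (iii), the push-down of the Bayesian operator, for which \cref{assumptionsctsmodel} becomes indispensable. Since each $Q_{\theta}(s,x)$ admits a strictly positive, jointly continuous density $q_{(\theta,s,x)}(\cdot)$ with respect to the reference measure $\lambda$, and $B_{S}(s')$ has infinitesimal diameter, joint continuity yields $\mathbb{Q}_{\theta}(s'|s,x)=\NSE{Q}_{\theta}(s,x)(B_{S}(s'))\approx \NSE{q}_{(\theta,s,x)}(s')\cdot \NSE{\lambda}(B_{S}(s'))$. Crucially, the factor $\NSE{\lambda}(B_{S}(s'))$ does not depend on $\theta$, so it cancels in the numerator and denominator of $\mathbb{B}(s,x,s',\mu)(A)$; the Bayesian ratio collapses to a ratio of integrals of the continuous densities against $\mu$, which then pushes down via \cref{pdint} and the joint continuity of $q$ to the standard Bayesian formula $B(\ST(s),\ST(x),\ST(s'),\pd{\mu})(A)$. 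Strict positivity of the density on the compact domain $\Theta\times S\times X\times S$ guarantees the denominator is bounded below by a positive standard constant, so no $0/0$ issue arises and the approximation holds uniformly in $\theta$. With (iii) established, items (i), (ii), (iv), (v) are routine adaptations of push-down arguments already deployed in \cref{appendixA1}, completing the proof.
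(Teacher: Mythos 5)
Your proposal is correct and follows essentially the same route as the paper: you show the restriction of $\NSE{W}$ to $T_S\times\NSE{\Delta}(T_{\Theta})$ is an approximate fixed point of the hyperfinite Bellman operator (via the push-down of $\bar{\mathbb{Q}}_{\mu}$, the push-down of the Bayesian operator through the density/cancellation argument, and \cref{pdint}), then use the $\NSE{}$contraction property to bound $\NSE{d}_{\sup}$ from the true fixed point $\mathbb{W}$, exactly as in the paper's proof. Your ingredient (iii) is precisely the paper's \cref{bdstclose}/\cref{boptclose}, and your explicit pointwise-to-supremum step via emptiness of the internal sets $\{|G(\NSE{W})-\NSE{W}|\geq 1/n\}$ just spells out a step the paper leaves implicit.
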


We now give the definition of hyperfinite policy functions. 

\begin{definition}\label{defhypolicy}
A hyperfinite policy function is an internal function $f: T_{S}\times \NSE{\Delta}(T_{\Theta})\to \NSE{\Delta}(T_X)$, where $f(x|s,\mu)$ denotes the probability that the agent chooses $x$ if she is in state $s$ and her belief is $\mu$. 
\end{definition}

We now discuss the agent's belief updating according to the hyperfinite SMDP $(\langle T_S, T_X, h_0, \mathbb{Q}, \Pi, \delta \rangle, \mathscr{Q}_{T_{\Theta}})$. 
The agent who faces the regular SMDP $(\langle S,X,q_0,Q,\pi,\delta \rangle$, $\mathcal{Q}_{\Theta})$ has a prior $\mu_0\in \Delta(\Theta)$, which is assumed to have full support. 
Let $\nu_0(\theta)=\NSE{\mu_0}(B_{\Theta}(\theta))$ for all $\theta\in T_{\Theta}$. 
As $\mu_0$ has full support, then $\nu_0(\theta)>0$ for all $\theta\in T_{\Theta}$.
The agent who faces the hyperfinite SMDP $(\langle T_S, T_X, h_0, \mathbb{Q}, \Pi, \delta \rangle, \mathscr{Q}_{T_{\Theta}})$ has the prior $\nu_0$. 
Let $h=(s_0,x_0,\dotsc,s_k,x_k,\dotsc)$ be an $\NSE{}$infinite hyperfinite history of state-action pairs and let $\NSE{\mathbb{H}}_{T_S\times T_X}=(T_S\times T_X)^{\NSE{\Nats}}$ be the space of infinite histories. 
It is clear that $\NSE{\mathbb{H}}_{T_S\times T_X}\subset \NSE{\mathbb{H}}=(\NSE{S}\times \NSE{X})^{\NSE{\Nats}}$. For two $h_1, h_2\in \NSE{\mathbb{H}}$, we write $h_1\approx h_2$ if every coordinates of $h_1$ and $h_2$ are infinitely close.  
For every $k\in \NSE{\Nats}$, let $\nu_{k}:\NSE{\mathbb{H}}_{T_S\times T_X}\to \NSE{\Delta}(T_\Theta)$ denote the agent's hyperfinite belief at time $k$, defined recursively by $\nu_{k}(h)=\mathbb{B}(s_{k-1},s_{k-1},s_k,\mu_{k-1}(h))$.
When the context is clear, we drop $h$ from the notation. 
Recall that we use $d_{P}$ to denote the Prokhorov metric on $\Delta(\Theta)$. 

For a fixed $h\in \NSE{\mathbb{H}}_{T_S\times T_X}$, in each period $k$, there is a state $s_{k}$ and a belief $\nu_k$. 
Given a hyperfinite policy function $F$, the agent chooses an action randomly according to $F(\cdot|s_k,\nu_k)$.
After an action $x_k$ is realized, the state $s_{k+1}$ is drawn according to the true hyperfinite transition probability $\mathbb{Q}(\cdot|s_k, x_k)$. 
The agent then updates her hyperfinite belief to $\nu_{k+1}$ according to the hyperfinite Bayes operator $\mathbb{B}$. 
Thus, the primitives of the problem and the hyperfinite policy function $F$ induce an internal probability measure $\NSE{\mathbb{P}}_{T_S\times T_X}^{F}$ over $\NSE{\mathbb{H}}_{T_S\times T_X}$. 

For every $k\in \NSE{\Nats}$, we define the hyperfinite frequency of the state-action pairs at time $k$ to be a 
function $M_{k}: \NSE{\mathbb{H}}_{T_S\times T_X}\to \NSE{\Delta}(T_S\times T_X)$ such that
$
M_{k}(h)(\{(s, x)\})=\frac{1}{k}\sum_{\tau=0}^{k}\mathbf{1}_{(s, x)}(s_{\tau}, x_{\tau}),
$
where $\mathbf{1}_{(s, x)}$ denote the indicator function on the point $(s, x)$. 

Recall that $\mathbb{K}_{\mathbb{Q}}: \NSE{\PM{T_S\times T_X}}\times T_{\Theta}\to \NSE{\NNReals}$ denote the hyperfinite weighted Kullback Leibler divergence, and the set of closest parameter values given $m\in \NSE{\PM{T_S\times T_X}}$ is the set
$
T_{\Theta}^{\mathbb{Q}}(m)=\argmin_{\theta\in T_{\Theta}}\mathbb{K}_{\mathbb{Q}}(m, \theta).
$
The set of almost closest parameter values given $m\in \NSE{\PM{T_S\times T_X}}$ is the external set
$
\hat{T}_{\Theta}^{\mathbb{Q}}(m)=\{\hat{\theta}\in T_{\Theta}: \mathbb{K}_{\mathbb{Q}}(m, \hat{\theta})\approx \min_{\theta\in T_{\Theta}}\mathbb{K}_{\mathbb{Q}}(m, \theta)\}. 
$
We now introduce the concept of S-identification for hyperfinite SMDP. 
\begin{definition}\label{defhyident}
The hyperfinite SMDP $(\langle T_S, T_X, h_0, \mathbb{Q}, \Pi, \delta \rangle, \mathscr{Q}_{T_{\Theta}})$ is S-identified given $m\in \NSE{\PM{T_S\times T_X}}$ if $\theta, \theta'\in \hat{T}_{\Theta}^{\mathbb{Q}}(m)$ implies that $\mathbb{Q}_{\theta}(\cdot|s, x)\approx \mathbb{Q}_{\theta'}(\cdot|s, x)$ for all $(s, x)\in T_S\times T_X$. 
\end{definition}

\subsubsection{Proof of \cref{blfupdatemain}}

In this section, we present a rigorous proof of \cref{blfupdatemain} via the hyperfinite SMDP constructed in the previous section.   
We start by proving the continuity of Bayesian operator.

\begin{proof}[\textbf{Proof of \cref{bopects}}]
Note that $S\times X\times S\times \Delta(\Theta)$ is a compact metric space. 
Pick $(s,x,s',\mu)\in \NSE{S}\times \NSE{X}\times \NSE{S}\times \NSE{\Delta(\NSE{\Theta})}$. 
Then, $\mu$ is an internal probability measure on $\NSE{\Theta}$. 
The standard part of $\mu$ in $\Delta(\Theta)$ with respect to the Prokhorov metric is simply the push-down of $\mu$, which we denote by $\pd{\mu}$. 
By \cref{assumptionsctsmodel} and \cref{pdint}, we have
$
\int_{\Theta}q_{(\theta,\ST(s),\ST(x))}(\ST(s'))\pd{\mu}(\dee \theta)\approx \int_{\NSE{\Theta}}\NSE{q}_{(\theta,s,x)}(s')\mu(\dee \theta).
$
Pick a set $A\in \BorelSets S$ such that $A$ is a continuity set of $B(\ST(s), \ST(x), \ST(s'), \pd{\mu})$. 
Then, by \cref{assumptionsctsmodel}, $A$ is a continuity set of $\pd{\mu}$, which implies that $\pd{\mu}(A)\approx \mu(\NSE{A})$. 
Hence, by \cref{assumptionsctsmodel} and \cref{pdint} again, we have
$
\int_{A}q_{(\theta,\ST(s),\ST(x))}(\ST(s'))\pd{\mu}(\dee \theta)\approx \int_{\NSE{A}}\NSE{q}_{(\theta,s,x)}(s')\mu(\dee \theta),
$
completing the proof. 
\end{proof}

The proof of \cref{beliefBEclose} relies on the following two lemmas:

\begin{lemma}\label{bdstclose}
Suppose \cref{assumptionsctsmodel} holds. 
For all $\theta\in T_{\Theta}$ and all $(s,x,s')\in T_S\times T_X\times T_S$:
$
q_{(\ST(\theta), \ST(s), \ST(x))}(\ST(s'))\approx \frac{\mathbb{Q}_{\theta}(s'|s, x)}{\NSE{\lambda}_{T_S}(\{s'\})}.
$
\end{lemma}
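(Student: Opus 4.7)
The plan is to rewrite both sides of the claimed approximation in terms of $\NSE{\lambda}$ and then exploit joint continuity of the density together with the infinitesimal diameter of the partitioning cells $B_S(s')$. By the construction of the hyperfinite representation in \cref{sechypresent}, $\mathbb{Q}_\theta(s'|s,x)=\NSE{Q}_\theta(s,x)(B_S(s'))$ and $\NSE{\lambda}_{T_S}(\{s'\})=\NSE{\lambda}(B_S(s'))$. Applying the transfer of \cref{assumptionsctsmodel}(i)--(ii), which gives an absolutely continuous density $q_{(\theta,s,x)}$ for $Q_\theta(s,x)$ with respect to $\lambda$, one obtains
$$
\NSE{Q}_\theta(s,x)(B_S(s'))=\int_{B_S(s')}\NSE{q}_{(\theta,s,x)}(y)\,\NSE{\lambda}(\dee y).
$$

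Set $r:=q_{(\ST\theta,\ST s,\ST x)}(\ST s')$, a standard finite nonnegative real. The key step is to show that $\NSE{q}_{(\theta,s,x)}(y)\approx r$ uniformly for $y\in B_S(s')$. For any standard $\epsilon>0$, joint continuity of $q$ on the compact set $\Theta\times S\times X\times S$ (\cref{assumptionsctsmodel}(ii)) provides a standard $\delta>0$ with $|q_{(\theta',s'',x'')}(y')-r|<\epsilon$ whenever $(\theta',s'',x'',y')$ lies within $\delta$ of $(\ST\theta,\ST s,\ST x,\ST s')$. By \cref{hyperapproxsp}(iii), every $y\in B_S(s')$ is infinitely close to $s'$ and hence to $\ST s'$; combined with $(\theta,s,x)\approx(\ST\theta,\ST s,\ST x)$, transfer then yields $|\NSE{q}_{(\theta,s,x)}(y)-r|<\epsilon$ for all $y\in B_S(s')$.

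Integrating this uniform estimate against $\NSE{\lambda}$ over $B_S(s')$ gives
$$
\bigl|\NSE{Q}_\theta(s,x)(B_S(s'))-r\cdot\NSE{\lambda}(B_S(s'))\bigr|\leq \epsilon\cdot\NSE{\lambda}(B_S(s')).
$$
By \cref{hyperapproxsp}(iv), $B_S(s')$ contains a nonempty $\NSE{}$open set, and since $\lambda$ has full support, transfer yields $\NSE{\lambda}(B_S(s'))>0$. Dividing by this quantity and letting $\epsilon$ range over the standard positive reals delivers $\frac{\NSE{Q}_\theta(s,x)(B_S(s'))}{\NSE{\lambda}(B_S(s'))}\approx r$, as required.

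The main delicate point is that $\NSE{\lambda}(B_S(s'))$ is typically infinitesimal, so one cannot be content with an infinitesimal absolute approximation of $\NSE{Q}_\theta(s,x)(B_S(s'))$ to $r\cdot\NSE{\lambda}(B_S(s'))$; both sides are themselves infinitesimal, and a naive $\approx$-estimate would be destroyed by division. What saves the argument is that the error is controlled in a \emph{relative} sense by $\epsilon\cdot\NSE{\lambda}(B_S(s'))$, which in turn rests on the \emph{uniform} smallness of $\NSE{q}_{(\theta,s,x)}(y)-r$ over $y\in B_S(s')$. This uniformity is exactly where the joint continuity of $q$ on a compact domain, transferred to the nonstandard setting, is indispensable.
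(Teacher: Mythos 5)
Your proof is correct and takes essentially the same route as the paper's: both write $\mathbb{Q}_{\theta}(s'|s,x)=\NSE{Q}_{\theta}(s,x)(B_{S}(s'))$ as the $\NSE{}$integral of $\NSE{q}_{(\theta,s,x)}$ against $\NSE{\lambda}$ over $B_{S}(s')$ via transfer of \cref{assumptionsctsmodel}, use the infinitesimal diameter of the cell together with joint continuity to see that the density is nearly constant (and near $q_{(\ST(\theta),\ST(s),\ST(x))}(\ST(s'))$) on $B_{S}(s')$, and use full support of $\lambda$ plus the fact that $B_{S}(s')$ contains a nonempty $\NSE{}$open set to justify dividing by $\NSE{\lambda}_{T_S}(\{s'\})>0$. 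If anything, your explicit relative-error bound $\epsilon\cdot\NSE{\lambda}(B_{S}(s'))$ before dividing makes precise a step the paper leaves implicit, namely why comparing two infinitesimal quantities and then dividing by the infinitesimal cell measure is harmless.
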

\begin{proof}
Pick $\theta\in T_{\Theta}$ and $(s,x,s')\in T_S\times T_X\times T_S$. 
Note that $\lambda$ has full support. 
By the construction of $T_S$,  we know that $\NSE{\lambda}_{T_S}(\{s'\})>0$. 
By the transfer principle, we have 
$
\mathbb{Q}_{\theta}(s'|s, x)=\NSE{Q}_{\theta}(s, x)(B_{S}(s'))=\int_{B_{S}(s')}\NSE{q}_{(\theta,s,x)}(y)\NSE{\lambda}(\dee y).
$
We also have 
$
\mathbb{Q}_{\theta}(s'|s, x)=\int_{B_{S}(s')}\frac{\mathbb{Q}_{\theta}(s'|s, x)}{\NSE{\lambda}_{T_S}(\{s'\})}\NSE{\lambda}(\dee y). 
$
By \cref{assumptionsctsmodel}, we have $\NSE{q}_{(\theta,s,x)}(y)\approx \frac{\mathbb{Q}_{\theta}(s'|s, x)}{\NSE{\lambda}_{T_S}(\{s'\})}$ for all $y\in B_{S}(s')$.
By \cref{assumptionsctsmodel}, we have $q_{(\ST(\theta), \ST(s), \ST(x))}(\ST(s'))\approx \frac{\mathbb{Q}_{\theta}(s'|s, x)}{\NSE{\lambda}_{T_S}(\{s'\})}$.
\end{proof}

Let $d_P$ denote the Prokhorov metric on $\Delta(\Theta)$. 
By the transfer principle, $\NSE{d}_P$ is the $\NSE{}$Prokhorov metric on $\NSE{\PM{\NSE{\Theta}}}$. 

\begin{lemma}\label{boptclose}
Suppose \cref{assumptionsctsmodel} holds. 
For all $\mu\in \NSE{\Delta}(T_{\Theta})$ and all $(s,x,s')\in T_S\times T_X\times T_S$:
$
\NSE{d}_{P}\big(\NSE{B}(\ST(s), \ST(x), \ST(s'), \pd{\mu}), \mathbb{B}(s,x,s',\mu)\big)\approx 0.
$
That is, the hyperfinite Bayesian operator $\mathbb{B}(s,x,s',\mu)$ is in the monad of the standard Bayesian operator $B(\ST(s), \ST(x), \ST(s'), \pd{\mu})$, with respect to the Prokhorov metric $d_{P}$. 
\end{lemma}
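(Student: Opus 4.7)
The plan is to establish Lemma \ref{boptclose} by showing that the push-down $\pd{(\mathbb{B}(s,x,s',\mu))}$ coincides with $B(\ST(s), \ST(x), \ST(s'), \pd{\mu})$ as a Borel probability measure on $\Theta$. Once this is in hand, Lemma \ref{compactpd} together with the remark preceding it (an internal probability measure on a compact Hausdorff space lies in the monad of its push-down in the Prokhorov metric) immediately yields the claimed infinitesimal bound.

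First, I would fix $(s,x,s') \in T_S \times T_X \times T_S$ and $\mu \in \NSE{\Delta}(T_{\Theta})$, then factor the common quantity $\NSE{\lambda}_{T_S}(\{s'\})$ from the numerator and denominator of $\mathbb{B}(s,x,s',\mu)(A)$ for an internal $A \subset T_{\Theta}$, so that
\begin{equation*}
\mathbb{B}(s,x,s',\mu)(A) = \frac{\sum_{\theta \in A} \bigl[\mathbb{Q}_{\theta}(s'|s,x) / \NSE{\lambda}_{T_S}(\{s'\})\bigr] \mu(\{\theta\})}{\sum_{\theta \in T_{\Theta}} \bigl[\mathbb{Q}_{\theta}(s'|s,x) / \NSE{\lambda}_{T_S}(\{s'\})\bigr] \mu(\{\theta\})}.
\end{equation*}
By Lemma \ref{bdstclose} and the joint continuity of $q$ from \cref{assumptionsctsmodel}, the density ratio $\mathbb{Q}_{\theta}(s'|s,x)/\NSE{\lambda}_{T_S}(\{s'\})$ is infinitely close to $\NSE{q}_{(\theta,s,x)}(s')$ uniformly in $\theta \in T_{\Theta}$, since every such $\theta$ is near-standard and $q$ is uniformly continuous on the compact product $\Theta \times S \times X \times S$. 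Compactness together with strict positivity of $q$ yields uniform lower and upper bounds, so these integrands are $\NSE{}$bounded (hence S-integrable with respect to $\mu$) and the denominator is bounded away from zero.

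Next, I would convert the hyperfinite sums into internal integrals by transfer, pass to Loeb integrals via \cref{Lintegral}, and invoke \cref{pdint} to obtain
\begin{equation*}
\sum_{\theta \in T_{\Theta}} \bigl[\mathbb{Q}_{\theta}(s'|s,x) / \NSE{\lambda}_{T_S}(\{s'\})\bigr] \mu(\{\theta\}) \approx \int_{\Theta} q_{(\theta, \ST(s), \ST(x))}(\ST(s')) \, \pd{\mu}(\dee \theta),
\end{equation*}
which is precisely the denominator of $B(\ST(s),\ST(x),\ST(s'),\pd{\mu})(A)$. For the numerator, I would restrict attention to Borel sets $A$ in the $\pi$-system of sets satisfying $\ST^{-1}(A) = \bigcup_{k\in \Nats} A_k$ for an increasing sequence of internal sets, as furnished by Theorem \ref{metricsatisfy} applied to the compact metric space $\Theta$. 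Applying the approximation above to each $A_k$ and passing to the limit by monotone convergence, followed by a second use of \cref{pdint}, gives $\int_{A} q_{(\theta, \ST(s), \ST(x))}(\ST(s')) \pd{\mu}(\dee \theta)$. Dividing the (noninfinitesimal) denominator into the numerator produces $B(\ST(s), \ST(x), \ST(s'), \pd{\mu})(A)$, so the two probability measures agree on this generating $\pi$-system and hence on $\BorelSets{\Theta}$ by the uniqueness lemma \cref{uniqlemma}.

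The main technical obstacle is controlling the ``ratio of two hyperfinite sums'' step: since the push-down is computed by separately taking standard parts of numerator and denominator, we must use strict positivity of $q$ to ensure that the denominator is noninfinitesimal, and the infinitesimal error introduced by replacing $\mathbb{Q}_{\theta}(s'|s,x)/\NSE{\lambda}_{T_S}(\{s'\})$ with $\NSE{q}_{(\theta,s,x)}(s')$ must be uniform in $\theta \in T_{\Theta}$; both are secured by compactness and continuity of $q$ together with near-standardness of every $\theta \in T_{\Theta} \subset \NSE{\hat{\Theta}} \subset \NSE{\Theta}$.
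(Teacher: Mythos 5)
Your proposal is correct and follows essentially the same route as the paper's own (very terse) proof: both rest on \cref{bdstclose} to identify the hyperfinite densities with the standard ones, \cref{pdint} to convert Loeb integrals into integrals against the push-down measures, and the fact that an internal probability measure on the compact space $\Theta$ lies in the Prokhorov monad of its push-down. Your additional steps---factoring out $\NSE{\lambda}_{T_S}(\{s'\})$, using positivity and compactness of $q$ to keep the denominator non-infinitesimal, and identifying the push-down on a generating $\pi$-system via \cref{metricsatisfy} and \cref{uniqlemma}---are exactly the details the paper leaves implicit.
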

\begin{proof}
For all $\mu\in \NSE{\Delta}(T_{\Theta})$, as $\Theta$ is compact, $\pd{\mu}$ is a well-defined probability measure on $\Theta$, and $\mu$ is in the monad of $\pd{\mu}$ with respect to the Prokhorov metric. Then the result follows from \cref{bdstclose} and \cref{pdint}. 
\end{proof}

We now give a rigorous proof of \cref{beliefBEclose}

\begin{proof}[\textbf{Proof of \cref{beliefBEclose}}]
Let $W_0$ be the restriction of $\NSE{W}$ on $T_S\times \NSE{\Delta}(T_{\Theta})$. 
For all $(\mu, s, x)\in \NSE{\Delta}(T_{\Theta})\times T_S\times T_X$, by \cref{Qbarlemma}, \cref{bopects} and \cref{pdint}, we have
\begin{align*}
&\int_{T_S}\{\Pi(s,x,s')+\delta W_0(s', \mu')\}\bar{\mathbb{Q}}_{\mu}(\dee s'|s, x)\\
&\approx \int_{S}\{\pi(\ST(s),\ST(x),s')+\delta W(s', \pd{\mu}')\}\bar{Q}_{\pd{\mu}}(\dee s'|\ST(s), \ST(x)).
\end{align*}
Thus, we can conclude that
\begin{align*}
&\max_{x\in T_X}\int_{T_S}\{\Pi(s,x,s')+\delta W_0(s', \mu')\}\bar{\mathbb{Q}}_{\mu}(\dee s'|s, x)\\
&\approx \max_{x\in X}\int_{S}\{\pi(\ST(s),x,s')+\delta W(s', \pd{\mu}')\}\bar{Q}_{\pd{\mu}}(\dee s'|\ST(s), x)\\
&=W(\ST(s),\pd{\mu})\approx W_0(s, \mu).
\end{align*}
Let 
$
\mathbb{L}(g)(s)=\max_{x\in T_X}\int_{T_S}\{\Pi(s,x,s')+\delta g(s',\mu')\}\bar{\mathbb{Q}}_{\mu}(\dee s'|s, x)
$
for all internal function $g: T_S\times \NSE{\Delta}(T_{\Theta})\to \NSE{\Reals}$. 
Note that $\mathbb{L}$ is a contraction with the contraction factor $\delta$.
Moreover, we can find $\mathbb{W}$ as following: start with $W_0$ and define a sequence $\{W_n\}_{n\in \NSE{\Nats}}$ by $W_{n+1}=\mathbb{H}(W_n)$. 
Then $\mathbb{W}$ is the $\NSE{}$limit of $\{W_n\}_{n\in \NSE{\Nats}}$. 
Thus, we have
$
\NSE{d_{\sup}}(W_0, \mathbb{W})\leq \frac{1}{1-\delta}\NSE{d_{\sup}}(W_1, W_0)\approx 0. 
$
As $W$ is continuous, we have $\mathbb{W}(s, \mu)\approx W(\ST(s), \pd{\mu})$ for all $(s, \mu)\in T_S\times \NSE{\Delta}(T_{\Theta})$. 
\end{proof}
We now prove two important consequences of \cref{beliefBEclose}, which will be used in the proof of \cref{blfupdatemain}.
Let $Y$ be an arbitrary metric space and $U$ be a subset of $\NSE{Y}$. 
The \textit{nonstandard hull} of $U$, denoted by $\hat{U}$, is the collection of all points in $\NSE{Y}$ that are infinitely close to some point in $U$. 
That is:
$
\hat{U}=\{y\in \NSE{Y}: (\exists u\in U )(\NSE{d}_{Y}(y, u)\approx 0)\}. 
$
\begin{lemma}\label{nshulllemma}
Suppose  \cref{assumptionsctsmodel} holds. 
Let $(s, \mu)\in T_S\times \NSE{\PM{T_{\Theta}}}$. 
Suppose
$
y\in \hat{\argmax_{x\in T_X}}\int_{T_S}\{\Pi(s,x,s')+\delta \mathbb{W}(s', \mathbb{B}(s,x,s',\mu))\}\bar{\mathbb{Q}}_{\mu}(\dee s'|s, x).
$
Then
\begin{align*}
&\int_{T_S}\{\Pi(s,y,s')+\delta \mathbb{W}(s', \mathbb{B}(s,y,s',\mu))\}\bar{\mathbb{Q}}_{\mu}(\dee s'|s, y)\\
&\approx \max_{x\in T_X}\int_{T_S}\{\Pi(s,x,s')+\delta \mathbb{W}(s', \mathbb{B}(s,x,s',\mu))\}\bar{\mathbb{Q}}_{\mu}(\dee s'|t, x).
\end{align*}
\end{lemma}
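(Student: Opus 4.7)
Fix $(s,\mu)\in T_S\times \NSE{\PM{T_{\Theta}}}$, and define the internal function $F\colon T_X\to \NSE{\Reals}$ by
\[
F(x)=\int_{T_S}\{\Pi(s,x,s')+\delta \mathbb{W}(s', \mathbb{B}(s,x,s',\mu))\}\bar{\mathbb{Q}}_{\mu}(\dee s'|s, x).
\]
The goal is to show that $F(y)\approx \max_{x\in T_X}F(x)$ whenever $y$ lies in the nonstandard hull of $\argmax_{x\in T_X}F(x)$. By definition of the nonstandard hull, I may pick some $y^{*}\in T_X$ with $y^{*}\in \argmax_{x\in T_X}F(x)$ and $\NSE{d_X}(y,y^{*})\approx 0$, so that $\ST(y)=\ST(y^{*})$. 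It therefore suffices to prove that $F$ is ``S-continuous'' at $y^{*}$, i.e.\ $F(y)\approx F(y^{*})$.

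The plan is to push both values down to the standard Bellman operator and use continuity of its ingredients. Define the standard analog
\[
F^{\mathrm{st}}(x)=\int_{S}\{\pi(\ST(s),x,s')+\delta W(s', B(\ST(s),x,s',\pd{\mu}))\}\bar{Q}_{\pd{\mu}}(\dee s'|\ST(s),x), \quad x\in X.
\]
First, I will establish that $F(y)\approx F^{\mathrm{st}}(\ST(y))$ and $F(y^{*})\approx F^{\mathrm{st}}(\ST(y^{*}))$; since $\ST(y)=\ST(y^{*})$, the conclusion follows. To verify the approximation $F(x_{0})\approx F^{\mathrm{st}}(\ST(x_{0}))$ for any $x_{0}\in T_X$, I would argue as follows: (i) by \cref{beliefBEclose}, $\mathbb{W}(s',\mathbb{B}(s,x_{0},s',\mu))\approx W(\ST(s'),\pd{\mathbb{B}(s,x_{0},s',\mu)})$ for every $s'\in T_S$; (ii) by \cref{boptclose} the push-down $\pd{\mathbb{B}(s,x_{0},s',\mu)}$ equals $B(\ST(s),\ST(x_{0}),\ST(s'),\pd{\mu})$, and by \cref{bopects} the standard Bayesian operator $B$ is jointly continuous; (iii) the payoff $\NSE{\pi}(s,x_{0},s')$ is bounded and S-continuous by the continuity of $\pi$ in \cref{regsmdp}; (iv) the integrand is thus bounded (since $W$ is a fixed point in $\cC[S\times \Delta(\Theta)]$) and S-continuous in $s'$; (v) by the push-down of $\bar{\mathbb{Q}}_{\mu}(s,x_{0})$ given by \cref{Qbarlemma} (adapted to the belief $\mu$, whose push-down is $\pd{\mu}$) together with \cref{pdint}, the hyperfinite integral is infinitely close to the standard integral. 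This delivers $F(x_{0})\approx F^{\mathrm{st}}(\ST(x_{0}))$.

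Applying this with $x_0=y$ and $x_0=y^{*}$ and using $\ST(y)=\ST(y^{*})$ gives $F(y)\approx F^{\mathrm{st}}(\ST(y))=F^{\mathrm{st}}(\ST(y^{*}))\approx F(y^{*})=\max_{x\in T_X}F(x)$, which is precisely the claim.

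The main obstacle is verifying step (v), namely that the push-down of $\bar{\mathbb{Q}}_{\mu}(s,x_{0})$ yields the correct standard measure $\bar{Q}_{\pd{\mu}}(\ST(s),\ST(x_{0}))$ when the weighting $\mu$ is an arbitrary element of $\NSE{\PM{T_{\Theta}}}$ rather than the fixed equilibrium belief, and that the bounded S-continuous integrand may be replaced by its standard part inside the integral. Once this push-down is in place, everything else is a direct application of the joint continuity of $\pi$, $B$, and $W$ together with \cref{beliefBEclose}.
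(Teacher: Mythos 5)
Your proposal is correct, but it packages the argument differently from the paper. The paper's proof never leaves the hyperfinite level: it picks $x_0$ in the internal argmax with $y\approx x_0$ and shows $F(y)\approx F(x_0)$ directly, noting that regularity of the SMDP gives $\Pi(s,y,s')\approx \Pi(s,x_0,s')$ for every $s'\in T_S$ and that the internal kernels $\bar{\mathbb{Q}}_{\mu}(\cdot|s,y)$ and $\bar{\mathbb{Q}}_{\mu}(\cdot|s,x_0)$ are infinitesimally close in the $\NSE{}$Prokhorov metric, and then invoking \cref{boptclose} and \cref{beliefBEclose} to control the belief-updating term $\mathbb{W}(s',\mathbb{B}(s,\cdot,s',\mu))$ in the integrand. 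You instead push both values down to a common standard object: you prove the pointwise correspondence $F(x_0)\approx F^{\mathrm{st}}(\ST(x_0))$ for all $x_0\in T_X$ and conclude from $\ST(y)=\ST(y^{*})$. Both routes rest on the same two key lemmas (\cref{boptclose} and \cref{beliefBEclose}); yours makes explicit the appeals to \cref{Qbarlemma} and \cref{pdint} that the paper leaves implicit, and as a by-product it re-derives the lift/push-down identity for the Bellman objective that the paper establishes inside the proof of \cref{beliefBEclose} itself, so your argument is slightly more redundant but also more self-contained. Finally, the ``main obstacle'' you flag in step (v) is not actually an obstacle: \cref{Qbarlemma} is already stated for an arbitrary internal $\lambda\in \NSE{\PM{T_{\Theta}}}$, not just the equilibrium belief, so $\pd{(\bar{\mathbb{Q}}_{\mu}(s,x_0))}=\bar{Q}_{\pd{\mu}}(\ST(s),\ST(x_0))$ is available off the shelf; since the state space is compact in this section, the integrand is uniformly bounded (boundedness of $\pi$ and of $W$ as a fixed point in $\cC[S\times \Delta(\Theta)]$) and every point of $T_S$ is near-standard, so replacing the S-continuous integrand by its standard part costs only an infinitesimal and \cref{pdint} closes the gap.
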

\begin{proof}
Pick 
$
x_0\in \argmax_{x\in T_X}\int_{T_S}\{\Pi(s,x,s')+\delta \mathbb{W}(s', \mathbb{B}(s,x,s',\mu))\}\bar{\mathbb{Q}}_{\mu}(\dee s'|s, x)
$
such that $y\approx x_0$. 
As the SMDP $(\langle S,X,q_0,Q,\pi,\delta \rangle$, $\mathcal{Q}_{\Theta})$ is regular, we have $\Pi(s, y, s')\approx \Pi(s, x_0, s')$ for all $s'\in T_S$. By regularity again, the $\NSE{}$Prokhorov distance between $\bar{\mathbb{Q}}(\cdot|s, y)$ and $\bar{\mathbb{Q}}(\cdot|s, x_0)$ is infinitesimal. 
The result then follows from \cref{boptclose} and \cref{beliefBEclose}. 
\end{proof}

\begin{lemma}\label{nshullstar}
Suppose \cref{assumptionsctsmodel} holds.
Let $(s_1, \mu)\in \NSE{S}\times \NSE{\PM{\NSE{\Theta}}}$ and $(s_2, \nu)\in T_S\times \NSE{\PM{T_{\Theta}}}$ such that $s_1\approx s_2$ and the $\NSE{}$Prokhorov distance between $\mu$ and $\nu$ is infinitesimal. 
Suppose 
$
x\in \argmax_{\hat{x}\in \NSE{X}}\int_{\NSE{S}}\{\NSE{\pi}(s_1,\hat{x},s')+\delta \NSE{W}(s', \NSE{B}(s_1,\hat{x},s',\mu))\}\NSE{\bar{Q}}_{\mu}(\dee s'|s_1, \hat{x}).
$
Then, for all $y\in T_X$ such that $y\approx x$:

\begin{align*}
&\int_{T_S}\{\Pi(s_2,y,s')+\delta \mathbb{W}(s',\mathbb{B}(s_2,y,s',\nu))\}\bar{\mathbb{Q}}_{\nu}(\dee s'|s_2, y)\\
&\approx \max_{\hat{x}\in T_X}\int_{T_S}\{\Pi(s_2,\hat{x},s')+\delta \mathbb{W}(s', \mathbb{B}(s_2,\hat{x},s',\nu))\}\bar{\mathbb{Q}}_{\nu}(\dee s'|s_2, \hat{x}).
\end{align*}
\end{lemma}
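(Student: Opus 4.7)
The plan is to route both sides of the claim through the common standard object $W(s_\star,\mu_\star)$, where $s_\star := \ST(s_1) = \ST(s_2)$ and $\mu_\star := \pd{\mu} = \pd{\nu}$; the second identity holds because $\mu$ and $\nu$ are infinitesimally close in $\NSE{}$Prokhorov distance and hence share their push-down as elements of $\PM{\Theta}$. Compactness of $X$ forces every element of $\NSE{X}$ to be near-standard, so $x_\star := \ST(x) = \ST(y) \in X$.

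The central technical step is an ``integrand transfer'' valid for every $\hat{x} \in T_X$:
\begin{align*}
& \int_{T_S}\!\{\Pi(s_2,\hat{x},s')+\delta \mathbb{W}(s',\mathbb{B}(s_2,\hat{x},s',\nu))\}\,\bar{\mathbb{Q}}_{\nu}(\dee s'\,|\,s_2,\hat{x})\\
& \qquad \approx \int_{S}\!\{\pi(s_\star,\ST(\hat{x}),s')+\delta W(s',B(s_\star,\ST(\hat{x}),s',\mu_\star))\}\,\bar{Q}_{\mu_\star}(\dee s'\,|\,s_\star,\ST(\hat{x})).
\end{align*}
This follows by combining four ingredients: (i) \cref{beliefBEclose}, which gives $\mathbb{W}(s',\mathbb{B}(s_2,\hat{x},s',\nu)) \approx W(\ST(s'), \pd{\mathbb{B}(s_2,\hat{x},s',\nu)})$; (ii) \cref{boptclose}, identifying the push-down $\pd{\mathbb{B}(s_2,\hat{x},s',\nu)}$ with $B(s_\star,\ST(\hat{x}),\ST(s'),\mu_\star)$; (iii) the regularity of $\pi$ and the continuity of $W$, which let us pass $\Pi$ and the entire integrand to their standard parts pointwise; and (iv) \cref{Qbarlemma} together with \cref{pdint}, which push the internal integral down onto $\bar{Q}_{\mu_\star}(s_\star,\ST(\hat{x}))$. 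Boundedness of $\Pi$ and $\mathbb{W}$ on the hyperfinite domain (inherited via transfer from continuity on compact spaces) ensures the push-down of the integral is legitimate.

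Applying this transfer at $\hat{x}=y$ shows that the left-hand side of the desired conclusion is infinitely close to the standard Bellman integrand evaluated at $(s_\star, x_\star, \mu_\star)$. Running the entirely analogous transfer for the \emph{nonstandard} problem at $(s_1,\mu)$ with action $x$ — admissible because $s_1 \approx s_\star$, $\pd{\mu} = \mu_\star$, and $\ST(x) = x_\star$ — shows that this same standard quantity equals $\NSE{W}(s_1,\mu)$, since $x$ is by hypothesis a nonstandard argmax and the standard Bellman equation \cref{blvbellman} transfers to give $\NSE{W}(s_1,\mu)$ as the value of that argmax. By continuity of $W$, $\NSE{W}(s_1,\mu) \approx W(s_\star,\mu_\star)$. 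For the right-hand side, the max over $\hat{x} \in T_X$ of the hyperfinite integrand is exactly $\mathbb{W}(s_2,\nu)$ by the hyperfinite Bellman equation \cref{hybeliefbellman}, which is again infinitely close to $W(s_\star,\mu_\star)$ by \cref{beliefBEclose}. Both sides therefore collapse to the same standard value, proving the claim.

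The main obstacle is the integrand transfer step itself: the integrand depends on the action $\hat{x}$ simultaneously through the transition measure $\bar{\mathbb{Q}}_{\nu}(\cdot|s_2,\hat{x})$ and through the updated belief $\mathbb{B}(s_2,\hat{x},s',\nu)$ sitting inside $\mathbb{W}$. Keeping track of both dependencies requires the joint continuity supplied by \cref{assumptionsctsmodel} and \cref{bopects}, so that the push-down of the integrand is a continuous function of $s'$ and \cref{pdint} applies uniformly in $s'$.
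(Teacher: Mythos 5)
Your proof is correct, but it takes a genuinely different route from the paper's. The paper argues at the level of argmax correspondences: using \cref{boptclose}, \cref{beliefBEclose} and regularity of the SMDP, it shows that the $\argmax$ of the nonstandard problem at $(s_1,\mu)$ and the $\argmax$ of the hyperfinite problem at $(s_2,\nu)$ have the same nonstandard hull, so that $y\approx x$ places $y$ in the hull of the hyperfinite argmax; \cref{nshulllemma} then converts near-membership in the argmax into near-optimality of the value. You instead argue at the level of values, collapsing both sides of the claim onto the common standard anchor $W(\ST(s_2),\pd{\nu})$: the left-hand side via the integrand push-down (essentially the same computation as in the proof of \cref{beliefBEclose}, with $\mathbb{W}$ in place of the restriction of $\NSE{W}$, a replacement which \cref{beliefBEclose} itself licenses), and the right-hand side via the hyperfinite Bellman equation \cref{hybeliefbellman} together with \cref{beliefBEclose}; the hypothesis that $x$ is a nonstandard argmax enters through the transfer of \cref{blvbellman}, which makes the nonstandard objective at $x$ equal to $\NSE{W}(s_1,\mu)\approx W(\ST(s_1),\pd{\mu})$, so that $\ST(x)$ is in fact an exact maximizer of the standard problem---a byproduct the paper's proof does not produce. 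What your route costs is the need for the pre-max standard integrand to be jointly S-continuous in $(s,\hat{x},\mu)$, equivalently an analogue of \cref{Qbarlemma} for arguments ranging over $\NSE{S}\times\NSE{X}$ and internal measures on $\NSE{\Theta}$ rather than over $T_S\times T_X$ and $\NSE{\PM{T_{\Theta}}}$; this is routine given compactness, \cref{assumptionsctsmodel}, \cref{bopects} and \cref{nsweaklemma}, but it is not isolated as a lemma in the paper, whereas the paper's argmax-hull route sidesteps it by comparing the two internal problems to each other directly rather than to the standard problem.
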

\begin{proof}
By \cref{boptclose}, \cref{beliefBEclose} and the fact that $(\langle S,X,q_0,Q,\pi,\delta \rangle$, $\mathcal{Q}_{\Theta})$ is a regular SMDP, we have
\begin{align*}
&\argmax_{\hat{x}\in \NSE{X}}\int_{\NSE{S}}\{\NSE{\pi}(s_1,\hat{x},s')+\delta \NSE{W}(s', \NSE{B}(s_1,\hat{x},s',\mu))\}\NSE{\bar{Q}}_{\mu}(\dee s'|s_1, \hat{x})\\
&\approx \argmax_{\hat{x}\in T_X} \int_{T_S}\{\Pi(s_2,\hat{x},s')+\delta \mathbb{W}(s', \mathbb{B}(s_2,\hat{x},s',\nu))\}\bar{\mathbb{Q}}_{\nu}(\dee s'|s_2, \hat{x}). \\
\end{align*}
Moreover, we have 
\begin{align*}
&y\in \hat{\argmax_{z\in \NSE{X}}}\int_{\NSE{S}}\{\NSE{\pi}(s_1,z,s')+\delta \NSE{W}(s', \NSE{B}(s_1,z,s',\mu))\}\NSE{\bar{Q}}_{\mu}(\dee s'|s_1, z)\\
&=\hat{\argmax_{z\in T_X}} \int_{T_S}\{\Pi(s_2,z,s')+\delta \mathbb{W}(s', \mathbb{B}(s_2,z,s',\nu))\}\bar{\mathbb{Q}}_{\nu}(\dee s'|s_2, z).\\
\end{align*}
By \cref{nshulllemma}, we have the desired result. 
\end{proof}
We are now at the place to prove \cref{blfupdatemain}. 
We start with the following lemma, which shows that the agent's belief $\mu_k$ and the agent's hyperfinite belief $\nu_k$ remains close for some infinite steps. 
\begin{lemma}\label{updateclose}
Suppose \cref{assumptionsctsmodel} holds.
Let $\tilde{h}\in \NSE{\mathbb{H}}_{T_S\times T_X}$ and $h\in \NSE{\mathbb{H}}$ be such that $\tilde{h}\approx h$. 
Then, for every $k\in \Nats$, $\NSE{d}_{P}(\NSE{\mu}_{k}, \nu_k)\approx 0$. 
Hence, there exists some $k_0\in \NSE{\Nats}\setminus \Nats$ such that $\NSE{d}_{P}(\NSE{\mu}_{k}, \nu_k)\approx 0$ for all $k\leq k_0$. 
\end{lemma}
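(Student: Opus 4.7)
The plan is to argue by induction on $k\in \Nats$ that $\NSE{d}_{P}(\NSE{\mu}_{k}(h),\nu_{k}(\tilde h))\approx 0$ for every finite $k$, and then deploy a standard overspill plus $\aleph_1$-saturation argument to upgrade this to uniform closeness along some infinite initial segment. The key analytical fact powering the induction is that $B$ is continuous on the compact metric space $S\times X\times S\times \Delta(\Theta)$ (by \cref{bopects}), hence uniformly continuous, so $\NSE{B}$ is $S$-continuous; combined with \cref{boptclose}, this lets us treat $\mathbb{B}$ and $\NSE{B}$ as infinitesimal perturbations of one another whenever the inputs agree up to infinitesimals.

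For the base case $k=0$, $\NSE{\mu}_{0}(h)=\NSE{\mu_0}$ and $\nu_{0}(\tilde h)=\nu_0$, where $\nu_0(\{\theta\})=\NSE{\mu_0}(B_\Theta(\theta))$. Using the infinitesimal diameter of each $B_\Theta(\theta)$, one checks that $\pd{\nu_0}=\mu_0=\pd{\NSE{\mu_0}}$; then compactness of $\Theta$ together with \cref{compactpd} places both $\NSE{\mu_0}$ and $\nu_0$ in the Prokhorov monad of $\mu_0$. For the inductive step, write the coordinates of $h$ and $\tilde h$ as $(s_i,x_i)$ and $(\tilde s_i,\tilde x_i)$; by $h\approx \tilde h$ together with compactness of $S$ and $X$ (so that every element of $\NSE{S},\NSE{X}$ is near-standard), $s_{k-1}\approx \tilde s_{k-1}$, $x_{k-1}\approx \tilde x_{k-1}$, $s_k\approx \tilde s_k$, with common standard parts. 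Transfer of the recursion gives $\NSE{\mu}_k(h)=\NSE{B}(s_{k-1},x_{k-1},s_k,\NSE{\mu}_{k-1}(h))$ and $\nu_k(\tilde h)=\mathbb{B}(\tilde s_{k-1},\tilde x_{k-1},\tilde s_k,\nu_{k-1}(\tilde h))$. The induction hypothesis forces $\pd{\NSE{\mu}_{k-1}(h)}=\pd{\nu_{k-1}(\tilde h)}$. $S$-continuity of $\NSE{B}$ yields $\NSE{\mu}_k(h)\approx \NSE{B}(\ST(s_{k-1}),\ST(x_{k-1}),\ST(s_k),\pd{\NSE{\mu}_{k-1}(h)})$, which by the coincidence of standard parts and push-downs equals $\NSE{B}(\ST(\tilde s_{k-1}),\ST(\tilde x_{k-1}),\ST(\tilde s_k),\pd{\nu_{k-1}(\tilde h)})$, and this in turn is infinitesimally close to $\mathbb{B}(\tilde s_{k-1},\tilde x_{k-1},\tilde s_k,\nu_{k-1}(\tilde h))=\nu_k(\tilde h)$ by \cref{boptclose}.

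For the assertion on an infinite $k_0$, introduce the internal quantity $g(k)=\max_{j\leq k}\NSE{d}_{P}(\NSE{\mu}_{j},\nu_{j})$. The first part gives $g(k)\approx 0$ for every $k\in \Nats$. For each $n\in \Nats$ the internal set $\{k\in \NSE{\Nats}:k>n\text{ and }g(k)<1/n\}$ contains all sufficiently large finite $k$, hence by overspill is nonempty; the family over $n\in \Nats$ is decreasing and so has the finite intersection property, and $\aleph_1$-saturation yields a common element $k_0\in \NSE{\Nats}\setminus \Nats$ with $g(k_0)\approx 0$, which is exactly the claim. I expect the main obstacle to be the inductive step: although both $\NSE{B}$ and $\mathbb{B}$ are Bayesian operators, they are built from different data (densities against $\NSE{\lambda}$ versus hyperfinite masses $\mathbb{Q}_\theta(s'|s,x)$), and their comparison must be mediated by the common push-down of the beliefs. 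Compactness of $\Theta$ (ensuring that any internal probability measure is Prokhorov-close to its push-down) together with \cref{boptclose} is precisely the tool that bridges this gap.
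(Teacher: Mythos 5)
Your proof is correct and follows essentially the same route as the paper: the paper's (very terse) proof is exactly an induction whose base case is the infinitesimal Prokhorov closeness of $\NSE{\mu}_0$ and $\nu_0$, whose inductive step is powered by \cref{bopects} (continuity, hence S-continuity of $\NSE{B}$ on the compact space $S\times X\times S\times \Delta(\Theta)$) together with \cref{boptclose}, and whose second claim is obtained from the first by saturation. Your write-up simply fills in the details the paper leaves implicit, including the overspill/countable-saturation argument for the infinite $k_0$, all of which are sound.
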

\begin{proof}
The second claim follows from the first claim and saturation. 
We now prove the first claim by induction. 
Clearly, we have $\NSE{d}_{P}(\NSE{\mu}_{0}, \nu_0)\approx 0$. 
The inductive case follows from \cref{bopects} and \cref{boptclose}. 
\end{proof}

If the frequency of state-action pairs $(m_k)_{k\in \Nats}$ uniformly converges in total variation distance to some $m\in \Delta(S\times X)$ for all $h$ in some set $H\subset \mathbb{H}$, then the hyperfinite frequence of state-action pairs $(M_k)_{k\in \NSE{\Nats}}$ almost converges to some $M\in \NSE{\Delta}(T_S\times T_X)$ for all $\tilde{h}$ in some internal $\tilde{H}\subset \NSE{\mathbb{H}}_{T_S\times T_X}$. 
As one would expect, $M$ and $\tilde{H}$ are closely related to $m$ and $H$, respectively. 

\begin{lemma}\label{mtcvglemma}
Let $H\in \mathbb{H}$ be such that $(m_{k}(h))_{k\in \Nats}$ converges in total variation distance to some $m\in \Delta(S\times X)$ for all $h\in H$.
Let $M\in \NSE{\Delta}(T_S\times T_X)$ be $M(\{(s, x)\})=\NSE{m}(B_{S}(s)\times B_{X}(x))$. 
Let $\tilde{H}$ be the internal subset of $\NSE{\mathbb{H}}_{T_S\times T_X}$ consisting of $\tilde{h}=(\tilde{s}_0, \tilde{x}_0,\dotsc, \tilde{s}_k, \tilde{x}_k, \dotsc)\in \tilde{H}$ such that
$
(\exists h=(s_0,x_0,\dotsc, s_k, x_k, \dotsc)\in \NSE{H})(\forall k\in \NSE{\Nats})(s_k\in B_{S}(\tilde{s}_k)\wedge x_k\in B_{X}(\tilde{x}_k)).
$
Then $(M_{k}(\tilde{h}))_{k\in \NSE{\Nats}}$ $\NSE{}$converges to $M$ for all $\tilde{h}\in \tilde{H}$. 
Moreover, if $(m_k)_{k\in \Nats}$ uniformly converges to $m$ on $H$ in total variation distance, then 
$
\|M_{k}(\tilde{h})-M\|_{\mathrm{TV}}\approx 0
$
for all $\tilde{h}\in \tilde{H}$ and all $k\in \NSE{\Nats}\setminus \Nats$. 
\end{lemma}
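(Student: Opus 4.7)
The plan is to reduce the statement to the transferred convergence hypothesis via a pointwise total-variation bound between $M_k(\tilde h)$ and $\NSE{m}_k(h)$ for any $h\in \NSE{H}$ that witnesses membership of $\tilde h$ in $\tilde H$. Fix $\tilde h=(\tilde s_0,\tilde x_0,\dotsc)\in \tilde H$ and, invoking the defining property of $\tilde H$, pick some $h=(s_0,x_0,\dotsc)\in \NSE{H}$ with $s_\tau\in B_S(\tilde s_\tau)$ and $x_\tau\in B_X(\tilde x_\tau)$ for every $\tau\in \NSE{\Nats}$. Since $\{B_S(t)\}_{t\in T_S}$ and $\{B_X(y)\}_{y\in T_X}$ are $\NSE{}$partitions, this forces $\tilde s_\tau=t_{s_\tau}$ and $\tilde x_\tau=t_{x_\tau}$ for every $\tau$, whence $\mathbf{1}_{\{(t,y)\}}(\tilde s_\tau,\tilde x_\tau)=\mathbf{1}_{B_S(t)\times B_X(y)}(s_\tau,x_\tau)$ for every $(t,y)\in T_S\times T_X$. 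Averaging over $\tau\leq k$ yields
\[
M_k(\tilde h)(\{(t,y)\})=\NSE{m}_k(h)\bigl(B_S(t)\times B_X(y)\bigr),\qquad M(\{(t,y)\})=\NSE{m}\bigl(B_S(t)\times B_X(y)\bigr),
\]
the second identity being the definition of $M$.

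Summing over $(t,y)\in A$ for an arbitrary internal $A\subset T_S\times T_X$ and setting $\tilde A=\bigcup_{(t,y)\in A}B_S(t)\times B_X(y)\in \NSE{\BorelSets{S\times X}}$ gives $M_k(\tilde h)(A)=\NSE{m}_k(h)(\tilde A)$ and $M(A)=\NSE{m}(\tilde A)$; taking the supremum over all such internal $A$ then produces the pointwise bound
\[
\|M_k(\tilde h)-M\|_{\mathrm{TV}}\leq \|\NSE{m}_k(h)-\NSE{m}\|_{\mathrm{TV}}.
\]
For the first claim, I would transfer the pointwise hypothesis ``$m_k(h)\to m$ in TV for every $h\in H$'' to the internal assertion ``$\NSE{m}_k(h)\to \NSE{m}$ in $\NSE{\mathrm{TV}}$ for every $h\in \NSE{H}$'' and combine it with the bound above to obtain internal convergence $M_k(\tilde h)\to M$ for every $\tilde h\in\tilde H$.

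For the uniform claim, the hypothesis ``$\forall\epsilon>0\,\exists K\in\Nats\,\forall k\geq K\,\forall h\in H,\ \|m_k(h)-m\|_{\mathrm{TV}}<\epsilon$'' transfers to the same statement with $H$ replaced by $\NSE{H}$, the witness $K$ remaining standard because it is quantified ahead of $h$. Thus for each standard $\epsilon>0$ there is a standard $K$ with $\|\NSE{m}_k(h)-\NSE{m}\|_{\mathrm{TV}}<\epsilon$ for every $k\geq K$ and every $h\in \NSE{H}$; specializing to $k\in \NSE{\Nats}\setminus\Nats$, the bound holds for every standard $\epsilon>0$, so $\|\NSE{m}_k(h)-\NSE{m}\|_{\mathrm{TV}}\approx 0$, and the pointwise bound yields $\|M_k(\tilde h)-M\|_{\mathrm{TV}}\approx 0$ uniformly in $\tilde h\in\tilde H$. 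The one step requiring care is the cell-by-cell identification of hyperfinite masses on $T_S\times T_X$ with $\NSE{}$Borel masses of the corresponding partition cells in $\NSE{(S\times X)}$; once that correspondence is cleanly set up (as above, via the chosen representative $h$), the remainder is a direct application of the transfer principle.
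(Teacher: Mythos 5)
Your proof is correct and follows essentially the same route as the paper's: fix $\tilde h\in\tilde H$, pick the witness history $h\in\NSE{H}$, use the partition property to identify cell masses via $M_k(\tilde h)(\{(t,y)\})=\NSE{m}_k(h)\bigl(B_{S}(t)\times B_{X}(y)\bigr)$, and transfer the pointwise (resp.\ uniform, with standard witness $K$) convergence hypothesis to conclude. The only difference is cosmetic: you spell out the total-variation domination $\|M_k(\tilde h)-M\|_{\mathrm{TV}}\leq\|\NSE{m}_k(h)-\NSE{m}\|_{\mathrm{TV}}$ over internal sets $A\subset T_S\times T_X$, a step the paper compresses into ``by the construction of $\tilde{H}$, $T_S$ and $T_X$.''
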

\begin{proof}
Pick some $\tilde{h}=(\tilde{s}_0, \tilde{x}_0,\dotsc, \tilde{s}_k, \tilde{x}_k, \dotsc)\in \tilde{H}$. 
By the construction of $\tilde{H}$, there exists some $h=(s_0,x_0,\dotsc, s_k, x_k, \dotsc)\in \NSE{H}$ such that $s_k\in B_{S}(\tilde{s}_k)$ and $x_k\in B_{X}(\tilde{x}_k)$ for all $k\in \NSE{\Nats}$. 
By the transfer principle,
$
\big(\NSE{m}_{k}(h)(B_{S}(s)\times B_{X}(x))\big)_{k\in \NSE{\Nats}}\ \text{$\NSE{}$converges to}\ \NSE{m}(B_{S}(s)\times B_{X}(x))
$
for all $(s, x)\in T_S\times T_X$. For all $(s, x)\in T_S\times T_X$, note that $(\tilde{s}_k, \tilde{x}_k)=(s, x)$ if and only if $(s_k, x_k)\in B_{S}(s)\times B_{X}(x)$. 
Hence, we conclude that $\big(M_{k}(\tilde{h})(\{(s, x)\})\big)_{k\in \NSE{\Nats}}$ $\NSE{}$converges to $M(\{(s, x)\})$ for all $(s, x)\in T_S\times T_X$. 
Now, suppose that $(m_k)_{k\in \Nats}$ uniformly converges to $m$ on $H$ in total variation distance, by saturation, we have $\|\NSE{m}_{k}(h)-\NSE{m}\|_{\mathrm{TV}} \approx 0$
for all $h\in \NSE{H}$ and all $k\in \NSE{\Nats}\setminus \Nats$. 
By the construction of $\tilde{H}$, $T_S$ and $T_X$, we have the desired result. 
\end{proof}

Using essentially the same proof as in Lemma 2 of EP, we have:

\begin{lemma}\label{inacpvset}
Let $F$ be a hyperfinite policy function. 
Suppose that $\|M_k(\tilde{h})-M\|_{\mathrm{TV}}\approx 0$
for some $k\in \NSE{\Nats}$ and all $\tilde{h}$ in some internal $\tilde{H}\subset \NSE{\mathbb{H}}_{T_S\times T_X}$ such that $\NSE{\mathbb{P}}_{T_S\times T_X}^{F}(\tilde{H})>0$. 
Then, for any internal set $A\supset \hat{T}_{\Theta}^{\mathbb{Q}}(m)$, $\nu_k(A)\approx 1$ on some internal set $\tilde{H}'\subset \tilde{H}$ with $\NSE{\mathbb{P}}_{T_S\times T_X}^{F}(\tilde{H}')>0$.
\end{lemma}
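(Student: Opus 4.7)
The plan is to transplant Esponda and Pouzo's finite-space Bayesian consistency argument (their Lemma 2) into the hyperfinite SMDP by transfer, supplemented by one genuinely nonstandard step: an overspill reduction that turns the external approximate-minimizer set $\hat T_{\Theta}^{\mathbb{Q}}(M)$ into a standard neighborhood of the exact-minimizer set. I read the ``$m$'' in the conclusion as the hyperfinite measure $M$ produced in \cref{mtcvglemma}, since $\mathbb{K}_{\mathbb{Q}}$ requires its first argument to lie in $\NSE{\PM{T_S\times T_X}}$.

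First I would apply overspill to the internal family of sublevel sets $B_n := \{\theta\in T_{\Theta} : \mathbb{K}_{\mathbb{Q}}(M,\theta) \le \min_{\theta'\in T_\Theta}\mathbb{K}_{\mathbb{Q}}(M,\theta') + 1/n\}$. Because $\hat T_{\Theta}^{\mathbb{Q}}(M) = \bigcap_{n\in\Nats}B_n \subset A$ while $T_{\Theta}\setminus A$ is internal, saturation delivers a standard $N\in\Nats$ with $B_N\subset A$; equivalently, there is a standard $\epsilon>0$ with $\mathbb{K}_{\mathbb{Q}}(M,\theta) \ge \min_{\theta'} \mathbb{K}_\mathbb{Q}(M,\theta') + \epsilon$ for every $\theta\in T_{\Theta}\setminus A$. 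Fix an exact minimizer $\theta^*\in T_{\Theta}^{\mathbb{Q}}(M)\subset A$, which exists by transfer of the fact that a continuous function on a finite set attains its minimum.

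For each $\tilde h=(s_0,x_0,s_1,x_1,\dots)\in\tilde H$, the transfer of Bayes' rule for the hyperfinite posterior $\nu_k$ yields
\[
\log\frac{\nu_k(\{\theta\})}{\nu_k(\{\theta^*\})} \;=\; \log\frac{\nu_0(\{\theta\})}{\nu_0(\{\theta^*\})} \;+\; \sum_{\tau=0}^{k-1}\log\frac{\mathbb{Q}_{\theta}(s_{\tau+1}\mid s_\tau,x_\tau)}{\mathbb{Q}_{\theta^*}(s_{\tau+1}\mid s_\tau,x_\tau)}.
\]
Under $\NSE{\mathbb{P}}_{T_S\times T_X}^{F}$ each $s_{\tau+1}$ is drawn from $\mathbb{Q}(\cdot\mid s_\tau,x_\tau)$, so the $(s_\tau,x_\tau)$-conditional expectation of the $\tau$-th summand is $\mathbb{K}_{\mathbb{Q}}(\delta_{(s_\tau,x_\tau)},\theta^*)-\mathbb{K}_{\mathbb{Q}}(\delta_{(s_\tau,x_\tau)},\theta)$. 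Averaging over $\tau$ and using $\|M_k(\tilde h)-M\|_{\mathrm{TV}}\approx 0$ together with the linearity of $\mathbb{K}_{\mathbb{Q}}$ in its first argument gives a Cesàro conditional mean infinitely close to $\mathbb{K}_{\mathbb{Q}}(M,\theta^*)-\mathbb{K}_{\mathbb{Q}}(M,\theta)\le -\epsilon$ for every $\theta\notin A$.

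The main obstacle, and precisely the content that must be transferred from EP's Lemma 2, is upgrading this conditional-expectation bound to a pointwise bound on a subset of $\tilde H$ of positive probability. The zero-mean internal differences $Z_\tau := Y_\tau - \NSE{\mathbb{E}}[Y_\tau\mid s_\tau,x_\tau]$, with $Y_\tau$ the $\tau$-th log-likelihood summand, form an internal martingale difference sequence, so transferring EP's finite Chebyshev-type estimate produces an internal $\tilde H'\subset\tilde H$ with $\NSE{\mathbb{P}}_{T_S\times T_X}^{F}(\tilde H')$ infinitely close to $\NSE{\mathbb{P}}_{T_S\times T_X}^{F}(\tilde H)>0$ on which $|k^{-1}\sum_\tau Z_\tau|<\epsilon/2$. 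Combining the three estimates gives $\nu_k(\{\theta\})\le e^{-k\epsilon/2}\,\nu_0(\{\theta\})/\nu_0(\{\theta^*\})$ on $\tilde H'$ for every $\theta\in T_\Theta\setminus A$; the exponential-versus-hyperfinite-cardinality balance needed to deduce $\nu_k(A^c)\approx 0$ (equivalently $\nu_k(A)\approx 1$) is exactly the final step of EP's finite argument, transferred to the internal summation over $T_{\Theta}\setminus A$.
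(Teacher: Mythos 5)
Your skeleton is the right one, and it is essentially all the paper itself offers: the paper's ``proof'' of this lemma is a single sentence deferring to the proof of Lemma~2 of EP, so any complete proof must do what you attempt --- expand the hyperfinite posterior by Bayes' rule, split the normalized log-likelihood ratios into a Kullback--Leibler drift plus martingale-difference noise, and control the two separately. Your opening overspill step (extracting a \emph{standard} $\epsilon>0$ with $\mathbb{K}_{\mathbb{Q}}(M,\theta)\geq\min_{\theta'}\mathbb{K}_{\mathbb{Q}}(M,\theta')+\epsilon$ off $A$) and your reading of the misprinted $m$ as the internal measure $M$ of \cref{mtcvglemma} are both correct and necessary. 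But the argument breaks at the terminal aggregation, which is exactly the point where ``transfer of EP's finite argument'' is not automatic. You compare every bad parameter with a single atom $\theta^*$, arriving at $\nu_k(A^c)\leq e^{-k\epsilon/2}\,\nu_0(A^c)/\nu_0(\{\theta^*\})\leq e^{-k\epsilon/2}/\nu_0(\{\theta^*\})$. In EP's setting the prior mass of the comparison parameter is a fixed standard positive number, which is why their last step closes; here $\nu_0(\{\theta^*\})=\NSE{\mu_0}(B_{\Theta}(\theta^*))$ is the prior mass of a cell of infinitesimal diameter, hence infinitesimal whenever $\mu_0$ is atomless, and it bears no relation to $k$: the representation $T_{\Theta}$ was fixed before, and independently of, the $k$ supplied in the hypothesis, so nothing rules out $\nu_0(\{\theta^*\})\ll e^{-k\epsilon/2}$, making the bound vacuous. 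The repair is to compare $A^c$ with the internal set $G=\{\theta\in T_{\Theta}:\mathbb{K}_{\mathbb{Q}}(M,\theta)\leq\min_{\theta'}\mathbb{K}_{\mathbb{Q}}(M,\theta')+\epsilon/8\}$ instead of an atom: using \cref{assumptionsctsmodel} one checks, much as in \cref{almstparamet}, that $G$ contains every cell meeting $\NSE{U}$ for some standard open $U$ on which $K_{Q}(m,\cdot)$ is within $\epsilon/16$ of its minimum, whence $\nu_0(G)\gtrapprox\mu_0(U)>0$ is non-infinitesimal and the pairwise estimates summed over $\theta'\in A^c$, $\theta\in G$ give $\nu_k(A^c)\leq e^{-k\epsilon/4}/\nu_0(G)\approx 0$.

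A second gap: your set $\tilde H'$ must make $|k^{-1}\sum_{\tau}Z_{\tau}(\theta)|<\epsilon/2$ hold \emph{simultaneously} for all $\theta\in T_{\Theta}\setminus A$ (and, after the above repair, for all $\theta\in G$), but $Z_{\tau}$ depends on $\theta$ and these are hyperfinitely many parameters; a union bound multiplies the per-$\theta$ Chebyshev bound $O(1/(k\epsilon^2))$ by a hyperfinite cardinality and need not be infinitesimal. (This issue is invisible in EP's proof because there the normalized log-likelihood factors through the finitely many empirical transition frequencies.) The fix again comes from \cref{assumptionsctsmodel}: the densities are bounded away from $0$ and $\infty$ and uniformly continuous on the compact set $\Theta\times S\times X\times S$, so $\theta\mapsto Z_{\tau}(\theta)$ has a standard modulus of continuity uniform in $\tau$ and in the path; it then suffices to run Chebyshev at the standard-finitely many points of a standard $\epsilon'$-net of $\Theta$ and extend to all of $T_{\Theta}$ by this S-equicontinuity. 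Two smaller caveats: your TV-to-KL step also silently uses this uniform boundedness of the log-ratios (total-variation closeness alone does not control an unbounded integrand), and the construction only yields $\NSE{\mathbb{P}}_{T_S\times T_X}^{F}(\tilde H')>0$ when $\NSE{\mathbb{P}}_{T_S\times T_X}^{F}(\tilde H)$ is non-infinitesimal --- ``infinitely close to a positive infinitesimal'' is not positive --- a defect your proof shares with the paper's own statement of the lemma.
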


We now study the connection between identification and S-identification.
The following lemma follows from essentially the same proof of \cref{beliefmain}.

\begin{lemma}\label{almstparamet}
Let $m\in \Delta(S\times X)$ and let $M\in \NSE{\PM{T_S\times T_X}}$ be the same as in \cref{mtcvglemma}. 
Then, for every $\hat{\theta}\in \hat{T}_{\Theta}^{\mathbb{Q}}(M)$, $\ST(\hat{\theta})\in \Theta_{Q}(m)$. 
\end{lemma}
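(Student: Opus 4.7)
The plan is to mimic the proof of \cref{beliefmain} almost verbatim, with $M$ playing the role of the internal measure $m$ used there, and with the standard measure $m$ playing the role of $\pd{m}$. The key fact we exploit is that by the construction in \cref{mtcvglemma}, namely $M(\{(s,x)\}) = \NSE{m}(B_{S}(s)\times B_{X}(x))$, we have $\pd{M}=m$, so the push-down of $M$ is exactly the standard measure $m$.

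First I would fix $\hat\theta\in \hat T_{\Theta}^{\mathbb{Q}}(M)$ and verify that $\mathbb{K}_{\mathbb{Q}}(M,\hat\theta)$ is finite (i.e., not $\NSE{}$infinite): pick any $\theta_{0}\in\hat\Theta$ and let $t_{\theta_{0}}\in T_{\Theta}$ be its representative, so $\theta_{0}\in B_{\Theta}(t_{\theta_{0}})$. By the second clause of \cref{KLapprox} applied to $\lambda=M$, $\mathbb{K}_{\mathbb{Q}}(M,t_{\theta_{0}})\approx K_{Q}(m,\theta_{0})<\infty$. Hence $\min_{\theta\in T_{\Theta}}\mathbb{K}_{\mathbb{Q}}(M,\theta)$ is finite, so $\mathbb{K}_{\mathbb{Q}}(M,\hat\theta)$ is finite. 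Combined with the first clause of \cref{KLapprox}, this forces $\ST(\hat\theta)\in\Theta_{m}$ (if $K_{Q}(m,\ST(\hat\theta))$ were infinite, the $\gtrapprox$ bound would blow up $\mathbb{K}_{\mathbb{Q}}(M,\hat\theta)$).

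Next I would argue by contradiction. Assume $\ST(\hat\theta)\notin\Theta_{Q}(m)$, so there exist $\theta'\in\Theta$ and $n\in\Nats$ with $K_{Q}(m,\theta')<K_{Q}(m,\ST(\hat\theta))-\tfrac{1}{n}$. Using that $K_{Q}(m,\cdot)$ is continuous on $\Theta_{m}$ and that $\hat\Theta\subset\Theta_{m}$ is dense in $\Theta$, I can find $\hat\theta'\in\hat\Theta$ with $K_{Q}(m,\hat\theta')<K_{Q}(m,\ST(\hat\theta))-\tfrac{1}{2n}$. Let $t_{\hat\theta'}\in T_{\Theta}$ be the unique element with $\hat\theta'\in B_{\Theta}(t_{\hat\theta'})$. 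Then the second clause of \cref{KLapprox} gives $\mathbb{K}_{\mathbb{Q}}(M,t_{\hat\theta'})\approx K_{Q}(m,\hat\theta')$, while the first clause gives $\mathbb{K}_{\mathbb{Q}}(M,\hat\theta)\gtrapprox K_{Q}(m,\ST(\hat\theta))$. Chaining these inequalities yields
\[
\mathbb{K}_{\mathbb{Q}}(M,t_{\hat\theta'}) \;\approx\; K_{Q}(m,\hat\theta') \;<\; K_{Q}(m,\ST(\hat\theta))-\tfrac{1}{2n} \;\lessapprox\; \mathbb{K}_{\mathbb{Q}}(M,\hat\theta)-\tfrac{1}{2n}.
\]
But since $\hat\theta\in\hat T_{\Theta}^{\mathbb{Q}}(M)$, we have $\mathbb{K}_{\mathbb{Q}}(M,\hat\theta)\approx\min_{\theta\in T_{\Theta}}\mathbb{K}_{\mathbb{Q}}(M,\theta)\le \mathbb{K}_{\mathbb{Q}}(M,t_{\hat\theta'})$, contradicting the displayed strict gap of order $\tfrac{1}{2n}$.

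The main obstacle, as in \cref{beliefmain}, is knowing that the first clause of \cref{KLapprox} is applicable with $\ST(\hat\theta)$ in place of a generic standard parameter; this is handled by the preliminary verification that $\mathbb{K}_{\mathbb{Q}}(M,\hat\theta)$ is finite, together with the identity $\pd{M}=m$. Everything else is a direct transcription of the argument already given for \cref{beliefmain}.
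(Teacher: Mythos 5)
Your proposal is correct and takes essentially the same route as the paper: the paper's entire proof of this lemma is the remark that it ``follows from essentially the same proof of'' \cref{beliefmain}, and your argument is exactly that proof transcribed, with $M$ in the role of the internal equilibrium measure and the identity $\pd{M}=m$ (which you correctly isolate as the key fact) in the role of $\pd{m}$; the passage from exact minimizers to almost-minimizers is handled correctly, since the derived gap of $\tfrac{1}{2n}$ is non-infinitesimal and therefore still contradicts $\mathbb{K}_{\mathbb{Q}}(M,\hat\theta)\approx\min_{\theta\in T_{\Theta}}\mathbb{K}_{\mathbb{Q}}(M,\theta)$.

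One caveat on your preliminary step: clause (i) of \cref{KLapprox} is stated only under the hypothesis $\ST(\theta)\in\Theta_{\pd{\lambda}}$, so its contrapositive---``if $K_{Q}(m,\ST(\hat\theta))$ were infinite, the $\gtrapprox$ bound would blow up $\mathbb{K}_{\mathbb{Q}}(M,\hat\theta)$''---is not literally licensed by the lemma as stated. This is not fatal: the paper's own proof of \cref{beliefmain} silently applies clause (i) to $\theta_1$ without first verifying $\ST(\theta_1)\in\Theta_{\pd{m}}$, so you are matching (indeed slightly improving on) the paper's level of rigor; moreover, in the setting where this lemma is actually invoked, \cref{assumptionsctsmodel} gives every $Q_{\theta}(s,x)$ an everywhere-positive density with respect to the full-support reference measure $\lambda$, so $Q(s,x)$ is dominated by $Q_{\theta}(s,x)$ for \emph{every} $\theta\in\Theta$, and then the truncation argument inside the proof of \cref{KLapprox} (integrating the bounded truncations $g_n=\min\{g,n\}$ and passing them across \cref{eptapprox} and \cref{pdint}) shows directly that $K_{Q}(m,\ST(\hat\theta))=\infty$ would force $\mathbb{K}_{\mathbb{Q}}(M,\hat\theta)$ to exceed every standard real, which is exactly the implication your step needs.
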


The hyperfinite SMDP is S-identified if the SMDP is identified. 

\begin{lemma}\label{idthyidt}
Suppose \cref{assumptionsctsmodel} holds and the SMDP $(\langle S,X,q_0,Q,\pi,\delta \rangle$, $\mathcal{Q}_{\Theta})$ is identified given $m\in \Delta(S\times X)$. 
Let $M\in \NSE{\PM{T_S\times T_X}}$ be the same as in \cref{mtcvglemma}.
Then the hyperfinite SMDP $(\langle T_S, T_X, h_0, \mathbb{Q}, \Pi, \delta \rangle, \mathscr{Q}_{T_{\Theta}})$ is S-identified given $M$. 
\end{lemma}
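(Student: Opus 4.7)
The plan is to chain together the identification hypothesis on the standard SMDP with a push-down/nonstandard-hull argument, relying crucially on \cref{almstparamet}.

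First I would take an arbitrary pair $\theta, \theta' \in \hat{T}_{\Theta}^{\mathbb{Q}}(M)$ and invoke \cref{almstparamet} to conclude that both $\ST(\theta)$ and $\ST(\theta')$ lie in $\Theta_{Q}(m)$. The hypothesis that the standard SMDP is identified given $m$ then immediately yields $Q_{\ST(\theta)}(\cdot \mid s_0, x_0) = Q_{\ST(\theta')}(\cdot \mid s_0, x_0)$ as probability measures on $(S, \BorelSets S)$ for every $(s_0, x_0) \in S \times X$.

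Next I would transfer this standard-level equality back down to the hyperfinite level. Fix any $(s, x) \in T_S \times T_X$; since $T_S \subset \NSE{S}$ and $T_X \subset \NSE{X}$, the points $\ST(s) \in S$ and $\ST(x) \in X$ are well-defined. By \cref{wkconverge},
\[
Q_{\ST(\theta)}(\ST(s),\ST(x))(A) \;=\; \Loeb{\mathbb{Q}_{\theta}(s,x)}(\ST^{-1}(A)\cap T_S)
\]
for every $A \in \BorelSets S$, and likewise with $\theta'$ in place of $\theta$. Combined with the equality from step one, this shows that $\mathbb{Q}_{\theta}(s,x)$ and $\mathbb{Q}_{\theta'}(s,x)$ have the same push-down measure on $S$.

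Finally, to conclude S-identification I need $\mathbb{Q}_{\theta}(s, x) \approx \mathbb{Q}_{\theta'}(s, x)$ in the sense of the $\NSE{}$Prokhorov metric. Because $S$ is compact (or, in the $\sigma$-compact case, because we reduce to the compact truncation $\NSE{S}_N$ as in \cref{sechypresentsigma}), \cref{compactpd} guarantees that the common push-down is a genuine probability measure, and a standard fact about the Loeb construction (see e.g.\ \cref{nsweaklemma}) says that an internal probability measure is in the monad (with respect to the Prokhorov metric) of its push-down whenever the latter is a probability measure. Hence both $\mathbb{Q}_{\theta}(s, x)$ and $\mathbb{Q}_{\theta'}(s, x)$ lie in the monad of the same standard measure, and therefore satisfy $\mathbb{Q}_{\theta}(s, x) \approx \mathbb{Q}_{\theta'}(s, x)$, as required.

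The only non-routine step is justifying that equality of the two push-down measures forces the two internal measures to be infinitely close in the $\NSE{}$Prokhorov metric; this is the one place where compactness of $S$ (or the compact-truncation device) enters, and it is the step I would state most carefully. Everything else is a direct application of results already established in the paper (\cref{almstparamet}, \cref{wkconverge}, and \cref{compactpd}).
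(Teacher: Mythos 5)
Your proof is correct, but its second half takes a genuinely different route from the paper's. After the common opening step (\cref{almstparamet} plus the identification hypothesis, giving $Q_{\ST(\theta)}(\cdot|s_0,x_0)=Q_{\ST(\theta')}(\cdot|s_0,x_0)$ for all standard $(s_0,x_0)$), the paper appeals directly to \cref{assumptionsctsmodel}: joint continuity of the densities $q_{(\theta,s,x)}(\cdot)$ on the compact product space forces $\NSE{Q}_{\theta}(\cdot|s,x)\approx \NSE{Q}_{\ST(\theta)}(\cdot|s,x)$, and the chain $\NSE{Q}_{\theta}\approx\NSE{Q}_{\ST(\theta)}=\NSE{Q}_{\ST(\theta')}\approx\NSE{Q}_{\theta'}$ then passes immediately to the hyperfinite kernels $\mathbb{Q}_{\theta},\mathbb{Q}_{\theta'}$; this is in effect a uniform-density (total-variation-type) closeness. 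You instead go through push-downs: \cref{wkconverge} identifies the push-down of $\mathbb{Q}_{\theta}(s,x)$ with $Q_{\ST(\theta)}(\ST(s),\ST(x))$, the two push-downs coincide by identification, and the monad property of push-downs plus the triangle inequality for the $\NSE{}$Prokhorov metric gives $\mathbb{Q}_{\theta}(s,x)\approx\mathbb{Q}_{\theta'}(s,x)$. Two remarks on that last step: the fact you need is exactly the remark following \cref{defpushdown} (an internal probability measure lies in the Prokhorov monad of its push-down whenever the push-down is a probability measure) rather than \cref{nsweaklemma}, which concerns weak convergence of sequences; and your conclusion is closeness only in the Prokhorov sense, which is weaker than what the paper's density argument yields. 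This weaker sense is still adequate: the paper's convention is that $\approx$ between measures means infinite closeness in the Prokhorov metric, and the downstream use of S-identification in \cref{blfupdatemain} only integrates S-bounded, S-continuous integrands ($\Pi+\delta\mathbb{V}$) against these kernels, for which Prokhorov closeness of near-standard internal measures suffices. What each approach buys: the paper's is shorter and exploits the density structure of \cref{assumptionsctsmodel} to get the stronger closeness; yours needs only the Prokhorov continuity already built into regularity (via \cref{wkconverge}) for this step. One small slip: your parenthetical about the $\sigma$-compact case is out of place, since the learning framework in which this lemma lives assumes a compact state space throughout—which is precisely what makes \cref{wkconverge} and \cref{compactpd} available to you.
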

\begin{proof}
By \cref{almstparamet}, 
for every $\theta\in \hat{T}_{\Theta}^{\mathbb{Q}}(m)$, we have $\ST(\theta)\in \Theta_{Q}(m)$. 
Thus, for $\theta, \theta'\in \hat{T}_{\Theta}^{\mathbb{Q}}(m)$, by \cref{assumptionsctsmodel}, we have 
$
\NSE{Q}_{\theta}(\cdot|s, x)\approx \NSE{Q}_{\ST(\theta)}(\cdot|s, x)=\NSE{Q}_{\ST(\theta')}(\cdot|s, x)\approx \NSE{Q}_{\theta'}(\cdot|s, x)
$
for all $(s, x)\in \NSE{S}\times \NSE{X}$. 
This immediately implies that $\mathbb{Q}_{\theta}(\cdot|s, x)\approx \mathbb{Q}_{\theta'}(\cdot|s, x)$
for all $(s, x)\in T_S\times T_X$, completing the proof. 
\end{proof}

We now prove the main result, \cref{blfupdatemain}. 

\begin{proof}[\textbf{Proof of \cref{blfupdatemain}}]
Let $M\in \NSE{\PM{T_S\times T_X}}$ be the same as in \cref{mtcvglemma}.
Let $\tilde{H}\subset \NSE{\mathbb{H}}_{T_S\times T_X}$ be the same internal set as in \cref{mtcvglemma}. 
By \cref{mtcvglemma}, we have:
$
\|M_{k}(\tilde{h})-M\|_{\mathrm{TV}}\approx 0
$
for all $\tilde{h}\in \tilde{H}$ and all $k\in \NSE{\Nats}\setminus \Nats$.
Let $F$ be a hyperfinite policy function such that $\NSE{\mathbb{P}}_{T_S\times T_X}^{F}(\tilde{H})>0$. 
Pick $\tilde{H}'\subset \tilde{H}$ as in \cref{inacpvset}. 
For the rest of the proof, we fix $\tilde{h}=(\tilde{s}_0, \tilde{x}_0,\dotsc,\tilde{s}_k, \tilde{x}_k,\dotsc)\in \tilde{H}'$. 
By the construction of $\tilde{H}$, there exists $h=(s_0, x_0, \dotsc, s_k, x_k, \dotsc)\in \NSE{H}$ such that
$s_k\in B_{S}(\tilde{s}_k)$ and $x_k\in B_{X}(\tilde{x}_k)$ for all $k\in \NSE{\Nats}$. 
For every $k\in \NSE{\Nats}$, let $\NSE{\mu}_k$ denote the updated $\NSE{}$belief and $\nu_k$ denote the updated hyperfinite belief at time $k$, according to $h$ and $\tilde{h}$, respectively.  
Henceforth, we omit the hyperfinite history from the notation. 

Recall that we use $d_{P}$ to denote the Prokhorov metric on $\Delta(\Theta)$. 
By \cref{updateclose}, there exists some $k_0\in \NSE{\Nats}\setminus \Nats$ such that $\NSE{d}_{P}(\NSE{\mu}_k, \nu_k)\approx 0$ for all $k\leq k_0$. 
Let $(s, x)\in T_S\times T_X$ be such that $M(\{(s, x)\})>0$. 
By the construction of $M$, $(\ST(s), \ST(x))$ is in the support of $m$. 
Thus, there exists $k_1\in \NSE{\Nats}\setminus \Nats$ such that 
\begin{enumerate}
    \item $k_1\leq k_0$;
    \item $\tilde{s}_{k_1}\approx s_{k_1}\approx s$ and $\tilde{x}_{k_1}\approx x_{k_1}\approx x$. 
\end{enumerate}
As $f$ is an optimal policy function, by the transfer principle, we have
\begin{align*}
x_{k_1}\in \argmax_{\hat{x}\in \NSE{X}}\int_{\NSE{S}}\{\NSE{\pi}(s_{k_1}, \hat{x}, s')+\delta \NSE{W}(s', \NSE{B}(s_{k_1}, \hat{x}, s', \NSE{\mu}_{k_1}))\}\bar{\NSE{Q}}_{\NSE{\mu}_{k_1}}(\dee s'|s_{k_1}, \hat{x}).
\end{align*}
\nt As $\NSE{d}_{P}(\NSE{\mu}_{k_1}, \nu_{k_1})\approx 0$, by \cref{nshullstar}, we have:
\begin{align*}
&\int_{T_S}\{\Pi(s,x,s')+\delta \mathbb{W}(s',\mathbb{B}(s,x,s',\nu_{k_1}))\}\bar{\mathbb{Q}}_{\nu_{k_1}}(\dee s'|s, x)\\
&\approx \max_{\hat{x}\in T_X}\int_{T_S}\{\Pi(s,\hat{x},s')+\delta \mathbb{W}(s', \mathbb{B}(s,\hat{x},s',\nu_{k_1}))\}\bar{\mathbb{Q}}_{\nu_{k_1}}(\dee s'|s, \hat{x}).
\end{align*}
As the SMDP is identified given $m$, 
by \cref{idthyidt}, the hyperfinite SMDP is S-identified. 
This implies that there is $\mathbb{Q}_{M}$ such that, for all $\nu\in \NSE{\PM{T_{\Theta}}}$ with support being a subset of $\hat{T}_{\Theta}^{\mathbb{Q}}(M)$, $\mathbb{Q}_{\nu}\approx \mathbb{Q}_{M}$. 
By \cref{inacpvset}, the support of $\nu_{k_1}$ is a subset of $\hat{T}_{\Theta}^{\mathbb{Q}}(M)$. 
Note that the support of the posterior of $\nu_{k_1}$ generated from the hyperfinite Bayesian operator is a subset of the support of $\nu_{k_1}$. 
Hence, we have:
\begin{align*}
&\int_{T_S}\{\Pi(s,x,s')+\delta \mathbb{V}(s')\}\mathbb{Q}_{M}(\dee s'|s, x)\\
&\approx \max_{\hat{x}\in T_X}\int_{T_S}\{\Pi(s,\hat{x},s')+\delta \mathbb{V}(s')\}\mathbb{Q}_{M}(\dee s'|s, \hat{x}).
\end{align*}
By \cref{mtcvglemma}, $(M_k)_{k\in \NSE{\Nats}}$ $\NSE{}$converges to $M$ for all hyperfinite histories in $\tilde{H}$. 
By the transfer principle (or use essentially the same proof as in Theorem 2 of EP, $M$ is $\NSE{}$stationary. 
Thus, $M$ is a Berk-Nash S-equilibrium, as in \cref{sberknote}, for the hyperfinite SMDP with the hyperfinite belief $\nu_{k_1}$ (or any $\nu\in \NSE{\PM{T_{\Theta}}}$ such that the support of $\nu$ is a subset of $\hat{T}_{\Theta}^{\mathbb{Q}}(M)$). 
Hence, by \cref{sberknote}, $m$ is a Berk-Nash equilibrium for the SMDP with the belief being any $\mu\in \Delta(\Theta)$ such that the support of $\mu$ is a subset of $\Theta_{Q}(m)$.  
\end{proof}

\subsection{Detailed Analysis of Examples}\label{detailexample}

Here we present the complete analysis of the examples in the main body of the paper. To recap, Example 1 deals with the optimal consumption-savings environment while Example 2 is about a  producer with misspecified costs. Examples 3 and 4 study a misspecified AR(1) process  with Example 5 dealing with a misspecified revenue problem for a producer.

\nt  \textbf{Example \ref{stochasticgrowth}} For this optimal savings problem, we solve for optimality, belief restriction and stationarity. The Bellman equation for the agent is

$$
V(y, z)=\max _{0 \leq x \leq y} z \ln (y-x)+\delta E\left[V\left(y^{\prime}, z^{\prime}\right) \mid x\right],
$$
and let us guess that the form of the value function is $V(y,z)=a(z)+b(z)\ln (y).$ This provides us a guess for the optimal strategy which is to invest a fraction of wealth that depends on the utility shock and the unknown parameter $\beta$, i.e., $x=A_{z}(\beta) \cdot y$, where $A_{z}(\beta)=\displaystyle\frac{\delta \beta E[b(z')]}{(z+\delta \beta E[b(z')])}$ where $b(z)$ satisfies $b(z)=z+\delta \beta E[b(z')]$ for $z\in [0,1].$ Solving for $b(z),$ we get $b(z)=z+\dfrac{\delta\beta}{1-\delta\beta} E[z]$ which gives $A_{z}(\beta)=\dfrac{0.5\delta\beta}{(1-\delta\beta)z+0.5\delta\beta}$ where $E[z]=0.5.$\fn{This corrects a typo in EP for the policy function.} The stationarity condition is met because of $0\leq\beta^{*}<1,$ which prevents the process from drifting away. The belief restriction and the rest of the problem for $\beta^{m}$ is solved analogously as in EP. 

\nt \textbf{Example \ref{costsproduction}} This example assumes that the agent knows the per-period payoff function  and the transition function but has a misspecified cost function. We follow EP in framing cost be a part of the state variable. we simply let the cost $c$ be part of the state as follows:
$$
V(z, c)=\max _{x} \int_{Z\times C}\left(z f(x)-c^{\prime}+\delta V\left(z^{\prime}, c^{\prime}\right)\right) Q\left(\dee z^{\prime} \mid z\right) Q^{C}\left(\dee c^{\prime} \mid x\right)
$$
The variable $c^{\prime}$ is the unknown cost of production at the time the agent has to choose $x$. Its distribution is given by $Q^{C}\left(\dee c^{\prime} \mid x\right)$, which is the distribution of $c^{\prime}=c(x)$ as described above. The agent knows $Q$, but does not know $Q^{C}$. In particular, the agent has a parametric family of transitions, where $Q_{\theta}^{C}\left(\dee c^{\prime} \mid x\right)$ is the distribution of $c^{\prime}=c_{\theta}(x)$. The action space, $X=\Bigg[\varepsilon, \max\Bigg[\Bigg((\expect_{d^{*}}[\epsilon])/4\Bigg)^{2/3}, \Bigg((\expect_{d^{*}}[\epsilon])/\sqrt{K}\Bigg)^{2/3}\Bigg]+1\Bigg], \varepsilon>0.$ The parameter space $\Theta$ is compact,  $\Theta=\Bigg[0,\sqrt{\dfrac{K+1}{2}}\expect_{d^{*}}[\epsilon]+1\Bigg]$, where $K=\dfrac{1-e^{-k}}{1-k(k+1)e^{-k}}.$
Given this, suppose the true cost function $\phi(x)$ is quadratic i.e. $\phi(x)=x^{2}.$ Then the Berk-Nash equilibrium is characterized by the minimizer $\theta^{*}$ given by
$\theta^{*}=\sqrt{\dfrac{K}{2}\expect_{d^{*}}[\epsilon]}$ and the action $x^{*}=\displaystyle\sqrt{\frac{2K}{\expect_{d^{*}}(\epsilon)}}z$, whereas for the agent with the correctly specified model is $x^{opt}=\displaystyle\frac{\sqrt z}{\sqrt{2\expect_{d^{*}}(\epsilon)}}$. Indeed, note that the true transition probability function $Q(s)$ has a unique stationary measure $\mu$. Therefore, the Berk-Nash equilibrium for this SMDP is $\mu\times \delta_{x^{*}}$, supported by the belief $\delta_{\theta^{*}}$.

We now solve for the equilibrium. First, we solve for the optimal $x*$ as a function of the parameter. Suppose the agent has a degenerate belief on some $\theta.$ Here, as in the original example, the agent's optimization problem reduces to a static optimization problem $\max _{x} z \ln x-x E_{\theta}[\epsilon]$. Noting that $E_{\theta}[\epsilon]=\dfrac{\theta}{K}$, it follows that the optimal input choice in state $z$ is $
x^{*}=K z / \theta.$ Next, the stationarity condition implies that the marginal of $m$ over $\mathbb{Z}$ is equal to the stationary distribution over $z$, which is $q$, a uniform distribution, $U [0,1]$. Therefore, the stationary distribution over $\mathbb{X}$, denoted by $m_{\mathrm{X}}$, is a uniform distribution, $U[0,\frac{K}{\theta}].$ Finally, following the steps as in Example 3.3, we get our corresponding $x^{*}$ and $\theta^{*}.$

\medskip

\nt \textbf{Example \ref{example-Unit_Root}}\ In this example, we study an AR(1) process and show that a Berk-Nash equilibrium exists if and only if the AR(1) process has no unit root. Recall that the SMDP in this problem is defined as:
\begin{itemize}
    \item The state space $S=\Reals$, the action space $X=\{0\}$, and the payoff function $\pi: S\times X\times S\to \Reals$ is the constant function $0$;
    \item For every $s\in S$, the true transition probability function $Q(s)$ is the distribution of $a_0s+b_0\xi$, where $a_0\in [0, 2], b_0\in [0, 1]$ and $\xi=\mathcal{N}(0,1)$ has the standard normal distribution;
    \item he parameter space $\Theta$ is $[0, 2]\times [0, 1]$ and for every $(a, b)\in \Theta$, the transition probability function $Q_{(a, b)}(s)$ is the distribution of $as+b\xi$.
\end{itemize}
We first consider the degenerate case $b_0=0$.
The true transition $Q_{(a_0,0)}$ is absolutely continuous with respect to $Q_{(a,b)}$ if and only if $a=a_0$ and $b=b_0 = 0$. When $a_0<1$, the Markov process has a unique stationary distribution, namely the Dirac measure $\delta_0$ at zero. So the Berk-Nash equilibrium is $\delta_{(0,0)}$ with the belief $\delta_{(a_0, 0)}$. 
When $a_0 = 1$, the Dirac measure $\delta_s$ is a stationary distribution for every $s \in S$, and $\delta_{(s,0)}$ is a Berk-Nash equilibrium supported by the belief $\delta_{(1,0)}$. 
When $a_0>1$, there is no stationary distribution hence no Berk-Nash equilibrium. 

For the non-degenerate case $b_0>0$, following Example \ref{example-Unit_Root}, we focus on the case $0\leq a_0<1$.
We now provide rigorous verification for \cref{assumptiontight} and \cref{assumptionrebound}:
\begin{itemize}
    \item We apply the Lyapunov condition to verify \cref{assumptiontight} by taking the Lyapunov function $V(s)=|s|$. 
          Clearly, this $V$ is a non-negative, continuous and norm-like function as defined in \cref{assumptiontight}. 
          Moreover, we have $S_n=\{s\in S: V(s)\leq n\}$ for all $n\in \Nats$.  
          By the properties of the folded normal distribution, we have: 
         $
         \int_{S}|s'|Q(s)(\dee s')=b_0\sqrt{\frac{2}{\pi}}e^{-\frac{a_{0}^2s^2}{2b_{0}^2}}+a_0s(1-2\phi(-\frac{a_0s}{b_0}))
         $
         where $\phi$ is the cumulative distribution function of the standard normal distribution. 
         Thus, for all $s\geq 0$, we have $\int_{S}|s'|Q(s)(\dee s')\leq a_0|s|+\sqrt{\frac{2}{\pi}}$.
         Hence, by choosing $\alpha=1-a_0$ and $\beta=\sqrt{\frac{2}{\pi}}$, \cref{lyapueq} is satisfied. Hence, \cref{assumptiontight} is satisfied;
    \item For $(a, b)\in [0, 2]\times (0, 1]$ and $s\in S$, the relative entropy from $Q_{(a, b)}(s)$ to $Q(s)$ is:
          $
          \mathcal{D}_{\mathrm{KL}}(Q(s), Q_{(a, b)}(s))=\ln(\frac{b}{b_0})+\frac{b_0^2+(a_0s-as)^2}{2b^2}-\frac{1}{2}.
           $
           Note that the true transition probability function $Q(s)$ has a unique stationary measure $\mu=\mathcal{N}(0, \frac{b_0^2}{1-a_0^2})$. It is then straightforward to show that \cref{assumptionrebound} is satisfied.
\end{itemize}

\medskip

\nt \textbf{Example \ref{pricingshocks}}\ This example assumes that the agent knows the per-period payoff function  and the transition function but has a misspecified revenue function. We follow EP in framing the price shock be a part of the state variable. The Bellman can be written as,
\begin{equation}
V(z, \epsilon)=\max _{x} \int_{Z\times [0,1]}\left(z f(x)\epsilon'-c+\delta V\left(z^{\prime}, \epsilon^{\prime}\right)\right) Q\left(d z^{\prime} \mid z\right) Q^{R}\left(\dee \epsilon^{\prime} \mid x\right)
\end{equation}
The variable $\epsilon^{\prime}$ is the unknown price shock to the  revenue, $r=f(x)\epsilon',$ at the time the agent has to choose $x$. Its distribution is given by $Q^{R}\left(d \epsilon^{\prime} \mid x\right)\sim d_{\theta}$. The agent knows $Q$, but does not know $Q^{R}$. In particular, the agent has a parametric family of transitions, where $Q_{\theta}^{R}\left(d \epsilon^{\prime} \mid x\right)$ is the distribution of $\epsilon^{\prime}$. The parameter space $\Theta$ is compact, that is, $\Theta= [0, 2KE_{d^{*}}[\epsilon]+1]$ and the action space, $X=[0,\max\Bigg[\Bigg((E_{d^{*}}[\epsilon])/4\Bigg)^{2/3}, \Bigg((E_{d^{*}[\epsilon])}/\sqrt{K}\Bigg)^{2/3}\Bigg]+1], \mbox{ where } K=(1-e^{-k})/(1-k(k+1)e^{-k}).$ Given this,  suppose the true production function is given by $f^{*}(x)x^{1/2}$ and  hence, concave. Then the minimizer,  $\theta^{*}=2(K\expect_{d^{*}}[\epsilon])^{2/3}$ and the corresponding optimal action for the misspecified agent is $x^{*}=z\theta^{*}/2K,$\fn{ For $k> 0,$ let $K=\dfrac{1-e^{-k}}{1-k(k+1)e^{-k}}$ which is always finite and asymptotes to 1 as $k\rightarrow\infty.$ }  whereas, for the agent with the correctly specified model is, $x^{opt}=\Bigg(\displaystyle\frac{z\expect_{d^{*}}[\epsilon]}{4}\Bigg)^{2/3}.$ We first solve for the optimal action as a function of model primitives. Suppose the agent has a degenerate belief on some $\theta.$ Here, as in the original example, the agent's optimization problem reduces to a static optimization problem $\max _{x}zxE_{\theta}[\epsilon]-x^{2}$. Noting that $E_{\theta}[\epsilon]=\dfrac{\theta}{K}$,\footnote{$K=\dfrac{1-e^{-k}}{1-k(k+1)e^{-k}}.$} it follows that the optimal input choice in state $z$ is $ x^{*}=\dfrac{z\theta}{2K}.$ Next, the stationarity condition implies that the marginal of $m$ over $\mathbb{Z}$ is equal to the stationary distribution over $z$, which is $q$, a uniform distribution, $U [0,1]$. Therefore, the stationary distribution over $\mathbb{X}$, denoted by $m_{\mathrm{X}}$, has a uniform support over $[0,\dfrac{\theta}{2K}]$. Finally, we optimize $\theta$ for the weighted KLD,
$$
\begin{aligned}
\int_{x} E_{Q(\cdot \mid x)}\left[\log Q_{\theta}^{R}\left(f^{\prime} \mid x\right)\right] m_{\mathrm{X}}(x) \dee x &=\int_{[0,1]} E_{Q\left(\cdot \mid x\right)}\left[\log d_{\theta}\left(\epsilon'\right)\right] m_{\mathrm{X}}(x) \dee x \\
&=\int_{[0,1]} E_{Q\left(\cdot \mid x\right)}\!\left(-\frac{1}{\theta}\left(\epsilon'\right)-\ln \theta - \ln(1-\exp^{-k})\right) m_{\mathrm{X}}(\dee x) 
\end{aligned}
$$

Then minimizing the above expression with respect to $\theta$ gives us the minimizing $\theta^{*}$ and the corresponding $x^{*}.$

\newpage
\section{Bibliography}
\printbibliography[heading=none]

\medskip

\medskip

\medskip

\end{document}